\def\colorful{1}
\newif\ifhyper\IfFileExists{hyperref.sty}{\hypertrue}{\hyperfalse}
\ifhyper\usepackage{hyperref}\fi
\renewcommand{\section}{\@startsection{section}{1}{0pt}{-12pt}{5pt}{\large\bf}}
\renewcommand{\subsection}{\@startsection{subsection}{2}{0pt}{-12pt}{-5pt}{\normalsize\bf}}
\renewcommand{\subsubsection}{\@startsection{subsubsection}{3}{0pt}{-12pt}{-5pt}{\normalsize\bf}}
\def\nnewcolor{1}
\newtheorem{theorem}{Theorem}[section]
\newtheorem{lemma}[theorem]{Lemma}
\newtheorem{informal theorem}[theorem]{Theorem (informal statement)}
\newtheorem{condition}[theorem]{Condition}
\newtheorem{proposition}[theorem]{Proposition}
\newtheorem{corollary}[theorem]{Corollary}
\newtheorem{claim}[theorem]{Claim}
\newtheorem{remark}[theorem]{Remark}
\theoremstyle{definition}
\newtheorem{definition}[theorem]{Definition}
\newcommand{\R}{\mathbb{R}}
\newcommand{\C}{\mathbb{C}}
\newcommand{\Z}{\mathbb{Z}}
\newcommand{\E}{\mathbb{E}}
\newcommand{\poly}{\mathrm{poly}}
\newcommand{\polylog}{\mathrm{polylog}}
\newcommand{\cov}{\text{Cov}}
\newcommand{\p}{\mathbf{P}}
\newcommand{\q}{\mathbf{Q}}
\newcommand{\dtv}{d_{\mathrm TV}}
\newcommand{\wh}[1]{{\widehat{#1}}}
\newcommand{\var}{\text{Var}}
\newcommand{\ignore}[1]{}
\newcommand{\eps}{\epsilon}
\newcommand{\h}{\mathbf{H}}
\newcommand{\bone}{\mathbf{1}}
\newcommand{\Var}{\mathop{\textnormal{Var}}\nolimits}
\newcommand{\re}{\mathrm{Re}}
\newcommand{\im}{\mathrm{Im}}
\renewcommand{\eqref}[1]{Eq.~(\ref{#1})}
\newcommand{\eqdef}{\stackrel{{\mathrm {\footnotesize def}}}{=}}
\newcommand{\littlesum}{\mathop{\textstyle \sum}}
\newenvironment{algorithm}[1][\  ] %
{ \rm
\begin{tabbing}
....\=.....\=.....\=.....\=.....\=  \+ \kill
} %
{\end{tabbing} }
\title{The Fourier Transform of Poisson Multinomial Distributions\\
and its Algorithmic Applications 
}
\author{
Ilias Diakonikolas\thanks{Part of this work was performed while the author was at the University of Edinburgh. 
Supported in part by a Marie Curie Career Integration Grant and EPSRC grant EP/L021749/1.}\\
University of Southern California\\
{\tt  diakonik@usc.edu}\\
\and
Daniel M. Kane\thanks{Supported by NSF Award CCF-1553288. 
This work was initiated while visiting the University of Edinburgh.}\\
University of California, San Diego\\
{\tt dakane@cs.ucsd.edu}\\
\and
Alistair Stewart$^{\ast}$\\
University of Southern California\\
{\tt alistais@usc.edu}
}
\begin{document}

\maketitle

\thispagestyle{empty}

\vspace{-0.2cm}

\begin{abstract}
We study Poisson Multinomial Distributions -- a fundamental family of discrete distributions 
that generalize the binomial and multinomial distributions, and are commonly encountered
in computer science. 
Formally, an $(n, k)$-Poisson Multinomial Distribution (PMD) is a random variable
of the form $X = \sum_{i=1}^n X_i$, where the $X_i$'s are independent random vectors supported 
on the set  $\{e_1, e_2, \ldots, e_k \}$ of standard basis vectors in $\R^k$.
In this paper, we obtain a refined structural understanding of PMDs 
by analyzing their Fourier transform. 
As our core structural result, we prove that the Fourier transform of PMDs is {\em approximately sparse}, 
i.e., roughly speaking, its $L_1$-norm is small outside a small set. By building on this result, we obtain the following 
applications:

\smallskip

 {\bf Learning Theory.}
We design the first computationally efficient learning algorithm for PMDs
with respect to the total variation distance. Our algorithm learns an arbitrary $(n, k)$-PMD 
within variation distance $\eps$ using a near-optimal sample size of $\widetilde{O}_k(1/\eps^2),$ 
and runs in time $\widetilde{O}_k(1/\eps^2) \cdot \log n.$ Previously, no algorithm with a $\poly(1/\eps)$
runtime was known, even for $k=3.$

\smallskip

{\bf Game Theory.} We give the first efficient polynomial-time approximation scheme (EPTAS) for computing Nash equilibria
in anonymous games. For normalized anonymous games
with $n$ players and $k$ strategies, our algorithm computes a well-supported $\eps$-Nash equilibrium in time 
$n^{O(k^3)} \cdot (k/\eps)^{O(k^3\log(k/\eps)/\log\log(k/\eps))^{k-1}}.$ 
The best previous algorithm for this problem~\cite{DaskalakisP08, DaskalakisP2014} 
had running time $n^{(f(k)/\eps)^k},$ where $f(k) = \Omega(k^{k^2})$, for any $k>2.$

\smallskip

{\bf Statistics.} We prove a multivariate central limit theorem (CLT) that relates 
an arbitrary PMD to a discretized multivariate Gaussian with the same mean and covariance, in total variation distance. 
Our new CLT strengthens the CLT of Valiant and Valiant~\cite{VV10b, ValiantValiant:11} by completely removing the dependence on $n$ in the error bound.

\smallskip

Along the way we prove several new structural results of independent interest about PMDs. These include: (i) a robust moment-matching lemma, 
roughly stating that two PMDs that approximately agree on their low-degree parameter moments are close in variation distance; 
(ii) near-optimal size proper $\eps$-covers for PMDs in total variation distance (constructive upper bound and nearly-matching lower bound).
In addition to Fourier analysis, we employ a number of analytic tools, including the saddlepoint method from complex analysis,
that may find other applications.
\end{abstract}

\thispagestyle{empty}
\setcounter{page}{0}

\newpage

\section{Introduction}  \label{sec:intro}


\subsection{Background and Motivation} \label{ssec:motiv}

The Poisson Multinomial Distribution (PMD) is the discrete probability distribution
of a sum of mutually independent categorical random variables over the same sample space. A categorical
random variable ($k$-CRV) describes the result of a random event that takes on one of $k \ge 2$ possible outcomes.
Formally,
an $(n, k)$-PMD is any random variable of the form $X = \sum_{i=1}^n X_i,$
where the $X_i$'s are independent random vectors supported on the set
$\{e_1, e_2, \ldots, e_k \}$ of standard basis vectors in $\R^k$.

PMDs comprise a broad class of discrete distributions of fundamental importance in computer science, probability, and statistics.
A large body of work in the probability and statistics literature has been devoted to the study of the behavior
of PMDs under various structural conditions~\cite{Barbour88, Loh92, BHJ:92, Bentkus:03, Roos99, Roos10}.
PMDs generalize the familiar binomial and multinomial distributions, and describe many distributions
commonly encountered in computer science (see, e.g.,~\cite{DaskalakisP07, DaskalakisP08, Valiant08stoc, ValiantValiant:11}).
The $k=2$ case corresponds to the Poisson binomial distribution (PBD), introduced by Poisson~\cite{Poisson:37}
as a non-trivial generalization of the binomial distribution.

Recent years have witnessed a flurry of research activity on PMDs and related distributions,
from several perspectives of theoretical computer science,
including learning~\cite{DDS12stoc, DDOST13focs, DKS15, DKT15, DKS15b},
property testing~\cite{Valiant08stoc, VV10b, ValiantValiant:11},
 computational game theory~\cite{DaskalakisP07, DaskalakisP08, BorgsCIKMP08, DaskalakisP09, DaskalakisP2014, GT14},
 and derandomization~\cite{GMRZ11, BDS12, De15, GKM15}. More specifically,
 the following questions have been of interest to the TCS community:
 \begin{enumerate}
 \item Is there a statistically and computationally efficient algorithm for learning PMDs
 from independent samples in total variation distance?

 \item How fast can we compute approximate Nash equilibria in anonymous games with many players and
 a small number of strategies per player?

 \item How well can a PMD be approximated, in total variation distance, by a discretized Gaussian with the same mean and covariance matrix?
 \end{enumerate}
The first question is a fundamental problem in unsupervised learning that has received considerable recent
attention in TCS~\cite{DDS12stoc, DDOST13focs, DKS15, DKT15, DKS15b}.
The aforementioned works have studied the learnability of PMDs, and related distribution families,
in particular PBDs (i.e., $(n, 2)$-PMDs) and sums of independent integer random variables.
Prior to this work, no computationally efficient learning algorithm for PMDs was known, even for the case of $k=3.$

The second question concerns an important class of succinct games previously studied
in the economics literature~\cite{Milchtaich1996, Blonski1999, Blonski2005}, whose (exact) Nash equilibrium
computation was recently shown to be intractable~\cite{CDO15}.
The formal connection between computing Nash equilibria in these games and PMDs was established in a sequence of papers
by Daskalakis and Papadimitriou~\cite{DaskalakisP07, DaskalakisP08, DaskalakisP09, DaskalakisP2014}, who leveraged it to gave the first PTAS for the problem.
Prior to this work, no efficient PTAS was known, even for anonymous games with $3$ strategies per player.

The third question refers to the design of Central Limit Theorems (CLTs) for PMDs with respect to the total variation distance.
Despite substantial amount of work in probability theory, the first strong CLT of this form appears to have been shown by Valiant and Valiant
~\cite{VV10b, ValiantValiant:11}, motivated by applications in distribution property testing. 
\cite{VV10b, ValiantValiant:11} leveraged their CLT to obtain tight  lower bounds for several fundamental problems
in property testing. We remark that the error bound of the~\cite{VV10b} CLT has a logarithmic dependence 
on the size $n$ of the PMD (number of summands), and
it was conjectured in~\cite{VV10b} that this dependence is unnecessary.

\subsection{Our Results} \label{sec:results}
The main technical contribution of this work is the use of
Fourier analytic techniques to obtain a refined understanding
of the structure of PMDs. As our core structural result, we prove that the Fourier transform
of PMDs is {\em approximately sparse}, i.e., roughly speaking, its $L_1$-norm is small outside a small set.
By building on this property, we are able to obtain various new structural results about PMDs,
and make progress on the three questions stated in the previous subsection.
In this subsection, we describe our algorithmic and structural contributions in detail.

We start by stating our algorithmic results in learning and computational game theory, followed
by an informal description of our structural results and the connections between them.

\paragraph{Distribution Learning.} As our main learning result, we obtain the first statistically and computationally
efficient learning algorithm for PMDs with respect to the total variation distance. In particular, we show:

\begin{theorem}[Efficiently Learning PMDs] \label{thm:learn-pmd}
For all $n, k \in \Z_+$ and $\eps>0$, there is an algorithm for learning $(n, k)$-PMDs with the following
performance guarantee: Let $\p$ be an unknown $(n, k)$-PMD.
The algorithm uses $m = O\left(k^{4k}\log^{2k}(k/\eps)/\eps^2\right)$
samples from $\p$,
runs in time\footnote{We work
in the standard ``word RAM'' model in which basic arithmetic
operations on $O(\log n)$-bit integers are assumed to take constant time.}$ O\left(k^{6k}\log^{3k}(k/\eps)/\eps^2\right) \cdot \log n,$
and with probability at least $9/10$ outputs an $\eps$-sampler for $\p.$
\end{theorem}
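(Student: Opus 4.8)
The natural approach is to reduce learning $\p$ to estimating a small, $n$-free collection of ``effective parameters,'' do this estimation sample-efficiently by empirical averaging, and then output an explicit hypothesis consistent with the estimates. The effective parameters are handed to us by the core structural result: after translating $\p$ so that its mean $\bmu=\E[\p]$ is near the origin, the Fourier transform $\widehat{\p}$ is approximately sparse, i.e.\ there is a box $S\subseteq\Z^k$, of size $|S|=\poly_k(\polylog(k/\eps))$ and determined only by (an estimate of) the covariance of $\p$, with $\sum_{\xi\notin S}|\widehat{\p}(\xi)|\le\eps$. The robust moment-matching lemma is the companion inverse statement: if a hypothesis $\q$ has the same translation and covariance profile as $\p$ up to the relevant accuracy, and $|\widehat{\q-\bmu}(\xi)-\widehat{\p-\bmu}(\xi)|\le\delta$ for all $\xi\in S$ with $\delta=\poly_k(\eps/\polylog(k/\eps))$, then $\dtv(\p,\q)=O(\eps)$. (Equivalently one may phrase the whole reduction in terms of low-degree \emph{parameter moments}, which are symmetric functions of the component parameters $p_i\in\Delta_k$ and are, up to the structural results, recoverable from the low-order moments/cumulants of $X=\sum_i X_i$.) So it suffices to (i) estimate $\bmu$ and $\Sigma$ finely enough to fix the translation and the box $S$, and (ii) estimate $\widehat{\p-\bmu}(\xi)=\E\!\big[e^{2\pi i\inner{\xi,\,X-\bmu}}\big]$ for each $\xi\in S$.

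For the estimation step I would draw $m=O_k(\polylog^{2k}(k/\eps)/\eps^2)$ samples $X^{(1)},\dots,X^{(m)}$. The empirical mean $\widehat{\bmu}=\tfrac1m\sum_j X^{(j)}$ concentrates around $\bmu$ within a $\Theta(\eps)$-fraction of the per-coordinate standard deviation by a Bernstein bound; this is exactly the resolution at which the mean matters, since in a high-variance direction the PMD is essentially a discretized Gaussian and is insensitive to finer shifts, while low-variance directions are absorbed into $S$. The empirical covariance pins down $\Sigma$, hence $S$, to adequate accuracy. Finally, for each fixed $\xi\in S$, the empirical characteristic function $\tfrac1m\sum_j e^{2\pi i\inner{\xi,\,X^{(j)}-\widehat{\bmu}}}$ estimates $\widehat{\p-\bmu}(\xi)$ to additive error $\delta$ with failure probability $\le 1/(10|S|)$ by Hoeffding's inequality (each summand has unit modulus); a union bound over $S$ handles all frequencies simultaneously, and since $|S|=\poly_k(\polylog(k/\eps))$ and $1/\delta=\poly_k(\polylog(k/\eps)/\eps)$, the stated sample size suffices. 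Each evaluation of $e^{2\pi i\inner{\xi,\,X^{(j)}-\widehat{\bmu}}}$ takes $O(\log n)$ word-RAM operations because the coordinates of $X^{(j)}$ are integers bounded by $n$; this is the \emph{only} place $n$ enters the running time, which is why the runtime is (sample-count work)$\,\cdot\log n$.

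It remains to convert the estimated Fourier data into an explicit $\eps$-sampler. Using the structure theorem, $\p$ is $O(\eps)$-close to a distribution of a controlled form --- a discretized Gaussian in the high-variance directions tensored against a distribution whose Fourier transform is supported on $S$ in the remaining directions --- whose parameters are precisely the quantities we estimated; one builds this explicit distribution $\widetilde{q}$ (zeroing out the negligible negative part of the inverse transform and renormalizing), centers it at $\widehat{\bmu}$, and outputs a sampler for it. Correctness follows because $\widehat{\widetilde q}$ is close to $\widehat{\p}$ inside $S$ by the estimation guarantee and small outside $S$ by approximate sparsity plus truncation, so $\dtv(\p,\widetilde q)=O(\eps)$ via the moment-matching lemma; sampling from $\widetilde q$ costs $\poly_k(\polylog(k/\eps))\cdot\log n$ (the $\log n$ only to write down the shift $\widehat{\bmu}$). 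The near-optimal proper $\eps$-cover is not used by the algorithm --- it instead certifies that $\widetilde O_k(1/\eps^2)$ samples is essentially the information-theoretic optimum. The main obstacle is the inverse direction: proving that agreement of the (translated) Fourier transforms on the effective support $S$ forces total-variation closeness with only \emph{polynomial} loss in the parameters, without any dependence on the (possibly $\sim n^{k/2}$) size of the support --- this is where the approximate-sparsity analysis and the saddlepoint method do the real work, and it is what distinguishes a genuinely $n$-independent algorithm (modulo $\log n$ arithmetic) from a naive one. Secondary but delicate points are localizing the entire $n$-dependence in the empirically estimated translation $\widehat{\bmu}$ and managing the word-RAM precision so that each sample contributes only $O(\log n)$ work.
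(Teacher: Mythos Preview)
Your high-level plan --- localize the effective Fourier support using the estimated first two moments, empirically estimate the transform there, and invert --- matches the paper, and your sample/time accounting is on target. But you misidentify which of the paper's tools do the work, and the central algorithmic step is missing.

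Neither the robust moment-matching lemma nor the saddlepoint method appears in the learning algorithm; the former drives the proper-cover construction of Section~\ref{sec:cover-nash}, the latter only the CLT of Section~\ref{sec:clt}. The learner's correctness is a direct Fourier-inversion argument (Proposition~\ref{prop:ft-effective-support} and Lemmas~\ref{lem:emp-dft}--\ref{lem:dtv-final}): bound $\sum_\xi|\wh\p(\xi)-\wh\h(\xi)|$ by splitting into on- and off-support pieces and read off pointwise closeness. More importantly, the paper does \emph{not} sample a continuous characteristic function at finitely many frequencies and then invert it into ``a discretized Gaussian tensored with something.'' Instead it uses $\wh\mu,\wh\Sigma$ to build an integer matrix $M$ so that $X$ lies, with probability $1-O(\eps)$, in a single fundamental domain of the lattice $L=M\Z^k$ (Lemma~\ref{lem:fd}); then it suffices to learn $X\bmod L$, whose \emph{discrete} Fourier transform lives on the finite quotient $L^*/\Z^k$. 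The effective support $T$ is a subset of that finite group (your ``box $S\subseteq\Z^k$'' is not the right domain for the transform of a $\Z^k$-valued random variable), and the hypothesis $\h$ is literally the inverse DFT of the empirical $\wh\h|_T$ on that fundamental domain --- no Gaussian component is fitted. Converting $\wh\h$ into an $\eps$-sampler (Section~\ref{ssec:sampler}) is itself a nontrivial step you gloss over: one computes the Smith normal form $M=UDV$ to reduce to a diagonal lattice, writes the CDF of $\h$ under the induced lexicographic order in closed form as a length-$|T|$ sum of geometric series, and samples by binary search on that CDF. Without the DFT-mod-$L$ setup, your ``zero out the negative part and renormalize'' step does not yield a sampler in time polylogarithmic in the support size, which can be $n^{\Theta(k)}$; this lattice construction, not the saddlepoint method, is what makes the algorithm $n$-independent up to $\log n$ arithmetic.
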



We remark that our learning algorithm outputs a succinct description of its hypothesis $\h,$ via its Discrete Fourier Transform (DFT), $\wh{\h},$
which is supported on a small size set. We show that the DFT gives both an efficient $\eps$-sampler and an efficient $\eps$-evaluation oracle
for $\p.$

Our algorithm learns an unknown $(n, k)$-PMD within variation distance $\eps,$
with sample complexity $\widetilde{O}_k (1/\eps^2),$ and computational complexity $\widetilde{O}_k (1/\eps^2)  \cdot \log n.$
The sample complexity of our algorithm is near-optimal for any fixed $k$, as $\Omega(k/\eps^2)$ samples are necessary,
even for $n=1.$ We note that recent work by Daskalakis {\em et al.}~\cite{DKT15} established a similar sample upper bound,
however their algorithm is not computationally efficient. More specifically, it runs in time $(1/\eps)^{\Omega(k^{5k} \log^{k+1}(1/\eps))},$ 
which is quasi-polynomial in $1/\eps,$ even for $k=2.$ For the $k=2$ case, in recent work~\cite{DKS15} the authors of this paper gave an algorithm with sample complexity
and runtime $\widetilde{O}(1/\eps^2).$ Prior to this work, no algorithm with a $\poly(1/\eps)$ sample size
and runtime was known, even for $k=3.$

Our learning algorithm and its analysis are described in Section~\ref{sec:algo}.

\paragraph{Computational Game Theory.}
As our second algorithmic contribution, we give the first
efficient polynomial-time approximation scheme (EPTAS) for computing Nash equilibria 
in anonymous games with many players and a small number of strategies. 
In anonymous games, all players have the same set of strategies, and 
the payoff of a player depends on the strategy played by the player and 
the number of other players who play each of the strategies.
In particular, we show:

\begin{theorem}[EPTAS for Nash in Anonymous Games] \label{thm:nash-anonymous}
There is an EPTAS for the mixed Nash equilibrium problem for normalized anonymous
games with a constant number of strategies. More precisely, there exists an algorithm with the following performance guarantee:
for all $\eps>0$, and any normalized anonymous game $\cal G$ of $n$ players and $k$ strategies,
the algorithm runs in time $(kn)^{O(k^3)}(1/\eps)^{O(k^3\log(k/\eps)/\log\log(k/\eps))^{k-1}},$ 
and outputs a (well-supported) $\eps$-Nash equilibrium of $\cal G.$
\end{theorem}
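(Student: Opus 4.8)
The plan is to follow the two‑phase template of Daskalakis and Papadimitriou — first a structural lemma asserting that some well‑supported $\epsilon$‑Nash equilibrium is ``simple'', then an algorithm that searches for such an equilibrium — but to drive both phases with the sharper Fourier‑based tools announced above: the robust moment‑matching lemma, the near‑optimal proper $\epsilon$‑cover for $(n,k)$‑PMDs, and the strengthened CLT. The underlying dictionary is that if player $i$ plays the mixed strategy $p_i\in\Delta^{k-1}$, then the vector $X_{-i}=\sum_{j\ne i}X_j$ recording how many of the other players use each pure strategy is an $(n-1,k)$‑PMD, player $i$'s payoff for pure strategy $\ell$ is $\E[u_i^{\ell}(X_{-i})]$, and the profile is a well‑supported $\epsilon$‑NE iff for every $i$ each $\ell\in\mathrm{supp}(p_i)$ is within $\epsilon$ of $i$'s best response to $X_{-i}$. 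So every quantity we must control is a functional of PMDs and the whole game reduces to reasoning about these PMDs in total variation.

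\textbf{Structural step.} First I would start from an exact Nash equilibrium $(p_i)$, which exists by Nash's theorem; perturb it so that all best‑response inequalities hold with a margin $\Omega(\epsilon)$; then ``snap'' each parameter vector $p_i$ onto a fixed discretized set $\Lambda\subseteq\Delta^{k-1}$ of mixed strategies with $|\Lambda|=(k/\epsilon)^{O(k)}$. The point of the snapping is that, by the robust moment‑matching lemma, two PMDs whose low‑degree parameter moments $\sum_i p_i^{\otimes d}$ (for $d\le D$) agree up to a small additive error are $O(\epsilon)$‑close in total variation; hence, grouping players with nearby parameters and rounding each group so as to (approximately) preserve its contribution to these moments — in the spirit of the DP rounding arguments, and using the CLT to handle the ``spread‑out'' directions and the cover to handle the few concentrated ones — one performs the snapping so that every sub‑aggregate $X_{-i}$ moves by at most $o(\epsilon)$ in total variation, hence every payoff $\E[u_i^{\ell}(X_{-i})]$ moves by at most $o(\epsilon)$, so the $\Omega(\epsilon)$‑margin best responses, and thus the equilibrium property up to an $O(\epsilon)$ loss, survive. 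This yields a well‑supported $O(\epsilon)$‑NE with all $p_i\in\Lambda$ whose aggregate $\sum_i X_i$ is within $\epsilon$ of some member of the explicit PMD cover, equivalently whose relevant low‑order statistics take one of only $n^{O(k^2)}\cdot(1/\epsilon)^{O(k^3\log(k/\epsilon)/\log\log(k/\epsilon))^{k-1}}$ discretized values.

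\textbf{Algorithmic step.} The algorithm enumerates a candidate $M$ for the aggregate behaviour over this list (this is the source of the $(1/\epsilon)^{O(k^3\log(k/\epsilon)/\log\log(k/\epsilon))^{k-1}}$ factor). Given $M$, for each player $i$ and each $p\in\Lambda$ one decides whether $p$ is \emph{admissible} for $i$: subtract the single‑player contribution corresponding to $p$ from $M$'s statistics — which, at the level of moments, is just subtracting $p^{\otimes d}$ and therefore exactly describes $X_{-i}$ when $i$ plays $p$ — recover the corresponding PMD up to $O(\epsilon)$ in total variation via moment‑matching, and check that $\mathrm{supp}(p)$ lies in the $\epsilon$‑best‑response set of $i$ against it. Finally, run a dynamic program over players $1,\dots,n$ that tracks the accumulated low‑order statistics on a sufficiently fine additive grid, branching at each player only over admissible strategies, and accept iff the final statistics agree with those of $M$ within the moment‑matching tolerance; then output the recovered assignment. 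With $D=O(\log(k/\epsilon)/\log\log(k/\epsilon))$ and the appropriate grid the DP has $n^{O(k^3)}$ states, giving the claimed running time. Correctness is a round trip: the $O(\epsilon)$‑NE from the structural step is an accepting path for the $M$ matching its aggregate, so some enumeration step succeeds; and any assignment the DP accepts for a given $M$ reconstitutes, for every $i$, a sub‑aggregate within $O(\epsilon)$ of the one used to define the admissible sets, so it is an $O(\epsilon)$‑well‑supported NE — rescaling $\epsilon$ finishes the proof.

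\textbf{Main obstacle.} The delicate point is the interaction of rounding with the fixed‑point nature of equilibrium: unlike in a pure optimization problem, perturbing or discretizing the strategies also changes which strategies are best responses, so the entire argument must be carried with a best‑response margin, and one must verify that the moment‑matching (and CLT/cover) total‑variation bounds are strong enough, \emph{uniformly over all $n-1$ sub‑aggregates $X_{-i}$ and over all candidate $M$}, to keep every relevant payoff difference below that margin; this is exactly where the $n$‑independent CLT and the near‑optimal cover are needed rather than older, $n$‑dependent estimates. The remaining work is quantitative bookkeeping — choosing the grid defining $\Lambda$, the degree $D$, the additive grid for the DP statistics, and the cover granularity so that every accumulated error is $O(\epsilon)$ while the DP state count stays $n^{O(k^3)}$ and the enumeration stays $(1/\epsilon)^{O(k^3\log(k/\epsilon)/\log\log(k/\epsilon))^{k-1}}$ — but the conceptual content is carried entirely by the moment‑matching lemma, the cover bound, and the CLT.
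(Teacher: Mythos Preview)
Your high-level architecture matches the paper's: discretize strategies, enumerate an aggregate profile from a small list (the proper cover / moment data), use additivity of the parameter-moment data to recover $X_{-i}$ by subtraction, test best responses, and run a DP over players to check realizability. The paper does exactly this (Theorem~\ref{thm:dp} builds the DP, and Section~\ref{ssec:anonymous} glues in the best-response test), so conceptually you have the right skeleton.

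Two points where you overshoot or misstate things, though neither is fatal:

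\emph{The CLT is not used.} You write that ``this is exactly where the $n$-independent CLT and the near-optimal cover are needed.'' In fact the paper's Nash proof never touches the CLT; it runs entirely on the robust moment-matching lemma (Lemma~\ref{lem:moments-imply-dtv}) and the DP cover construction. The moment-matching lemma alone gives uniform-in-$n$ total-variation control of all the sub-aggregates $X_{-i}$, because the ``data'' $D_G(\cdot)$ is additive and subtracting one CRV's data from the aggregate data still lands in the $(n-1)$-player data set (this is Claim~\ref{clm:reqs}). So your ``main obstacle'' paragraph identifies a real issue but misattributes its resolution.

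\emph{The structural step is much simpler than you describe.} You propose perturbing an exact NE ``so that all best-response inequalities hold with a margin $\Omega(\eps)$'' and then performing a moment-preserving group-rounding onto a set $\Lambda$ of size $(k/\eps)^{O(k)}$. The first move is ill-posed (in an exact NE the best-response inequalities are tight on the support, so you cannot create a positive margin while keeping the same support), and the second is unnecessary. The paper just rounds every probability $p_{i,j}$ down to a multiple of $\eps/(10kn)$; by subadditivity this moves every $X_{-i}$ by at most $\eps/10$ in total variation, so the rounded profile is an $\eps/5$-well-supported NE (Claim~\ref{clm:nash}, Lemma~\ref{lem:best-response-dtv}). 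No moment-matching, no CLT, no grouping is needed here --- all the sophistication lives in the DP/cover side, not in the existence of a discretized equilibrium. Your grid $\Lambda$ of size independent of $n$ would force exactly the delicate group-rounding you sketch; the paper sidesteps this by letting the per-player grid depend on $n$ (harmless, since the DP state space is governed by the moment data, not by $|\Lambda|$).
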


The previous PTAS for this problem~\cite{DaskalakisP08, DaskalakisP2014} has running time
$n^{O(2^{k^2}  (f(k)/\eps)^{6k})},$ where $f(k) \leq 2^{3k-1} k^{k^2+1} k!.$
Our algorithm decouples the dependence on $n$ and $1/\eps,$
and, importantly, its running time dependence on $1/\eps$ is quasi-polynomial.
For $k=2,$ an algorithm with runtime $\poly(n) (1/\eps)^{O(\log^2(1/\eps))}$ was given in
~\cite{DaskalakisP09}, which was sharpened to  $\poly(n) (1/\eps)^{O(\log(1/\eps))}$
in the recent work of the authors~\cite{DKS15}. Hence,
we obtain, for any value of $k,$ the same qualitative runtime dependence  on $1/\eps$ as in the case $k=2.$

Similarly to~\cite{DaskalakisP08, DaskalakisP2014}, our algorithm proceeds by constructing a {\em proper} $\eps$-cover,
in total variation distance, for the space of PMDs. A proper $\eps$-cover for $\mathcal{M}_{n, k},$ the set of all
$(n, k)$-PMDs, is a subset $C$ of $\mathcal{M}_{n, k}$ such that any distribution in $\mathcal{M}_{n, k}$ is within total variation distance 
$\eps$ from some distribution in $C.$
Our main technical contribution is the efficient construction of a proper $\eps$-cover of near-minimum size (see Theorem~\ref{thm:cover-pmd}).
We note that, as follows from Theorem~\ref{thm:cover-lb}, the quasi-polynomial dependence on $1/\eps$ and
the doubly exponential dependence on $k$ in the runtime are unavoidable for {\em any} cover-based algorithm.
Our cover upper and lower bounds and our Nash approximation algorithm are given in Section~\ref{sec:cover-nash}.




\paragraph{Statistics.} Using our Fourier-based machinery, we prove a strong ``size-free'' CLT
relating the total variation distance between a PMD and an appropriately discretized Gaussian with the same mean
and covariance matrix. In particular, we show: 

\begin{theorem} \label{thm:clt}
Let $X$ be an $(n,k)$-PMD with covariance matrix $\Sigma.$
Suppose that $\Sigma$ has no eigenvectors other than $\bone =(1,1,\ldots,1)$ with eigenvalue less than $\sigma.$
Then, there exists a discrete Gaussian $G$ so that
$$
\dtv(X,G) \leq \poly(k)/\poly(\sigma).
$$
\end{theorem}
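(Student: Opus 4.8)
The plan is to run a Fourier-analytic comparison between the PMD $X$ and a suitably chosen discretized Gaussian $G$, exploiting the approximate sparsity of the Fourier transform that the paper establishes as its core structural result. First I would set up the discrete Fourier transform of $X$ on the appropriate lattice: since each $X_i$ is supported on $\{e_1,\dots,e_k\}$, the sum lives on the sublattice of $\Z^k$ consisting of vectors summing to $n$, which is essentially a $(k-1)$-dimensional lattice, and $\widehat X(\xi) = \prod_{i=1}^n \widehat{X_i}(\xi)$. The natural candidate for $G$ is the discretized Gaussian with the same mean $\mu = \E[X]$ and the same covariance $\Sigma$, rounded to that lattice; its Fourier transform is (up to the discretization) $\exp(2\pi i \langle \xi,\mu\rangle - 2\pi^2 \xi^\top \Sigma \xi)$ restricted to the dual region. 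The strategy is then: (i) show $\widehat X(\xi) \approx \widehat G(\xi)$ pointwise on a small "effective support" region $S$ around the origin of size roughly $\poly(k)$-many lattice points in the rescaled coordinates; (ii) show $\|\widehat X\|_1$ and $\|\widehat G\|_1$ are both small off $S$; then (iii) apply Parseval/the standard bound $\dtv(X,G) \le \tfrac12\sqrt{|\mathrm{lattice}|}\,\|\widehat X - \widehat G\|_2$ together with the $L_1$-tail control to conclude $\dtv(X,G) \le \poly(k)/\poly(\sigma)$.

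For step (i), on the region where $\|\xi\|$ is small I would take logarithms: $\log \widehat X(\xi) = \sum_i \log \widehat{X_i}(\xi)$, Taylor-expand each factor to second order, and match the linear term (the mean) and quadratic term (the covariance) exactly against $\log \widehat G(\xi)$, so the discrepancy is governed by the third-order remainder $\sum_i |\langle \xi, \cdot\rangle|^3$, which on the rescaled ball of radius $\sim\sqrt{\log(1/\delta)}$ is $O(\poly(k)\,\mathrm{polylog}(1/\delta)/\sqrt\sigma)$ by the eigenvalue hypothesis on $\Sigma$ (the condition that the only small-eigenvalue direction is $\bone$ is exactly what lets us rescale by $\Sigma^{-1/2}$ on the relevant $(k-1)$-dimensional subspace and keep the cubic term controlled). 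For step (ii), the $L_1$-tail bound on $\widehat X$ off $S$ is precisely the approximate-sparsity theorem quoted from earlier in the paper; the matching tail bound on $\widehat G$ is a routine Gaussian computation, since $|\widehat G(\xi)| = \exp(-2\pi^2 \xi^\top\Sigma\xi)$ decays rapidly once $\xi$ leaves the ball of radius $\sim 1/\sqrt\sigma$ in the rescaled metric. I also need to handle the discretization error in defining $G$ on the lattice (rounding a continuous Gaussian to the nearest lattice point), which contributes only a $\poly(k)/\poly(\sigma)$ term as long as $\sigma$ is at least a large constant — and if $\sigma$ is below that threshold the statement is vacuous since $\poly(k)/\poly(\sigma)$ exceeds $1$, so we may assume $\sigma$ large.

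The main obstacle I anticipate is step (i) on the "intermediate" frequency range — the annulus where $\xi$ is no longer small enough for the naive Taylor expansion of $\log\widehat X$ to be valid (individual factors $\widehat{X_i}(\xi)$ may be far from $1$, so $\log$ is problematic), but not yet large enough to be swallowed by the approximate-sparsity region $S$. Controlling $|\widehat X(\xi)|$ there requires showing it is already exponentially small, which is not automatic: it needs a lower bound on how much "spread" the $X_i$'s collectively have in every non-$\bone$ direction, and this is exactly where the covariance lower-bound hypothesis $\sigma$ must be used quantitatively, via an inequality of the form $|\widehat{X_i}(\xi)| \le \exp(-c\,\Var_{\xi}(X_i))$ for the directional variance, summed over $i$. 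Reconciling this with the lattice geometry (making sure the rescaling $\Sigma^{-1/2}$ interacts correctly with the DFT on the non-full-dimensional lattice, and that the "small set" $S$ in the original coordinates really does have bounded size in the rescaled coordinates) is the delicate bookkeeping that the bulk of the proof will be devoted to; everything else is standard Fourier/Parseval manipulation.
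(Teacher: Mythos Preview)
Your outline captures the correct high-level architecture (Fourier comparison, second-order Taylor expansion of $\log\wh{X}$, Gaussian-type decay of $|\wh{X}|$ away from the origin), and the ``intermediate annulus'' issue you flag is real but is handled exactly as you suggest, via $|\wh{X_i}(\xi)|\leq\exp(-c\,\Var[\xi\cdot X_i])$ summed over $i$ (this is Lemma~\ref{lem:gaussian-bound-from-interval} in the paper). That part is fine.

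The genuine gap is in step~(iii). If you simply bound $|X(p)-G(p)|$ uniformly over $p$ by $\int_T|\wh{X}-\wh{G}|$ and then sum over the effective support $S$ of $X$, you pick up a factor of $|S|\cdot\mathrm{Vol}(T)$. The determinant factors $\sqrt{\det(\Sigma+I)}$ and $\det(\Sigma+I)^{-1/2}$ cancel, but the tail-padding factors do not: $|S|$ carries an $O(\log\sigma)^{k/2}$ and $\mathrm{Vol}(T)$ carries another $O(\log\sigma)^{k/2}$, so the product is $O(\log\sigma)^{k}$. Your Parseval variant $\dtv\le\tfrac12\sqrt{|S|}\,\|\wh{X}-\wh{G}\|_2$ does not escape this either: it yields $\sqrt{|S|\cdot\mathrm{Vol}(T)}$ times the pointwise Fourier error, still $O(\log\sigma)^{k/2}$. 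The paper records exactly this as Proposition~\ref{prop:clt-naive-integral}, obtaining $\dtv(X,G)\le O(\log\sigma)^k\sigma^{-1/2}$, which is \emph{exponential} in $k$ and does not give $\poly(k)/\poly(\sigma)$.

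To get the polynomial-in-$k$ bound you need a \emph{pointwise relative} estimate of the form $|X(p)-G(p)|\le O(k^{7/2}\sqrt{\log^3\sigma/\sigma})\cdot G(p)$, so that summing over $p$ collapses to the single error factor. The paper achieves this by the saddlepoint method: in the inverse Fourier integral for $X(p)-G(p)$ one deforms the real contour to $\re(\xi)\in T$, $\im(\xi)=\xi_0$ where $\Sigma\xi_0=(\mu-p)/(2\pi)$; along this shifted contour the integrand acquires an extra factor $\exp(-(p-\mu)^T\Sigma^{-1}(p-\mu)/2)$, which is precisely $G(p)$ up to normalization. This exploitation of cancellation in the oscillatory integral is the missing idea in your proposal, and without it the argument stalls at the $(\log\sigma)^k$ bound.
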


As mentioned above, Valiant and Valiant~\cite{VV10b, ValiantValiant:11} proved a CLT of this form and used it as their main
technical tool to obtain tight information-theoretic lower bounds for fundamental statistical estimation tasks.
This and related CLTs have since been used in proving lower bounds for other problems (see, e.g.,~\cite{ChenST14}).
The error bound in the CLT of~\cite{VV10b, ValiantValiant:11} is of the form
$\poly(k)/\poly(\sigma) \cdot (1+ \log n)^{2/3},$ i.e., it has a dependence on the size $n$ of the underlying PMD.
Our Theorem~\ref{thm:clt} provides a {\em qualitative} improvement over the aforementioned bound,
by establishing that {\em no} dependence on $n$ is necessary. We note that \cite{VV10b} conjectured that such a qualitative
improvement may be possible.

We remark that our techniques for proving Theorem~\ref{thm:clt} are orthogonal to those of~\cite{VV10b, ValiantValiant:11}.
While Valiant and Valiant use Stein's method, we prove our strengthened CLT using the Fourier techniques that underly this paper.
We view Fourier analysis as the right technical tool to analyze sums of independent random variables.
An additional ingredient that we require is the saddlepoint method from complex analysis. 
We hope that our new CLT will be of broader use as an analytic tool to the TCS community. 
Our CLT is proved in Section~\ref{sec:clt}.

\paragraph{Structure of PMDs.} We now provide a brief intuitive overview of our new structural results for PMDs,
the relation between them, and their connection to our algorithmic results mentioned above.
The unifying theme of our work is a refined analysis of the structure of PMDs, based on their Fourier transform.
The Fourier transform is one of the most natural technical tools to consider for analyzing sums of independent random variables,
and indeed one of the classical proofs of the (asymptotic) central limit theorem is based on Fourier methods.
The basis of our results, both algorithmic and structural, is the following statement:

\medskip

\noindent {\bf Informal Lemma} (Sparsity of the Fourier Transform of PMDs.)
{\em For any $(n, k)$-PMD $\p$, and any $\eps>0$ there exists a ``small'' set $T = T( \p, \eps),$
such that the $L_1$-norm of its Fourier transform, $\wh{\p},$ outside the set $T$ is at most $\eps.$}

\medskip

We will need two different versions of the above statement for our applications, and therefore we do not provide
a formal statement at this stage. The precise meaning of the term ``small'' depends on the setting:
For the continuous Fourier transform, we essentially prove that the product of the volume
of the effective support of the Fourier transform times the number of points
in the effective support of our distribution is small.
In particular, the set $T$ is a scaled version of the dual ellipsoid 
to the ellipsoid defined by the covariance matrix of $\p.$
Hence, roughly speaking, $\wh{\p}$ has an effective support that is the dual of the effective support of $\p.$
(See Lemma~\ref{lem:ft-es} in Section~\ref{sec:cover-nash} for the precise statement.)

In the case of the Discrete Fourier Transform (DFT),
we show that there exists a discrete set with small cardinality,
such that $L_1$-norm of the DFT outside this set is small.
At a high-level, to prove this statement, we need the appropriate definition of the (multidimensional) DFT,
which turns out to be non-trivial, and is crucial for the computational efficiency of our learning algorithm.
More specifically, we chose the period of the DFT
to reflect the shape of the effective support of our PMD.
(See Proposition~\ref{prop:ft-effective-support} in Section~\ref{sec:algo} for the statement.)



With Fourier sparsity as our starting point, we obtain new structural results of independent interest for PMDs.
The first is a ``robust'' moment-matching lemma, which we now informally state:

\medskip

\noindent {\bf Informal Lemma} (Parameter Moment Closeness Implies Closeness in Distribution.)
{\em For any pair of $(n, k)$-PMDs $\p, \q$, if the ``low-degree''  parameter moment profiles of $\p$ and $\q$ are close,
then $\p, \q$ are close in total variation distance.}

\medskip
See Definition~\ref{def:param-moments} for the definition of parameter moments of a PMD.
The formal statement of the aforementioned lemma appears as Lemma~\ref{lem:moments-imply-dtv}
in Section~\ref{ssec:moments-struct}. Our robust moment-matching lemma is the basis for our proper cover algorithm
and our EPTAS for Nash equilibria in anonymous games.
Our constructive cover upper bound is the following:

\begin{theorem}[Optimal Covers for PMDs] \label{thm:cover-pmd}
For all $n, k \in \Z_+,$ $k>2,$ and $\eps>0$, there exists an $\eps$-cover
${\cal M}_{n, k, \eps} \subseteq {\cal M}_{n, k}$, under the total variation distance, of the set
${\cal M}_{n, k}$ of $(n, k)$-PMDs of size
$|{\cal M}_{n, k, \eps}| \le n^{O(k^2)} \cdot (1/\eps)^{O(k\log(k/\eps)/\log\log(k/\eps))^{k-1}}.$
In addition, there exists an algorithm to construct the set ${\cal M}_{n, k, \eps}$ that runs in time
$n^{O(k^3)} \cdot (1/\eps)^{O(k^3\log(k/\eps)/\log\log(k/\eps))^{k-1}}.$
\end{theorem}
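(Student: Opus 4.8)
The plan is to build the cover by quantizing the \emph{parameter moments} of a PMD rather than its parameters directly, leveraging the robust moment-matching lemma (Lemma~\ref{lem:moments-imply-dtv}) as the engine that converts moment-closeness into total-variation closeness. First I would recall that an $(n,k)$-PMD $\p$ is specified by $n$ categorical random vectors $X_i$, each a probability vector $\pi^{(i)} \in \Delta^{k-1}$; thus $\p$ is described by $n(k-1)$ real parameters. The key reduction is that, by the robust moment-matching lemma, two PMDs whose parameter moments of degree at most some threshold $d = O(\log(k/\eps)/\log\log(k/\eps))$ agree up to an additive error that is inverse-polynomial in $n$ and $1/\eps$ (appropriately scaled per coordinate) are $\eps$-close in $\dtv$. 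So it suffices to enumerate, for each achievable ``moment profile'' of a PMD, one representative PMD realizing (approximately) that profile.

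The main steps, in order, are: (1) \textbf{Identify the relevant moment profile.} For each multi-index $\alpha \in \Z_{\geq 0}^{k-1}$ with $|\alpha| \leq d$, the parameter moment $m_\alpha(\p) = \sum_{i=1}^n \prod_j (\pi^{(i)}_j)^{\alpha_j}$ lies in $[0, n]$; there are $O(d^{k-1}) \le (k\log(k/\eps))^{O(k)}$ such indices, but the only ones that matter for the cover are those of degree $\le d$, giving $D := \binom{d+k-1}{k-1}$ coordinates. (2) \textbf{Quantize the profile.} Round each $m_\alpha$ to a grid of spacing that is $1/\poly(n,1/\eps)$ small enough to invoke Lemma~\ref{lem:moments-imply-dtv}; this gives at most $\big(n\cdot\poly(n,1/\eps)\big)^{D}$ grid points, which after taking $d$'s value yields the claimed count $n^{O(k^2)}\cdot (1/\eps)^{O(k\log(k/\eps)/\log\log(k/\eps))^{k-1}}$ (here one needs $D^{}$-th power of $n^{O(1)}$ to collapse to $n^{O(k^2)}$, using $D \le k^{O(1)}$ for the range of $d$ that matters). (3) \textbf{Realize each profile.} For each quantized profile that is actually attained by \emph{some} $(n,k)$-PMD, solve a feasibility problem to find parameters $\{\pi^{(i)}\}$ whose moments match the target profile to within the grid tolerance; output the corresponding PMD. (4) \textbf{Algorithmic construction.} Enumerate over the grid, and for each grid point run the feasibility search; discard infeasible profiles. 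The running time is the cover size times the per-profile cost, and the extra $k^3$ in the exponent (versus $k^2$ in the size bound) comes from the cost of the feasibility/rounding subroutine, which must manipulate systems of size polynomial in $D$ and $n$.

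The hard part, and the step I expect to be the main obstacle, is \textbf{step (3): realizing a quantized moment profile by an actual PMD with a controlled-size description}, done efficiently. Two issues arise. First, not every point on the moment grid corresponds to a genuine PMD — the set of attainable parameter-moment vectors is a complicated semialgebraic region (a ``moment body''), so one must either characterize it or, more practically, reduce the continuum of parameters $\pi^{(i)}$ to a finite set: one shows that any PMD can be perturbed so that each $\pi^{(i)}$ lies in a fixed finite grid over $\Delta^{k-1}$ of size $\poly(1/\eps)^{k}$ (or that the multiset of distinct CRVs used has bounded ``type'' complexity), while changing the moments by at most the grid tolerance — this is itself a moment-matching argument, possibly iterated (first sparsify the ``heavy'' CRVs whose $\pi^{(i)}$ is far from a vertex, then round the ``light'' ones). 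Second, once the CRVs are drawn from a finite grid, the PMD is determined by how many summands use each grid type, i.e.\ by a vector of multiplicities summing to $n$; enumerating these multiplicity vectors (there are $n^{(\text{grid size})}$ of them, but grouping by moment profile collapses this) and checking moment-closeness is what the feasibility subroutine does. Handling the interaction between ``$n$ summands'' (discreteness) and ``$1/\eps$ accuracy'' cleanly — so that the $n$-dependence and the $\eps$-dependence decouple in both the size and time bounds — is the delicate accounting that makes this an EPTAS-quality cover rather than a crude one. Everything else (the moment-counting combinatorics, invoking Lemma~\ref{lem:moments-imply-dtv}, and the grid arithmetic) is routine once that realization step is in place.
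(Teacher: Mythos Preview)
Your counting in step~(2) breaks, and the break points directly at the idea you are missing. You write that the number of relevant multi-indices is $D=\binom{d+k-1}{k-1}$ with $d=O(\log(k/\eps)/\log\log(k/\eps))$, and then assert ``$D\le k^{O(1)}$ for the range of $d$ that matters''. That is false: $D$ grows like $d^{k-1}$, i.e.\ it depends on $\eps$, not just on $k$. Since every parameter moment $M_m$ of the full PMD lies in $[0,n]$ and must be quantized to accuracy $1/\poly(n,1/\eps)$, your grid has at least $n^{\Omega(D)}=n^{(\log(1/\eps)/\log\log(1/\eps))^{\Theta(k-1)}}$ points, which is not $n^{O(k^2)}$. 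So as stated, step~(2) cannot yield the claimed size bound.

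The paper avoids this by a structural step you omit: after splitting into $i$-maximal pieces, it partitions the component CRVs of each piece into three groups according to whether they are $\delta_1$- or $\delta_2$-exceptional (Lemma~\ref{lem:moments-imply-dtv} is stated \emph{for} this partition, so you cannot invoke it on the raw PMD anyway). The point of the partition is to decouple $n$ from $\eps$: the non-$\delta_1$-exceptional group may contain up to $n$ CRVs but needs only the $O(k^2)$ moments of degree $\le 2$ (this is where $n^{O(k^2)}$ comes from); the $\delta_1$-but-not-$\delta_2$-exceptional group is the one that needs degree-$K_2$ moments, but its size is at most $O(k/\delta_1^2)$, \emph{independent of $n$}, so each of those moments is bounded independently of $n$; the $\delta_2$-exceptional group is so small that raw parameter rounding suffices. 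Without this three-way split, the $n$- and $\eps$-contributions do not separate.

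For step~(3), your feasibility-search idea is replaced in the paper by something cleaner: the ``data'' vector $D_G(X)$ is designed to be additive over independent summands, $D_G(X+Y)=D_G(X)+D_G(Y)$, so a dynamic program over the $n$ CRVs maintains the set of all achievable data vectors together with an explicit witness PMD for each (Theorem~\ref{thm:dp}). This simultaneously handles realizability and properness without ever solving a separate moment-realization problem.
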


A sparse proper cover quantifies the ``size'' of the space of PMDs and
provides useful structural information that can be exploited in a variety of applications. 
In addition to Nash equilibria in anonymous games,
our efficient proper cover construction provides a smaller search space
for approximately solving essentially any optimization problem over PMDs. 
As another corollary of our cover construction, we obtain the first EPTAS
for computing threat points in anonymous games.

\medskip

Perhaps surprisingly, we also prove that our above upper bound is essentially tight:

\begin{theorem}[Cover Lower Bound for PMDs] \label{thm:cover-lb}
For any $k>2$, $\eps>0$ sufficiently small as a function of $k,$ and $n =\Omega_k(\log(1/\eps)/ \log\log(1/\eps))^{k-1}$,
any $\eps$-cover for ${\cal M}_{n, k}$ has size at least
$n^{\Omega(k)} \cdot (1/\eps)^{\Omega_k(\log(1/\eps)/\log\log(1/\eps))^{k-1}}.$
\end{theorem}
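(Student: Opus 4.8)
The plan is to exhibit a large family of $(n,k)$-PMDs that are pairwise $\Omega(\eps)$-far in total variation distance; any $\eps$-cover must then contain at least one point near each member of the family (by a packing argument, roughly $|C| \ge$ (number of pairwise-far PMDs), up to a constant-factor loss in $\eps$). So the whole game is to construct a packing of the stated cardinality. I would build the packing in a product fashion, separately exploiting the two factors $n^{\Omega(k)}$ and $(1/\eps)^{\Omega_k(\log(1/\eps)/\log\log(1/\eps))^{k-1}}$, and then combine them on disjoint blocks of coordinates/summands.

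First I would handle the $(1/\eps)^{\Omega_k(\log(1/\eps)/\log\log(1/\eps))^{k-1}}$ factor, which is the genuinely new and hard part and mirrors the exponent appearing in the cover \emph{upper} bound (Theorem~\ref{thm:cover-pmd}). The intuition is that a PMD on few summands is essentially determined, up to small variation distance, by its low-degree parameter moments (the robust moment-matching Lemma~\ref{lem:moments-imply-dtv}), and the upper bound comes from gridding the space of admissible moment profiles; to get a matching lower bound I want to show this gridding is essentially lossless, i.e. that \emph{distinct} moment profiles on a grid of the right granularity really do give PMDs that are far apart. Concretely, I would take $n_0 = \Theta_k(\log(1/\eps)/\log\log(1/\eps))$ summands, each an independent $k$-CRV whose probability vector is a small perturbation of a fixed ``generic'' vector, so that the map from the $n_0$ perturbation parameters (living in roughly $k-1$ free directions each, hence a parameter space of dimension $\sim (k-1)n_0$) to the resulting distribution is, after a suitable change of variables, close to the moment map and is quantitatively injective with a controlled modulus of continuity. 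Counting lattice points in this parameter space at the coarsest scale for which the images are still pairwise $\Omega(\eps)$-far yields $\approx (1/\eps)^{c(k-1)n_0} = (1/\eps)^{\Omega_k(\log(1/\eps)/\log\log(1/\eps))^{k-1}}$ — note the exponent $n_0$ itself scales like $(\log(1/\eps)/\log\log(1/\eps))$, which is where the $(k-1)$-th power in the final exponent comes from, so I must be careful that the per-summand count and the number of summands multiply in the exponent correctly.

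Next, for the $n^{\Omega(k)}$ factor, I would use a separate block of the $n$ summands: roughly $n$ independent copies drawn from a small menu of CRV types, where varying how many copies are of each of $\sim k$ distinct types changes the mean vector (and low moments) by $\Omega(1/\sqrt n)$-scale amounts in $k-1$ independent directions, giving $\sim n^{k-1}$ distinguishable combinations, each pair $\Omega(\eps)$-far once we ensure the covariance is large enough that a unit shift in an integer coordinate is not washed out — this is exactly the regime $n = \Omega_k(\cdots)$ in the hypothesis, which guarantees the relevant $\sigma$ is large enough for such perturbations to be $\Omega(\eps)$-detectable via, say, the CLT of Theorem~\ref{thm:clt} or a direct local-limit estimate. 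Taking the product construction over two disjoint blocks of summands (the distributions add, and variation distance of sums of independent things is at least that of either block, or more carefully one argues the two blocks of parameters are jointly identifiable) multiplies the two counts and gives the claimed bound.

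The main obstacle I expect is the quantitative injectivity / anti-concentration in the first part: showing that two PMDs on $n_0$ summands whose parameter-moment profiles differ by a single grid step are \emph{genuinely} $\Omega(\eps)$-far, rather than merely not provably close. This requires a lower bound on total variation distance in terms of moment discrepancy — essentially a converse to Lemma~\ref{lem:moments-imply-dtv} — which I would obtain by picking an explicit low-degree test function (a polynomial, or its Fourier-side analogue) that the two distributions evaluate differently, and controlling the variance so that the difference in expectations certifies the variation distance. Making the constants line up so that the grid granularity in the construction matches (up to the $\Omega_k(\cdot)$ slack) the granularity used in the upper bound of Theorem~\ref{thm:cover-pmd} is the delicate bookkeeping step, and the place where the $\log\log$ in the exponent has to be tracked honestly.
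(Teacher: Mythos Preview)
Your high-level scaffolding --- build a large pairwise-$\eps$-far family on $n_0$ summands for the $(1/\eps)^{\cdots}$ factor, then multiply by $n^{\Omega(k)}$ via an independent block --- matches the paper. But the concrete construction for the first factor has a genuine gap.

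First, the count is off: with $n_0=\Theta_k(\log(1/\eps)/\log\log(1/\eps))$ summands and $(k-1)$ free directions each, $(1/\eps)^{c(k-1)n_0}$ has exponent \emph{linear} in $\log(1/\eps)/\log\log(1/\eps)$, not raised to the $(k-1)$-th power; the theorem's hypothesis already tells you $n_0$ must be $\Theta_k\bigl((\log(1/\eps)/\log\log(1/\eps))^{k-1}\bigr)$. Second, and more seriously, ``perturb each CRV around a fixed generic vector and grid'' does not give the needed injectivity: to leading order the parameter moments $M_m$ then depend only on the \emph{sums} $\sum_i \epsilon_{i,j}$ of the perturbations (a $(k-1)$-dimensional quantity), so most grid points collapse. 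The paper's construction is quite different: it sets $a\approx \log(1/\eps)/\log\log(1/\eps)$, indexes the $n_0=a^{k-1}$ summands by multi-indices $s\in[a]^{k-1}$, fixes $p_{s,j}=s_j/\mathrm{poly}\log(1/\eps)$ for $j\ge 2$, and perturbs only $p_{s,1}$ by an amount $\eps^{\Theta(1)}f(s)$ for a free $f:[a]^{k-1}\to[t]$. The point is that the linearized map from $f$ to the moments $\{M_m:m\in[a]^{k-1}\}$ is then a \emph{tensor product of integer Vandermonde matrices}, hence invertible with integer output, forcing a $\mathrm{poly}(\eps)$ moment gap whenever $f\neq g$. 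The converse step (moment gap $\Rightarrow \dtv\ge\eps$) is also not via an explicit test polynomial as you suggest: the paper expands $\log P(X^f,z)$ as a Taylor series in the $M_m$'s, uses Cauchy's integral formula to bound a single coefficient by a sup of $|\log P(X^f,z)-\log P(X^g,z)|$, and thereby \emph{non-constructively} obtains a point $\|z^\ast\|_\infty\le 2$ where the PGFs differ by $\mathrm{poly}(\eps)$; a truncation argument then converts this to a TV lower bound.

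For the $n^{\Omega(k)}$ factor the paper's route is simpler than your copy-counting-plus-CLT idea: it adds $n-n_0$ \emph{deterministic} $k$-CRVs to the $n_0$-block, producing $(n/n_0)^{\Omega(k)}$ shifted copies with pairwise \emph{disjoint supports}; any $\eps$-cover must then cover each shifted copy separately, and the two factors multiply.
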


We remark that, in previous work~\cite{DKS15}, the authors proved a tight cover size bound
of $n \cdot (1/\eps)^{\Theta(k \log(1/\eps))}$ for $(n, k)$-SIIRVs, 
i.e., sums of $n$ independent scalar random variables each supported on $[k].$
While a cover size lower bound for  $(n, k)$-SIIRVs directly implies the same lower bound
for $(n, k)$-PMDs, the opposite is not true. Indeed, Theorems~\ref{thm:cover-pmd} and~\ref{thm:cover-lb} 
show that covers for $(n, k)$-PMDs are inherently larger, requiring a doubly exponential dependence
on $k.$

\subsection{Our Approach and Techniques} \label{ssec:techniques}
At a high-level, the Fourier techniques of this paper can be viewed as
a  highly non-trivial generalization of the techniques in our recent paper~\cite{DKS15} on sums of independent scalar random variables.
We would like to emphasize that a number of new conceptual and technical ideas are required to overcome the various obstacles arising
in the multi-dimensional setting.

\medskip

We start with an intuitive explanation of two key ideas that form the basis of our approach.

\paragraph{Sparsity of the Fourier Transform of PMDs.}
Since the Fourier Transform (FT) of a PMD is the product of the FTs of its component CRVs,
its magnitude is the product of terms each bounded from above by $1.$
Note that each term in the product is strictly less than $1$ except in a small region, unless
the component CRV is trivial (i.e., essentially deterministic). Roughly speaking,
to establish the sparsity of the FT of PMDs, we proceed as follows: We bound from above the magnitude of the FT
by the FT of a Gaussian with the same covariance matrix as our PMD.
(See, for example, Lemma~\ref{lem:gaussian-bound-from-interval}.)
This gives us tail bounds for the FT of the PMD in terms of the FT of this Gaussian, and
when combined with the concentration of the PMD itself, yields the desired property.

\paragraph{Approximation of the logarithm of the Fourier Transform.}
A key ingredient in our proofs is the approximation of the logarithm of the Fourier Transform (log FT) of PMDs
by low-degree polynomials. Observe that the log FT is a sum of terms, which is convenient for the analysis.
We focus on approximating the log FT by a low-degree Taylor polynomial within the effective support of the FT.
(Note that outside the effective support the log FT can be infinity.)
Morally speaking, the log FT is smooth, i.e., it is approximated
by the first several terms of its Taylor series. Formally however, this statement is in general not true
and requires various technical conditions, depending on the setting.
One important point to note is that the sparsity of the FT controls the domain in which this approximation
will need to hold, and thus help us bound the Taylor error.
We will need to ensure that the sizes of the Taylor coefficients are not too large
given the location of the effective support, which turns out to be a non-trivial technical hurdle.
To ensure this, we need to be very careful about how we perform this Taylor expansion.
In particular, the correct choice of the point that we Taylor expand around will be critical for our applications.
We elaborate on these difficulties in the relevant technical sections.
Finally, we remark that the degree of polynomial approximation we will require depends on the setting: In our cover upper bounds,
we will require (nearly) logarithmic degree, while for our CLT degree-$2$ approximation suffices.

\medskip

We are now ready to give an overview of the ideas in the proofs of each of our results.

\paragraph{Efficient Learning Algorithm.}
The high-level structure of our learning algorithm relies on the sparsity of the Fourier transform, and is
similar to the algorithm in our previous work~\cite{DKS15}
for learning sums of independent integer random variables.
More specifically, our learning algorithm estimates the effective support of the DFT,
and then computes the empirical DFT in this effective support.
This high-level description would perhaps suffice, if we were
only interested in bounding the sample complexity. In order to obtain
a computationally efficient algorithm, it is crucial to use the appropriate definition of the DFT
and its inverse.

In more detail, our algorithm works as follows:
It starts by drawing $\poly(k)$ samples to estimate
the mean vector and covariance matrix of our PMD to good accuracy.
Using these estimates, we can bound the effective support of our distribution in an appropriate ellipsoid.
In particular, we show that our PMD lies (whp) in a fundamental domain
of an appropriate integer lattice $L = M\Z^k,$
where $M \in \Z^{k \times k}$ is an integer matrix whose columns are appropriate functions
of the eigenvalues and eigenvectors of the (sample) covariance matrix.
This property allows us to learn our unknown PMD $X$ by learning the random variable $X\pmod{L}.$
To do this, we learn its Discrete Fourier transform.
Let $L^{\ast}$ be the dual lattice to $L$
(i.e., the set of points $\xi$ so that $\xi\cdot x\in \Z$ for all $x\in L$).
Importantly, we define the DFT, $\wh{\p},$ of our PMD $X \sim \p$ on the dual lattice $L^{\ast},$
that is, $\wh{\p}:L^{\ast}/\Z^k \rightarrow \C$ with
$\wh{\p}(\xi) = \E[e(\xi \cdot X)].$
A useful property of this definition is the following: the probability that $X \pmod{L}$ attains a given value $x$
is given by the inverse DFT, defined on the lattice $L,$
namely $\Pr\left[X \pmod{L} = x\right] = \frac{1}{|\det(M)|}\sum_{\xi \in L^{\ast}/\Z^k} \wh{\p}(\xi)e(-\xi\cdot x).$

The main structural property needed for the analysis of our algorithm is that there exists an explicit set $T$ with integer coordinates
and cardinality $(k\log(1/\eps))^{O(k)}$ that contains all but $O(\eps)$ of the $L_1$ mass of $\wh{\p}.$
Given this property, our algorithm draws an additional set of samples of size $(k \log(1/\eps))^{O(k)}/\eps^2$ from the PMD,
and computes the empirical DFT (modulo $L$) on its effective support $T.$
Using these ingredients, we are able to show that the inverse
of the empirical DFT defines a pseudo-distribution that is $\eps$-close to our unknown PMD in total variation distance.

Observe that the support of the inverse DFT can be large, namely $\Omega(n^{k-1}).$
Our  algorithm {\em does not} explicitly evaluate the inverse DFT at all these points, but
outputs a succinct description of its hypothesis $\h$, via its DFT $\wh{\h}.$ We emphasize that this succinct description
suffices to efficiently obtain both an approximate evaluation oracle and an approximate sampler for our target PMD $\p.$
Indeed, it is clear that computing the inverse DFT at a single point can be done in time  $O(|T|) = (k\log(1/\eps))^{O(k)},$
and gives an approximate oracle for the probability mass function of $\p.$
By using additional algorithmic ingredients, we show how to use
an oracle for the DFT, $\wh{\h}$, as a black-box to obtain a computationally efficient approximate sampler for $\p.$

Our learning algorithm and its analysis are given in Section~\ref{sec:algo}.

\paragraph{Constructive Proper Cover and Anonymous Games.}
The correctness of our learning algorithm easily implies (see Section~\ref{sec:cover-from-alg})
an algorithm to construct a {\em non-proper} $\eps$-cover for PMDs of size $n^{O(k^2)} \cdot (1/\eps)^{\log(1/\eps))^{O(k)}}.$
While this upper bound is close to being best possible (see Section~\ref{sec:cover-lb}),
it does not suffice for our algorithmic applications in anonymous games.
For these applications, it is crucial to obtain an efficient algorithm that constructs a {\em proper} $\eps$-cover,
and in fact one that works in a certain stylized way.

To construct a proper cover, we rely on the sparsity of the continuous Fourier Transform of PMDs.
Namely, we show that for any PMD $\p,$ with effective support $S \subseteq [n]^k,$
there exists an appropriately defined set $T \subseteq [0, 1]^k$ such that the contribution of $\overline{T}$
to the $L_1$-norm of  $|\wh{\p}|$ is at most $\eps / |S|.$ By using this property, we show
that any two PMDs, with approximately the same variance in each direction, that have continuous Fourier transforms
close to each other in the set $T,$ are close in total variation distance. We build on this lemma to prove
our robust moment-matching result. Roughly speaking, we show that two PMDs,
with approximately the same variance in each direction, that are ``close'' to each other in their low-degree  parameter moments
are also close in total variation distance. We emphasize that the meaning of the term ``close'' here is quite subtle:
we need to appropriately partition the component CRVs into groups, and approximate the parameter moments of the PMDs formed by 
each group within a different degree and different accuracy for each degree. 
(See Lemma~ \ref{lem:moments-imply-dtv} in Section~\ref{ssec:moments-struct}.)

Our algorithm to construct a proper cover, and our EPTAS for Nash equilibria
in anonymous games proceed by a careful dynamic programming approach,
that is based on our aforementioned robust moment-matching result.

Finally, we note that combining our moment-matching lemma with a recent result in algebraic geometry
gives us the following structural result of independent interest: Every PMD is $\eps$-close to another PMD
that is a sum of at most $O(k + \log(1/\eps))^{k}$ distinct $k$-CRVs.

The aforementioned algorithmic and structural results are given in Section~\ref{sec:cover-nash}.

\paragraph{Cover Size Lower Bound.}
As mentioned above, a crucial ingredient of our cover upper bound is
a robust moment-matching lemma, which
translates closeness between the low-degree  parameter  moments
of two PMDs to closeness between their Fourier Transforms,
and in turn to closeness in total variation distance.
To prove our cover lower bound, we follow the opposite direction.
We construct an explicit set of PMDs with the property that
{\em any} pair of distinct PMDs in our set
have a non-trivial difference in (at least) one of their low-degree  parameter moments.
We then show that difference in one of the  parameter moments implies
that there exists a point where the probability generating functions
have a non-trivial difference. Notably, our proof for this step
is non-constructive making essential use of Cauchy's integral formula.
Finally, we can easily translate a pointwise difference between the probability
generating functions to a non-trivial total variation distance error.
We present our cover lower bound construction in Section~\ref{sec:cover-lb}.


\paragraph{Central Limit Theorem for PMDs.}
The basic idea of the proof of our CLT will be to compare the Fourier transform of our PMD
$X$ to that of the discrete Gaussian $G$ with the same mean and covariance.
By taking the inverse Fourier transform, we will be able to conclude that these distributions are pointwise close.
A careful analysis using a Taylor approximation and the fact that both $\wh{X}$ and $\wh{G}$
have small effective support, gives us a total variation distance error independent of the size $n.$
Alas, this approach results in an error dependence that is exponential in $k.$
To obtain an error bound that scales polynomially with $k,$ 
we require stronger bounds between $X$ and $G$ at points away from the mean.
Intuitively, we need to take advantage of cancellation in the inverse Fourier transform integrals.
To achieve this, we will use the saddlepoint method from complex analysis.
The full proof of our CLT is given in Section~\ref{sec:clt}.

\subsection{Related and Prior Work} \label{ssec:related}

There is extensive literature on distribution learning and computation of approximate Nash equilibria in various
classes of games. We have already mentioned the most relevant references in the introduction.

Daskalakis {\em et al.}~\cite{DKT15} studied the structure and learnability of PMDs.
They obtained a non-proper $\eps$-cover of size $n^{k^2} \cdot 2^{O(k^{5k} \log(1/\eps)^{k+2})},$
and an information-theoretic upper bound on the learning sample complexity of  $O(k^{5k} \log(1/\eps)^{k+2}/\eps^2).$
The dependence on $1/\eps$ in their cover size is also quasi-polynomial, but is suboptimal as follows
from our upper and lower bounds. Importantly, the \cite{DKT15} construction yields a {\em non-proper} cover.
As previously mentioned, a {\em proper} cover construction is necessary for our algorithmic applications.
We note that the learning algorithm of~\cite{DKT15} relies on enumeration over a cover, 
hence runs in time quasi-polynomial in $1/\eps,$ even for $k=2.$
The techniques of~\cite{DKT15} are orthogonal to ours. 
Their cover upper bound is obtained by a clever black-box application of the CLT of~\cite{VV10b},
combined with a non-robust moment-matching lemma that they deduce from a result of Roos~\cite{Roos02}.
We remind the reader that our Fourier techniques strengthen both these technical tools:
Theorem~\ref{thm:clt} strengthens the CLT of~\cite{VV10b}, and we prove a {\em robust} and quantitatively 
essentially optimal moment-matching lemma.

In recent work~\cite{DKS15}, the authors used Fourier analytic techniques to study the structure and learnability
of sums of independent integer random variables (SIIRVs). The techniques of this paper can be viewed as a (highly nontrivial) generalization
of those in~\cite{DKS15}. We also note that the upper bounds we obtain in this paper for learning and covering PMDs do not subsume
the ones in~\cite{DKS15}. In fact, our cover upper and lower bounds in this work show that optimal covers for PMDs
are inherently larger than optimal covers for SIIRVs. Moreover, the sample complexity of our SIIRV learning algorithm~\cite{DKS15}
is significantly better than that of our PMD learning algorithm in this paper.


\subsection{Concurrent and Independent Work} \label{ssec:conc}

Concurrently and independently to our work, \cite{DDKT15} obtained qualitatively similar results
using different techniques. We now provide a statement of the \cite{DDKT15}
results in tandem with a comparison to our work.

\cite{DDKT15} give a learning algorithm for PMDs with sample complexity
$(k \log(1/\eps)^{O(k)}/\eps^2)$ and runtime $(k/\eps)^{O(k^2)}.$ 
The \cite{DDKT15} algorithm uses the continuous Fourier transform, exploiting 
its sparsity property, plus additional structural and algorithmic ingredients. 
The aforementioned runtime is not polynomial in the sample size, 
unless $k$ is fixed.  In contrast, our learning algorithm runs in sample--polynomial time, and, 
for fixed $k$, in nearly-linear time.
The \cite{DDKT15} learning algorithm outputs an explicit hypothesis, 
which can be easily sampled. On the other hand, our algorithm outputs 
a succinct description of its hypothesis (via its DFT), 
and we show how to efficiently sample from it.

\cite{DDKT15} also prove a size-free CLT, analogous to our Theorem~\ref{thm:clt},
with error polynomial in $k$ and $1/\sigma.$  
Their CLT is obtained by bootstrapping the CLT of~\cite{VV10b, ValiantValiant:11}
using techniques from~\cite{DKT15}. As previously mentioned, 
our proof is technically orthogonal to~\cite{VV10b, ValiantValiant:11, DDKT15}, 
making use of the sparsity of the Fourier transform 
combined with tools from complex analysis. 
It is worth noting that 
our CLT also achieves a near-optimal dependence in the error as a function of $1/\sigma$
(up to log factors).

Finally, \cite{DDKT15} prove analogues of 
Theorems~\ref{thm:nash-anonymous},~\ref{thm:cover-pmd}, and~\ref{thm:cover-lb} with qualitatively similar bounds to ours.
We note that \cite{DDKT15} improve the dependence on $n$ in the cover size 
to an optimal $n^{O(k)},$ while the dependence on $\eps$ in their cover upper 
bound is the same as in~\cite{DKT15}. The cover size lower bound of 
\cite{DDKT15} is qualitatively of the right form, though slightly suboptimal as a function 
of $\eps.$  The algorithms to construct proper covers and the corresponding EPTAS for anonymous games 
in both works have running time roughly comparable to the PMD cover size.

\subsection{Organization} \label{sec:org}
In Section~\ref{sec:algo}, we describe and analyze our learning algorithm for PMDs.
Section~\ref{sec:cover-nash} contains our proper cover upper bound construction, our cover size lower bound, 
and the related approximation algorithm for Nash equilibria in anonymous games. 
Finally, Section~\ref{sec:clt} contains the proof of our CLT.

\section{Preliminaries}

In this section, we record the necessary definitions and terminology that will be used throughout the technical sections of this paper.

\paragraph{Notation.} For $n \in \Z_+$, we will denote $[n] \eqdef \{1, \ldots, n\}.$ For a vector $v \in \R^n$,
and $p \ge 1$, we will denote $\|v\|_p \eqdef \left(\sum_{i=1}^n |v_i|^{p}\right)^{1/p}.$ We will use the boldface notation
$\mathbf{0}$ to denote the zero vector or matrix in the appropriate dimension.

\paragraph{Poisson Multinomial Distributions.} We start by defining our basic object of study:
\begin{definition}[$(n, k)$-PMD]  \label{def:pmd}
For $k \in \Z_{+}$, let $e_j$, $j \in [k]$, be the standard unit vector along dimension $j$ in $\R^k$.
A {\em $k$-Categorical Random Variable} ($k$-CRV) is a vector random variable supported on the set
$\{e_1, e_2, \ldots, e_k \}$.
A {\em$k$-Poisson Multinomial Distribution of order $n$}, or $(n, k)$-PMD,
is any vector random variable of the form $X = \sum_{i=1}^n X_i$ where the $X_i$'s are independent $k$-CRVs.
We will denote by ${\cal M}_{n,k}$ the set of all $(n, k)$-PMDs.
\end{definition}

\noindent We will require the following notion of a parameter moment for a PMD:

\begin{definition}[$m^{th}$-parameter moment of a PMD] \label{def:param-moments}
Let $X = \sum_{i=1}^n X_i$ be an $(n, k)$-PMD such that for $1 \le i \le n$ and $1 \le j \le k$ we denote $p_{i, j} = \Pr[X_i = e_j]$.
For $m=(m_1,\ldots,m_{k}) \in \Z^{k}_+$, we define the {\em $m^{th}$-parameter moment of $X$} to be
$M_m(X) \eqdef \sum_{i=1}^n \prod_{j=1}^{k} p_{i,j}^{m_j}.$ We will refer to $|m|_1 = \sum_{j=1}^{k} m_j$ as the {\em degree}
of the  parameter moment $M_m(X).$
\end{definition}

\paragraph{(Pseudo-)Distributions and Total Variation Distance.}
A function $\p : A \to \R$, over a finite set $A$, is called a {\em distribution}
if $\p(a) \ge 0$ for all $a \in A$, and $\sum_{a \in A} \p (a)=1.$
The function $\p$ is called a {\em pseudo-distribution} if $\sum_{a \in A} \p (a)=1.$
For $S \subseteq A$, we sometimes write $\p(S)$ to denote $\sum_{a \in S}\p(a)$.
A distribution $\p$ supported on a finite domain $A$ can be viewed as the probability mass
function of a random variable $X$, i.e., $\p(a) = \Pr_{X \sim \p}[X= a].$

The {\em total variation distance} between two pseudo-distributions
$\p$ and $\q$ supported on a finite domain $A$ is
$\dtv\left(\p, \q \right) \eqdef \max_{S \subseteq A} \left |\p(S)-\q(S) \right|= (1/2) \cdot \| \p -\q  \|_1 = (1/2) \cdot \littlesum_{a \in A} |\p(a)-\q(a)|.
$
If $X$ and $Y$ are two random variables ranging over a finite set, their total
variation distance $\dtv(X,Y)$ is defined as the total variation
distance between their distributions.
For convenience, we will often blur the distinction between a random variable and its distribution.

\paragraph{Covers.}
Let $(\mathcal{X}, d)$ be a metric space. Given $\eps > 0$, a subset ${\cal Y} \subseteq {\cal X}$
 is said to be a proper \emph{$\eps$-cover of ${\cal X}$} with respect to the metric $d: \mathcal{X}^2 \to \R_+,$
if for every $\mathbf{x} \in {\cal X}$ there exists some $\mathbf{y} \in {\cal Y}$ such that $d(\mathbf{x}, \mathbf{y}) \leq \eps.$
(If ${\cal Y}$ is not necessarily a subset of ${\cal X},$ then we obtain a non-proper $\eps$-cover.)
There may exist many $\eps$-covers of ${\cal X}$, but one is typically interested in one with the minimum cardinality.
The {\em $\eps$-covering number} of $({\cal X}, d)$ is the minimum cardinality of any $\eps$-cover of ${\cal X}$.
Intuitively, the covering number of a metric space captures the ``size'' of the space.
In this work, we will be interested on efficiently constructing sparse covers
for PMDs under the total variation distance metric.

\paragraph{Distribution Learning.} We now define the notion of distribution learning we use in this paper.
Note that an explicit description of a discrete distribution via its probability mass function
scales linearly with the support size. Since we are interested in the computational complexity
of distribution learning, our algorithms will need to use a {\em succinct description} of their output hypothesis.
A simple succinct representation of a discrete distribution is via an evaluation
oracle for the probability mass function:

\begin{definition}[Evaluation Oracle]  \label{def:evaluator}
Let $\p$ be a distribution over $[n]^k.$
An \emph{evaluation oracle} for $\p$ is a polynomial size circuit $C$
with $m= O(k \log n)$ input bits $z \in [n]^k$
such that for  each $z \in [n]^k,$ the output of the circuit $C(z)$
equals the binary representation of the probability $\p(z)$.
For $\eps > 0$, an \emph{$\eps$-evaluation oracle} for $\p$ is an evaluation oracle
for some pseudo-distribution $\p'$ which has $\dtv(\p', \p) \leq \eps.$
\end{definition}

One of the most general ways to succinctly specify a distribution is to give the code of an
efficient algorithm that takes ``pure'' randomness and transforms it into a sample from the distribution.
This is the standard notion of a sampler:

\begin{definition}[Sampler] \label{def:sampler}
Let $\p$ be a distribution over $[n]^k.$
An \emph{$\eps$-sampler} for $\p$ is a circuit $C$
with $m= O(k \log n+\log(1/\eps))$ input bits $z$ and $m'= O(k\log n)$ output bits $y$
which is such that when $z \sim U_m$ 
then  $y \sim \p',$ for some distribution $\p'$ which has $\dtv(\p',\p) \leq \eps.$
\end{definition}

We can now give a formal definition of distribution learning:

\newpage

\begin{definition}[Distribution Learning]  \label{def:learning-sampler}
Let ${\cal D}$ be a family of distributions.
A randomized algorithm $A^{\cal D}$ is a {\em distribution learning algorithm for class $\cal D,$}
if for any $\eps>0,$ and any $\p \in \cal D,$ on input $\eps$ and sample access to $\p,$
with probability $9/10,$ algorithm $A^{\cal D}$
outputs an $\eps$-sampler (or an $\eps$-evaluation oracle) for $\p.$
\end{definition}

\begin{remark}
We emphasize that our learning algorithm in Section~\ref{sec:algo} outputs {\em both an $\eps$-sampler
and an $\eps$-evaluation oracle} for the target distribution.
\end{remark}

\paragraph{Anonymous Games and Nash Equilibria.}

An anonymous game is a triple $(n, k, \{u^i_{\ell}\}_{i \in [n], \ell \in [k]})$ where $[n]$, $n \ge 2$, is the set of players,
$[k]$, $k \ge 2$, a common set of strategies available to all players, and $u^i_{\ell}$ the payoff function of player
$i$ when she plays strategy $\ell$. This function maps the set of partitions
$\Pi_{n-1}^k = \{(x_1, \ldots, x_k) \mid x_{\ell} \in \Z_+ \textrm{ for all } \ell \in [k] \wedge \sum_{\ell=1}^k x_{\ell} = n-1\}$
to the interval $[0, 1]$. That is, it is assumed that the payoff of each player depends on her own strategy and only the
number of other players choosing each of the $k$ strategies.

We denote by $\Delta^k_{n-1}$ the convex hull of the set $\Pi_{n-1}^k$, i.e.,
$\Delta^k_{n-1} = \{ (x_1, \ldots, x_k) \mid x_{\ell} \ge 0 \textrm{ for all } \ell \in [k] \wedge \sum_{\ell=1}^k x_{\ell} = n-1 \}.$
A {\em mixed strategy} is an element of $\Delta^k \eqdef \Delta^k_{1}.$
A {\em mixed strategy profile} is a mapping $\delta$ from $[n]$ to $\Delta^k$.
We denote by $\delta_i$ the mixed strategy of player $i$ in the profile $\delta$ and $\delta_{-i}$
the collection of all mixed strategies but $i$'s in $\delta$.
For $\eps \ge 0$, a mixed strategy profile $\delta$ is a (well-supported) {\em $\eps$-Nash equilibrium}  iff
for all $i \in [n]$ and  $\ell, \ell' \in [k]$ we have:
$\E_{x \sim \delta_{-i}} [u^i_{\ell}(x)] > \E_{x \sim \delta_{-i}} [u^i_{\ell'}(x)]+  \eps \implies \delta_i(\ell') = 0.$
Note that given a mixed strategy profile $\delta$, we can compute a player's expected payoff in time $\poly(n^k)$
by straightforward dynamic programming.

Note that the mixed strategy $\delta_i$ of player $i \in [n]$
defines the $k$-CRV $X_i$, i.e., a random vector supported in the set $\{e_1, \ldots, e_k\}$,
such that $\Pr[X_i = e_{\ell}] = \delta_i(\ell)$, for all $\ell$. Hence, if $(X_1, \ldots, X_n)$ is a mixed strategy
profile, the expected payoff of player $i \in [n]$ for using pure strategy $\ell \in [k]$ is
$ \E\left[ u^i_{\ell} \left(\sum_{j \ne i, j \in [n]} X_j \right)  \right].$

\paragraph{Multidimensional Fourier Transform.}
Throughout this paper, we will make essential use of the (continuous and the discrete) multidimensional Fourier transform.
For $x \in \R$, we will denote $e(x) \eqdef  \exp(-2 \pi i x)$.
The {\em (continuous) Fourier Transform (FT)} of a function
$F: \Z^k \rightarrow \C$ is  the function $\widehat{F}: [0, 1]^k \rightarrow \C$
defined as $\widehat{F}(\xi)=\sum_{x \in  \Z^k} e(\xi \cdot x) F(x).$
For the case that $F$ is a probability mass function, we can equivalently write
$\widehat{F}(\xi)= \E_{x \sim F} \left[ e(\xi \cdot x) \right].$

For computational purposes, we will also need the Discrete Fourier Transform (DFT)
and its inverse, whose definition is somewhat more subtle.
Let $M \in \Z^{k \times k}$ be an integer $k \times k$ matrix.
We consider the integer lattice
$L  = L(M) =  M \Z^k \eqdef \{ p \in \Z^k \mid p = M q,  q \in \Z^k \}$, and its dual lattice
 $L^{\ast} = L^{\ast}(M)   \eqdef \{ \xi \in \R^k \mid \xi \cdot x \in \Z \textrm{ for all } x \in L \}.$
 Note that  $L^{\ast} = (M^T)^{-1} \Z^k,$ and that $L^{\ast}$ is not necessarily integral.
The quotient $ \Z^k \slash L$ is the set of equivalence classes of points in $\Z^k$ such that two points $x, y \in \Z^k$
are in the same equivalence class iff $x - y \in L$.
Similarly, the quotient $L^{\ast} \slash \Z^k$ is the set of equivalence
classes of points in $L^{\ast}$ such that any two points $x, y \in L^{\ast}$ are in the same equivalence class iff $x -y \in \Z^k$.

The {\em Discrete Fourier Transform (DFT) modulo $M$}, $M \in \Z^{k \times k}$, of a function
$F: \Z^k \rightarrow \C$ is  the function $\widehat{F}_M: L^{\ast} \slash \Z^k  \rightarrow \C$
defined as $\widehat{F}_M(\xi)=\sum_{x \in  \Z^k} e(\xi \cdot x) F(x).$ 
(We will remove the subscript $M$ when it is clear from the context.)
Similarly, for the case that $F$ is a probability mass function, we can equivalently write
$\widehat{F}(\xi)= \E_{x \sim F} \left[ e(\xi \cdot x) \right].$ The {\em inverse DFT} of a function $\widehat{G}: L^{\ast} \slash \Z^k  \rightarrow \C$
is the function $G: A \rightarrow \C$ defined on a {\em fundamental domain} $A$ of $L(M)$ as follows:
$G(x) = \frac{1}{|\det(M)|} \sum_{\xi \in L^{\ast} \slash \Z^k} \widehat{G}(x) e(- \xi \cdot x).$
Note that these operations are inverse of each other,
namely for any function $F: A \rightarrow \C$, the inverse DFT of $\widehat{F}$ is identified with  $F.$

Let $X = \sum_{i=1}^n X_i$ be an $(n, k)$-PMD such that for $1 \le i \le n$ and $1 \le j \le k$ we denote $p_{i, j} = \Pr[X_i = e_j]$,
where $\sum_{j=1}^k p_{i, j} = 1.$ To avoid clutter in the notation, we will sometimes use the symbol $X$ to denote the corresponding probability mass function.
With this convention, we can write that $\wh{X}(\xi) = \prod_{i=1}^n \wh{X_i}(\xi) = \prod_{i=1}^n \sum_{j=1}^{k} e(\xi_j) p_{i,j}.$

\vspace{-0.3cm}

\paragraph{Basics from Linear Algebra.}
We remind the reader a few basic definitions from linear algebra
that we will repeatedly use throughout this paper.
The Frobenius norm of $A\in\R^{m \times n}$ is $\|A\|_F \eqdef \sqrt{\sum_{i,j} A_{i,j}^2}.$
The spectral norm (or induced $L_2$-norm) of $A\in\R^{m\times n}$ is defined as
$\|A\|_2 \eqdef \max_{x: \|x\|_2=1} \|Ax\|_2 = \sqrt{\lambda_{\max}(A^T A)}.$
We note that for any $A\in\R^{m \times n}$, it holds $\|A\|_2 \leq \|A\|_F.$
A symmetric matrix $A\in\R^{n \times n}$ is called positive semidefinite (PSD),
denoted by $A \succeq \mathbf{0},$
if $x^T A x \geq 0$ for all $x \in \R^n$, or equivalently all the eigenvalues of $A$ are nonnegative.
Similarly, a symmetric matrix $A\in\R^{n \times n}$ is called positive definite (PD),
denoted by $A \succ \mathbf{0},$
if $x^T A x > 0$ for all $x \in \R^n$, $x \neq \mathbf{0}$,
or equivalently all the eigenvalues of $A$ are strictly positive.
For two symmetric matrices $A, B \in \R^{n \times n}$ we write $A \succeq B$ to denote
that the difference $A-B$ is PSD, i.e., $A-B \succeq  \mathbf{0}.$
Similarly, we write $A \succ B$ to denote
that the difference $A-B$ is PD, i.e., $A-B \succ  \mathbf{0}.$


\section{Efficiently Learning PMDs} \label{sec:algo}

In this section, we describe and analyze our sample near-optimal and computationally efficient
learning algorithm for PMDs. This section is organized as follows:
In Section~\ref{ssec:algo-dft}, we give our main algorithm which, given samples from a PMD $\p$,
efficiently computes a succinct description of a hypothesis pseudo-distribution $\h$ such that $\dtv(\h, \p) \le \eps/3.$
As previously explained, the succinct description of $\h$ is via its DFT $\wh{\h}$, which is supported
on a discrete set $T$ of cardinality $|T| = (k \log (1/\eps))^{O(k)}$.
Note that $\wh{\h}$ provides an $\eps$-evaluation oracle for $\p$ with running time $O(|T|).$
In Section~\ref{ssec:sampler}, we show how to use $\wh{\h},$ in a black-box manner, to efficiently obtain an $\eps$-sampler for $\p$, i.e.,
sample from a distribution $\q$ such that $\dtv(\q, \p) \le \eps.$ Finally, in Section~\ref{sec:cover-from-alg} we show
how a nearly--tight cover upper bound can easily be deduced from our learning algorithm.

\subsection{Main Learning Algorithm} \label{ssec:algo-dft}
In this subsection, we give an algorithm  {\tt Efficient-Learn-PMD} establishing the following theorem:

\begin{theorem} \label{thm:alg-dft}
For all $n, k \in \Z_+$ and $\eps>0$,
the algorithm  {\tt Efficient-Learn-PMD} has the following
performance guarantee: Let $\p$ be an unknown $(n, k)$-PMD.
The algorithm uses $O\left(k^{4k}\log^{2k}(k/\eps)/\eps^2\right)$
samples from $\p$,  runs in time {$O\left(k^{6k} \log^{3k}(k/\eps)/\eps^2 + k^4 \log \log n\right),$}
and outputs the DFT $\wh{\h}$ of a pseudo-distribution $\h$ that,
with probability at least $9/10$, satisfies $\dtv(\h, \p) \le \eps/3.$
\end{theorem}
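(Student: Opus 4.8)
The plan is to follow the Fourier-analytic template sketched in the introduction: use a small number of samples to pin down the mean and covariance of $\p$, use these to construct an integer lattice $L$ so that $X \bmod L$ retains all but $O(\eps)$ of the mass of $X$, and then learn the distribution of $X\bmod L$ by estimating its DFT on the small ``effective support'' set $T$ guaranteed by the Fourier sparsity of PMDs (Proposition~\ref{prop:ft-effective-support}). The output $\wh\h$ is just the empirical DFT restricted to $T$, which by construction is a succinct object of size $|T|=(k\log(1/\eps))^{O(k)}$.

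First I would draw $\poly(k)$ samples and form the empirical mean $\wh\mu$ and empirical covariance $\wh\Sigma$. Since $X=\sum_i X_i$ is a sum of independent vectors bounded in $\ell_2$, standard matrix concentration (Bernstein/Chernoff) shows $\poly(k)$ samples suffice to make $\wh\Sigma$ equivalent to $\Sigma$ up to constant factors on the subspace $\bone^\perp$ (recall $\Sigma\bone=0$ always, as $X$ lives on $\{\sum_j x_j=n\}$). From $\wh\Sigma$ I would build $M\in\Z^{k\times k}$ whose columns are, roughly, the eigenvectors of $\wh\Sigma$ scaled by $\Theta(\sqrt{\lambda_i\log(k/\eps)})$ and rounded to $\Z^k$, with the $\bone$-direction handled separately. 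Two facts must then be checked: (i) by Bernstein-type concentration of $X$ inside the ellipsoid $\{(x-\mu)^T\Sigma^{+}(x-\mu)\le O(\log(k/\eps))\}$ and our choice of $M$, the variable $X$ lies in a fundamental domain of $L=M\Z^k$ except with probability $O(\eps)$; hence (ii) writing $\q$ for the distribution of $X\bmod L$, we have $\dtv(\p,\q)\le O(\eps)$, so it suffices to learn $\q$.

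To learn $\q$, note that for $\xi\in L^{\ast}$ we have $e(\xi\cdot Y)=e(\xi\cdot(Y\bmod L))$, so the empirical quantity $\wh\h(\xi)=\tfrac1{m'}\sum_{t=1}^{m'}e(\xi\cdot Y_t)$ built from samples $Y_1,\dots,Y_{m'}$ of $X$ is an unbiased estimate of $\wh\q(\xi)$. I set $\wh\h(\xi)=0$ off $T$ and let $\h$ be the inverse DFT of $\wh\h$ on a fundamental domain of $L$; write $\q_T$ for $\q$ with DFT truncated to $T$. Then $\dtv(\h,\p)\le O(\eps)+\tfrac12\|\q-\q_T\|_1+\tfrac12\|\q_T-\h\|_1$. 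For the truncation term, $|\q(x)-\q_T(x)|\le\|\wh\q\|_{L_1(\overline T)}/|\det M|$ pointwise and the fundamental domain has $|\det M|$ points, so $\|\q-\q_T\|_1\le\|\wh\q\|_{L_1(\overline T)}\le\eps$ by Proposition~\ref{prop:ft-effective-support}. For the sampling term, Cauchy--Schwarz over the $|\det M|$ points plus Parseval give $\|\q_T-\h\|_1^2\le\sum_{\xi\in T}|\wh\q(\xi)-\wh\h(\xi)|^2$, whose expectation is at most $|T|/m'$; taking $m'=(k\log(1/\eps))^{O(k)}/\eps^2$ makes this $O(\eps^2)$, hence $\|\q_T-\h\|_1=O(\eps)$ with probability $99/100$ by Markov. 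Combining and rescaling $\eps$ by a constant yields $\dtv(\h,\p)\le\eps/3$ with probability $9/10$ (the two sampling stages each fail with probability $O(1/20)$).

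For the running time: the statistics stage costs $\poly(k)$; computing and rounding the eigendecomposition of the $k\times k$ matrix $\wh\Sigma$ costs $O(k^4)$ arithmetic operations on $O(\log n)$-bit numbers (unit cost in the word-RAM model), with a factor $\log\log n$ coming only from a binary search over $O(\log n)$ dyadic scales used to fix the lengths of the columns of $M$; forming $T$ costs $|T|=(k\log(1/\eps))^{O(k)}$; and evaluating the empirical DFT at all $\xi\in T$ from $m'$ samples costs $|T|\cdot m'=(k\log(1/\eps))^{O(k)}/\eps^2=k^{6k}\log^{3k}(k/\eps)/\eps^2$ once constants are absorbed into the $O(k)$ exponents (tuned separately for the sample bound $k^{4k}\log^{2k}(k/\eps)/\eps^2$ and the time bound). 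The step I expect to be the main obstacle is the lattice construction: $M$ must \emph{simultaneously} guarantee that $X$ falls in a fundamental domain with high probability, keep $|\det M|$ and $|T|$ small enough that Proposition~\ref{prop:ft-effective-support} closes the truncation error, and be computable from the \emph{estimated} covariance within the claimed $\log\log n$ overhead. This forces a split between low-variance directions of $\wh\Sigma$ — where $\q$ is effectively supported on $\poly(k,\log(1/\eps))$ values and no DFT is needed — and the high-variance directions, and it requires every estimate to be made robust to the multiplicative error in $\wh\Sigma$.
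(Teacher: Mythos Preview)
Your overall architecture is correct and matches the paper: estimate $(\wh\mu,\wh\Sigma)$ from $\poly(k)$ samples, build an integer lattice $L=M\Z^k$ so that $X$ lies in a single fundamental domain with probability $1-O(\eps)$, invoke Proposition~\ref{prop:ft-effective-support} to localize the DFT mass on a set $T$ of size $(k\log(k/\eps))^{O(k)}$, and estimate $\wh\p$ on $T$ empirically. Your error decomposition $\dtv(\h,\p)\le O(\eps)+\|\wh\q\|_{L_1(\overline T)}+\|\q_T-\h\|_1$ is exactly what the paper does in Lemma~\ref{lem:dtv-final}. Your Parseval/Cauchy--Schwarz bound $\|\q_T-\h\|_1^2\le\sum_{\xi\in T}|\wh\q(\xi)-\wh\h(\xi)|^2$ is a legitimate alternative to the paper's cruder $L_1$ estimate $\sum_{\xi\in T}|\wh\q(\xi)-\wh\h(\xi)|$ (Lemma~\ref{lem:emp-dft}); in fact your version needs only $m'\asymp |T|/\eps^2$ rather than $|T|^2/\eps^2$, though the theorem is stated with the latter.

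The place where your sketch diverges from the paper is precisely the one you flag: the lattice construction. Your proposed scaling $\Theta(\sqrt{\lambda_i\log(k/\eps)})\,v_i$, rounded to $\Z^k$, breaks when $\lambda_i$ is small --- the rounding error (up to $\sqrt{k}$ in $\ell_2$) is then not a small relative perturbation, Claim~\ref{clm:rounding} fails, and $M$ may even become singular. Your suggested fix is a case split between high- and low-variance eigendirections, handling the latter by direct enumeration. The paper avoids the split entirely with a single regularization: take column $i$ of $M'$ to be $C\sqrt{k\ln(k/\eps)\lambda_i+k^2\ln^2(k/\eps)}\,v_i$. The additive $k^2\ln^2(k/\eps)$ guarantees every column has length $\ge Ck\ln(k/\eps)\gg\sqrt{k}$, so rounding to $M$ is a negligible multiplicative perturbation (Claim~\ref{clm:rounding}); at the same time the enlarged ellipsoid $\widetilde\Sigma=k\ln(k/\eps)\Sigma+k^2\ln^2(k/\eps)I$ still contains the Bernstein concentration ellipsoid (Lemma~\ref{lem:conc-pmd}), and the proof of Proposition~\ref{prop:ft-effective-support} goes through uniformly in all directions. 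This resolves your ``main obstacle'' in one line and keeps both $|T|$ and $|\det M|$ at the right size without any eigenvalue thresholding.

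One small correction: the $k^4\log\log n$ term is not from a binary search over scales of $M$; it is the cost of an approximate eigendecomposition of the $k\times k$ matrix $\wh\Sigma$ with $O(\log n)$-bit entries (the paper cites~\cite{PCZ98}).
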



Our learning algorithm is described in the following pseudo-code:

\bigskip

\fbox{\parbox{6.3in}{
{\bf Algorithm} {\tt Efficient-Learn-PMD}\\
{\em Input:} sample access to an $(n, k)$-PMD $X \sim \p$ and $\eps>0.$\\
{\em Output:} A set $T\subseteq (\R/\Z)^k$ of cardinality $|T| \le O(k^2 \log(k/\eps))^k$, 
and the DFT $\wh{\h}: T \to \C$ of a pseudo-distribution $\h$ such that $\dtv(\h, \p) \le \eps/3.$

\vspace{0.2cm}

Let $C>0$ be a sufficiently large universal constant.

\begin{enumerate}
\item Draw {$m_0 = O(k^4)$} samples from $X$, and 
let $\wh{\mu}$ be the sample mean and $\wh{\Sigma}$ the sample covariance matrix.

\item Compute an approximate spectral decomposition of $\wh{\Sigma}$, i.e., 
an orthonormal eigenbasis $v_i$ with corresponding eigenvalues $\lambda_i$, $i \in [k].$

\item Let $M \in \Z^{k \times k}$ be the matrix whose $i^{th}$ column 
is the closest integer point to the vector $C \left(\sqrt{k \ln(k/\eps)\lambda_i+k^2\ln^2(k/\eps)}\right)v_i.$

\item Define $T\subseteq (\R/\Z)^k$ to be the set of points $\xi = (\xi_1, \ldots, \xi_k)$ 
of the form $\xi = (M^T)^{-1} \cdot v {+\Z^k},$ for some $v\in \Z^k$ with $\|v\|_2 \leq C^2 k^2 \ln(k/\eps).$

\item \label{step:compute-DFT} 
Draw $m = O\left(k^{4k}\log^{2k}(k/\eps)/\eps^2\right)$ samples $s_i$, $i \in [m]$, from $X$, 
and output the empirical DFT $\wh{\h}: T \rightarrow \C$, i.e.,
$\wh{\h}(\xi) = \frac{1}{m} \sum_{i=1}^m e(\xi\cdot s_i).$

{\em /* The DFT  $\wh{\h}$ is a succinct description of the pseudo-distribution $\h$, the inverse DFT of $\wh{\h}$, 
defined by:  $\h(x) = \frac{1}{|\det(M)|}\sum_{\xi\in T}\wh{\h}(\xi)e(-\xi\cdot x),$ for  $x\in \Z^k\cap (\wh{\mu} + M \cdot (-1/2,1/2]^k)$, 
and $\h(x)=0$ otherwise. Our algorithm {\bf does not} output $\h$ explicitly, but implicitly via its DFT.*/}

\end{enumerate}

}}

\bigskip

Let $X$ be the unknown target $(n, k)$-PMD.
We will denote by $\p$ the probability mass function of $X$, i.e., $X \sim \p$.
Throughout this analysis,  we will denote by $\mu$ and $\Sigma$ 
the mean vector and covariance matrix of $X.$

First, note that the algorithm  {\tt Efficient-Learn-PMD} is easily seen 
to have the desired sample and time complexity.
Indeed, the algorithm draws $m_0$ samples in Step~1 and $m$ 
samples in Step~5, for a total sample complexity of $O(k^{4k}\log^{2k}(k/\eps)/\eps^2).$
The runtime of the algorithm is dominated by computing the DFT in Step~\ref{step:compute-DFT}
which takes time $O(m |T|)=O(k^{6k}\log^{3k}(k/\eps)/\eps^2).$ 
Computing an approximate eigendecomposition can be done in time $O(k^4 \log \log n)$(see, e.g., \cite{PCZ98}).
The remaining part of this section is devoted
to proving the correctness of our algorithm.

\begin{remark}
{\em We remark that in Step~4 of our algorithm, the notation $\xi = (M^T)^{-1} \cdot v {+\Z^k}$
refers to an equivalence class of points. In particular, any pair of distinct vectors $v, v' \in \Z^k$ satisfying
$\|v\|_2, \|v'\|_2 \leq C^2 k^2 \ln(k/\eps),$ and $(M^T)^{-1} \cdot (v-v') \in \Z^k$ correspond to the same point $\xi,$
and therefore are not counted twice.}
\end{remark}

\paragraph{Overview of Analysis.}
We begin with a brief overview of the analysis.
First, we show (Lemma~\ref{lem:conc-pmd}) that at least $1-O(\eps)$ of the probability mass of $X$
lies in the ellipsoid with center $\mu$ 
and covariance matrix $\widetilde{\Sigma} = O(k\log(k/\eps))\Sigma+O(k\log(k/\eps))^2 I.$
Moreover, with high probability over the samples drawn in Step~1 of the algorithm, 
the estimates $\wh{\Sigma}$ and $\wh{\mu}$ will be good approximations of $\Sigma$ and $\mu$ (Lemma~\ref{lem:sample-mean-and-covariance}). 
By combining these two lemmas, we obtain (Corollary~\ref{cor:conc-pmd-sample}) that at least $1-O(\eps)$ of the probability mass of $X$
lies in the ellipsoid with center $\wh{\mu}$ 
and covariance matrix $\Sigma' = O(k\log(k/\eps)) \wh{\Sigma}+O(k\log(k/\eps))^2 I.$

By the above, and by our choice of the matrix $M \in \Z^{k \times k},$
we use linear-algebraic arguments to prove (Lemma~\ref{lem:fd})  
that almost all of the probability mass of $X$ lies in the set $\wh{\mu}+M(-1/2,1/2]^k$, 
a fundamental domain of the lattice $L=M\Z^k.$
This lemma is crucial because it implies that, to learn our PMD $X,$ 
it suffices to learn the random variable $X\pmod{L}.$
We do this by learning the Discrete Fourier transform of this distribution.
This step can be implemented efficiently due to the sparsity property of the DFT 
(Proposition~\ref{prop:ft-effective-support}):
except for points in $T$, the magnitude of the DFT will be very small.
Establishing the desired sparsity property for the DFT is the main technical contribution of this section.

Given the above, it it fairly easy to complete the analysis of correctness. 
For every point in $T$ we can learn the DFT up to absolute error $O(1/\sqrt{m}).$
Since the cardinality of $T$ is appropriately small, this implies that the total error
over $T$ is small.  The sparsity property of the DFT (Lemma~\ref{lem:dtv-final}) completes the proof.


\paragraph{Detailed Analysis.}
We now proceed with the detailed analysis of our algorithm.
We start by showing that PMDs are concentrated with high probability. 
More specifically, the following lemma shows that an unknown PMD 
$X$, with mean vector $\mu$ and covariance matrix $\Sigma$,
is effectively supported in an ellipsoid centered at $\mu$, 
whose principal axes are determined by the eigenvectors and eigenvalues of $\Sigma$ 
and the desired concentration probability:

\begin{lemma} \label{lem:conc-pmd}
Let $X$ be an $(n, k)$-PMD with mean vector $\mu$ and covariance matrix $\Sigma.$
For any $0< \eps <1,$ consider the positive-definite matrix 
$\widetilde \Sigma=k\ln(k/\eps)\Sigma+ k^2 \ln^2(k/\eps)I$.
Then, with probability at least $1-\eps/10$ over $X,$
we have that $(X-\mu)^T \cdot \widetilde \Sigma^{-1} \cdot (X-\mu) = O(1).$
\end{lemma}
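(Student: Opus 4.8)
The plan is to control the quadratic form $(X-\mu)^T \widetilde\Sigma^{-1}(X-\mu)$ by working in the eigenbasis of $\Sigma$ and applying a Bernstein-type concentration inequality in each eigendirection separately, then union-bounding over the $k$ directions. First I would diagonalize: let $v_1,\dots,v_k$ be an orthonormal eigenbasis of $\Sigma$ with eigenvalues $\lambda_1,\dots,\lambda_k \geq 0$ (note one eigenvalue is $0$, along $\bone$, since $X$ lives on the hyperplane $\sum_j X_j = n$; this is harmless because $\widetilde\Sigma$ adds $k^2\ln^2(k/\eps)I$ and so is genuinely positive definite). Writing $Y_i = v_i \cdot (X-\mu)$, we have $\E[Y_i] = 0$, $\var(Y_i) = \lambda_i$, and since $\widetilde\Sigma^{-1}$ has eigenvalue $(k\ln(k/\eps)\lambda_i + k^2\ln^2(k/\eps))^{-1}$ on $v_i$, it suffices to show that with probability $\geq 1 - \eps/(10k)$,
\[
Y_i^2 \;\leq\; \tfrac{1}{k}\bigl(k\ln(k/\eps)\lambda_i + k^2\ln^2(k/\eps)\bigr) \;=\; \ln(k/\eps)\lambda_i + k\ln^2(k/\eps),
\]
and then sum the failure probabilities over $i \in [k]$. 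So the goal reduces to a one-dimensional tail bound: $|Y_i| \leq \sqrt{\ln(k/\eps)\lambda_i} + \sqrt{k}\ln(k/\eps)$ would suffice (up to absorbing constants into the $O(1)$), using $\sqrt{a+b} \leq \sqrt a + \sqrt b$.

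Next I would establish this one-dimensional bound. Observe $Y_i = \sum_{\ell=1}^n Z_{i,\ell}$ where $Z_{i,\ell} = v_i \cdot (X_\ell - \E[X_\ell])$ are independent, mean-zero, and bounded: since $X_\ell$ and $\E[X_\ell]$ both lie in the simplex $\{e_1,\dots,e_k\}$'s convex hull, $\|X_\ell - \E[X_\ell]\|_2 \leq \sqrt2$, hence $|Z_{i,\ell}| \leq \sqrt2$. Also $\sum_\ell \var(Z_{i,\ell}) = \var(Y_i) = \lambda_i$. Bernstein's inequality then gives
\[
\pr\bigl[\,|Y_i| \geq t\,\bigr] \;\leq\; 2\exp\!\left(-\frac{t^2/2}{\lambda_i + \sqrt2\, t/3}\right).
\]
Choosing $t = C'(\sqrt{\lambda_i \ln(k/\eps)} + \ln(k/\eps))$ for a suitable constant $C'$ makes the exponent at most $-\ln(10k/\eps)$ (in the regime $t^2 \lesssim \lambda_i$ the Gaussian term dominates and needs $t \gtrsim \sqrt{\lambda_i\ln(k/\eps)}$; in the regime $t \gtrsim \lambda_i$ the linear term dominates and needs $t \gtrsim \ln(k/\eps)$), so $\pr[|Y_i| \geq t] \leq \eps/(10k)$. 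Since $\ln(k/\eps) \leq \sqrt k \ln(k/\eps) \leq k\ln^2(k/\eps)$ for the relevant ranges, this $t$ satisfies $t^2 = O(\ln(k/\eps)\lambda_i + k\ln^2(k/\eps))$, which is exactly what we needed above (with the $O(1)$ in the lemma absorbing $(C')^2$ and the other constants).

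Finally, union-bounding over $i \in [k]$: with probability at least $1 - k\cdot \eps/(10k) = 1-\eps/10$, we have $Y_i^2 = O(\ln(k/\eps)\lambda_i + k\ln^2(k/\eps))$ simultaneously for all $i$, whence
\[
(X-\mu)^T\widetilde\Sigma^{-1}(X-\mu) \;=\; \sum_{i=1}^k \frac{Y_i^2}{k\ln(k/\eps)\lambda_i + k^2\ln^2(k/\eps)} \;=\; \sum_{i=1}^k O(1/k) \;=\; O(1),
\]
as claimed. I expect the main obstacle to be bookkeeping the two regimes of Bernstein's inequality cleanly so that a single choice of $t$ (scaling like $\sqrt{\lambda_i \ln(k/\eps)} + \ln(k/\eps)$) simultaneously handles small and large $\lambda_i$, and verifying that the resulting bound slots correctly into the quadratic form with the stated $\widetilde\Sigma$; the choice of the additive $k^2\ln^2(k/\eps)I$ term (rather than $\ln^2(k/\eps)I$) is precisely what gives the slack to absorb the $\sqrt k$ loss from summing $k$ coordinates, so I would double-check that factor of $k$ propagates correctly.
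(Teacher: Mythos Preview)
Your proposal is correct and follows essentially the same approach as the paper: diagonalize $\Sigma$, apply Bernstein's inequality to each eigen-coordinate $v_i\cdot(X-\mu)$, and union-bound over the $k$ directions. The only cosmetic differences are that the paper bounds each summand $|v_i\cdot(X_\ell-\mu_\ell)|$ by $2\sqrt{k}$ (rather than your sharper $\sqrt{2}$, coming from the $\ell_2$-diameter of the simplex) and then takes $t_j=\sqrt{2\nu_j\ln(10k/\eps)}+2\sqrt{k}\ln(10k/\eps)$, and in the final step passes from $\|\cdot\|_2^2$ to $k\|\cdot\|_\infty^2$ instead of summing the $k$ terms directly; neither difference affects the argument or the $O(1)$ outcome.
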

\begin{proof}
Let $X = \sum_{i=1}^n X_i$, where the $X_i$'s are independent $k$-CRVs.
We can write $\mu = \E[X] = \sum_{i=1}^n \mu_i,$ where $\mu_i = \E [X_i].$
Note that for any unit vector $u \in \R^k$, $\|u\|_2 = 1$, 
we have that the scalar random variable $u \cdot (X-\mu)$ is a sum of independent, mean $0$, 
bounded random variables. Indeed, we have that $u \cdot (X-\mu) = \sum_{i=1}^n u \cdot (X_i - \mu_i),$
and that $\E[u \cdot (X_i - \mu_i)] = u \cdot (\E[X_i] - \mu_i) = 0.$ Moreover, we can write
$$|u \cdot (X_i-\mu_i)| \leq \|u\|_2 \cdot \|X_i-\mu_i\|_2 \leq\|X_i\|_2 + \|\mu_i\|_2 \leq 1 + \sqrt{k} \|\mu_i\|_1 \leq 2 \sqrt{k} \;,$$
where we used the Cauchy-Schwartz inequality twice, the triangle inequality, and 
the fact that a $k$-CRV $X_i$ with mean $\mu_i$ by definition satisfy $\|X_i\|_2 = 1,$ and $\|\mu_i\|_1=1.$

Let $\nu$ be the variance of $u \cdot (X-\mu).$
By Bernstein's inequality, we obtain that for $t=\sqrt{2\nu \ln (10k/\eps)} + 2 \sqrt{k} \ln (10k/\eps)$
it holds
\begin{equation} \label{eqn:bern2}
\Pr\left[|u \cdot (X-\mu)| > t \right] \leq  \exp\left(  - \frac{t^2/2}{\nu+ 2\sqrt{k}t/3}\right) \leq \frac{\eps}{10k} \;.
\end{equation}

Let $\Sigma$, the covariance matrix of $X,$
have an orthonormal eigenbasis $u_j \in \R^k$, $\|u_j\|_2 = 1,$ 
with corresponding eigenvalues $\nu_j$, $j \in [k].$ Since $\Sigma$ is positive-semidefinite,
we have that $\nu_j \ge 0$, for all $j \in [k].$
We consider the random variable $u_j\cdot (X-\mu)$.  
In addition to being a sum of independent, mean $0$, bounded random variables,
we claim that $\Var [u_j \cdot (X-\mu) ]  = \nu_j.$ First, it is clear that 
 $\Var [u_j \cdot (X-\mu)]  =  \E \left[(u_j \cdot (X-\mu))^2\right].$
Moreover, note that for any vector $v \in \R^k$,
we have that $\E[ \left(v \cdot (X-\mu)\right)^2] = v^T \cdot \Sigma \cdot v.$
For $v = u_j$,  we thus get  $\E \left[(u_j \cdot (X-\mu))^2\right] =u_j^T \cdot \Sigma \cdot u_j = \nu_j.$

Applying (\ref{eqn:bern2}) for $u_j\cdot (X-\mu),$ with  $t_j=\sqrt{2\nu_j \ln (10k/\eps)} + 2 \sqrt{k} \ln (10k/\eps),$
yields that for all $j \in [k]$ we have
$$\Pr\left[|u_j \cdot (X-\mu)| > t_j \right] \leq  \exp\left( - \frac{t_j^2/2}{\nu_j+ 2\sqrt{k}t_j/3}\right) \leq \frac{\eps}{10k}. $$
By a union bound, it follows that 
$$ \Pr\left[ \forall  j \in [k]:  \left|u_j \cdot (X-\mu)\right| \leq  t_j \right] \geq 1-\eps/10\;.$$
We condition on this event.

Since  $u_j$ and $\nu_j$ are the eigenvectors and eigenvalues of $\Sigma,$
we have that $\Sigma = U^T \cdot \mathrm{diag}\left(\nu_j\right) \cdot U,$ 
where $U$ has $j^{th}$ column $u_j.$
We can thus write $$\widetilde{\Sigma} =U^T \cdot \mathrm{diag}\left(k \nu_j \ln (k/\eps) + k^2\ln^2 (k/\eps)\right) \cdot U.$$
Therefore, we have:
\begin{align*}
(X-\mu)^T \cdot {\widetilde \Sigma}^{-1} \cdot  (X-\mu) & = \left\|\mathrm{diag}\left(\sqrt{k \nu_j \ln (k/\eps) + k^2\ln^2 (k/\eps)}\right)^{-1} \cdot U^T \cdot (X-\mu)\right\|_2^2 \\
								& \leq k \left\|\mathrm{diag}\left(\sqrt{k \nu_j \ln (k/\eps) + k^2\ln^2 (k/\eps)}\right)^{-1} \cdot U^T \cdot (X-\mu)\right\|_\infty^2 \\
								& = \left(\max_{j \in [k]} \frac{|u_j \cdot (X-\mu)|}{\sqrt{\nu_j \ln (k/\eps) + k \ln^2 (k/\eps)}}\right)^2 \\
								& = O(1) \;,
\end{align*}
where the last inequality follows from our conditioning, the definition of $t_j,$ and the elementary inequality
$\sqrt{a +b} \ge (\sqrt{a}+\sqrt{b})/\sqrt{2},$ $a, b \in \R_+.$
This completes the proof of Lemma~\ref{lem:conc-pmd}.
\end{proof}

Lemma~\ref{lem:conc-pmd} shows that an arbitrary $(n, k)$-PMD $X$ puts at least $1-\eps/10$ of its probability 
mass in the ellipsoid $\mathcal{E} = \{x \in \R^k : (x-\mu)^T \cdot (\widetilde{\Sigma})^{-1} \cdot (x-\mu) \le c \}, $
where $c>0$ is an appropriate universal constant. This is the ellipsoid centered at $\mu$, whose principal semiaxes
are parallel to the $u_j$'s, i.e., the eigenvectors of $\Sigma$, or equivalently of $(\widetilde{\Sigma})^{-1}.$ 
The length of the principal semiaxis corresponding to $u_j$ 
is determined by the corresponding eigenvalue of $(\widetilde{\Sigma})^{-1},$
and is equal to $c^{-1/2} \cdot \sqrt{k \nu_j \ln (k/\eps) + k^2\ln^2 (k/\eps)}.$

Note that this ellipsoid depends on the mean vector $\mu$ and covariance matrix $\Sigma,$
that are unknown to the algorithm. To obtain a bounding ellipsoid that is known to the algorithm, we will use
the following lemma (see Appendix~\ref{app:mean-cov-est} for the simple proof)
showing that $\wh{\mu}$ and $\wh{\Sigma}$ are good approximations to $\mu$ and $\Sigma$ respectively.

\begin{lemma} \label{lem:sample-mean-and-covariance}
With probability at least $19/20$ over the samples drawn in Step~1 of the algorithm, 
we have that $(\wh{\mu}-\mu)^T \cdot (\Sigma+I)^{-1} \cdot (\wh{\mu}-\mu) = O(1)$, and
$2(\Sigma+I) \succeq \wh{\Sigma}+I \succeq (\Sigma+I)/2.$
\end{lemma}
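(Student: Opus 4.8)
The plan is to control the mean estimate and the covariance estimate separately, each with failure probability at most $1/40$, and then union bound. For the mean, I would write $\wh{\mu} = \frac{1}{m_0}\sum_{t=1}^{m_0} X^{(t)}$ where the $X^{(t)}$ are i.i.d.\ copies of the PMD $X$, so $\E[\wh\mu]=\mu$ and $\cov(\wh\mu) = \Sigma/m_0$. The quantity $(\wh\mu-\mu)^T(\Sigma+I)^{-1}(\wh\mu-\mu)$ has expectation $\mathrm{tr}\big((\Sigma+I)^{-1}\Sigma/m_0\big) \le k/m_0$, since $(\Sigma+I)^{-1}\Sigma \preceq I$. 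With $m_0 = O(k^4)$ this expectation is $O(1/k^3)$, and Markov's inequality gives that the quadratic form is $O(1)$ except with probability $O(1/k^3) \le 1/40$ (for $k\ge 2$; for small $k$ one just takes the constant in $m_0$ large enough). One could alternatively get a high-probability bound via a Hanson–Wright / Bernstein argument, but Markov already suffices for the stated $O(1)$ conclusion.

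For the covariance, the goal is the sandwich $(\Sigma+I)/2 \preceq \wh\Sigma + I \preceq 2(\Sigma+I)$. Working in the basis that diagonalizes $\Sigma$, it suffices to show $\|(\Sigma+I)^{-1/2}(\wh\Sigma-\Sigma)(\Sigma+I)^{-1/2}\|_2 \le 1/2$, since then $(\Sigma+I)^{-1/2}(\wh\Sigma+I)(\Sigma+I)^{-1/2}$ lies between $\tfrac12 I$ and $\tfrac32 I \preceq 2I$. The sample covariance $\wh\Sigma$ is (up to the usual $m_0$ vs $m_0-1$ bookkeeping, which only changes constants) an average of i.i.d.\ terms $(X^{(t)}-\mu)(X^{(t)}-\mu)^T$ minus the contribution of the mean-estimation error; the latter cross term is $O(1)$-small by the mean bound already established, so the main point is concentration of $\frac{1}{m_0}\sum_t Y^{(t)}$ around $\Sigma$ where $Y^{(t)} = (X^{(t)}-\mu)(X^{(t)}-\mu)^T$. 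Each $X^{(t)}-\mu$ has $\ell_2$-norm $O(\sqrt k)$ (exactly as bounded in the proof of Lemma~\ref{lem:conc-pmd}: $\|X_i-\mu_i\|_2 \le 2\sqrt k$ per CRV, and one can likewise bound $\|X^{(t)}-\mu\|_2$ — or more carefully argue via the coordinates, each $|X^{(t)}_j - \mu_j|\le 1$). I would apply a matrix Bernstein / matrix Chernoff inequality to the centered sum $\frac1{m_0}\sum_t\big((\Sigma+I)^{-1/2}Y^{(t)}(\Sigma+I)^{-1/2} - (\Sigma+I)^{-1/2}\Sigma(\Sigma+I)^{-1/2}\big)$: each summand has spectral norm $O(\|(\Sigma+I)^{-1/2}(X^{(t)}-\mu)\|_2^2) = O(k)$ (using $(\Sigma+I)^{-1}\preceq I$ and $\|X^{(t)}-\mu\|_2^2 = O(k)$), and the variance proxy is likewise $O(k)$-bounded, so matrix Bernstein gives deviation $1/2$ in spectral norm except with probability $k\exp(-\Omega(m_0/k^2))$, which is $\le 1/40$ for $m_0 = O(k^4)$. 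Combining with the mean bound and a union bound over the two events completes the proof.

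The main obstacle is the covariance part: one must be careful that the sample covariance is a quadratic statistic and handling the $\wh\mu$-centered version (as opposed to the $\mu$-centered version) correctly, as well as picking the right normalization $(\Sigma+I)^{-1/2}$ so that the per-sample spectral-norm bound is $O(k)$ rather than something that blows up when $\Sigma$ has small eigenvalues — this is exactly why the statement is phrased with $\Sigma+I$ rather than $\Sigma$. Since this is Lemma~\ref{lem:sample-mean-and-covariance} and the excerpt explicitly defers it to Appendix~\ref{app:mean-cov-est} as ``the simple proof,'' I expect the intended argument is essentially the Markov bound for the mean plus a standard matrix-concentration bound for the covariance, with all constants absorbed into the $O(k^4)$ sample size.
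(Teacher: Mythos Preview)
Your mean argument is fine; the gap is in the covariance part. You claim that each summand in the matrix Bernstein sum has spectral norm $O(k)$ because $\|X^{(t)}-\mu\|_2^2 = O(k)$, and you justify this by pointing to the bound $\|X_i-\mu_i\|_2 \le 2\sqrt k$ from Lemma~\ref{lem:conc-pmd}. But that bound is for a \emph{single} $k$-CRV $X_i$; here $X^{(t)}$ is a sample from the whole $(n,k)$-PMD $X=\sum_{i=1}^n X_i$, whose coordinates range over $\{0,\ldots,n\}$. Neither $\|X^{(t)}-\mu\|_2^2 = O(k)$ nor $|X^{(t)}_j-\mu_j|\le 1$ holds almost surely, so the per-summand spectral-norm bound needed for matrix Bernstein fails. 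Even $(X^{(t)}-\mu)^T(\Sigma+I)^{-1}(X^{(t)}-\mu)$ is only $O(k\,\mathrm{polylog})$ \emph{with high probability} (cf.\ Lemma~\ref{lem:conc-pmd}), not deterministically, and a truncation repair would require care (the union bound over $m_0=\Theta(k^4)$ samples eats into the tail, and the resulting per-sample bound is $\Theta(k\log k)$ or worse, not $O(k)$).

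The paper's proof takes a different route that sidesteps this boundedness issue entirely: it works \emph{per direction} $y$, exploiting that $y^T(X-\mu)=\sum_{i=1}^n y^T(X_i-\mu_i)$ is a sum of $n$ independent terms each bounded by $O(\sqrt k)$, so scalar Bernstein gives $|y^T(\wh\mu-\mu)|\le \eps\sqrt{y^T\Sigma y}$ and $|y^T(\wh\Sigma-\Sigma)y|\le \eps\, y^T(\Sigma+I)y$ for any fixed $y$ (this is quoted as Lemma~21 of \cite{DKT15}). It then applies this to a carefully chosen set of $k^2$ vectors (the eigenvectors $v_i$ of $\Sigma$ and the combinations $(\lambda_i+1)^{-1/2}v_i + (\lambda_j+1)^{-1/2}v_j$), union bounds, and observes that these $k^2$ directional bounds suffice to control the full quadratic forms for all $y$; the $O(k^4)$ sample count comes from the $1/\eps^2$ cost per direction times the $k^2$ directions in the union bound (with $\eps=\Theta(1/k)$). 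The key structural point you are missing is that concentration here comes from the inner decomposition of the PMD into $n$ bounded pieces, not from any boundedness of the PMD samples themselves.
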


We also need to deal with the error introduced in the eigendecomposition of $\wh{\Sigma}$. 
Concretely, we factorize $\wh{\Sigma}$ as $V^T \Lambda V,$ for an orthogonal matrix $V$ and diagonal matrix $\Lambda.$ 
This factorization is necessarily inexact. 
By increasing the precision to which we learn $\wh{\Sigma}$ by a constant factor, 
we can still have $2(\Sigma+I) \succeq V^T \Lambda V+I \succeq (\Sigma+I)/2.$ 
We could redefine $\wh{\Sigma}$ in terms of our computed orthonormal eigenbasis, 
i.e., $\wh{\Sigma}: = V^T \Lambda V$. Thus, we may henceforth assume that the decomposition $\wh{\Sigma} = V^T \Lambda V$ is exact.

For the rest of this section, we will condition on the event that the statements of 
Lemma~\ref{lem:sample-mean-and-covariance} are satisfied.
By combining Lemmas~\ref{lem:conc-pmd} and~\ref{lem:sample-mean-and-covariance} ,
we show that we can get a known ellipsoid containing the effective support of $X,$
by replacing $\mu$ and $\Sigma$ in the definition of $\cal E$ by their sample versions.
More specifically, we have the following corollary:

\begin{corollary} \label{cor:conc-pmd-sample}
Let $\Sigma'=k\ln(k/\eps)\wh{\Sigma}+k^2\ln^2(k/\eps)I$.
Then, with probability at least $1-\eps/10$ over $X,$ we have that
$(X-\wh{\mu})^T \cdot (\Sigma')^{-1} \cdot (X-\wh{\mu}) = O(1).$
\end{corollary}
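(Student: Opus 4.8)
The plan is to combine Lemma~\ref{lem:conc-pmd} (applied to the true parameters $\mu$, $\Sigma$) with the estimation guarantees of Lemma~\ref{lem:sample-mean-and-covariance} to replace $\mu,\Sigma$ by $\wh\mu,\wh\Sigma$ at the cost of only a constant-factor blow-up of the ellipsoid. The key observation is that the matrix $\Sigma'$ in the statement is, up to constants, the same quadratic form as $\widetilde\Sigma = k\ln(k/\eps)\Sigma + k^2\ln^2(k/\eps)I$: indeed, by Lemma~\ref{lem:sample-mean-and-covariance} we have $\tfrac12(\Sigma+I)\preceq \wh\Sigma+I\preceq 2(\Sigma+I)$, which after multiplying by $k\ln(k/\eps)$ and noting $k^2\ln^2(k/\eps)\ge k\ln(k/\eps)$ for $\eps$ small (and handling the remaining constant-factor slack in the $k^2\ln^2(k/\eps)I$ term) yields $c_1 \widetilde\Sigma \preceq \Sigma' \preceq c_2 \widetilde\Sigma$ for absolute constants $c_1,c_2>0$. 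Hence $(\Sigma')^{-1} \preceq c_1^{-1}\widetilde\Sigma^{-1}$, so for any vector $w$, $w^T(\Sigma')^{-1}w \le c_1^{-1}\, w^T\widetilde\Sigma^{-1} w$.

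First I would set $w = X - \wh\mu$ and split $X-\wh\mu = (X-\mu) + (\mu - \wh\mu)$. Using $(a+b)^T N (a+b) \le 2a^TNa + 2b^TNb$ for the PSD matrix $N=(\Sigma')^{-1}$, it suffices to bound $(X-\mu)^T(\Sigma')^{-1}(X-\mu)$ and $(\mu-\wh\mu)^T(\Sigma')^{-1}(\mu-\wh\mu)$ separately, each by $O(1)$. For the first term, by the spectral comparison above it is at most $c_1^{-1}(X-\mu)^T\widetilde\Sigma^{-1}(X-\mu)$, which is $O(1)$ with probability $\ge 1-\eps/10$ by Lemma~\ref{lem:conc-pmd}. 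For the second term, Lemma~\ref{lem:sample-mean-and-covariance} gives $(\wh\mu-\mu)^T(\Sigma+I)^{-1}(\wh\mu-\mu) = O(1)$ deterministically (on the conditioned event), and since $\Sigma' \succeq k\ln(k/\eps)\cdot\tfrac12(\Sigma+I) \succeq (\Sigma+I)$ for $\eps$ small enough, we get $(\Sigma')^{-1} \preceq (\Sigma+I)^{-1}$, hence $(\mu-\wh\mu)^T(\Sigma')^{-1}(\mu-\wh\mu) \le (\mu-\wh\mu)^T(\Sigma+I)^{-1}(\mu-\wh\mu) = O(1)$.

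Combining the two bounds gives $(X-\wh\mu)^T(\Sigma')^{-1}(X-\wh\mu) = O(1)$ with probability at least $1-\eps/10$ over $X$ (on the conditioned event of Lemma~\ref{lem:sample-mean-and-covariance}, which holds with probability $\ge 19/20$ over Step~1's samples and has already been conditioned on for the rest of the section). I expect the only mildly delicate step to be the spectral sandwiching $c_1\widetilde\Sigma \preceq \Sigma' \preceq c_2\widetilde\Sigma$: one has to be careful that the additive $k^2\ln^2(k/\eps)I$ terms match up under the $\tfrac12$-to-$2$ slack and that $k\ln(k/\eps) \le k^2\ln^2(k/\eps)$, i.e.\ $\ln(k/\eps)\ge 1/k$, which is automatic for $k\ge 1$ and $\eps<1$. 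Everything else is a routine application of $\|a+b\|^2 \le 2\|a\|^2+2\|b\|^2$ in the Mahalanobis norm and monotonicity of matrix inversion on the PSD cone. This is exactly the argument sketched in the paragraph preceding the corollary, so the write-up is short.
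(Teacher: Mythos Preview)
Your proposal is correct and follows essentially the same approach as the paper: both establish the spectral sandwich $\tfrac12\widetilde\Sigma \preceq \Sigma' \preceq 2\widetilde\Sigma$ from Lemma~\ref{lem:sample-mean-and-covariance}, then bound the $(X-\mu)$ piece via Lemma~\ref{lem:conc-pmd} and the $(\mu-\wh\mu)$ piece via $\Sigma' \succeq \Sigma+I$ together with Lemma~\ref{lem:sample-mean-and-covariance}. The only cosmetic difference is that the paper expands the quadratic form and bounds the cross terms by Cauchy--Schwarz, whereas you use the equivalent inequality $(a+b)^TN(a+b)\le 2a^TNa+2b^TNb$; the ``mildly delicate'' sandwich step you flag is exactly the one the paper carries out, requiring only $k\ln(k/\eps)\ge 1$.
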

\begin{proof}
By Lemma \ref{lem:sample-mean-and-covariance}, it holds that $2(\Sigma+I) \succeq \wh{\Sigma}+I \succeq (\Sigma+I)/2$. Hence, we have that
$$  2\left(k\ln(k/\eps)\Sigma+ k^2 \ln^2(k/\eps)I\right)  \succeq k\ln(k/\eps)\wh{\Sigma}+k^2\ln^2(k/\eps)I  \succeq \frac{1}{2}\left(k\ln(k/\eps)\Sigma+ k^2 \ln^2(k/\eps)I\right) \;.$$
In terms of $\Sigma'$ and $\widetilde \Sigma$, this is $2 \widetilde \Sigma \succeq \Sigma' \succeq \frac{1}{2} \widetilde \Sigma$.
By standard results,
taking inverses reverses the positive semi-definite ordering (see e.g., Corollary 7.7.4 (a) in \cite{HornJohnson:85}). 
Hence,
$$2 {\widetilde \Sigma}^{-1} \succeq \Sigma'^{-1} \succeq \frac{1}{2} \widetilde \Sigma^{-1} \;.$$
Combining the above with Lemma \ref{lem:conc-pmd}, with probability  at least $1-\eps/10$ over $X$ we have that
\begin{equation} \label{eq:1-new} 
(X-\mu)^T \cdot \Sigma'^{-1} \cdot (X-\mu)  \leq 2 (X-\mu)^T \cdot \widetilde \Sigma^{-1} \cdot (X-\mu) = O(1) \;. 
\end{equation}
Since $\Sigma' \succeq \frac{1}{2} \widetilde \Sigma \succeq \Sigma + I$, and therefore 
 $(\Sigma')^{-1} \preceq (\Sigma + I)^{-1}$, Lemma \ref{lem:sample-mean-and-covariance} gives that
\begin{equation} \label{eq:2-new} 
(\wh{\mu}-\mu)^T \cdot \Sigma'^{-1} \cdot (\wh{\mu}-\mu) \leq (\wh{\mu}-\mu)^T \cdot (\Sigma+I)^{-1} \cdot (\wh{\mu}-\mu) = O(1) \;. 
\end{equation}
We then obtain:
\begin{align*}
(X-\wh{\mu})^T \cdot (\Sigma')^{-1} \cdot (X-\wh{\mu}) & = ((X-\mu) - (\wh{\mu}-\mu))^T \cdot (\Sigma')^{-1} \cdot ((X-\mu) - (\wh{\mu}-\mu)) \\
& = (X-\mu)^T \cdot \Sigma'^{-1} \cdot (X-\mu) + (\wh{\mu}-\mu)^T \cdot \Sigma'^{-1} \cdot (\wh{\mu}-\mu) \\
& - (X-\mu)^T \cdot \Sigma'^{-1} \cdot (\wh{\mu}-\mu) - (\wh{\mu}-\mu)^T \cdot \Sigma'^{-1} \cdot (X-\mu) \;.
\end{align*}
Equations (\ref{eq:1-new}) and (\ref{eq:2-new}) yield that the first two terms are $O(1)$. 
Since $ \Sigma'^{-1}$ is positive-definite, $x^T \Sigma'^{-1} y$ as a function of vectors $x,y \in \R^k$ is an inner product. 
So, we may apply the Cauchy-Schwartz inequality to bound each of the last two terms from above 
by $$\sqrt{((X-\mu)^T \cdot \Sigma'^{-1} \cdot (X-\mu)) \cdot ((\wh{\mu}-\mu)^T \cdot \Sigma'^{-1} \cdot (\wh{\mu}-\mu))}=O(1)\;,$$
where the last equality follows from (\ref{eq:1-new}) and (\ref{eq:2-new}).
This completes the proof of  Corollary~\ref{cor:conc-pmd-sample}.
\end{proof}

Corollary~\ref{cor:conc-pmd-sample} shows that our unknown PMD $X$ puts at least $1-\eps/10$ of its probability 
mass in the ellipsoid $\mathcal{E}' = \{x \in \R^k : (x-\wh{\mu})^T \cdot (\Sigma')^{-1} \cdot (x-\wh{\mu}) \le c \}, $
for an appropriate universal constant $c>0.$ This is the ellipsoid centered at $\wh{\mu}$, whose principal semiaxes
are parallel to the $v_j$'s, the eigenvectors of $\wh{\Sigma}$, and the length of the principal semiaxis parallel to $v_j$ is
equal to $c^{-1/2} \cdot \sqrt{k \lambda_j \ln (k/\eps) + k^2\ln^2 (k/\eps)}.$

Our next step is to to relate the ellipsoid $\mathcal{E}'$ to the integer matrix $M \in \Z^{k \times k}$ 
used in our algorithm. Let $M' \in \R^{k \times k}$ be the matrix with $j^{th}$ column
$C \left(\sqrt{k \ln(k/\eps)\lambda_j+k^2\ln^2(k/\eps)}\right)v_j,$ where $C>0$ 
is the constant in the algorithm statement.
The matrix $M \in \Z^{k \times k}$ is obtained by rounding each entry
of $M'$ to the closest integer point.
We note that the ellipsoid $\mathcal{E}'$ can be equivalently expressed as
$\mathcal{E}' = \{x \in \R^k : \|(M')^{-1} \cdot (x-\wh{\mu})\|_2 \leq 1/4 \}.$
Using the relation between $M$ and $M'$, we show that $\mathcal{E}'$ is enclosed
in the ellipsoid $\{x \in \R^k : \|(M)^{-1} \cdot (x-\wh{\mu})\|_2 < 1/2\},$
which is in turn enclosed in the parallelepiped with integer corner points
$\{x \in \R^k : \|(M)^{-1} \cdot (x-\wh{\mu})\|_{\infty} < 1/2\}.$
This parallelepiped is a fundamental domain of the lattice $L = M \Z^{k}.$ 
Formally, we have:

\begin{lemma} \label{lem:fd}
With probability at least $1-\epsilon/10$ over $X,$
we have that $X\in \wh{\mu}+M(-1/2,1/2]^k.$
\end{lemma}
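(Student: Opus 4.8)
The plan is to reduce the claim to a purely deterministic linear-algebra statement, since all of the probability is already packaged into Corollary~\ref{cor:conc-pmd-sample}: on an event of probability at least $1-\eps/10$ over $X$ we have $(X-\wh{\mu})^T (\Sigma')^{-1}(X-\wh{\mu}) \le c$ for the universal constant $c$ hidden in the $O(1)$, where $\Sigma'=k\ln(k/\eps)\wh{\Sigma}+k^2\ln^2(k/\eps)I$. Thus it suffices to show that, for $C$ a large enough universal constant, every point $x$ with $(x-\wh{\mu})^T(\Sigma')^{-1}(x-\wh{\mu})\le c$ satisfies $x\in\wh{\mu}+M(-1/2,1/2]^k$, and then apply this with $x=X$ on that event.

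First I would express the ellipsoid through $M'$, the real matrix whose $j$-th column is $C\sqrt{k\ln(k/\eps)\lambda_j+k^2\ln^2(k/\eps)}\,v_j$. Writing $\wh{\Sigma}=U\Lambda U^{T}$ with $U=[v_1|\cdots|v_k]$ orthogonal (the eigendecomposition being exact by the reduction preceding the lemma) and $D=\mathrm{diag}\!\big(C\sqrt{k\ln(k/\eps)\lambda_j+k^2\ln^2(k/\eps)}\big)$, we have $M'=UD$, hence $(M')^{-T}(M')^{-1}=\frac{1}{C^2}(\Sigma')^{-1}$ and so $\|(M')^{-1}(x-\wh{\mu})\|_2^2=\frac{1}{C^2}(x-\wh{\mu})^T(\Sigma')^{-1}(x-\wh{\mu})$. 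Therefore the hypothesis becomes $\|(M')^{-1}(x-\wh{\mu})\|_2\le \sqrt{c}/C\le 1/4$ once $C\ge 4\sqrt{c}$, and it remains to prove the implication $\|(M')^{-1}y\|_2\le 1/4 \Rightarrow \|M^{-1}y\|_\infty<1/2$, because $\|M^{-1}y\|_\infty<1/2$ is exactly the statement $y\in M(-1/2,1/2]^k$.

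For the perturbation step, write $M=M'+E$, where $E$ is the entrywise rounding error, so $\|E\|_2\le\|E\|_F\le k/2$. The singular values of $M'=UD$ are the $|D_{jj}|$, each at least $Ck\ln(k/\eps)$ because $\lambda_j\ge 0$; hence $\|(M')^{-1}\|_2\le 1/(Ck\ln(k/\eps))$ and $\|(M')^{-1}E\|_2\le 1/(2C\ln(k/\eps))$, which is well below $1$ for $C$ large since $\ln(k/\eps)\ge\ln 2$. Then $M^{-1}=(I+(M')^{-1}E)^{-1}(M')^{-1}$ with $\|(I+(M')^{-1}E)^{-1}\|_2\le(1-1/(2C\ln 2))^{-1}<2$ by the Neumann series, so $\|M^{-1}y\|_\infty\le\|M^{-1}y\|_2\le\|(I+(M')^{-1}E)^{-1}\|_2\,\|(M')^{-1}y\|_2<2\cdot\tfrac14=\tfrac12$. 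Taking $y=X-\wh{\mu}$ on the event of Corollary~\ref{cor:conc-pmd-sample} then gives $X\in\wh{\mu}+M(-1/2,1/2]^k$ with probability at least $1-\eps/10$, which is the lemma.

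I expect the only genuine work to be the matrix-perturbation bookkeeping in the last paragraph: verifying that $\|(M')^{-1}E\|_2<1$ so the Neumann bound applies, and checking that the single universal constant $C$ from the algorithm is simultaneously large enough to turn $\sqrt{c}/C$ into $1/4$ and to make the product $\|(I+(M')^{-1}E)^{-1}\|_2\cdot\|(M')^{-1}y\|_2$ come out strictly below $1/2$. Everything else is a direct unwinding of definitions together with Corollary~\ref{cor:conc-pmd-sample} and the elementary fact that rounding each coordinate to the nearest integer changes each matrix entry by at most $1/2$.
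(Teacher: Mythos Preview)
Your proposal is correct and follows essentially the same route as the paper: reduce to Corollary~\ref{cor:conc-pmd-sample} to obtain $\|(M')^{-1}(X-\wh{\mu})\|_2\le 1/4$, then show that passing from $M'$ to its integer rounding $M$ inflates $\|M^{-1}y\|_2$ by at most a factor~$2$, and conclude via $\|\cdot\|_\infty\le\|\cdot\|_2$. The only difference is in the perturbation step---the paper compares $MM^T$ to $M'(M')^T$ in the PSD ordering and inverts, whereas you write $M^{-1}=(I+(M')^{-1}E)^{-1}(M')^{-1}$ and bound via a Neumann series; both arguments are standard and yield the same factor-of-$2$ conclusion.
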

\begin{proof}
Let $M' \in \R^{k \times k}$ be the matrix with columns $C \left(\sqrt{k \ln(k/\eps)\lambda_i+k^2\ln^2(k/\eps)}\right)v_i,$ where $C>0$
is the constant in the algorithm statement. 
Note that,
\begin{equation} \label{eq:MSigma}
M'(M')^T = C^2 k\ln(k/\eps)\wh{\Sigma}+C^2k^2\ln^2(k/\eps)I =  C^2 \Sigma' \;.
\end{equation}
For a large enough constant $C$, Corollary~\ref{cor:conc-pmd-sample}
implies that with probability at least $1-\eps/10,$
\begin{equation} \label{eqn:ellipsoid-M'}
\|(M')^{-1} \cdot (X-\wh{\mu})\|_2^2 = C^{-2} (X-\wh{\mu})^T \cdot (\Sigma')^{-1} \cdot (X-\wh{\mu}) \leq 1/16 \;.
\end{equation}
Note that the above is an equivalent description of the ellipsoid $\mathcal{E}'.$
Our lemma will follow from the following claim:
\begin{claim} \label{clm:rounding}
For any $x \in \R^k$, it holds
\begin{equation} \label{eq:ellipsoid-containment}
\|M^{-1} x\| _2 <  2 \|(M')^{-1} x\|_2 \textrm{ and } \|M^T x\|_2 <  2 \|(M')^T x\|_2  \;.
\end{equation}
\end{claim} 
\begin{proof}
By construction, $M$ and $M'$ differ by at most $1$ in each entry,
and therefore $M-M'$ has Frobenius norm (and, thus, induced $L_2$-norm) 
at most $k.$
For any $x \in \R^k$, we thus have that
$$\|(M')^T x\|_2 = \sqrt{x^T \cdot M' \cdot (M')^T \cdot x} \geq  \sqrt{x^T \cdot (C^2k^2I) \cdot x} >  2k \|x\|_2 \;,$$
and therefore
$$\|M^T x\|_2 \geq \|(M')^T x\|_2 - \|(M-M')^T\|_2 \|x\|_2 >  \frac{1}{2}\|(M')^T x\|_2 .$$
Similarly, we get $\|M^T x\|_2 \leq \|(M')^T x\|_2 + \|(M-M')^T\|_2 \|x\|_2 <  2 \|(M')^T x\|_2.$
In terms of the PSD ordering, we have:
\begin{equation} \label{eq:M-M'-PSD}
\frac{1}{4} M' (M')^T \prec M M^T \prec 4 M' (M')^T \;.
\end{equation}
Since $M' (M')^T \succeq I$,
both $M' (M')^T$ and $M M^T$ are positive-definite,
and so $M$ and $M'$ are invertible.
Taking inverses in Equation (\ref{eq:M-M'-PSD}) reverses the ordering, that is:
$$ \frac{1}{4} (M'^{-1})^T M'^{-1}  \prec (M^{-1})^T M^{-1}  \prec  4 (M'^{-1})^T M'^{-1} \;.$$
The claim now follows.
\end{proof}
Hence, Claim~\ref{clm:rounding} implies that with probability at least $1-\eps/10$, we have:
$$ \|M^{-1} (X-\wh{\mu}) \|_\infty \leq \|M^{-1} (X-\wh{\mu}) \|_2 < 2 \|(M')^{-1} (X-\wh{\mu}) \|_2 <  1/2 \;,$$
where the last inequality follows from (\ref{eqn:ellipsoid-M'}).
In other words,
with probability at least $1-\eps/10$, $X$ lies in $\wh{\mu}+M(-1/2,1/2]^k$, which was to be proved.
\end{proof}

Recall that $L$ denotes the lattice $M\Z^k$.
The above lemma implies that it is sufficient to learn the random variable $X\pmod{L}$.
To do this, we will learn its Discrete Fourier transform.
Let $L^{\ast}$ be the dual lattice to $L.$
Recall that the DFT of the PMD $\p$, with $X \sim \p$, 
is the function $\wh{\p}:L^{\ast}/\Z^k \rightarrow \C$ defined by
$\wh{\p}(\xi) = \E[e(\xi \cdot X)].$
Moreover, the probability that $X \pmod{L}$ attains a given value $x$ 
is given by the inverse DFT, namely
$$
\Pr\left[X \pmod{L} = x\right] = \frac{1}{|\det(M)|}\sum_{\xi \in L^{\ast}/\Z^k} \wh{\p}(\xi)e(-\xi\cdot x) \;.
$$
The main component of the analysis is the following proposition, establishing that 
the total contribution to the above sum coming from points $\xi \not\in T$ is small.
In particular, we prove the following:

\begin{proposition} \label{prop:ft-effective-support}
We have that
$\sum_{\xi \in (L^{\ast}/\Z^k)  \setminus T }|\wh{\p}(\xi)| < \eps/10.$
\end{proposition}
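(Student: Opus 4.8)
The goal is to show that the $L_1$-mass of $\wh{\p}$ on the dual lattice $L^\ast/\Z^k$ is concentrated on the set $T$ of ``low-frequency'' points, i.e. those $\xi=(M^T)^{-1}v+\Z^k$ with $\|v\|_2\le C^2k^2\ln(k/\eps)$. The plan is to bound $|\wh{\p}(\xi)|$ pointwise by a Gaussian-type quantity and then sum the tails. First I would write $|\wh{\p}(\xi)|=\prod_{i=1}^n|\wh{X_i}(\xi)|$, and for each factor use the elementary bound $|\sum_j e(\xi_j)p_{i,j}|^2 = 1 - \sum_{j<j'}2p_{i,j}p_{i,j'}(1-\cos(2\pi(\xi_j-\xi_{j'})))$, and then the inequality $1-\cos(2\pi t)\ge c\,\mathrm{dist}(t,\Z)^2$ to get $|\wh{X_i}(\xi)|^2\le 1-c'\sum_{j<j'}p_{i,j}p_{i,j'}\,\mathrm{dist}(\xi_j-\xi_{j'},\Z)^2$, hence $|\wh{X_i}(\xi)|\le\exp(-c''\,q_i(\xi))$ where $q_i$ is a quadratic form naturally attached to the $i$-th CRV. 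Multiplying over $i$, $|\wh{\p}(\xi)|\le\exp(-c''\sum_i q_i(\xi))$, and the point is that $\sum_i q_i(\xi)$ is comparable (for $\xi$ in a fundamental domain of $\Z^k$) to $\xi^T\Sigma\,\xi$, the quadratic form of the covariance matrix of $X$ — this is the content of a ``Gaussian bound'' lemma of the type the excerpt refers to (e.g. \lemref{lem:gaussian-bound-from-interval}). So we get roughly $|\wh{\p}(\xi)|\lesssim\exp(-c\,\xi^T\Sigma\xi)$ plus a correction accounting for the $+k^2\ln^2(k/\eps)I$ term, i.e. really $|\wh{\p}(\xi)|\lesssim\exp(-c\,\xi^T\widetilde\Sigma\xi/(k\ln(k/\eps)))$ after also using that $X$ is effectively supported in a set of diameter governed by $\widetilde\Sigma$ (Corollary~\ref{cor:conc-pmd-sample}); more precisely one argues that on the quotient the relevant ``effective'' frequencies are controlled by $\Sigma'$.

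The second step is to convert this pointwise bound into a sum over the lattice. By construction $M'(M')^T=C^2\Sigma'$ (Equation~\eqref{eq:MSigma}), so writing $\xi=(M^T)^{-1}v$ we have $\xi^T\Sigma'\xi = C^{-2}\|(M'^{-1}M^T)v\|_2^2$, which by Claim~\ref{clm:rounding} (the $\|M^Tx\|_2<2\|(M')^Tx\|_2$ half, applied to bound $M'^{-1}M^T$) is comparable to $C^{-2}\|v\|_2^2$ up to constants. Thus for $\xi\notin T$, i.e. $\|v\|_2>C^2k^2\ln(k/\eps)$, we get $|\wh{\p}(\xi)|\le\exp(-c\,\|v\|_2^2/(k\ln(k/\eps)))$ — and here I need to be careful that the constant $C$ is chosen large enough that the exponent beats the $k\ln(k/\eps)$ in the denominator with room to spare, so that the bound is $\le\exp(-\Omega(C^2k^3\ln^2(k/\eps)))\cdot\exp(-\Omega(\|v\|_2^2/(k\ln(k/\eps))))$ when $\|v\|_2>C^2k^2\ln(k/\eps)$. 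Then summing over $v\in\Z^k$ with $\|v\|_2>C^2k^2\ln(k/\eps)$: the number of lattice points with $\|v\|_2\in[r,r+1)$ is $O(r^{k-1})$, and $\sum_{r>C^2k^2\ln(k/\eps)}r^{k-1}\exp(-cr^2/(k\ln(k/\eps)))$ is dominated by its first term (up to a constant factor) since the Gaussian decays much faster than the polynomial once $r$ exceeds roughly $\sqrt{k^2\ln(k/\eps)}$, which our threshold $C^2k^2\ln(k/\eps)$ comfortably does. This leaves a bound of the form $\mathrm{poly}(k\ln(k/\eps))\cdot\exp(-\Omega(C^2k^3\ln^2(k/\eps)))$, which is $<\eps/10$ for $C$ a sufficiently large universal constant.

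One subtlety I would handle explicitly: the quotient $L^\ast/\Z^k$ means each equivalence class should be counted once, so when I index by $v\in\Z^k$ I should restrict to a set of representatives, or equivalently note that the sum over all $v\in\Z^k$ of the pointwise bound only overcounts and hence still gives a valid upper bound on $\sum_{\xi\in L^\ast/\Z^k}$ — this is harmless for an upper bound. A second subtlety: to apply the Gaussian/quadratic-form bound on the factors $|\wh{X_i}(\xi)|$ one works with $\xi$ reduced modulo $\Z^k$ (the bound $1-\cos(2\pi t)\ge c\,\mathrm{dist}(t,\Z)^2$ is periodic), so the representative $v$ should be chosen so that $\xi=(M^T)^{-1}v$ lies in a fixed fundamental domain of $\Z^k$; alternatively invoke the earlier-established sparsity/Gaussian-domination lemma as a black box.

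\textbf{Main obstacle.} The crux is the pointwise bound $|\wh{\p}(\xi)|\lesssim\exp(-c\,\xi^T\Sigma'\xi/(k\ln(k/\eps)))$ for $\xi$ ranging over the relevant fundamental domain: one must relate the genuinely periodic quadratic behavior of $\prod_i|\wh{X_i}(\xi)|$ near \emph{all} lattice points of $\Z^k$ (not just the origin) to the single quadratic form $\xi^T\Sigma'\xi$, and ensure no ``resonances'' where many factors are simultaneously close to $1$ even though $\|v\|_2$ is large. This is exactly where the structural lemma on Gaussian domination of the Fourier transform (the $\wh{\p}$ analogue of \lemref{lem:gaussian-bound-from-interval}) does the work, together with the fact — encoded in $\Sigma'$ having its $+k^2\ln^2(k/\eps)I$ ridge — that every non-degenerate direction contributes genuine variance; degenerate directions (eigenvalue-$0$ directions of $\Sigma$ other than $\bone$, which don't exist by the structure of PMDs, or the $\bone$ direction itself, which is frozen since $X\cdot\bone=n$ deterministically and so is killed by working modulo the lattice) need to be tracked to make sure $M$ is actually invertible and $T$ is finite. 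Once that pointwise bound is in hand, the lattice-point summation is routine.
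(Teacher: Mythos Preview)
Your broad strategy---pointwise Gaussian-type bound on $|\wh{\p}(\xi)|$, then sum the tail over lattice points $v=M^T\xi$---is exactly the paper's. But there is a genuine gap at the step you yourself flag as the main obstacle, and your proposed resolution does not work.

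The expression you obtain, $\sum_{j<j'}p_{i,j}p_{i,j'}\,[\xi_j-\xi_{j'}]^2$ with $[t]=\mathrm{dist}(t,\Z)$, is \emph{not} a quadratic form in $\xi$; it is periodic. You then assert that ``for $\xi$ in a fundamental domain of $\Z^k$'' this is comparable to $\xi^T\Sigma\xi$. This is false for the obvious fundamental domain $[0,1)^k$: take $\xi_1=0.05$, $\xi_2=0.95$, so $(\xi_1-\xi_2)^2=0.81$ while $[\xi_1-\xi_2]^2=0.01$. No uniform constant relates the two on $[0,1)^k$. Consequently the claimed pointwise bound $|\wh{\p}(\xi)|\lesssim\exp(-c\,\xi^T\Sigma'\xi/(k\ln(k/\eps)))$ does not follow from merely choosing a representative in $[0,1)^k$. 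And invoking Lemma~\ref{lem:gaussian-bound-from-interval} ``as a black box'' does not help either: that lemma has the explicit hypothesis that the coordinates of $\xi$ lie in an interval of length $1-\delta$, which is precisely the condition you have not arranged. Your remark about the $+k^2\ln^2(k/\eps)I$ ridge in $\Sigma'$ is also off target---that term is there to regularize the effective-support ellipsoid of $X$, not to tame the periodicity of $\wh{\p}$.

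The missing idea is a pigeonhole shift (Claim~\ref{clm:pigeon-hole-interval} in the paper): since $|\wh{\p}|$ is $\Z^k$-periodic, for any $\xi$ one may replace $\xi$ by $\xi-b$, $b\in\Z^k$, chosen so that all $k$ coordinates land in a common interval $[\tfrac{a}{k+1},\tfrac{a+k}{k+1}]$ of length $1-\tfrac{1}{k+1}$ (there are $k+1$ such ``strips'' and only $k$ fractional parts, so one strip avoids them all). On each strip Lemma~\ref{lem:gaussian-bound-from-interval} applies with $\delta=\tfrac{1}{k+1}$, giving $|\wh{\p}(\xi)|=\exp(-\Omega(k^{-2}\xi^T\Sigma\xi))$ for that representative, and now $\|\xi\|_\infty\le 2$, which is what lets you pass from $\Sigma$ to $\wh\Sigma$ to $\|M^T\xi\|_2^2$ via Lemma~\ref{lem:sample-mean-and-covariance}, \eqref{eq:MSigma}, and Claim~\ref{clm:rounding}. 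The paper then sums the tail within each strip (this is Lemma~\ref{lem:dft-ub}) and finally unions over the $k+1$ strips. Once you insert this pigeonhole step, the rest of your plan---dyadic shells in $\|v\|_2$ and a routine Gaussian tail sum---goes through essentially as in the paper.
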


To prove this proposition, we will need a number of intermediate claims and lemmas.
We start with the following claim, showing that for every point $\xi \in \R^k,$
there exists an integer shift whose coordinates 
lie in an interval of length strictly less than $1$:

\begin{claim} \label{clm:pigeon-hole-interval}
For each $\xi\in \R^k$, there exists  $a \in \Z_+$ with $0\leq a\leq k$, and $b \in \Z^k$
such that $\xi-b \in \left[\frac{a}{k+1},\frac{a+k}{k+1}\right]^k$.
\end{claim}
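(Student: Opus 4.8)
\textbf{Proof plan for Claim~\ref{clm:pigeon-hole-interval}.}

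The plan is to argue coordinate by coordinate using a pigeonhole argument on the fractional parts of the coordinates of $\xi$. First I would reduce modulo $\Z^k$: replacing $\xi$ by its representative in $[0,1)^k$ changes nothing, since the claimed conclusion is stated up to an integer shift $b$, so I may assume $\xi \in [0,1)^k$. Now consider the fractional parts $\xi_1, \ldots, \xi_k \in [0,1)$. Partition the unit interval $[0,1)$ into the $k+1$ half-open subintervals $I_0, I_1, \ldots, I_k$, where $I_j = \bigl[\tfrac{j}{k+1}, \tfrac{j+1}{k+1}\bigr)$. We have $k$ coordinates and $k+1$ buckets; what we actually want is a single bucket of width $k/(k+1)$ (i.e., a union of $k$ consecutive intervals $I_a, I_{a+1}, \ldots, I_{a+k-1}$) that contains all $k$ coordinates. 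Equivalently, we want to find one of the $k+1$ intervals $I_a$ that contains \emph{none} of the $k$ fractional parts $\xi_1, \ldots, \xi_k$; if no coordinate lies in $I_a$, then every coordinate lies in $[0,1) \setminus I_a$, which (wrapping around, using that we are free to subtract $1$ from any coordinate, i.e., to choose $b_i \in \{0,1\}$) is contained in an interval of the form $\bigl[\tfrac{a+1}{k+1}, \tfrac{a+1+k}{k+1}\bigr)$ once we shift the coordinates that fell ``below'' $I_a$ up by $1$.

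The key step is therefore the existence of an empty bucket: since there are $k+1$ buckets $I_0, \ldots, I_k$ and only $k$ points $\xi_1, \ldots, \xi_k$, by pigeonhole at least one bucket, say $I_a$ with $0 \le a \le k$, contains no $\xi_i$. Fix such an $a$. For each coordinate $i \in [k]$, the fractional part $\xi_i$ lies in $[0,1) \setminus I_a = [0, \tfrac{a}{k+1}) \cup [\tfrac{a+1}{k+1}, 1)$. Define $b_i = -1$ (i.e. we will use $\xi_i - b_i = \xi_i + 1$) if $\xi_i \in [0, \tfrac{a}{k+1})$, and $b_i = 0$ otherwise. Then in the first case $\xi_i - b_i = \xi_i + 1 \in [1, 1 + \tfrac{a}{k+1}) \subseteq [\tfrac{a+1}{k+1}, \tfrac{a+1+k}{k+1}]$ (using $a \le k$, so $1 \le \tfrac{a+1+k}{k+1} - \tfrac{a}{k+1}$... more simply $1 + \tfrac{a}{k+1} \le \tfrac{a+k+1}{k+1}$ iff $k+1 + a \le a + k + 1$, which holds with equality at the endpoint, hence the interval is closed on the right), and in the second case $\xi_i - b_i = \xi_i \in [\tfrac{a+1}{k+1}, 1) \subseteq [\tfrac{a+1}{k+1}, \tfrac{a+1+k}{k+1}]$. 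So with $b = (b_1, \ldots, b_k) \in \Z^k$ we get $\xi - b \in [\tfrac{a+1}{k+1}, \tfrac{a+1+k}{k+1}]^k$. Reindexing $a' = a+1$ (so $1 \le a' \le k+1$; and if the empty bucket happened to be $I_k$ we instead take $a' = 0$ by a symmetric shift, subtracting $1$ from the coordinates in the top piece) we land in an interval of the form $[\tfrac{a'}{k+1}, \tfrac{a'+k}{k+1}]^k$ with $0 \le a' \le k$, as required.

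I do not expect any real obstacle here; the only thing to be careful about is the bookkeeping of half-open versus closed intervals at the two ends, and the ``wrap-around'' when the empty bucket is $I_0$ or $I_k$ — in those boundary cases one must pick the shift direction consistently so that the final interval still has the stated form with $0 \le a \le k$. This is purely a matter of choosing $b_i \in \{-1, 0, 1\}$ appropriately for each coordinate, and the $k+1 > k$ pigeonhole count is exactly what makes an empty bucket (equivalently, a gap of width $\tfrac{1}{k+1}$ avoided by all coordinates) available.
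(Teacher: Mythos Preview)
Your proposal is correct and is essentially the same argument as the paper's: both use pigeonhole on the $k$ fractional parts against $k+1$ subintervals of width $\tfrac{1}{k+1}$ to find an empty bucket, then shift each coordinate by $0$ or $\pm 1$ to land all of them in the complementary length-$\tfrac{k}{k+1}$ window. The only differences are cosmetic: the paper indexes the buckets as open intervals $I_{a'}=(\tfrac{a'-1}{k+1},\tfrac{a'}{k+1})$ for $1\le a'\le k+1$ and folds the wrap-around case by setting $a=0$ when $a'=k+1$, whereas you index half-open buckets starting at $0$ and handle the $a=k$ boundary in a separate remark; the underlying shift choices $b_i\in\{\lfloor\xi_i\rfloor,\lfloor\xi_i\rfloor-1\}$ are identical.
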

\begin{proof}
Consider the $k$ fractional parts of the coordinates of $\xi$,
i.e., $\xi_i - \lfloor \xi_i \rfloor$, for $1 \leq i \leq k$.
Now consider the $k+1$ intervals
$I_{a'}=\left(\frac{a'-1}{k+1}, \frac{a'}{k+1}\right)$,
for $1 \leq a' \leq k+1$.
By the pigeonhole principle,
there is an $a'$ such that
$\xi_i - \lfloor \xi_i \rfloor \notin I_{a'}$, for all $i$, $1 \le i \le k.$
We define $a=a'$ when $a'<k+1$, and $a=0$ when $a'=k+1$.

For any $i$, with $1 \le i \le k$,
since $\xi_i - \lfloor \xi_i \rfloor \notin I_{a'}$, we have that
$\xi_i - \lfloor \xi_i \rfloor \in \left[0,\frac{a-1}{k+1}\right] \cup \left[\frac{a}{k+1},1\right]$
(taking the first interval to be empty if $a=0$).
Hence, by setting one of $b_i=\lfloor \xi_i \rfloor $,
or $b_i=\lfloor \xi_i \rfloor-1$, we get $\xi_i-b_i \in \left[\frac{a}{k+1},\frac{a+k}{k+1}\right].$
This completes the proof.
\end{proof}


The following lemma gives a ``Gaussian decay'' upper bound 
on the magnitude of the DFT, at points $\xi$ whose coordinates lie in an interval of length less than $1.$
Roughly speaking, the proof of Proposition~\ref{prop:ft-effective-support} proceeds by applying this lemma for all $\xi \notin T.$

\begin{lemma} \label{lem:gaussian-bound-from-interval}
Fix $\delta \in (0, 1)$. Suppose that $\xi \in \R^k$
has coordinates lying in an interval $I$ of length $1-\delta$.  Then,
$|\wh{\p}(\xi)| = \exp(-\Omega(\delta^2 \xi^T \cdot \Sigma \cdot \xi)).$
\end{lemma}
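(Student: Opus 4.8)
The plan is to bound $|\wh{\p}(\xi)|$ term by term using the product formula $\wh{\p}(\xi) = \prod_{i=1}^n \wh{X_i}(\xi) = \prod_{i=1}^n \sum_{j=1}^k e(\xi_j) p_{i,j}$, and then to show that the resulting product is at most a Gaussian in $\xi$ whose exponent is governed by $\Sigma = \sum_i \mathrm{Cov}(X_i)$. First I would fix $i$ and estimate $|\wh{X_i}(\xi)|^2 = \left| \sum_j e(\xi_j) p_{i,j} \right|^2$. Expanding, this equals $\sum_{j,j'} p_{i,j} p_{i,j'} e(\xi_j - \xi_{j'}) = \sum_{j,j'} p_{i,j}p_{i,j'} \cos(2\pi(\xi_j - \xi_{j'}))$ (the imaginary parts cancel by symmetry), which we can rewrite as $1 - 2\sum_{j<j'} p_{i,j}p_{i,j'}\big(1 - \cos(2\pi(\xi_j - \xi_{j'}))\big)$. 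The key elementary fact is that $1 - \cos(2\pi t) = 2\sin^2(\pi t) \geq c\, \mathrm{dist}(t,\Z)^2$ for an absolute constant $c$; and crucially, because all coordinates $\xi_j$ lie in an interval $I$ of length $1 - \delta$, the differences $\xi_j - \xi_{j'}$ lie in $[-(1-\delta), 1-\delta]$, so $\mathrm{dist}(\xi_j - \xi_{j'}, \Z) \geq \delta \cdot |\xi_j - \xi_{j'}| / (1-\delta) \ge \delta |\xi_j - \xi_{j'}|$ — there is no "wraparound" that could make a large difference look small modulo $1$. Hence $1 - \cos(2\pi(\xi_j - \xi_{j'})) = \Omega(\delta^2 (\xi_j - \xi_{j'})^2)$.

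Next I would relate $\sum_{j<j'} p_{i,j}p_{i,j'}(\xi_j - \xi_{j'})^2$ to the quadratic form $\xi^T \mathrm{Cov}(X_i)\, \xi$. A direct computation gives $\xi^T \mathrm{Cov}(X_i)\, \xi = \sum_j p_{i,j}\xi_j^2 - \big(\sum_j p_{i,j}\xi_j\big)^2 = \frac12 \sum_{j,j'} p_{i,j}p_{i,j'}(\xi_j - \xi_{j'})^2 = \sum_{j<j'} p_{i,j}p_{i,j'}(\xi_j - \xi_{j'})^2$, which is exactly the sum appearing above. Therefore $|\wh{X_i}(\xi)|^2 \leq 1 - \Omega(\delta^2)\, \xi^T \mathrm{Cov}(X_i)\, \xi$, and using $1 - u \leq e^{-u}$ this yields $|\wh{X_i}(\xi)| \leq \exp\!\big(-\Omega(\delta^2)\, \xi^T \mathrm{Cov}(X_i)\, \xi\big)$. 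Taking the product over $i$ and summing the exponents, $\sum_i \xi^T \mathrm{Cov}(X_i)\,\xi = \xi^T \Sigma\, \xi$, gives $|\wh{\p}(\xi)| \leq \exp\!\big(-\Omega(\delta^2\, \xi^T \Sigma\, \xi)\big)$, as desired.

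The only place requiring a small amount of care — and the main technical obstacle — is the step $|\wh{X_i}(\xi)|^2 \le 1 - \Omega(\delta^2)\xi^T\mathrm{Cov}(X_i)\xi$: the right-hand side could be negative when $\xi^T\mathrm{Cov}(X_i)\xi$ is large, so the bound $|\wh{X_i}(\xi)|^2 \le \exp(-\Omega(\delta^2)\xi^T\mathrm{Cov}(X_i)\xi)$ is only immediate when the exponent is, say, at most a constant; otherwise one must argue directly. This is handled by noting that $1 - \cos\theta \le 2$ always, so each summand $p_{i,j}p_{i,j'}(1-\cos(2\pi(\xi_j-\xi_{j'})))$ is bounded, hence $|\wh{X_i}(\xi)|^2 = 1 - 2\sum_{j<j'}p_{i,j}p_{i,j'}(1-\cos(\cdots))$ combined with the inequality $1 - 2u \le e^{-2u}$ (valid for all real $u \ge 0$, and the sum is nonnegative) directly gives $|\wh{X_i}(\xi)| \le \exp(-\sum_{j<j'}p_{i,j}p_{i,j'}(1-\cos(2\pi(\xi_j-\xi_{j'}))))$ with no positivity caveat, and then the $\Omega(\delta^2(\xi_j-\xi_{j'})^2)$ lower bound on each term finishes the argument cleanly. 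I expect the rest to be routine bookkeeping of constants.
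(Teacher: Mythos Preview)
Your proof is correct and follows essentially the same approach as the paper's: both factor $|\wh{\p}(\xi)|=\prod_i |\wh{X_i}(\xi)|$, compute $|\wh{X_i}(\xi)|^2$, use the elementary inequality $1-\cos(2\pi t)\ge \Omega(\delta^2 t^2)$ for $|t|\le 1-\delta$, identify the resulting quadratic form with $\xi^T\mathrm{Cov}(X_i)\xi$, and apply $1-u\le e^{-u}$. The only cosmetic difference is that the paper phrases the squaring step via symmetrization (introducing an independent copy $X_i'$ and writing $|\wh{X_i}(\xi)|^2=\E[\cos(2\pi\xi\cdot(X_i-X_i'))]$), whereas you expand the square algebraically; the two are identical computations.
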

\begin{proof}
Since $\p$ is a PMD, we have $X = \sum_{i=1}^n X_i$, where $X_i \sim \p_i$ for independent $k$-CRV's $\p_i$, we have that
$ |\wh{\p}(\xi)| = \prod_{i=1}^n |\wh{\p_i}(\xi)|.$
Note also that $\xi^T \cdot \Sigma \cdot \xi = \var[\xi\cdot X] = \sum_{i=1}^n \var[\xi\cdot X_i].$
It therefore suffices to show that for each $i \in [n]$ it holds
$$|\wh{\p_i}(\xi)| = \exp(-\Omega(\delta^2 \var[\xi\cdot X_i]))\;.$$
Let $X'_i \sim \p_i$ be an independent copy of $X_i$,
and $Y_i = X_i-X'_i$.
We note that $\var[\xi\cdot X_i] = (1/2)\E[(\xi\cdot Y_i)^2]$,
and that $|\wh{\p_i}(\xi)|^2 = \E[e(\xi\cdot Y_i)]$.
Since $Y_i$ is a symmetric random variable, we have that
$$|\wh{\p_i}(\xi)|^2 = \E[e(\xi\cdot Y_i)] = \E[\cos(2\pi \xi\cdot Y_i)] \;.$$
We will need the following technical claim:
\begin{claim} \label{claim:cos-claim}
Fix $0< \delta < 1.$
For all $x \in \R$ with $|x| \le 1-\delta$, it holds
$1-\cos(2 \pi x) \geq \delta^2x^2.$
\end{claim}
\begin{proof}
When $0 \leq |x| \leq 1/4$, $\sin(2 \pi x)$ is concave,
since its second derivative is $-4 \pi^2\sin(2 \pi x) \leq 0$.
So, we have $\sin(2 \pi x) \geq (1 - 4x) \sin(0) + 4x \sin (\pi/2)=4x.$
Integrating the latter inequality,
we obtain that $(1-\cos(2 \pi x))/2 \pi \geq 2 x^2$, i.e.,
$(1-\cos(2 \pi x)) \geq 4 \pi x^2$.
Thus, for $0 \leq |x| \leq 1/4$, we have $1-\cos(2 \pi x)) \geq 4 \pi x^2 \geq \delta^2 x^2$.

When $1/4 \leq |x| \leq 3/4$, we have $(1-\cos(2 \pi x)) \geq 1 \geq \delta^2 x^2$.
Finally, when $3/4 \leq |x| \leq 1 - \delta$,
we have $0 \leq 1-|x| \leq \delta \leq 1/4$,
and therefore $1-\cos(2 \pi x)=1-\cos(2 \pi (1-|x|)) \geq 1- \cos(2 \pi \delta) \geq 4 \pi \delta^2 \geq \delta^2 x^2$.
This establishes the proof of the claim.
\end{proof}
\noindent Since $\xi \cdot Y_i$ is by assumption supported on the interval $[-1+\delta,1-\delta]$,
we have that $|\wh{\p_i}(\xi)|^2$ is
$$
\E[\cos(2\pi \xi\cdot Y_i)] = \E[1-\Omega(\delta^2 (\xi \cdot Y_i)^2)]
\leq \exp(-\Omega(\delta^2 \E[(\xi \cdot Y_i)^2])) = \exp(-\Omega(\delta^2 \var[\xi\cdot X_i])) \;.
$$
This completes the proof of Lemma~\ref{lem:gaussian-bound-from-interval}.
\end{proof}

We are now ready to prove the following crucial lemma, 
which shows that the DFT of $\p$ is effectively supported on the set $T.$

\begin{lemma} \label{lem:dft-ub}
For integers $0\leq a\leq k$, we have that
$$
\sum_{\xi\in L^{\ast}\cap \left[\frac{a}{k+1},\frac{a+k}{k+1}\right]^k  \setminus (T+\Z^k)} |\wh{\p}(\xi)| < \frac{\eps}{10(k+1)} \;.
$$
\end{lemma}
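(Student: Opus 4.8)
The plan is to deduce the bound from the ``Gaussian decay'' estimate of Lemma~\ref{lem:gaussian-bound-from-interval}, applied with $\delta=1/(k+1)$, since the coordinates of any $\xi$ appearing in the sum lie in $\left[\frac{a}{k+1},\frac{a+k}{k+1}\right]$, an interval of length $\frac{k}{k+1}=1-\delta$; throughout we work under the conditioning on the event of Lemma~\ref{lem:sample-mean-and-covariance} that is in force for the whole section. The first step is a reparametrization. Since $L^\ast=(M^T)^{-1}\Z^k$ and $M$ is integral, the map $\xi\mapsto v\eqdef M^T\xi$ is a bijection from $L^\ast$ onto $\Z^k$, and $\Z^k\subseteq L^\ast$. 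Moreover, if $\xi\notin T+\Z^k$ then no vector of the coset $v+M^T\Z^k$ has Euclidean norm at most $C^2k^2\ln(k/\eps)$; in particular, since $v$ itself lies in this coset, $\|v\|_2>R_0\eqdef C^2k^2\ln(k/\eps)$.

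The heart of the argument is to convert the lower bound $\|v\|_2>R_0$ into a lower bound on $\delta^2\,\xi^T\Sigma\xi$, the quantity governing the decay. From (the proof of) Claim~\ref{clm:rounding} we have $MM^T\prec 4M'(M')^T$, and combining $M'(M')^T=C^2\Sigma'$ (from \eqref{eq:MSigma}) with the bound $\tfrac12\widetilde\Sigma\preceq\Sigma'\preceq 2\widetilde\Sigma$ established in the proof of Corollary~\ref{cor:conc-pmd-sample} gives $MM^T\prec 8C^2\widetilde\Sigma$, whence
$$
\xi^T\widetilde\Sigma\,\xi\;\ge\;\frac{1}{8C^2}\,\xi^T MM^T\xi\;=\;\frac{\|v\|_2^2}{8C^2}\;>\;\frac{R_0^2}{8C^2}\;=\;\frac{C^2k^4\ln^2(k/\eps)}{8}.
$$
Since $\widetilde\Sigma=k\ln(k/\eps)\,\Sigma+k^2\ln^2(k/\eps)\,I$ and every coordinate of $\xi$ is at most $\frac{a+k}{k+1}<2$, we have $\|\xi\|_2^2<4k$, and hence $k^2\ln^2(k/\eps)\|\xi\|_2^2<4k^3\ln^2(k/\eps)\le\tfrac12\,\xi^T\widetilde\Sigma\,\xi$ once $C$ is a sufficiently large constant. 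Therefore $k\ln(k/\eps)\,\xi^T\Sigma\,\xi=\xi^T\widetilde\Sigma\,\xi-k^2\ln^2(k/\eps)\|\xi\|_2^2\ge\tfrac12\,\xi^T\widetilde\Sigma\,\xi$, and using $(k+1)^2\le 4k^2$,
$$
\delta^2\,\xi^T\Sigma\,\xi\;=\;\frac{\xi^T\Sigma\,\xi}{(k+1)^2}\;\ge\;\frac{\xi^T\widetilde\Sigma\,\xi}{2(k+1)^2k\ln(k/\eps)}\;\ge\;\frac{\|v\|_2^2}{c_0\,C^2k^3\ln(k/\eps)}
$$
for a universal constant $c_0>0$. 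Plugging this into Lemma~\ref{lem:gaussian-bound-from-interval} yields $|\wh\p(\xi)|\le\exp\!\left(-c_1\|v\|_2^2/(C^2k^3\ln(k/\eps))\right)$ for a universal constant $c_1>0$.

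It remains to sum this estimate over the index set. Dropping the box constraint, and using that distinct $\xi$'s give distinct $v$'s all of norm exceeding $R_0$, we get, with $\sigma^2\eqdef C^2k^3\ln(k/\eps)/c_1$,
$$
\sum_{\xi\in L^\ast\cap\left[\frac{a}{k+1},\frac{a+k}{k+1}\right]^k\setminus(T+\Z^k)}|\wh\p(\xi)|\;\le\;\sum_{v\in\Z^k,\ \|v\|_2>R_0}e^{-\|v\|_2^2/\sigma^2}\;\le\;e^{-R_0^2/(2\sigma^2)}\left(\sum_{m\in\Z}e^{-m^2/(2\sigma^2)}\right)^{k}.
$$
The product of one-dimensional Gaussian sums is $O(\sigma)^k=(k/\eps)^{O(k)}$, whereas $R_0^2/\sigma^2=c_1C^2k\ln(k/\eps)$ gives $e^{-R_0^2/(2\sigma^2)}=(k/\eps)^{-c_1C^2k/2}$; thus for $C$ a sufficiently large universal constant the right-hand side is at most $(k/\eps)^{-k}<\eps/(10(k+1))$, which proves the lemma.

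The step I expect to be the main obstacle is the middle one: carefully propagating the norm bound along the chain $\|M^T\xi\|_2\leftrightarrow\xi^T\Sigma'\xi\leftrightarrow\xi^T\widetilde\Sigma\xi\leftrightarrow\xi^T\Sigma\xi$ while making sure the additive term $k^2\ln^2(k/\eps)\|\xi\|_2^2$ — which comes from the identity-matrix part of $\widetilde\Sigma$, ultimately reflecting the kernel direction $\bone$ of the true covariance $\Sigma$ — stays dominated by the leading term. This domination is precisely what forces $C$ to be a large universal constant (both here and in the definition of $M$), and it is the only place where the detailed geometry of the lattice $L$ and the choice $\delta=1/(k+1)$ of box side genuinely matter.
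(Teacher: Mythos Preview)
Your proof is correct and follows essentially the same approach as the paper: both arguments apply Lemma~\ref{lem:gaussian-bound-from-interval} with $\delta=1/(k+1)$, propagate the resulting bound through the chain $\Sigma\leftrightarrow\widetilde\Sigma\leftrightarrow\Sigma'\leftrightarrow M'(M')^T\leftrightarrow MM^T$ to obtain $|\wh\p(\xi)|\le\exp(-\Omega(\|M^T\xi\|_2^2/(C^2k^3\ln(k/\eps))))$, and then sum over integer vectors $v=M^T\xi$ with $\|v\|_2>C^2k^2\ln(k/\eps)$. The only difference is cosmetic: the paper handles the final lattice sum via a dyadic decomposition into shells $\|v\|_2\in[2^tR_0,2^{t+1}R_0]$, whereas you split the exponent in half and factorize the full Gaussian sum over $\Z^k$ --- both are standard and yield the same conclusion once $C$ is taken sufficiently large.
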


We start by providing a brief overview of the proof.
First note that Claim~\ref{clm:pigeon-hole-interval} and Lemma~\ref{lem:gaussian-bound-from-interval} 
together imply that for $\xi \in [a/(k+1),(a+k)/(k+1)]^k$,  if $\xi^T \cdot \Sigma \cdot \xi \ge k^2 \log(1/\eps'),$ 
then $|\wh{\p}(\xi)| \le \eps'$, for any $\eps'>0.$ Observe that the set $\{ \xi \in \R^k: \xi^T \cdot \Sigma \cdot  \xi \leq k^2 \log(1/\eps') \}$ 
is not an ellipsoid, because $\Sigma$ is singular. However, by using the fact that $M$ and $M'$ are close to each other, 
-- more specifically, using ingredients from the proof of Lemma~\ref{lem:fd} -- 
we are able to bound its intersection with $[a/(k+1),(a+k)/(k+1)]^k$ 
by an appropriate ellipsoid of the form $\{ \xi^T \cdot (M \cdot  M^T) \cdot \xi \le r \}.$

The gain here is that $M^T \xi \in \Z^k.$ 
This allows us to provide an upper bound on the cardinality of 
the set of lattice points in $L^{\ast}$ in one of these ellipsoids. 
Note that, in terms of $v=M^T\xi$, these are the integer points that lie in a sphere of some radius 
$r > r_0 = C^2 k^2 \ln(k/\epsilon).$ Now, if we consider the set $2^{t} r_0 \leq \xi^T \cdot (M M^T) \cdot \xi < 2^{t+1}r_0,$ 
we have both an upper bound on the magnitude of the DFT and on the number of integer points in the set. 
By summing over all values of $t \ge 0$, we get an upper bound on the error coming from points outside of $T.$

\begin{proof}[Proof of Lemma~\ref{lem:dft-ub}.]
Since $\xi \notin T+\Z^k$, we have that $\|M^T \xi\|_2 > C^2 k^2 \ln(k/\eps).$
Since the coordinates of $\xi$ lie in $\left[\frac{a}{k+1},\frac{a+k}{k+1}\right]$,
an interval of length $1- 1/(k+1)$, we may apply Lemma~\ref{lem:gaussian-bound-from-interval} to obtain:
\begin{align*}
|\wh{\p}(\xi)| & = \exp\left(-\Omega\left(k^{-2} \xi^T \cdot \Sigma \cdot \xi\right)\right) \\
& = \exp\left(-\Omega\left(k^{-2} \xi^T \cdot (\wh{\Sigma}-I) \cdot \xi\right)\right)  \tag*{(by Lemma~\ref{lem:sample-mean-and-covariance})}\\
& = \exp\left(-\Omega\left(k^{-2} \left( \frac{\xi^T M' (M')^T \xi}{C^2k\log(k/\eps)} -k \log(k/\eps) \|\xi\|_2^2\right)\right)\right) \tag*{(by Equation (\ref{eq:MSigma}))}\\
& = \exp\left(-\Omega\left(k^{-2} \left(\frac{\|(M')^T \xi\|_2^2}{C^2k\log(k/\eps)}-k^2 \log(k/\eps) \|\xi\|_\infty^2\right)\right)\right) \\
 & = \exp\left(-\Omega\left(k^{-2} \left(\frac{\|M^T \xi\|_2^2}{C^2 k\log(k/\eps)}-k^2 \log(k/\eps) \|\xi\|_\infty^2\right)\right)\right) \tag*{(by Equation (\ref{eq:ellipsoid-containment}))}\\
& = \exp\left(-\Omega\left(C^{-2} k^{-3} \log^{-1}(k/\eps) \|M^T \xi\|_2^2 - \log(k/\eps)\right)\right) \tag*{(since $\|\xi\|_\infty \leq 2$)} \\
& = \exp\left(-\Omega\left(C^{-2} k^{-3} \log^{-1}(k/\eps) \|M^T \xi\|_2^2\right)\right) \tag*{(since $\|M^T\xi\|_2 > C^2 k^2 \ln(k/\eps))$} \;.
\end{align*}
Next note that for $\xi\in L^{\ast}$ we have that $M^T\xi \in \Z^k$. Thus, letting $v=M^T\xi$, it suffices to show that
\begin{equation} \label{ineq:outside-sphere-sum}
\sum_{v\in \Z^k, \|v\|_2 \geq C^2 k^2 \ln(k/\eps)} \exp\left(-\Omega\left(C^{-2} k^{-3} \log^{-1}(k/\eps) \|v\|_2^2\right)\right) < \frac{\eps}{10(k+1)} \;.
\end{equation}
Although the integer points in the above sum are not in the sphere $ \|v\|_2 \leq C^2 k^2 \ln(k/\eps),$
they lie in some sphere $\|v\|_2 \leq 2^{t+1} C^2 k^2 \ln(k/\eps)$, for some integer $t >0$.
The number of integral points in one of these spheres is less than that of the appropriate enclosing cube.
Namely, we have that
\begin{eqnarray}
&& \#\left\{ v\in \Z^k, \|v\|_2 \leq 2^{t+1} C^2 k^2 \ln(k/\eps) \right\} \nonumber \\
&\leq & \# \left\{ v\in \Z^k, \|v\|_\infty \leq 2^{t+1} C^2 k^2 \ln(k/\eps) \right\} \nonumber \\
&= & \left(1+2\lfloor2^{t+1} C^2 k^2 \ln(k/\eps)\rfloor \right)^k \;. \label{eqn:cube-points}
\end{eqnarray}
Inequality (\ref{ineq:outside-sphere-sum}) is obtained by bounding the LHS from above as follows:
\begin{align*}
& \sum_{t=0}^\infty \exp\left(-\Omega(C^{-2} k^{-3} \log^{-1}(k/\eps) (2^tC^2 k^2 \ln(k/\eps))^2)\right) \cdot \# \left\{v\in\Z^k,\|v\|_2 \leq 2^{t+1} C^2 k^2 \ln(k/\eps) \right\} \\
 = & \sum_{t=0}^\infty \exp\left(-\Omega(C k \log(k/\eps) 4^t)\right) \cdot \# \left\{v\in\Z^k,\|v\|_2 \leq 2^{t+1} C^2 k^2 \ln(k/\eps) \right\} \\
 \leq & \sum_{t=0}^\infty \exp\left(-\Omega(C k \log(k/\eps) 4^t)\right) \cdot \left(2^{t+2} C^2 k^2 \log(k/\eps)\right)^k   \tag*{(by Equation (\ref{eqn:cube-points}))}\\
 = & \sum_{t=0}^\infty \exp\left(-\Omega(C k \log(k/\eps) 4^t)\right) \exp\left(O(k(t+\log k + \log \log(k/\eps)))\right) \\
 \leq & \exp\left(-\ln((k+1)/10 \eps)\right) \cdot  \sum_{t=0}^\infty  \exp(- k{\sqrt{C}}(4^t-t))) \\
\leq & \frac{\eps^2}{10(k+1)} \cdot \sum_{t=0}^\infty \exp(-{\sqrt{C}}(4^t-t))\\
< & \frac{\eps}{10(k+1)} \;.
\end{align*}
This completes the proof of Lemma~\ref{lem:dft-ub}.
\end{proof}

We are now prepared to prove Proposition~ \ref{prop:ft-effective-support}.

\begin{proof}[Proof of Proposition \ref{prop:ft-effective-support}]
Let $T_a$ be the set of points $\xi \in \R^k$ which have a lift with all coordinates
in the interval $\left[\frac{a}{k+1},\frac{a+k}{k+1}\right]^k,$
for some integer $0 \leq a \leq k$.
By Claim~\ref{clm:pigeon-hole-interval}, we have that $\bigcup_a T_a =(L^{\ast}/\Z^k)$.
By Lemma~\ref{lem:dft-ub}, for all $0 \leq a \leq k$,
$$\sum_{\xi \in T_a \setminus T }|\wh{\p}(\xi)| < \frac{\eps}{10(k+1)} \;,$$
and so we have:
$$\sum_{\xi \in (L^{\ast}/\Z^k) \setminus T }|\wh{\p}(\xi)|
\leq \sum_{a=0}^k \sum_{\xi \in T_a \setminus T }|\wh{\p}(\xi)| < \eps/10 \;.$$
\end{proof}

Our next simple lemma states that the empirical DFT is a good approximation to the true DFT
on the set $T.$

\begin{lemma} \label{lem:emp-dft}
Letting $m=(C^5k^4\ln^{2}(k/\eps))^{k}/\eps^2$,
with $19/20$ probability over the choice of $m$ samples in Step~5,
we have that
$
\sum_{\xi \in T} | \wh{\h}(\xi) - \wh{\p}(\xi)| < \eps/10.
$
\end{lemma}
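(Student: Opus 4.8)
The plan is to bound the deviation $|\wh{\h}(\xi)-\wh{\p}(\xi)|$ at each point $\xi\in T$ by a concentration argument, then apply a union bound over $T$. Fix $\xi\in T$. The empirical DFT is $\wh{\h}(\xi)=\frac{1}{m}\sum_{i=1}^m e(\xi\cdot s_i)$, where the $s_i$ are i.i.d.\ samples from $\p$, and $\wh{\p}(\xi)=\E[e(\xi\cdot X)]$. Thus $\wh{\h}(\xi)$ is an empirical mean of $m$ i.i.d.\ complex random variables, each of modulus exactly $1$, with expectation $\wh{\p}(\xi)$. First I would split into real and imaginary parts: each of $\re(e(\xi\cdot s_i))$ and $\im(e(\xi\cdot s_i))$ is a bounded real random variable taking values in $[-1,1]$. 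Applying Hoeffding's (or Bernstein's) inequality to each part, with probability at least $1-\delta$ we have $|\re(\wh{\h}(\xi))-\re(\wh{\p}(\xi))|\le O(\sqrt{\log(1/\delta)/m})$, and similarly for the imaginary part, hence $|\wh{\h}(\xi)-\wh{\p}(\xi)|\le O(\sqrt{\log(1/\delta)/m})$.

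Next I would choose the failure probability per point to absorb the union bound: set $\delta=1/(20|T|)$, so that with probability at least $19/20$ the above bound holds \emph{simultaneously} for all $\xi\in T$. Since $|T|\le O(k^2\log(k/\eps))^k$, we have $\log(1/\delta)=O(k\log(k^2\log(k/\eps)))=O(k\log(k/\eps))$ (up to lower-order terms in $\log\log$). Therefore, simultaneously for all $\xi\in T$,
$$
|\wh{\h}(\xi)-\wh{\p}(\xi)|\le O\!\left(\sqrt{k\log(k/\eps)/m}\right).
$$
Summing over the $|T|$ points of $T$ gives
$$
\sum_{\xi\in T}|\wh{\h}(\xi)-\wh{\p}(\xi)|\le |T|\cdot O\!\left(\sqrt{k\log(k/\eps)/m}\right)= O(k^2\log(k/\eps))^k\cdot O\!\left(\sqrt{k\log(k/\eps)/m}\right).
$$
It remains to verify that with $m=(C^5k^4\ln^2(k/\eps))^k/\eps^2$ (with $C$ a sufficiently large universal constant) the right-hand side is at most $\eps/10$. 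Indeed, $\sqrt{1/m}=\eps/(C^5k^4\ln^2(k/\eps))^{k/2}$, so the bound becomes $O(k^2\log(k/\eps))^k\cdot\eps\cdot\sqrt{k\log(k/\eps)}/(C^5k^4\ln^2(k/\eps))^{k/2}$; the numerator $(k^2\log(k/\eps))^k$ is exactly the square root of $(k^4\log^2(k/\eps))^k$, so the powers of $k$ and $\log(k/\eps)$ cancel up to the extra $\sqrt{k\log(k/\eps)}$ factor, which is dwarfed by the $C^{-5k/2}$ factor once $C$ is large enough. This yields a bound of $\eps/10$, as desired.

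I do not expect any genuine obstacle here: the statement is a routine empirical-process estimate. The only point requiring minor care is the bookkeeping of constants and exponents --- making sure the power of $k$ and $\log(k/\eps)$ coming from $|T|$ is precisely compensated by the choice of $m$, with the universal constant $C$ providing the slack needed both for the union bound (the $\log|T|$ inside the Hoeffding exponent) and for the final inequality. One could equally well use Bernstein's inequality instead of Hoeffding's, but since the summands have modulus $1$ there is no variance improvement to exploit, so Hoeffding suffices.
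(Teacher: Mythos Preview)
Your proof is correct, but the paper takes a simpler route. Rather than applying a concentration inequality pointwise and a union bound over $T$, the paper just bounds the \emph{expectation} of the sum and applies Markov's inequality. Specifically, for each $\xi$ the empirical mean $\wh{\h}(\xi)$ has variance $O(1/m)$, so $\E\bigl[|\wh{\h}(\xi)-\wh{\p}(\xi)|\bigr]=O(1/\sqrt{m})$; by linearity $\E\bigl[\sum_{\xi\in T}|\wh{\h}(\xi)-\wh{\p}(\xi)|\bigr]\le |T|\cdot O(1/\sqrt{m})\le \eps/400$, and Markov gives the $19/20$ probability bound directly.

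The trade-off: your Hoeffding-plus-union-bound argument incurs an extra $\sqrt{\log|T|}=\sqrt{O(k\log(k/\eps))}$ factor inside the per-point deviation, which you then have to absorb into the $C^{5k/2}$ slack (and you do this correctly). The paper's first-moment approach sidesteps that bookkeeping entirely---no $\log(1/\delta)$ term ever appears---at the cost of using Markov, which is fine here since only constant failure probability is required. Your approach would be the right one if a high-probability (say $1-\eps$) guarantee were needed; for the stated $19/20$ target the paper's argument is both shorter and tighter.
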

\begin{proof}
For any given $\xi \in T,$
we note that $\wh{\h}(\xi)$ is the average of $m$ samples from $e(\xi\cdot X)$,
a random variable whose distribution has mean $\wh{\p}(\xi)$ and variance at most $O(1)$.
Therefore, we have that
$$
\E[| \wh{\h}(\xi) - \wh{\p}(\xi)|] \leq O(1)/\sqrt{m}.
$$
Summing over $\xi \in T,$ and noting that $|T|\leq O(C^2 k^2 \log(k/\eps))^k$,
we get that the expectation of the quantity in question is less than $\eps/400.$
Markov's inequality completes the argument.
\end{proof}

Finally, we bound from above the total variation distance between $\p$ and $\h.$
\begin{lemma} \label{lem:dtv-final}
Assuming that the conclusion of the previous lemma holds,
then for any $x\in \Z^k/L$ we have that
$$
\left| \Pr[X\equiv x\pmod{L}] - \frac{1}{|\det(M)|}\sum_{\xi\in T}\wh{\h}(\xi)e(-\xi\cdot x) \right| \leq \frac{\eps}{5|\det(M)|}.
$$
\end{lemma}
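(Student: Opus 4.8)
The plan is to start from the exact inverse-DFT identity
$$\Pr[X\equiv x\pmod{L}] \;=\; \frac{1}{|\det(M)|}\sum_{\xi\in L^{\ast}/\Z^k} \wh{\p}(\xi)\,e(-\xi\cdot x),$$
which was recorded just before the statement and holds because $\wh{\p}$ restricted to $L^{\ast}/\Z^k$ is the DFT of the distribution of $X\pmod{L}$. I would then subtract the approximating quantity $\frac{1}{|\det(M)|}\sum_{\xi\in T}\wh{\h}(\xi)e(-\xi\cdot x)$ from both sides and split the resulting error into two natural pieces: a \emph{Fourier-tail} term $\frac{1}{|\det(M)|}\sum_{\xi\in (L^{\ast}/\Z^k)\setminus T}\wh{\p}(\xi)e(-\xi\cdot x)$, coming from the frequencies outside $T$ that we discard, and an \emph{estimation} term $\frac{1}{|\det(M)|}\sum_{\xi\in T}\bigl(\wh{\p}(\xi)-\wh{\h}(\xi)\bigr)e(-\xi\cdot x)$, coming from replacing the true DFT by the empirical DFT on $T$.

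Each piece is then bounded by one of the two lemmas already in hand. Since $|e(-\xi\cdot x)|=1$ for every $\xi$ and $x$, the triangle inequality bounds the tail term in absolute value by $\frac{1}{|\det(M)|}\sum_{\xi\in (L^{\ast}/\Z^k)\setminus T}|\wh{\p}(\xi)|$, which is at most $\frac{\eps}{10|\det(M)|}$ by Proposition~\ref{prop:ft-effective-support}. Likewise the estimation term is at most $\frac{1}{|\det(M)|}\sum_{\xi\in T}|\wh{\p}(\xi)-\wh{\h}(\xi)|$, which, on the high-probability event assumed in the statement (the conclusion of Lemma~\ref{lem:emp-dft}), is at most $\frac{\eps}{10|\det(M)|}$. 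Adding the two contributions yields the claimed bound $\frac{\eps}{5|\det(M)|}$.

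I do not expect a genuine obstacle here: this lemma is essentially a bookkeeping step that packages Fourier sparsity (Proposition~\ref{prop:ft-effective-support}) and empirical concentration (Lemma~\ref{lem:emp-dft}) into a pointwise guarantee for the inverse transform. The only points that require a moment's care are (i) justifying the exact inverse-DFT formula for $\Pr[X\equiv x\pmod{L}]$ — one notes that $\wh{\p}$ is well defined on $L^{\ast}/\Z^k$ since distinct representatives differing by an element of $\Z^k$ give the same value of $e(\xi\cdot X)$ as $X\in\Z^k$ — and (ii) checking that $T$ and $(L^{\ast}/\Z^k)\setminus T$ genuinely partition $L^{\ast}/\Z^k$, so that the splitting does not double-count or omit any frequency. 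Once these are dispatched, the bound follows immediately from the triangle inequality.
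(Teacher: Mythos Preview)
Your proposal is correct and follows essentially the same route as the paper's proof: use the inverse DFT identity for $\Pr[X\equiv x\pmod{L}]$, split the difference into the tail over $(L^{\ast}/\Z^k)\setminus T$ and the estimation error over $T$, bound the first by Proposition~\ref{prop:ft-effective-support} and the second by Lemma~\ref{lem:emp-dft}, and add the two $\eps/10$ contributions. The paper presents this in a single chain of inequalities without the surrounding commentary, but the argument is identical.
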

\begin{proof}
We note that
\begin{align*}
& \left|\Pr[X\equiv x\pmod{L}] - \frac{1}{|\det(M)|}\sum_{\xi\in T}\wh{\h}(\xi)e(-\xi\cdot x) \right|\\
= & \left|\frac{1}{|\det(M)|} \sum_{\xi\in L^{\ast}/\Z^k} \wh{\p}(\xi) e(-\xi\cdot x) - \frac{1}{\det(M)} \sum_{\xi\in T}\wh{\h}(\xi)e(-\xi\cdot x) \right|\\
\leq & \frac{1}{|\det(M)|} \sum_{\xi\in L^{\ast}/\Z^k, \xi\not \in T} |\wh{\p}(\xi)| - \frac{1}{|\det(M)|}\sum_{\xi \in T}|\wh{\p}(\xi)-\wh{\h}(\xi)|\\
\leq & \frac{\eps}{5|\det(M)|} \;,
\end{align*}
where the last line follows from Proposition~\ref{prop:ft-effective-support} and Lemma~\ref{lem:emp-dft}.
\end{proof}

It follows that, for each $x\in \wh{\mu}+M(-1/2,1/2]^k$, our hypothesis pseudo-distribution $\h(x)$
equals the probability that $X\equiv x\pmod{L}$ plus an error of at most $\frac{\eps}{5|\det(M)|}.$
In other words, the pseudo-distribution defined by $\h \pmod{L}$ differs 
from $X\pmod{L}$ by at most $\left(\frac{\eps}{5|\det(M)|} \right)|\Z^k/L| = \eps/5.$
On the other hand, letting $X' \sim \p'$ be obtained by moving 
a sample from $X$ to its unique representative modulo $L$ lying in $\wh{\mu}+M(-1/2,1/2]^k$,
we have that $X=X'$ with probability at least $1-\eps/10.$ Therefore, $\dtv(\p, \p')\leq \eps/10.$
Note that $X\pmod{L}=X'\pmod{L}$, and so $\dtv(\h\pmod{L}, \p' \pmod{L})<\eps/5.$ Moreover,
$\h$ and $\p'$ are both supported on the same fundamental domain of $L,$
and hence $\dtv(\h,\p')=\dtv(\h\pmod{L},\p' \pmod{L})<\eps/5.$
Therefore, assuming that the above high probability events hold, we have that
$\dtv(\h,\p)\leq \dtv(\h,\p')+\dtv(\p, \p') \leq 3\eps/10.$

This completes the analysis and the proof of Theorem~\ref{thm:alg-dft}.

\subsection{An Efficient Sampler for our Hypothesis} \label{ssec:sampler}

The learning algorithm of Section~\ref{ssec:algo-dft} outputs a succinct description
of the hypothesis pseudo-distribution $\h$, via its DFT. This immediately provides us with an efficient
evaluation oracle for $\h,$ i.e., an $\eps$-evaluation oracle for our target PMD $\p.$
The running time of this oracle is linear in the size of $T,$  the effective support of the DFT.

Note that we can explicitly output the hypothesis $\h$
by computing the inverse DFT at all the points
of the support of $\h.$ However, in contrast to the effective support of
$\wh{\h},$ the support of $\h$ can be large, and this explicit description would not lead to a computationally efficient algorithm.
In this subsection, we show how to efficiently obtain an $\eps$-sampler for our unknown PMD $\p,$ using the
DFT representation of $\h$ as a black-box.
In particular, starting with the DFT of an accurate hypothesis $\h,$ represented via its DFT,
we show how to efficiently obtain an $\eps$-sampler for the unknown target distribution.
We remark that the efficient procedure of this subsection is not restricted to PMDs,
but is more general, applying to all  discrete distributions with an approximately sparse DFT (over any dimension)
for which an efficient oracle for the DFT is available.

In particular, we prove the following theorem:

\begin{theorem} \label{thm:sampler}
Let $M \in \Z^{k \times k}$, $m \in \R^k$, and $S=m+{M(-1/2,1/2]}^k \cap \Z^k.$
Let $\h: S \to \R$ be a pseudo-distribution succinctly represented via its DFT (modulo $M$), $\wh{\h}$, which is supported on a set $T,$
i.e., $\h(x)= (1/|\det(M)|) \cdot \sum_{\xi \in T} e(- \xi \cdot x) \widehat{\h}(\xi),$
for $x \in S,$ with $\mathbf{0} \in T$ and $\widehat{\h}(\mathbf{0})=1.$
Suppose that there exists a distribution $\p$ 
with $\dtv(\h, \p) \leq \eps/3.$
Then, there exists an $\eps$-sampler for $\p,$ i.e., a sampler for a distribution $\q$ such that
$\dtv(\p,\q) \leq \eps,$ running in time $O(\log(|\det (M)|) \log(|\det(M)|/\epsilon) \cdot |T| \cdot \poly(k)).$
\end{theorem}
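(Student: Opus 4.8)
The plan is to sample from $\h$ (or rather a slightly smoothed, genuinely probabilistic version of it) by exploiting the product structure of the lattice quotient. First I would reduce the problem to sampling a point $x \in \Z^k/L$ with probability $\h(x)$, since once we have such an $x$ we output its unique representative in $S = m + M(-1/2,1/2]^k \cap \Z^k$ in $\poly(k)$ time. The key observation is that $\Z^k / L$ is a finite abelian group of order $|\det(M)|$, and by computing the Smith normal form of $M$ we can write $\Z^k/L \cong \Z/d_1 \times \cdots \times \Z/d_k$ with $\prod_i d_i = |\det(M)|$; this gives an explicit coordinate system in which a point is described by $O(\log|\det(M)|)$ bits. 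We will sample the coordinates one at a time: having fixed a prefix, the conditional probability of the next coordinate value is a ratio of partial sums of $\h$, and each such partial sum is itself a sum over a subgroup (or coset) which, crucially, can be evaluated via the DFT. Indeed, summing $\h(x) = (1/|\det M|)\sum_{\xi \in T} e(-\xi\cdot x)\wh{\h}(\xi)$ over $x$ ranging in a subgroup $H \le \Z^k/L$ kills all Fourier modes $\xi$ that are nontrivial on $H$ and leaves $(|H|/|\det M|)\sum_{\xi \perp H} \wh{\h}(\xi)$; since $|T|$ is small we can enumerate the relevant $\xi \in T$ and compute this in time $O(|T| \cdot \poly(k))$ per partial sum. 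Doing this for each of the $O(\log|\det M|)$ coordinate decisions, and noting each coordinate takes values in a range over which we may need an additional $O(\log|\det M|)$ binary-search-style evaluations to locate the sample, gives the claimed $O(\log|\det M| \cdot \log(|\det M|/\eps) \cdot |T| \cdot \poly(k))$ running time.

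The second ingredient handles the fact that $\h$ is only a pseudo-distribution: some values $\h(x)$ may be slightly negative, and the partial sums used as conditional probabilities need not be nonnegative. Here I would use that $\wh{\h}(\mathbf{0}) = 1$, so $\h$ sums to $1$, together with the hypothesis $\dtv(\h,\p)\le \eps/3$ for an actual distribution $\p$, which forces $\sum_x |\h(x) - \p(x)| \le 2\eps/3$ and in particular $\sum_x \max(0, -\h(x)) \le \eps/3$ and $\sum_x |\h(x)| \le 1 + 2\eps/3$. The fix is to define $\q$ by truncating: whenever a conditional probability computed from partial sums of $\h$ comes out negative we clamp it to $0$ (and renormalize the remaining mass), and similarly clamp the final value $\h(x)$ to $\max(0,\h(x))$; the total probability mass distorted by all these clampings is $O(\eps)$ because the total negative mass of $\h$ is at most $\eps/3$ and the normalization defect is at most $O(\eps)$. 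A short triangle-inequality argument then yields $\dtv(\q, \h) = O(\eps)$, hence $\dtv(\q,\p) \le \dtv(\q,\h) + \dtv(\h,\p) \le \eps$ after adjusting constants (equivalently, running the whole construction with $\eps' = c\eps$ for a suitable small constant $c$). One must also check that the truncation can be implemented with only $O(\log(|\det M|/\eps))$ bits of precision in the arithmetic, which is where the $\log(|\det M|/\eps)$ factor in the running time comes from: we carry each partial sum to additive accuracy $\eps/(|\det M| \cdot \poly(k))$, which suffices since there are at most $\poly(|\det M|)$ partial sums across all coordinate decisions and the cumulative rounding error stays $O(\eps)$ in total variation.

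I expect the main obstacle to be the bookkeeping around \emph{conditional} sampling with a pseudo-distribution: one has to argue carefully that sampling coordinate-by-coordinate, with negative conditional "probabilities" clamped to zero at each stage, produces a distribution $\q$ whose total variation distance from the (clamped, normalized) $\h$ is genuinely $O(\eps)$ and not, say, $O(k\eps)$ or worse due to error compounding across the $k$ Smith-normal-form coordinates. The clean way to do this is to bound, for each coordinate step, the mass on which $\q$'s conditional law differs from the "ideal" conditional law derived from $|\h|$ or from $\max(0,\h)$, sum these over the $O(\log|\det M|)$ steps, and relate the result back to the single global quantity $\sum_x \max(0,-\h(x)) \le \eps/3$ via a telescoping / chain-rule argument on total variation; this requires choosing the clamping and renormalization so that the bound is additive over coordinates rather than multiplicative. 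The other, more routine, technical point is verifying the subgroup-sum-kills-characters identity and that the set $\{\xi \in T : \xi \perp H\}$ is efficiently computable given the Smith normal form — this is standard finite Fourier analysis on $\Z^k/L$ but needs to be stated precisely for the inner loop to run in $O(|T|\cdot\poly(k))$ time.
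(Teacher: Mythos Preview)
Your high-level plan is the same as the paper's: compute the Smith normal form $M = UDV$ to identify $\Z^k/L \cong \Z/d_1 \times \cdots \times \Z/d_k$, then sample coordinates using the Fourier representation. But there is a concrete gap in the inner loop, and your handling of negativity is more complicated than necessary.

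\textbf{The gap.} Your claim that ``each such partial sum is itself a sum over a subgroup (or coset)'' is true for a \emph{point} marginal $\Pr[x_j = a \mid \text{prefix}]$ (numerator and denominator are both coset sums, and character orthogonality applies), but it is \emph{false} for the \emph{cumulative} sums $\sum_{x_j \le a} \Pr[x_j \mid \text{prefix}]$ that you need in order to binary-search for $x_j$. The set $\{0,\dots,a\} \subset \Z/d_j$ is not a coset, so summing over it does not kill any characters. The fix, which is exactly what the paper does, is to observe that after the inner coset sum over $x_{j+1},\dots,x_k$ you are left with $\sum_{\xi} c_\xi\, e(-\xi_j x_j)$, and the partial sum $\sum_{x_j=0}^{a} e(-\xi_j x_j)$ is a geometric series with a closed form. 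This is the missing idea; without it your per-evaluation cost is not $O(|T|\,\poly(k))$.

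\textbf{Negativity.} Your coordinate-by-coordinate clamping is where you (correctly) anticipate trouble with error compounding. The paper sidesteps this entirely: rather than sampling conditionally coordinate by coordinate, it uses the Smith normal form to build a single monotone bijection $f: S \to \{0,\dots,|\det M|-1\}$ and a single global CDF $c_{\h}(y) = \sum_{z \le y} \h(f^{-1}(z))$ on the integers, computed in closed form via the geometric-series trick. One then samples by a single $1$-dimensional inverse-CDF binary search. Negativity is now automatic: wherever $\h(x) < 0$ the CDF is nonincreasing, so the inverse-CDF simply never outputs that $x$; this immediately gives $\dtv(\q',\h) = \sum_{x:\h(x)<0}(-\h(x)) \le \dtv(\h,\p) \le \eps/3$ with no telescoping or compounding analysis. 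This is both simpler and tighter than your clamping scheme, and I would recommend adopting it.
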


We remark that the $\eps$-sampler in the above theorem statement
can be described as a randomized algorithm that takes as input $M$, $T$, $\widehat{\h}(\xi),$ for $\xi \in T,$ 
and the Smith normal form decomposition of $M$(see Lemma \ref{lem:smith}).

We start by observing that our main learning result, Theorem~\ref{thm:learn-pmd}, follows by combining
Theorem~\ref{thm:alg-dft} with Theorem~\ref{thm:sampler}.
Indeed, note that the matrix $M$ in the definition of our PMD algorithm in Section~\ref{ssec:algo-dft} satisfies
$|\det(M)| \le {O(\sqrt{\det(\widetilde{\Sigma})})}  \leq \left(n k \log(1/\eps)\right)^{O(k)}$. 
Also recall that $|T|= O(k^{2k} \log^k(k/\eps))$. Since $M$ has largest entry $n,$ 
by Lemma~\ref{lem:smith}, we can compute its Smith normal form decomposition in time $\poly(k) \log n.$
Hence, for the case of PMDs, we obtain the following corollary, establishing Theorem~\ref{thm:learn-pmd}:

\begin{corollary} \label{cor:pmd-sampler}
For all $n, k \in \Z_+$ and $\eps>0,$
there is an algorithm with the following
performance guarantee: Let $\p$ be an unknown $(n, k)$-PMD.
The algorithm uses $O\left(k^{4k}\log^{2k}(k/\eps)/\eps^2\right)$
samples from $\p$,  runs in time $O\left(k^{6k} \log^{3k}(k/\eps)/\eps^2 \cdot \log n\right)$,
and with probability at least $9/10$ outputs an $\eps$-sampler for $\p.$
This $\eps$-sampler runs (i.e., produces a sample) in time $\poly(k) O( k^{2k} \log^{k+1}(k/\eps)) \cdot \log^2 n.$
\end{corollary}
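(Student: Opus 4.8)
The plan is to derive Corollary~\ref{cor:pmd-sampler} --- which is exactly the main learning statement of Theorem~\ref{thm:learn-pmd} --- by chaining the DFT-learning guarantee of Theorem~\ref{thm:alg-dft} with the generic sampling reduction of Theorem~\ref{thm:sampler}, and then carefully tracking the parameters. First I would run the algorithm {\tt Efficient-Learn-PMD} of Section~\ref{ssec:algo-dft} on the unknown $(n,k)$-PMD $\p$. By Theorem~\ref{thm:alg-dft}, using $O(k^{4k}\log^{2k}(k/\eps)/\eps^2)$ samples and time $O(k^{6k}\log^{3k}(k/\eps)/\eps^2 + k^4\log\log n)$, with probability at least $9/10$ it outputs a matrix $M\in\Z^{k\times k}$, the set $T$ with $|T| = O(k^{2k}\log^k(k/\eps))$, and the values $\wh{\h}(\xi)$ for $\xi\in T$, where $\h$ is a pseudo-distribution supported on the fundamental domain $S = (\wh{\mu}+M(-1/2,1/2]^k)\cap\Z^k$ and $\dtv(\h,\p)\le\eps/3$. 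Before invoking Theorem~\ref{thm:sampler} I would check its two normalization hypotheses: $\mathbf{0}\in T$, since $v=\mathbf{0}$ satisfies $\|v\|_2\le C^2k^2\ln(k/\eps)$ in the definition of $T$ in Step~4; and $\wh{\h}(\mathbf{0}) = \frac{1}{m}\sum_{i=1}^m e(\mathbf{0}\cdot s_i) = 1$ from Step~5.

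Next I would apply Theorem~\ref{thm:sampler} to this $(M,S,T,\wh{\h})$, with the genuine distribution $\p$ playing the role of the nearby distribution: since $\dtv(\h,\p)\le\eps/3$, the theorem yields an $\eps$-sampler for $\p$, i.e.\ a sampler for some $\q$ with $\dtv(\p,\q)\le\eps$, with per-sample running time $O(\log(|\det M|)\,\log(|\det M|/\eps)\cdot|T|\cdot\poly(k))$. To convert this into the claimed bounds I would bound $|\det M|$: the columns of $M$ are the integer roundings of the columns of $M'$, where $M'(M')^T = C^2(k\ln(k/\eps)\wh{\Sigma}+k^2\ln^2(k/\eps)I)$ and the eigenvalues of $\wh{\Sigma}$ are $O(nk)$ (each summand of the covariance of an $(n,k)$-PMD has operator norm at most $1$, and $2(\Sigma+I)\succeq\wh{\Sigma}+I$ by Lemma~\ref{lem:sample-mean-and-covariance}); hence, as already recorded in the text, $|\det M| \le O(\sqrt{\det\widetilde{\Sigma}}) \le (nk\log(1/\eps))^{O(k)}$, so $\log|\det M| = O(k\log n) + \poly(k)\log\log(k/\eps)$ and $M$ has integer entries of bit-length $O(\log n)$. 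Plugging $\log|\det M|$ and $|T| = O(k^{2k}\log^k(k/\eps))$ into the per-sample cost, together with the elementary estimates $\log(1/\eps)\le\log n\,\log(k/\eps)$ and $ab\le a^2+b^2$, gives a per-sample running time of $\poly(k)\,O(k^{2k}\log^{k+1}(k/\eps))\cdot\log^2 n$. Finally, the only extra work needed to emit the sampler beyond what {\tt Efficient-Learn-PMD} already produces is the Smith normal form decomposition of $M$ demanded by the remark following Theorem~\ref{thm:sampler}; since $M$ has entries of bit-length $O(\log n)$, Lemma~\ref{lem:smith} computes it in time $\poly(k)\log n$.

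Putting the pieces together, conditioned on the probability-$\ge 9/10$ event that Theorem~\ref{thm:alg-dft} succeeds, the overall algorithm uses $O(k^{4k}\log^{2k}(k/\eps)/\eps^2)$ samples, runs in total time $O(k^{6k}\log^{3k}(k/\eps)/\eps^2 + k^4\log\log n) + \poly(k)\log n = O(k^{6k}\log^{3k}(k/\eps)/\eps^2\cdot\log n)$, and outputs an $\eps$-sampler for $\p$ whose per-sample cost is $\poly(k)\,O(k^{2k}\log^{k+1}(k/\eps))\cdot\log^2 n$, which is exactly Corollary~\ref{cor:pmd-sampler}. I do not expect a genuine obstacle, since this is a composition of two theorems already established in the excerpt; the only care required is bookkeeping --- charging the one-time Smith-normal-form cost and the per-sample cost to the right place, and keeping the $\log\log(1/\eps)$ and $\log(1/\eps)$ contributions to $\log|\det M|$ separate so that the exponent of $\log(k/\eps)$ in the per-sample time comes out as $k+1$ rather than $k+2$.
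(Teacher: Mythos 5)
Your proposal is correct and follows the same route as the paper: run {\tt Efficient-Learn-PMD} (Theorem~\ref{thm:alg-dft}), then apply the generic DFT-to-sampler reduction of Theorem~\ref{thm:sampler}, using the bounds $|\det M|\le(nk\log(1/\eps))^{O(k)}$, $|T|=O(k^{2k}\log^k(k/\eps))$, and Lemma~\ref{lem:smith} for the one-time Smith-normal-form cost. The only additions you make beyond what the paper writes out are verifying the normalization hypotheses of Theorem~\ref{thm:sampler} ($\mathbf0\in T$ and $\wh{\h}(\mathbf0)=1$), which the paper leaves implicit, and spelling out the arithmetic behind the per-sample time --- both reasonable, and consistent with the paper's own (terse) justification.
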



\medskip

This section is devoted to the proof of Theorem~\ref{thm:sampler}.
We first handle the case of one-dimensional distributions, and then appropriately reduce the high-dimensional
case to the one-dimensional.

\begin{remark}
{\em We remark that the assumption that $\widehat{\h}(\mathbf{0})=1$ in our theorem statement,
ensures that $\sum_{x \in S} \h(x) = 1,$ and so,
for any distribution $\p$ over $S$ the  total variational distance $\dtv(\h,\p) \eqdef \frac{1}{2} \sum_{x \in S} |\h(x)-\p(x)|$
is well behaved in the sense that $\dtv(\h,\p) = \sum_{x:\p(x) > \h(x)} (\p(x)-\h(x))= \sum_{x:\p(x) < \h(x)} (\h(x)-\p(x)).$
This fact will be useful in the correctness of our sampler.}
\end{remark}


We start by providing some high-level intuition.
Roughly speaking, we obtain the desired sampler by considering an appropriate definition
of a Cumulative Distribution Function (CDF) corresponding to $\h.$
For the $1$-dimensional case (i.e., the case $k=1$ in our theorem statement),
the definition of the CDF is clear, and our sampler proceeds as follows:
We use the DFT to obtain a closed form expression for the CDF of $\h,$
and then we query the CDF using an appropriate binary search procedure to sample from the distribution.
One subtle point is that $\h(x)$ is a pseudo-distribution, i.e. it is not necessarily non-negative at all points.
Our analysis shows that this does not pose any problems with correctness, by using the aforementioned remark.

For the case of two or more dimensions ($k \geq 2$), we essentially provide
a computationally efficient reduction to the $1$-dimensional case.
In particular, we exploit the fact that the underlying domain is discrete,
to define an efficiently computable bijection from the domain to the integers,
and consider the corresponding $1$-dimensional CDF.
To achieve this, we efficiently decompose the integer matrix $M \in \Z^{k \times k}$ using a version
of the Smith Normal Form, effectively reducing to the case that $M$ is diagonal.
For the diagonal case, we can intuitively treat the dimensions independently, using the lexicographic ordering.

\medskip

Our first lemma handles the $1$-dimensional case, assuming the existence
of an efficient oracle for the CDF:

\begin{lemma} \label{lem:bs}
Given a pseudo-distribution $\h$ supported on $[a,b]  \cap \Z$, $a,b \in \Z,$
with CDF $c_{\h}(x)=\sum_{i: a \leq i \leq x}  \h(i)$ 
(which satisfies $c_{\h}(b)=1$), and oracle access to a function $c(x)$ so that $|c(x)-c_{\h}(x)| < \epsilon/(10(b-a+1))$ for all $x,$ 
we have the following:
If there is a distribution $\p$ with $\dtv(\h,\p) \leq \eps/3,$ there is a sampler for a distribution $\q$ with $\dtv(\p,\q) \leq \eps,$
using $O(\log (b+1-a) + \log (1/\eps))$ uniform random bits as input,
and running in time $O((D+1)(\log (b+1-a)) + \log (1/\eps)),$
where $D$ is the running time of evaluating the CDF $c(x).$
\end{lemma}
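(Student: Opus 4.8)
The plan is to implement inverse-transform sampling using the approximate CDF oracle $c(\cdot)$, with a binary search to invert it. First I would draw a uniform random value $U \in [0,1)$ using $O(\log(b+1-a) + \log(1/\eps))$ random bits — i.e. sample $U$ uniformly from the grid $\{j \cdot 2^{-r} : 0 \le j < 2^r\}$ for $r = O(\log(b+1-a) + \log(1/\eps))$, which approximates the continuous uniform to within error $2^{-r} \ll \eps/(b-a+1)$. Then I would perform binary search over $x \in [a,b] \cap \Z$ to find the smallest $x$ with $c(x) \ge U$, and output that $x$. Since the search space has size $b+1-a$, this takes $O(\log(b+1-a))$ iterations, each requiring one evaluation of $c$, giving total running time $O((D+1)(\log(b+1-a)) + \log(1/\eps))$ as claimed (the $\log(1/\eps)$ term accounts for generating and comparing the random bits).

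The correctness analysis is where the subtlety lies, since $\h$ is only a pseudo-distribution and $c$ is only approximate. Let $\q$ be the distribution of the output $x$. I would first argue that the sampler's output $\q$ is close to the distribution $\q'$ one would get from exact inverse-transform sampling against the \emph{true} CDF $c_{\h}$: the two procedures differ only when $U$ falls in one of the ``ambiguous'' intervals where $c$ and $c_{\h}$ straddle a boundary, and since $|c(x) - c_{\h}(x)| < \eps/(10(b-a+1))$ at each of the $b+1-a$ grid points, a union bound over these gives $\dtv(\q, \q') \le (b+1-a) \cdot \eps/(10(b-a+1)) + 2^{-r}(b+1-a) = O(\eps/10)$. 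Here I would use the remark preceding the lemma: because $c_{\h}(b) = 1$, even though $\h$ may be negative at some points, the pseudo-CDF $c_{\h}$ is well-behaved at the endpoints, so inverse-transform sampling against it is well-defined. Next, for the exact-CDF sampler $\q'$, I claim $\dtv(\q', \h^+) = O(\eps)$ where $\h^+$ is... — more directly, I would bound $\dtv(\q', \p)$: since $\q'(x) = \max(0, \text{(length of }U\text{-interval mapping to }x\text{)})$-type quantity and $\dtv(\h, \p) \le \eps/3$, one shows the mass that exact inverse-transform ``loses'' or ``misplaces'' due to negativity of $\h$ is controlled by $\sum_{x: \h(x) < 0} |\h(x)| \le \dtv(\h,\p) \le \eps/3$ (since $\p \ge 0$), and similarly the positive discrepancies sum to $\le \eps/3$. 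Combining, $\dtv(\q', \p) \le 2\eps/3 + O(\eps/10)$, hence $\dtv(\q, \p) \le \dtv(\q,\q') + \dtv(\q',\p) \le \eps$ for the appropriate constants.

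The main obstacle I anticipate is handling the non-monotonicity of the pseudo-CDF $c_{\h}$ cleanly: because $\h$ can be negative, $c_{\h}$ is not monotone, so ``binary search for the smallest $x$ with $c(x) \ge U$'' does not literally partition $[0,1)$ into intervals of lengths $\h(a), \h(a+1), \dots$. I would deal with this by \emph{not} trying to make the binary search exact, but instead directly comparing the output distribution $\q$ to $\p$ via the triangle inequality through $\h$, carefully accounting for: (i) the $O(\eps)$ error from approximating $c_{\h}$ by $c$ and the continuous uniform by a discrete one, and (ii) the fact that for the \emph{true} CDF, whenever $c_{\h}$ is monotone on the relevant range the sampler behaves correctly, and the total ``badness'' from ranges where monotonicity fails is bounded by the total negative mass of $\h$, which is at most $\dtv(\h,\p) \le \eps/3$ since $\p$ is a genuine distribution. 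This reduces everything to elementary $L_1$-accounting rather than requiring any structural property of $\h$ beyond $c_{\h}(b)=1$ and closeness to $\p$.
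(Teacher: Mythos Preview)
Your proposal is correct and follows essentially the same approach as the paper: inverse-transform sampling via binary search on the (approximate) CDF, discretizing the uniform with $O(\log(b-a+1)+\log(1/\eps))$ bits, and a union bound over the $b-a+1$ CDF values to pass from $c$ to $c_{\h}$.

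The one place where the paper is crisper is exactly the obstacle you flag. Rather than treating non-monotonicity as something to be ``accounted for'' after the fact, the paper designs the binary search to maintain the invariant $c_{\h}(a') \le y \le c_{\h}(b')$ \emph{and} $c_{\h}(a') < c_{\h}(b')$, and proves (their Claim following the algorithm) that the terminal output $x$ always satisfies $c_{\h}(x-1) \le y \le c_{\h}(x)$ with $c_{\h}(x-1) < c_{\h}(x)$. This single invariant immediately gives $\q'(x) \le \h(x)$ whenever $\q'(x)>0$, and $\q'(x)=0$ whenever $\h(x)\le 0$; hence $\dtv(\q',\h)=\sum_{x:\h(x)<0}(-\h(x)) \le \dtv(\h,\p)\le \eps/3$ in one line. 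In fact your ``standard'' binary search (go left if $c(\mathrm{mid})\ge U$, else go right) maintains the same invariant $c(lo-1)<U\le c(hi)$, so the clean inequality $\q'(x)\le \h(x)$ already holds for your procedure too---you do not need the looser ``total badness $\le$ total negative mass'' argument you sketch; the direct pointwise bound is available and makes the write-up shorter.
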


\begin{proof}
We begin our analysis by producing an algorithm that works when we are able to exactly sample $c_{\h}(x)$.

We can compute an inverse to the CDF $d_{\h}:[0,1] \rightarrow [a,b] \cap \Z$, at $y \in [0,1]$,
using binary search, as follows:
\begin{enumerate}
\item We have an interval $[a',b']$, initially $[a-1,b]$, with $c_{\h}(a') \leq y \leq c_{\h}(b')$ and $c_{\h}(a') < c_{\h}(b').$
\item If $b'-a'=1$, output $d_{\h}(y)=b'.$
\item Otherwise, find the midpoint $c'=\lfloor(a'+b')/2 \rfloor.$
\item \label{step:last} If $c_{\h}(a') < c_{\h}(c')$ and $y \leq c_{\h}(c')$,
repeat with $[a',c']$; else repeat with $[c',b]$.
\end{enumerate}
The function $d_{\h}$ can be thought of as some kind of inverse to the CDF $c_{\h}:[a-1,b] \cap \Z \rightarrow [0,1]$ in the following sense:
\begin{claim} \label{clm:invariant}
The function $d_{\h}$ satisfies: For any $y \in [0, 1]$,  it holds
$c_{\h}(d_{\h}(y)-1) \leq y \leq c_{\h}(d_{\h}(y))$ and $c_{\h}(d_{\h}(y)-1) < c_{\h}(d_{\h}(y)).$
\end{claim}
\begin{proof}
Note that if we don't have $c_{\h}(a') < c_{\h}(c')$ and $y \leq c_{\h}(c')$, then $c_{\h}(c') < y \leq c_{\h}(b')$.
So, Step \ref{step:last} gives an interval $[a',b']$ which satisfies $c_{\h}(a') \leq y \leq c_{\h}(b')$ and $c_{\h}(a') < c_{\h}(b')$.
The initial interval $[a-1,b]$ satisfies these conditions since $c_{\h}(a-1) =0$ and $c_{\h}(b)=1$.
By induction, all $[a',b']$ in the execution of the above algorithm have $c_{\h}(a') \leq y \leq c_{\h}(b')$ and $c_{\h}(a') < c_{\h}(b')$.
Since this is impossible if $a'=b'$, and Step \ref{step:last} always recurses on a shorter interval, we eventually have $b'-a'=1$.
Then, the conditions $c_{\h}(a') \leq y \leq c_{\h}(b')$ and $c_{\h}(a') < c_{\h}(b')$ give the claim.
\end{proof}

 Computing $d_{\h}(y)$ requires $O(\log (b-a+1))$ evaluations of $c_{\h}$,
 and $O(\log (b-a+1))$  comparisons of $y.$ For the rest of this proof, we will use $n = b-a+1$ to denote the support size.

Consider the random variable $d_{\h}(Y)$, for $Y$ uniformly distributed in $[0,1]$, whose distribution we will call $\q'$.
When $d_{\h}(Y)=x$, we have $c_{\h}(x-1) \leq Y \leq c_{\h}(x)$, and so when $\q'(x) > 0$,
we have $\q'(x) \leq \Pr\left[c_{\h}(x-1) \leq Y \leq c_{\h}(x)\right] = c_{\h}(x)-c_{\h}(x-1) = \h(x)$.
So, when $\h(x) > 0$, we have $\h(x) \geq \q'(x)$.
But when $\h(x) \leq 0$, we have $\q'(x)=0$, since then $c_{\h}(x) < c_{\h}(x-1)$ and no $y$ has $c_{\h}(x-1) \leq y \leq c_{\h}(x)$.
So, 
we have $\dtv(\q',\h) = \sum_{x:\h(x) < 0} -\h(x) \leq \dtv(\h,\p) \leq \eps/3$.

We now show how to effectively sample from $\q'$.
The issue is how to simulate a sample from the uniform distribution on $[0,1]$ with uniform random bits.
We do this by flipping coins for the bits of $Y$ lazily. 
We note that we will only need to know more than $m$
 bits of $Y$ if $Y$ is within $2^{-m}$ of one of the values of $c_{\h}(x)$ for some $x.$ 
 By a union bound, this happens with probability at most $n2^{-m}$ over the choice of $Y.$ 
 Therefore, for $m > \log_2(10n/\epsilon)$, the probability that this will happen is at most $\epsilon/10$ and can be ignored.

Therefore, the random variable $d_{\h}(Y')$, for $Y'$ uniformly distributed on the multiples of $2^{-r}$ in $[0,1)$ for $r = O(\log n + \log (1/\eps)),$
has distribution $\q'$ that satisfies $\dtv(\q,\q') \leq \eps/10.$ This means that $\dtv(\p,\q') \leq \dtv(\p,\h) + \dtv(\h,\q) + \dtv(\q,\q') \leq 9\eps/10.$
That is, we obtain an $\eps$-sampler that uses $O(\log n + \log (1/\eps))$ coin flips,
$O(\log n)$ calls to $c_{\h}(x)$, and has the desired running time.

We now need to show how this can be simulated without access to $c_{\h},$ 
and instead only having access to its approximation $c(x).$ 
The modification required is rather straightforward. 
Essentially, we can run the same algorithm using $c(x)$ in place of $c_{\h}(x).$ 
We note that all comparisons with $Y$ will produce the same result, 
unless the chosen $Y$ is between $c(x)$ and $c_{\h}(x)$ for some value of $x.$ 
Observe that because of our bounds on their difference, the probability 
of this occurring for any given value of $x$ is at most $\epsilon/(10 n).$
By a union bound, the probability of it occurring for any $x$ is at most $\epsilon/10.$ 
Thus, with probability at least $1-\epsilon/10$ our algorithm returns the same result 
that it would have had it had access to $c_{\h}(x)$ instead of $c(x).$
This implies that the variable sampled by this algorithm has variation distance at most $\epsilon/10$ 
from what would have been sampled by our other algorithm. Therefore, this algorithm samples a $\q$ with $\dtv(\p,\q)\leq \epsilon.$
\end{proof}

We next show that we can efficiently compute an appropriate CDF using the DFT.
For the $1$-dimensional case, this follows easily via a closed form expression.
For the high-dimensional case, we first obtain a closed form expression for the case
that the matrix $M$ is diagonal. We then reduce the general case to the diagonal case,
by using a Smith normal form decomposition.

\begin{proposition} \label{prop:CDF}
{For $\h$ as in Theorem \ref{thm:sampler},} we have the following:
\begin{enumerate}
\item[(i)] If $k=1$, there is an algorithm to compute the CDF
$c_{\h}:[a,b] \cap \Z \rightarrow [0,1]$
with $c_{\h}(x)=\sum_{i: a \leq i \leq x}  \h(i)$ to any precision $\delta>0$,
where $a = m-\lceil M/2 \rceil + 1$ and $b=m+ \lfloor M/2 \rfloor$, $M \in \Z_+$.
The algorithm runs in time $O(|T|\log(1/\delta)).$

\item[(ii)]  If $M \in \Z^{k \times k}$ is diagonal,
there is an algorithm which computes the CDF to any precision $\delta>0$
under the lexicographic ordering $\leq_{\mathrm{lex}}$, i.e.,
$c_{\h}(x)=\sum_{y \in T: y \leq_{\mathrm{lex}}  x} \h(y)$.
The algorithm runs in time $O(k^2|T|\log(1/\delta)).$

\item[(iii)]  For any $M \in \Z^{k \times k}$, there is an explicit ordering $\leq_{g}$
for which we can compute the CDF  $c_{\h}(x)=\sum_{y \in T: y \leq_{g}  x} \h(y)$ to any precision $\delta>0$.
This computation can be done in time $O(k^2|T|\log(1/\delta)+\poly(k)).$
\end{enumerate}
In cases (ii) and (iii), we can also compute the embedding of the corresponding ordering
onto the integers $[|\det M|] = \{1,2,...,|\det M|\}$,
i.e., we can give a monotone bijection $f: S \rightarrow [|\det M|]$
for which we can efficiently compute $f$ and $f^{-1}$
(i.e., with the same running time bound we give for computing $c_{\h}(x)$).
\end{proposition}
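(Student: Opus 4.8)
The plan is to prove the three parts in order, in each case deriving a closed form for the CDF from the inverse-DFT identity and then bounding the cost of evaluating it; part~(iii) will be reduced to part~(ii) by a Smith normal form change of coordinates, and part~(ii) is a ``factorized'' version of part~(i). For~(i), substituting $\h(i)=\frac1M\sum_{\xi\in T}e(-\xi i)\wh{\h}(\xi)$ into $c_{\h}(x)=\sum_{i=a}^{x}\h(i)$ and exchanging the two sums gives $c_{\h}(x)=\frac1M\sum_{\xi\in T}\wh{\h}(\xi)\sum_{i=a}^{x}e(-\xi i)$, where the inner sum is a geometric series equal to $x-a+1$ when $\xi\equiv0$ (which lies in $T$, since $\mathbf 0\in T$) and to $\frac{e(-\xi(x+1))-e(-\xi a)}{e(-\xi)-1}$ otherwise, the denominator being nonzero because the nonzero elements of $L^{*}/\Z=(1/M)\Z/\Z$ are non-integral. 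This is a closed form with $|T|$ terms, each a fixed arithmetic expression in a constant number of complex exponentials; a routine precision analysis (the $1/M$ prefactor compensating the $O(M)$ blow-up of $1/|e(-\xi)-1|$) shows it is computable to precision $\delta$ in time $O(|T|\log(1/\delta))$ in the stated model, and $c_{\h}(b)=1$ drops out from $\wh{\h}(\mathbf 0)=1$ together with the vanishing of the full-range geometric sum for $\xi\not\equiv0$.

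\textbf{Part (ii).} When $M=\mathrm{diag}(M_1,\dots,M_k)$ the dual lattice splits, $L^{*}/\Z^{k}=\prod_{l}\big((1/M_l)\Z/\Z\big)$, and the inverse DFT factorizes, $e(-\xi\cdot y)=\prod_{l}e(-\xi_l y_l)$. I would split the sum $\sum_{y\leq_{\mathrm{lex}}x}$ (over the support $S$ of $\h$) into the ``diagonal'' term $y=x$ plus, for each $j\in[k]$, the term where $y_l=x_l$ for $l<j$, $y_j<x_j$, and $y_l$ ranges over its whole interval for $l>j$. For fixed $\xi$ this makes $\sum_{y\leq_{\mathrm{lex}}x}e(-\xi\cdot y)$ a sum of $k+1$ products of one-dimensional sums: partial geometric sums in the ``prefix'' and ``branch'' coordinates (closed form as in part~(i)) and full-interval sums in the ``free'' coordinates, the latter equal to $M_l$ if $\xi_l\equiv0$ and to $0$ otherwise (the same computation also gives $\sum_{y\in S}\h(y)=\wh{\h}(\mathbf 0)=1$). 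Precomputing the prefix products $\prod_{l<j}e(-\xi_l x_l)$, each $\xi$ costs $O(k^{2})$ arithmetic operations, for a total of $O(k^{2}|T|\log(1/\delta))$. The monotone bijection $f:S\to[|\det M|]$ is the mixed-radix encoding $f(x)=1+\sum_{l}(x_l-a_l)\prod_{l'>l}M_{l'}$, with the obvious $\poly(k)$-time inverse.

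\textbf{Part (iii) and the embedding.} For general $M\in\Z^{k\times k}$ I would invoke the Smith normal form $M=UDV$ with $U,V$ unimodular integer matrices (hence $U^{-1},V^{-1}$ integral) and $D$ diagonal, supplied by Lemma~\ref{lem:smith}. Since $V\Z^{k}=\Z^{k}$, we have $L=M\Z^{k}=UD\Z^{k}$, so $x\mapsto U^{-1}x$ is a group isomorphism $\Z^{k}/L\to\Z^{k}/D\Z^{k}$; dually, from $M^{T}=V^{T}DU^{T}$ we get $L^{*}=(M^{T})^{-1}\Z^{k}=(U^{T})^{-1}D^{-1}\Z^{k}$, so $\xi\mapsto U^{T}\xi$ is a bijection of $L^{*}/\Z^{k}$ onto $(D\Z^{k})^{*}/\Z^{k}$. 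Letting $B=\prod_{l}\{0,\dots,D_l-1\}$ and $\beta:S\to B$ send $x$ to the representative in $B$ of $U^{-1}x$, the substitution $y\equiv Uz\pmod L$ (legitimate since $e(-\xi\cdot\,\cdot)$ is $L$-periodic) gives $e(-\xi\cdot y)=e(-(U^{T}\xi)\cdot z)$, so $\tilde{\h}(z):=\h(\beta^{-1}(z))$ is a pseudo-distribution on $B$ whose DFT modulo $D$ is supported on $T'=\{U^{T}\xi\bmod\Z^{k}:\xi\in T\}$ with $\wh{\tilde{\h}}(U^{T}\xi)=\wh{\h}(\xi)$ and $|T'|=|T|$. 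Defining $x\leq_{g}x'\iff\beta(x)\leq_{\mathrm{lex}}\beta(x')$ gives $c_{\h}(x)=c_{\tilde{\h}}(\beta(x))$, which part~(ii) computes for $(\tilde{\h},D)$; forming $T'$, computing $\beta(x)$, and composing the part-(ii) embedding for $D$ with $\beta$ (its inverse composing with $\beta^{-1}$, i.e.\ with reduction of $Uz$ into $S$) are all $\poly(k)$ once the Smith form is in hand, for a total of $O(k^{2}|T|\log(1/\delta)+\poly(k))$.

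\textbf{Main obstacle.} The crux is part~(iii): verifying that the Smith-form change of variables genuinely transports the sparse DFT of $\h$ into a sparse DFT of $\tilde{\h}$ \emph{modulo the diagonal matrix $D$} — i.e.\ that $U^{T}$ carries $L^{*}$ onto $(D\Z^{k})^{*}$ and descends to the quotients by $\Z^{k}$, and that the induced map $\beta$ on fundamental domains is the right bijection. The other recurring, if routine, nuisance is the precision/bit-complexity bookkeeping: the geometric-series denominators $e(-\xi_l)-1$ can be as small as $\Theta(1/D_l)$, so one must check that the $1/|\det M|$ prefactor keeps all intermediate quantities polynomially bounded and that $\log|\det M|$ is only $\polylog$ in the parameters of the application.
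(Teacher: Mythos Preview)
Your proposal is correct and follows essentially the same route as the paper: part~(i) via the geometric-series closed form, part~(ii) via the lexicographic decomposition into $k$ factorized pieces plus the diagonal term and a mixed-radix bijection, and part~(iii) via Smith normal form $M=UDV$ and the change of variables $x\mapsto U^{-1}x$ (reduced into a fundamental domain of $D\Z^k$) that transports $T$ to $U^T T$ and reduces to~(ii). The only cosmetic differences are your choice of the fundamental domain $\prod_l\{0,\dots,D_l-1\}$ versus the paper's centered box $U^{-1}m+D(-1/2,1/2]^k$, and your observation that the full-range coordinate sums in~(ii) collapse to $M_l$ or $0$, which the paper leaves as generic geometric sums $s_j(a_j,b_j)$.
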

\begin{proof}
Recall that the PMF of $\h$ at $x \in S$ is given by the inverse DFT:
\begin{equation}  \label{eq:inverse-DFT-H}
\h(x) = \frac{1}{|\det M|} \sum_{\xi \in T} e(- \xi \cdot x) \widehat{\h}(\xi) \;.
\end{equation}

\paragraph{Proof of (i):}
For (i), the CDF is given by:
\begin{align*}
c_{\h}(x) & = \frac{1}{M} \sum_{i: a \leq i \leq x} \sum_{\xi \in T} e(-\xi x) \widehat{\h}(\xi)\\
		& = \frac{1}{M} \sum_{\xi \in T} \widehat{\h}(\xi) \sum_{i: a \leq i \leq x} e(-\xi x)
\end{align*}
When $\xi \not= 0$, the term $\sum_{i: a \leq i \leq x} e(-\xi \cdot x)$ is a geometric series.
By standard results on its sum, we have:
$$\sum_{i: a \leq i \leq x} e(-\xi x) = \frac{e(-\xi a) - e(-\xi (x+1))}{1-e(-\xi)} \;.$$
When $\xi = 0$, $e(-\xi)=1$, and we get $\sum_{a \leq i \leq x} e(-\xi x) = i + 1 - a.$
In this case, we also have $\widehat{\h}(\xi)=1.$
Putting this together we have:
\begin{equation} \label{eq:CDF-1d}
c_{\h}(x) = \frac{1}{M} \left( i + 1 - a + \sum_{\xi \in T \setminus \{0\}} \wh{\h}(\xi) \frac{e(-\xi a) - e(-\xi (x+1))}{1-e(-\xi)} \right) \;.
\end{equation}
Hence, we obtain a closed form expression for the CDF that can be approximated to desired precision in time $O(|T|\log(1/\delta)).$

\paragraph{Proof of (ii):}
For (ii), we can write $M=\mathrm{diag}(M_i)$, $1 \le i \le k$, and
$S= \prod_{i=1}^k ([a_i,b_i] \cap \Z),$
{where  $a_i = m_i-\lceil |M_i|/2 \rceil + 1$
and $b_i=m_i+ \lfloor |M_i|/2 \rfloor$. }
With our lexicographic ordering, we have:
\begin{align*}
c_{\h}(x)
                                               & = \sum_{y \in S: y \leq_{\mathrm{lex}} x} \h(y)\\
					& = \sum_{y_1=a_1}^{x_1-1} \sum_{y_2=a_2}^{b_2} \cdots \sum_{y_k=a_k}^{b_k}  \h(y) \\
					& + \sum_{y_2=a_2}^{x_2-1} \sum_{y_3=a_3}^{b_3} \cdots \sum_{y_k=a_k}^{b_k} \h(x_1,y_2,\ldots, y_k) \\
					& \ldots \\
					& + \sum_{y_k=a_k}^{x_k-1} \h(x_1, \ldots ,x_{k-1}, y_k) \\
					& + \h(x) \;.
\end{align*}
To avoid clutter in the notation, we define $c_{\h,i}(x)$ to be one of these sums, i.e.,
\begin{eqnarray*}
c_{\h,i}(x) & \eqdef  & \sum_{y_i=a_i}^{x_i-1} \sum_{y_{i+1}=a_{i+1}}^{b_{i+1}} \cdots \sum_{y_k=a_k}^{b_k} \h(x_1,\ldots, x_{i-1},y_i,\ldots,y_k) \\
& = & \frac{1}{|\det(M)|} \cdot
\sum_{y_i=a_i}^{x_i-1} \sum_{y_{i+1}=a_{i+1}}^{b_{i+1}} \cdots \sum_{y_k=a_k}^{b_k}
\sum_{\xi \in T} \widehat{\h}(\xi) e\left( - \sum_{j=1}^{i-1} \xi_j x_j -  \sum_{j=i}^{k} \xi_j y_j \right) \\
& = & \frac{1}{|\det(M)|} \cdot
\sum_{\xi \in T} \widehat{\h}(\xi)  e\left(-  \sum_{j=1}^{i-1} \xi_j x_j \right)
\left(\sum_{y_i=a_i}^{x_i-1} e(- \xi_i y_i)\right) \prod_{j=i+1}^k \sum_{y_i=a_i}^{b_i} e(-\xi_j y_j) \\
& = & \frac{1}{|\det(M)|} \cdot
\sum_{\xi \in T} \widehat{\h}(\xi) e\left(- \sum_{j=1}^{i-1} \xi_j x_j\right) s_i(a_i,x_i-1) \prod_{j=i+1}^k s_j(a_j,b_j) \;,
\end{eqnarray*}
where $s_i(a'_i,b'_i) := \sum_{y_i=a'_i}^{b'_i} e(-\xi_i y_i).$
As before, this is a geometric series, so either $\xi_i=0$,
when we have $s_i = b'_i+1-a'_i$, or $s_i=\frac{e(-\xi_i a'_i) - e(-\xi (b'_i+1))}{1-e(-\xi)}$.

We can thus evaluate $c_{\h,i}$ in $O(|T| k)$ arithmetic operations and so compute $c_{\h}(x)$ to desired accuracy in $O(|T| k^2\log(1/\delta))$ time.
We also note that $f: S \rightarrow \{0,1,...,(\prod_i |M_i|)-1 \}$,
defined by $f(x) := \sum_i (x_i-a_i) \prod_{j=1}^i M_j$
is a strictly monotone bijection,
and that $f$ and $f^{-1}$ can be computed in time $O(k).$
Now, $c_{\h}(f^{-1}(y)$ is the CDF of the distribution on $y \in \{0,1,...,(\prod_i |M_i|)-1 \}$
whose PMF is given by $\h(f^{-1}(y)).$

\paragraph{Proof of (iii):}
We will reduce (iii) to (ii).
{
To do this, we use Smith normal form, a canonical factorization of integer matrices:}

\begin{lemma}[See~\cite{storjohannthesis},~\cite{storjohann96})] \label{lem:smith}
Given any integer matrix $M \in \Z^{k \times k}$,
we can factorize $M$ as $M=U \cdot D \cdot V$,
where $U,D,V \in \Z^{k\times k}$
with $D$ diagonal and $U$,$V$ unimodular,
i.e., $|\det (U)|=|\det (V)| = 1,$
and therefore $U^{-1},V^{-1} \in \Z^{k \times k}.$
This factorization can be computed in time $$\poly(k) \log \max_{i,j} |M_{i,j}|.$$
\end{lemma}

Note that the Smith normal form satisfies additional conditions on $D$ 
than those in Lemma \ref{lem:smith}, but we are only interested in finding such a decomposition where $D$ is a diagonal integer matrix.

We note that the integer lattices $M\Z^k$ and $UD\Z^k$ are identical,
since if $x=Mb$ for $b \in \Z^k$, then $x=UDc$ for $c=Vb \in \Z^k$.
For any $\xi \in (M^T)^{-1} \Z^k$, $\xi=(U^T)^{-1} (D^T)^{-1} (V^T)^{-1} b$, for $b \in \Z$.
Then, if we take $\nu = U^T \xi$, we have $\nu \in (DV)^{-1}\Z^k=D^{-1}\Z^k$.

Hence, we can re-write (\ref{eq:inverse-DFT-H}) as follows:
$$\h(x) = \frac{1}{|\det M|} \sum_{\nu \in U^T T} e\left(-\nu U^{-1} x\right) \widehat{\h}((U^T)^{-1} \nu) \;.$$
Since $U^T T \subseteq  D^{-1}\Z^k$, substituting $y=U^{-1} x$ almost gives us the conditions which would allow us to apply (ii).
The issue is that for $x \in m+M(-1/2,1/2]^k$, we have $U^{-1} X \in U^{-1}m + DV(-1/2,1/2]^k,$
but we do not necessarily have $U^{-1} X \in U^{-1}m +  D(-1/2,1/2]^k.$
The following claim gives the details of how to change the fundamental domain:
\begin{claim} \label{clm:change-fund-domain}
Given a non-singular $M \in \Z^{k \times k}$ and $m,x \in \R^k,$
then $x'= x + M R\left(M^{-1} (m-x)\right)$ is the unique $x' \in m+M(-1/2,1/2]^k$ with $x-x' \in M\Z^k,$
where $R(x)$ is $x$ with each coordinate rounded to the nearest integer, rounding half integers up,
i.e., $(R(x))_i := \frac{1}{2} + \lceil x_i - \frac{1}{2} \rceil.$
\end{claim}
So we take $y=g(x) \eqdef U^{-1}x+D R\left((UD)^{-1} m - (UD)^{-1} x\right),$
which using Claim \ref{clm:change-fund-domain} has $g(x) \in U^{-1}m +  D(-1/2,1/2]^k.$
We need the inverse function of $g: m+M(-1/2,1/2]^k \rightarrow U^{-1}m +  D(-1/2,1/2]^k$.
Note that $g^{-1}(y) = U y + D^{-1} R(U^{-1} m - y),$ which again by Claim~\ref{clm:change-fund-domain} is in $m+M(-1/2,1/2]^k.$

So, if $y=g(x),$ since $|\det (M)| = |\det (U)| \cdot |\det (D)| \cdot |\det (V)| = |\det (D)|$, we have:
\begin{equation} \label{eq:straight-DFT}
\h(g^{-1}(y)) =  \frac{1}{|\det (D)|} \sum_{\nu \in U^T T} e(\nu \cdot y) \widehat{\h}((U^T)^{-1} \nu) \;.
\end{equation}
Now, we can take $\h(g^{-1}(y)$ to be a function of $y$ supported on $U^{-1}m +  D(-1/2,1/2]^k$
with a sparse DFT modulo $D$ supported on $U^T T \subseteq D^{-1} \Z^k.$
At this point, we can apply the algorithm of (ii), which gives a way to compute the CDF of $\h(g^{-1}(y))$
with respect to the lexicographic ordering on $y.$ Note that $g$ and $g^{-1}$ can be computed in time
$\poly(k),$ or more precisely the running time of matrix multiplication and inversion.

For the ordering on $x \in S,$ which has $x_1 \le_g x_2$ when $g(x_1) \leq_{\mathrm{lex}} g(x_2),$
we can compute $c_{\h}(x)=\sum_{y \in S: y \leq_g x} \h(y)$ by applying the algorithm in (ii) to the
function given in (\ref{eq:straight-DFT}) applied at $g(x).$
So, we can compute $c_{\h}(x)$ in time $O(k^2|T|+\poly(k)).$
Again, the function given by $f(g(x)),$ where $f$ is as in (ii)
is a monotone bijection from $S$ to $\{0,1, \ldots, \det (M) -1\},$
and we can calculate this function and its inverse in time $\poly(k).$
\end{proof}

Now we can prove the main theorem of this subsection.
\begin{proof}[Proof of Theorem \ref{thm:sampler}]
By Proposition \ref{prop:CDF} (iii), there is a bijection $f$ which takes the support $S$ of $\h$ to the integers
$\{0,1, \ldots ,|S|-1\},$ and we can efficiently calculate the CDF of the distribution considered on this set of integers.
So, we can apply Lemma~\ref{lem:bs} to this CDF on this distribution.
This gives us an $\eps$-sampler for this distribution,
which we can then apply $f^{-1}$ to each sample to get an $\eps$-sampler for $\h.$
To find the time it takes to compute each sample, we need to substitute
$D=O(\poly(k)+k^2|T|\log(|\det(M)|/\epsilon))$ from the running time of the CDF in Proposition~\ref{prop:CDF} (iii) into the bound
in Lemma \ref{lem:bs}, yielding $$O(\log(|\det (M)|) \log(|\det(M)|/\epsilon) \cdot |T| \cdot \poly(k))$$ time. This completes the proof.
\end{proof}

\subsection{Using our Learning Algorithm to Obtain a Cover} \label{sec:cover-from-alg}

As an application of our learning algorithm in Section~\ref{ssec:algo-dft}, 
we provide a simple proof that 
the space $\mathcal{M}_{n,k}$ of all $(n,k)$-PMDs has an $\eps$-cover under the total variation distance of size 
$n^{O(k^2)} \cdot 2^{O\left(k\log(1/\eps)\right)^{O(k)}}.$ Our argument is constructive, 
yielding an efficient algorithm to construct a non-proper $\eps$-cover of this size.

Two remarks are in order: (i) the non-proper cover construction in this subsection does not suffice
for our algorithmic applications of Section~\ref{sec:cover-nash}. These applications require the efficient construction
of a {\em proper} $\eps$-cover plus additional algorithmic ingredients.
(ii) The upper bound on the cover size obtained here is nearly optimal, 
as follows from our lower bound  in Section~\ref{sec:cover-lb}.

The idea behind using our algorithm to obtain a cover is 
quite simple. In order to determine its hypothesis, $\h$, our algorithm  {\tt Efficient-Learn-PMD} 
requires the following quantities:
\begin{itemize}
\item A vector $\wh{\mu} \in \R^k$ and a PSD matrix $\wh{\Sigma} \in \R^{k \times k}$ 
satisfying the conclusions of Lemma \ref{lem:sample-mean-and-covariance}.
\item Values of the DFT $\wh{\h}(\xi)$, for all $\xi\in T,$ so that
$\sum_{\xi\in T}|\wh{\h}(\xi)-\wh{\p}(\xi)| < \eps/10.$
Recall that $T  \eqdef  \left\{ \xi = (M^T)^{-1} v ~ \mid~   \left(v\in \Z^k\right)\land  \left(\|v\|_2 \leq C^2 k^2 \ln(k/\eps)\right) \right\}.$
\end{itemize}
Given this information, the analysis in Section \ref{ssec:algo-dft} carries over immediately. 
The algorithm {\tt Efficient-Learn-PMD} works by estimating the mean and covariance using samples, and then 
taking $\wh{\h}(\xi)$ to be the sample Fourier transform. If we instead guess the values of these 
quantities using an appropriate discretization, we obtain an $\eps$-cover for $\mathcal{M}_{n,k}.$
More specifically, we discretize the above quantities as follows:
\begin{itemize}
\item To discretize the mean vector, we consider a $1$-cover $\mathcal{Y}_1$ 
of the set $\mathcal{Y} \eqdef \{y = (y_1, \ldots, y_k) \in \R^k: (y_i \geq 0)\land (\sum_{i=1}^k y_i = k) \}$ with respect to the $L_2$ norm.
It is easy to construct such a cover with size $|\mathcal{Y}_1| \leq O(n)^k.$

\item To discretize the covariance matrix, we consider a $1/2$-cover $\mathcal{S}_{1/2}$ of the set of matrices 
$\mathcal{S} \eqdef \{ A \in \R^{k \times k}: (A \succeq \mathbf{0}) \land \|A\|_2 \leq n\},$
with respect to the spectral norm $\| \cdot \|_2.$ 
Note that we can construct such a $1/2$-cover with size 
$|\mathcal{S}_{1/2}| \leq (4n+1)^{k(k+1)/2}.$ This is because any maximal $1/4$-packing of this space 
(i.e., a maximal set of matrices in $\mathcal{S}$ with pairwise distance under the spectral norm at least $1/4$) 
is such a $1/2$-cover. Observe that for any such maximal packing, the balls of radius $1/4$ centered at these points  
are disjoint and contained in the ball (under the spectral norm) about the origin of radius $(n+1/4).$
Since the ball of radius $(n+1/4)$ has volume $(4n+1)^{k(k+1)/2}$ times as much as the ball of radius $1/4$, 
a simple volume argument completes the proof. 

\item Finally, to discretize the Fourier transform, we consider a $\delta$-cover $\mathcal{C}_{\delta}$ 
of the unit disc on the complex plane $\C$, 
with respect to the standard distance on $\C,$
where $$\delta \eqdef \eps (2C^2k^2\log(k/\eps))^{-k}/10 = \eps/(10 t) \leq \eps/(10|T|).$$ We note that 
$t \eqdef  (2C^2k^2\log(k/\eps))^{k}$ is an upper bound on $|T|.$
\end{itemize}

We claim that there is an $\eps$-cover of the space of $(n,k)$-PMDs
indexed by $\mathcal{Y} _1 \times \mathcal{S}_{1/2} \times \mathcal{C}_{\delta}^{t}.$ 
Such a cover is clearly of the desired size. The cover is constructed as follows: 
We let $\wh{\mu}$ and $\wh{\Sigma}$ be the selected elements 
from $\mathcal{Y}_1$ and $\mathcal{S}_{1/2},$ respectively. 
We use these elements to define the matrix $M \in \Z^{k \times k}$ as in the algorithm description. 
We then use our elements of $\mathcal{C}_{\delta}$ as the values of $\wh{\h}(\xi)$ for $\xi\in T$ (noting that $|T|\leq  t$).

We claim that for any $(n,k)$-PMD $\p$ there exists a choice of parameters, 
so that the returned distribution $\h$ is within total variation distance 
$\eps$ of $\p.$ We show this as follows: Let $\mu$ and $\Sigma$ be the true mean and covariance matrix of $\p.$ 
We have that $\mu \in \mathcal{Y}$ and that $\Sigma \in \mathcal{S}.$ 
Therefore, there exist $\wh{\mu} \in\mathcal{Y}_1$ and $\wh{\Sigma}\in \mathcal{S}_{1/2}$ 
so that $|\mu-\wh{\mu}|_2 \leq 1$ and $I/2 \succeq \Sigma - \wh{\Sigma} \succeq -I/2.$ 
It is easy to see that these conditions imply the conclusions of Lemma~\ref{lem:sample-mean-and-covariance}.
Additionally, we can pick elements of $\mathcal{C}_{\delta}$ in order to make $|\wh{\h}(\xi)-\wh{\p}(\xi)| < \eps/(10|T|)$ for each $\xi\in T.$ 
This will give that $\sum_{\xi\in T}|\wh{\h}(\xi)-\wh{\p}(\xi)| < \eps/10.$ 
In particular, the hypothesis $\h$ indexed by this collection of parameters will be within variation distance $\eps$ of $\p.$
Hence, the set we have constructed is an $\eps$-cover, and our proof is complete.

\section{Efficient Proper Covers and Nash Equilibria in Anonymous Games} \label{sec:cover-nash}

In this section, we give our efficient proper cover construction for PMDs,
and our EPTAS for computing Nash equilibria in anonymous games. 
These algorithmic results are based on new structural results for PMDs that we establish.
The structure of this section is as follows:
In Section~\ref{ssec:moments-struct}, we show the desired sparsity property of the continuous
Fourier transform of PMDs, and use it to prove our robust moment-matching lemma.
Our dynamic-programming algorithm for efficiently constructing 
a proper cover relies on this lemma, and is given in Section~\ref{ssec:cover-dp}.
By building on the proper cover construction, in Section~\ref{ssec:anonymous} we give our EPTAS for
Nash equilibria in anonymous games. In Section~\ref{sec:distinct}, we combine our moment-matching lemma
with recent results in algebraic geometry, to show that any PMD is close to another PMD
with few distinct CRV components. Finally, in Section~\ref{sec:cover-lb} we prove out cover size lower bound.

\subsection{Low-Degree {Parameter} Moment Closeness Implies Closeness in Variation Distance}
\label{ssec:moments-struct}

In this subsection, we establish the sparsity of the continuous Fourier transform of PMDs,
and use it to prove our robust moment-matching lemma, 
translating closeness in the low-degree {parameter} moments to closeness in total variation distance. 

At a high-level, our robust moment-matching lemma (Lemma~\ref{lem:moments-imply-dtv})
is proved by combining the sparsity of the continuous Fourier transform of PMDs (Lemma~\ref{lem:ft-es})
with very careful Taylor approximations of the logarithm of the Fourier transform (log FT) of our PMDs.
For technical reasons related to the convergence of the log FT, we will need one additional property
from our PMDs. In particular, we require that each component $k$-CRV has the same most likely outcome.
This assumption is essentially without loss of generality. There exist at most $k$ such outcomes,
and we can express an arbitrary PMD as a sum of $k$ independent component PMDs whose $k$-CRV components satisfy this property.
Formally, we have the following definition:
\begin{definition}
We say that a $k$-CRV $W$ is $j$-maximal, for some $j \in [k]$,
if for all $\ell \in [k]$ we have $\Pr[W = e_j] \ge \Pr[W = e_{\ell}].$
We say that an $(n, k)$-PMD $X = \sum_{i=1}^n X_i$
is $j$-maximal, for some $j \in [k]$, if for all $1 \le i \le n$
$X_i$ is a $j$-maximal $k$-CRV.
\end{definition}


Any $(n, k)$-PMD $X$ can be written as $X = \sum_{i=1}^k X^i$,
where $X^i$ is an $i$-maximal $(n_i, k)$-PMD,
with $\sum_i n_i = n.$ 
For the rest of this intuitive explanation, we focus on two $(n, k)$-PMDs $X, Y$ that are promised to be 
$i$-maximal, for some $i \in [k].$ 

To guarantee that $\wh{X}$, $\wh{Y}$ have roughly the same effective support, 
we also assume that they have roughly the same variance in each direction.
We will show that if the low-degree {parameter} moments of $X$ and $Y$ are close to each other, then $X$ and $Y$ are close in total variation distance. 
We proceed by partitioning the $k$-CRV components of our PMDs into groups, based on their maximum probability element $e_j$, with $j \neq i.$
The maximum probability of a $k$-CRV quantifies its maximum contribution to the variance of the PMD in some direction.
Roughly speaking, the smaller this contribution is, the fewer terms in the Taylor approximation are needed to achieve a given error. 
More specifically, we consider three different groups, partitioning the component $k$-CRVs into ones with small, medium, and large 
contribution to the variance in some direction. For the PMD (defined by the CRVs) of the first group, we only need to approximate the first $2$ {parameter} moments.
For the PMD of the second group, we approximate the low-degree {parameter} moments up to degree $O_k (\log(1/\eps) / \log \log(1/\eps)).$
Finally, the third group is guaranteed to have very few component $k$-CRVS, hence we can afford to approximate the individual parameters.

To quantify the above, we need some more notation and definitions.
To avoid clutter in the notation, we focus without loss of generality on the case $i=k$, i.e., 
our PMDs are $k$-maximal.
For a $k$-maximal $(n, k)$-PMD, $X$, let $X = \sum_{i=1}^n X_i$,
where the $X_i$ is a $k$-CRV with $p_{i, j} = \Pr[X_i = e_j]$ for $1\le i \le n$ and $1 \le j \le k.$
Observe that $\sum_{j=1}^k p_{i,j}=1,$ for $1 \le i \le n,$ hence the
definition of $k$-maximality implies that $p_{i,k}\geq 1/k$ for all $i.$
Note that the $j^{th}$ component of the random vector $X$ is a PBD with parameters $p_{i, j}$, $1 \le i \le n.$
Let $s_j(X)=\sum_{i=1}^n p_{i,j}$ be the expected value of the $j^{th}$ component of $X$.
We can assume that $s_j(X) \ge \eps/k$, for all $1 \le j \le k-1$; otherwise, we can remove
the corresponding coordinates and introduce an error of at most $\eps$ in variation distance.

Note that, for $j \neq k$, the variance of the $j^{th}$ {coordinate} of $X$ is in $[s_j(X)/2, s_j(X)]$.
Indeed, the aforementioned variance equals $\sum_{i=1}^n p_{i, j} (1- p_{i, j})$, which is clearly at most
$s_j(X)$. The other direction follows by observing that, for all $j \ne k$, we have
$1 \ge p_{i, k} + p_{i, j} \ge 2 p_{i, j}$, or $p_{i, j} \le 1/2$,
where we again used the $k$-maximality of $X$.
Therefore,
by Bernstein's inequality and a union bound, there is a set $S \subseteq [n]^k$
of size
$$|S| \le \prod_{j=1}^{k-1} \left(1+ 12 s_j(X)^{1/2} \ln(2k/\eps)\right)
\le O\left(\log(k/\eps)^{(k-1)} \right) \cdot  \prod_{j=1}^{k-1} (1+ 12 s_j(X)^{1/2}) \;, $$
so that $X$ lies in $S$ with probability at least $1-\eps$.


We start by showing that the continuous Fourier transform of a PMD is approximately sparse,
namely it is effectively supported on a small set $T.$ More precisely, we prove
that there exists a set $T$ in the Fourier domain such that  the 
integral of the absolute value of the Fourier transform outside $T$ 
multiplied by  the size of the effective support $|S|$ of our PMD is small.

\begin{lemma}[Sparsity of the FT for PMDs]  \label{lem:ft-es}
Let $X$ be $k$-maximal $(k,n)$-PMD with effective support $S.$ 
Let $$T \eqdef \left\{\xi \in [0, 1]^k : [\xi_j-\xi_k]<Ck (1+12s_j(X))^{-1/2}\log^{1/2}(1/\eps) \right\},$$
where $[x]$ is the distance between $x$ and the nearest integer,
and $C >0$ is a sufficiently large universal constant.
Then, we have that $$\int_{\overline{T}} |\wh{X}| \ll \eps/ |S|.$$
\end{lemma}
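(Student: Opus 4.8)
The plan is to bound $|\wh{X}(\xi)|$ pointwise by a Gaussian-type factor and then integrate that bound over $\overline{T}$, multiplying by $|S|$ at the end. The key analytic input is Lemma~\ref{lem:gaussian-bound-from-interval}: since $X$ is $k$-maximal, for any $\xi$ with $\xi_j - \xi_k$ close to an integer for \emph{all} $j$, the relevant differences $(\xi\cdot Y_i)$ are supported in a short interval, and the lemma gives $|\wh{X}(\xi)| = \exp(-\Omega(\delta^2 \xi^T\Sigma\xi))$ with $\delta$ bounded away from $0$. But I cannot apply this directly to every $\xi \in \overline{T}$; just as in the proof of Lemma~\ref{lem:dft-ub}, I first need a pigeonhole step (Claim~\ref{clm:pigeon-hole-interval}, suitably adapted to the continuous torus) to reduce to the case where all coordinate differences $\xi_j - \xi_k$ lie in a common interval of length $1 - 1/(k+1)$.

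\textbf{Step 1 (reduction to a good region).} I would partition $[0,1]^k$ into $O(k)$ pieces $T_a$, each consisting of $\xi$ whose coordinate \emph{differences} $\xi_j - \xi_k$ can be simultaneously lifted into an interval of length $1-1/(k+1)$; by the pigeonhole argument these pieces cover the torus. It then suffices to bound $\int_{T_a \setminus T} |\wh{X}|$ by $O(\eps/(k|S|))$ for each $a$.

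\textbf{Step 2 (Gaussian bound on each $j$-component).} Fix $a$ and $\xi \in T_a$. Since the $j$-th coordinate of $X$ is a PBD with variance $\Theta(s_j(X))$ (using $k$-maximality, as established just before the lemma), applying Lemma~\ref{lem:gaussian-bound-from-interval} to each independent CRV $X_i$ and the vector $\xi$ gives
\[
|\wh{X}(\xi)| \ll \exp\!\left(-\Omega\!\left(\sum_{j=1}^{k-1} s_j(X)\, [\xi_j - \xi_k]^2\right)\right),
\]
roughly because $\xi^T\Sigma\xi \gtrsim \sum_j \mathrm{Var}(j\text{-th coord})\cdot[\xi_j-\xi_k]^2$ once we account for the $\bone$-direction being a null direction of $\Sigma$ (the value of $\xi_k$ itself is irrelevant, only the differences matter). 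Outside $T$, at least one coordinate has $[\xi_j - \xi_k] \geq Ck(1+12 s_j(X))^{-1/2}\log^{1/2}(1/\eps)$, so the exponent is $\lesssim -C^2 k^2 \log(1/\eps)$ in that coordinate's contribution.

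\textbf{Step 3 (integration).} I would integrate the product bound over $\overline{T}$ coordinate by coordinate. Substituting $u_j = [\xi_j - \xi_k]$ and bounding each one-dimensional integral $\int \exp(-\Omega(s_j u_j^2))\,du_j$ by its Gaussian value $O((1+s_j(X))^{-1/2})$, restricted-to-the-tail where relevant, the total integral over $\overline{T}$ becomes at most $\big(\prod_{j=1}^{k-1} O((1+12 s_j(X))^{-1/2})\big) \cdot \exp(-\Omega(C^2 k^2 \log(1/\eps)))$. Since $|S| \leq O(\log(k/\eps)^{k-1}) \cdot \prod_{j=1}^{k-1}(1+12 s_j(X)^{1/2})$, the product of geometric-mean terms telescopes so that $\int_{\overline T}|\wh X|\cdot|S|$ is dominated by $\poly(\log(k/\eps))^{k}\exp(-\Omega(C^2k^2\log(1/\eps)))$, which is $\ll \eps$ for $C$ a large enough universal constant.

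\textbf{The main obstacle} I expect is the bookkeeping in Step~2–3: making the inequality $\xi^T\Sigma\xi \gtrsim \sum_j s_j(X)[\xi_j-\xi_k]^2$ precise given that $\Sigma$ is singular (eigenvector $\bone$) and that off-diagonal covariances of the coordinates of $X$ are negative, and then ensuring the $\prod (1+12 s_j)^{-1/2}$ from integration exactly cancels the $\prod(1+12 s_j^{1/2})$ from $|S|$ up to the logarithmic slack, so that the surviving $\exp(-\Omega(C^2 k^2\log(1/\eps)))$ factor swamps everything. The $k$-maximality hypothesis is what makes the coordinatewise decoupling legitimate, and keeping track of the constant $C$ through the pigeonhole length loss $1-1/(k+1)$ and Lemma~\ref{lem:gaussian-bound-from-interval}'s $\delta^2$ factor is the delicate part.
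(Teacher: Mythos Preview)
Your overall strategy is sound, but it takes a detour that the paper avoids, and that detour costs you quantitatively. The paper does \emph{not} use the pigeonhole reduction (your Step~1) nor Lemma~\ref{lem:gaussian-bound-from-interval} here. Instead it proves a tailored claim that holds for \emph{every} $\xi\in[0,1]^k$: expanding $|\wh{X_i}(\xi)|^2 = 1-\sum_{j\neq j'}p_{i,j}p_{i,j'}\bigl(1-\cos(2\pi(\xi_j-\xi_{j'}))\bigr)$ and using periodicity of cosine together with Claim~\ref{claim:cos-claim} at $\delta=1/2$ (valid since $[\xi_j-\xi_{j'}]\le 1/2$ always) gives $1-\cos(2\pi(\xi_j-\xi_{j'}))\ge\Omega([\xi_j-\xi_{j'}]^2)$. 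Then $k$-maximality ($p_{i,k}\ge 1/k$) lets one keep only the $j'=k$ terms, yielding directly $|\wh{X_i}(\xi)|=\exp(-\Omega(p_{i,j}[\xi_j-\xi_k]^2/k))$ for each $j$, and hence $|\wh{X}(\xi)|=\exp(-\Omega(s_j(X)[\xi_j-\xi_k]^2/k))$. This is precisely the decoupled bound you are aiming for, obtained without ever touching $\xi^T\Sigma\xi$ --- so your ``main obstacle'' simply disappears.

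Your route through Lemma~\ref{lem:gaussian-bound-from-interval} pays $\delta^2=(k+1)^{-2}$ in the exponent. Combined with the $1/k$ from $p_{i,k}\ge 1/k$ (which both arguments incur), you end up with $|\wh{X}(\xi)|\le\exp(-\Omega(s_j[\xi_j-\xi_k]^2/k^3))$ rather than the paper's $\exp(-\Omega(s_j[\xi_j-\xi_k]^2/k))$. Outside $T$ this translates to $|\wh{X}(\xi)|\le\exp(-\Omega(C^2\log(1/\eps)/k))$ instead of $\exp(-\Omega(C^2k\log(1/\eps)))$; the former does not obviously dominate the $(k\log(1/\eps))^{O(k)}$ volume factor for a \emph{universal} constant $C$, so you would be forced to let $C$ grow with $k$. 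For the integration itself (your Step~3), the paper slices $\overline T$ into dyadic shells $\overline T_\ell=\{\xi:\max_j[\xi_j-\xi_k]/(\text{threshold}_j)\in[2^\ell,2^{\ell+1}]\}$ and bounds $\sup_{\overline T_\ell}|\wh X|\cdot\mathrm{Vol}(\overline T_\ell)$; your coordinate-wise Gaussian integration would also work once the sharper pointwise bound is in hand.
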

\begin{proof}
To prove the lemma, we will need the following technical claim: 
\begin{claim} \label{clm:ub-ft-crv}
For all $\xi = (\xi_1, \ldots, \xi_k)  \in [0, 1]^k$,  for all $1 \leq i \leq n$ and $1 \leq j \leq k-1$,
it holds: $$|\wh X_i(\xi)| = \exp(-\Omega(p_{i,j}[\xi_j-\xi_k]^2/k)) \;.$$
\end{claim}
\begin{proof}
The claim follows from the following sequence of (in-)equalities:
\begin{align*}
 |\wh{X_i}(\xi)|^2
 =&  \Big(\sum_{j=1}^k p_{i,j}e(\xi_j) \Big) \Big(\sum_{j'=1}^k p_{i,j'}e(-\xi_{j'}) \Big)
 =  \sum_{j,j'} p_{i,j}p_{i,j'} e(\xi_j-\xi_{j'}) \\
 =&  \sum_{j,j'} p_{i,j}p_{i,j'} \cos(2\pi(\xi_j-\xi_{j'}))
 =  1-\sum_{j\neq j'} p_{i,j}p_{i,j'}\big(1- \cos(2\pi(\xi_j-\xi_{j'}))\big) \\
 = & {1-\sum_{j\neq j'} p_{i,j}p_{i,j'}\big(1- \Omega([\xi_j-\xi_{j'}])^2\big) \tag*{(by Claim \ref{claim:cos-claim} with $\delta=\frac{1}{2}$)}} \\
 =&  \exp\left(-\Omega\left( \sum_{j \neq j'} p_{i,j'}p_{i,j} [\xi_j-\xi_{j'}]^2 \right)\right) \\
 =&  \exp\left(-\Omega\left( \sum_{j < k} p_{i,j}p_{i,k} [\xi_j-\xi_k]^2 \right)\right) \\
 =&  \exp\left(-\Omega\left( p_{i,j} [\xi_j-\xi_k]^2/k \right)\right) \;,
\end{align*}
where the last lines uses the fact $p_{i, k} \ge 1/k$, which follows from $k$-maximality.
\end{proof}
As a consequence of Claim~\ref{clm:ub-ft-crv}, we have that
\begin{equation} \label{eqn:ub-ft-pmd}
|\wh X(\xi)| = \prod_{i=1}^n |\wh{X_i}(\xi)| =  \exp(-\Omega(s_j(X)[\xi_j-\xi_k]^2/k)) \;.
\end{equation}
Let $\overline{T} = [0, 1]^k \setminus T$ be the complement of $T$.
To bound $\int_{\overline{T}} |\wh{X}|,$ we proceed as follows:
For $\ell \in \Z_+$, we define the sets
$$\overline{T}_{\ell} = \left\{\xi:\max([\xi_j-\xi_k]/(Ck (1+12s_j(X))^{-1/2}\log^{1/2}(1/\eps))) \in [2^{\ell},2^{\ell+1}]\right\} \;,$$ and observe
that $\overline{T} \subseteq \cup_{\ell \in \Z_+} \overline{T}_{\ell}$.
Now, Equation (\ref{eqn:ub-ft-pmd}) implies that
for $\xi\in \overline{T}_{\ell}$ it holds $|\wh{X}(\xi)|\leq \eps^{10k 2^{\ell}}$, where we used the assumption that the constant
$C$ is sufficiently large.
It is easy to see
that the volume of $\overline{T}_{\ell}$ is at most
$O(2^{\ell} k \log^{1/2}(1/\eps))^k\prod_{j<k} (1+12s_j(X))^{-1/2}$.
We can bound $ \int_{\overline{T}} |\wh{X}| $ from above by
the sum over $\ell$ of the maximum value of  $|\wh{X}|$ within  $\overline{T}_{\ell}$ times the volume of  $\overline{T}_{\ell}$, namely
$$\int_{\overline{T}} |\wh{X}| \le \sum_{\ell= 0}^{\infty} \int_{\overline{T}_{\ell}} |\wh{X}| \le
\sum_{\ell= 0}^{\infty} \left( \sup_{\xi \in \overline{T}_{\ell}} |\wh{X}(\xi)| \right) \mathrm{Vol}(\overline{T}_{\ell})
\le \eps^k  \prod_{j<k} (1+12s_j(X))^{-1/2} \ll \eps/|S|.$$
This completes the proof of Lemma~\ref{lem:ft-es}.
\end{proof}

We now use the sparsity of the Fourier transform to show that if two $k$-maximal PMDs,
with similar variances in each direction, have Fourier transforms
 that are pointwise sufficiently close to each other in this effective support,
 then they are close to each other in total variation distance.

\begin{lemma} \label{lem:ft-es-dtv}
Let $X$ and $Y$ be $k$-maximal $(k,n)$-PMDs, satisfying 
{$1/2 \le (1+s_j(X)) / (1+s_j(Y)) \le 2$}
for all $j$, $1 \le j \le k-1.$
Let $T \eqdef \left\{\xi \in [0, 1]^k : [\xi_j-\xi_k]<Ck (1+12s_j(X))^{-1/2}\log^{1/2}(1/\eps) \right\},$
where $[x]$ is the distance between $x$ and the nearest integer,
and $C >0$ is a sufficiently large universal constant.
Suppose that for all $\xi\in T$ it holds
$|\wh X(\xi)-\wh Y(\xi)|\leq \eps (Ck \log(k/\eps))^{-2k}$.
Then, $\dtv(X,Y) \le \eps.$
\end{lemma}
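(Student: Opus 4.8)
The plan is to bound $\dtv(X,Y)$ by passing to the Fourier side, splitting the difference of the two probability mass functions into a contribution from the effective support $T$ of the Fourier transforms and a contribution from its complement $\overline T$, and controlling each piece separately. Concretely, since $X$ and $Y$ are supported on $\Z^k$, for any finite set $S \subseteq \Z^k$ containing the effective support of $X$ (say with $X(S), Y(S) \ge 1 - O(\eps)$, which we may arrange using Lemma~\ref{lem:ft-es}-style concentration applied to both PMDs — here the variance comparison hypothesis $1/2 \le (1+s_j(X))/(1+s_j(Y)) \le 2$ guarantees that $T$ and $S$ can be chosen to work simultaneously for both), we write, for each $x \in \Z^k$,
\[
X(x) - Y(x) = \int_{[0,1]^k} e(-\xi\cdot x)\bigl(\wh X(\xi) - \wh Y(\xi)\bigr)\, d\xi,
\]
by Fourier inversion. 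Then $|X(x)-Y(x)| \le \int_T |\wh X - \wh Y| + \int_{\overline T} |\wh X| + \int_{\overline T} |\wh Y|$.

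First I would handle the tail: by Lemma~\ref{lem:ft-es} applied to $X$, we have $\int_{\overline T}|\wh X| \ll \eps/|S|$, and by the variance comparison hypothesis the analogous set $T' = T'(Y)$ differs from $T$ by at most constant factors in each coordinate scale, so $\int_{\overline T}|\wh Y| \ll \eps/|S|$ as well (up to adjusting constants, $\overline T \subseteq \overline{T'}$ or is contained in a bounded dilate, so the same geometric-series bound goes through). This contributes at most $O(\eps/|S|)$ per point $x$. Second, on $T$ itself, the pointwise hypothesis $|\wh X(\xi) - \wh Y(\xi)| \le \eps (Ck\log(k/\eps))^{-2k}$ together with the volume bound $\mathrm{Vol}(T) \le O(k\log^{1/2}(1/\eps))^k \prod_{j<k}(1+12 s_j(X))^{-1/2} \le O(Ck\log(k/\eps))^{2k}/|S|$ (the same volume computation as in Lemma~\ref{lem:ft-es}, recalling $|S| \gtrsim \prod_{j<k}(1+12 s_j(X)^{1/2})$) gives $\int_T |\wh X - \wh Y| \le \eps/|S|$ up to constants.

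Combining, $|X(x) - Y(x)| \le O(\eps/|S|)$ for every $x$. Summing over the $|S|$ points of $S$ gives $\sum_{x\in S}|X(x)-Y(x)| \le O(\eps)$, and the mass outside $S$ contributes at most $X(\overline S) + Y(\overline S) \le O(\eps)$ to the $L_1$ distance. Hence $\dtv(X,Y) = \tfrac12\|X-Y\|_1 \le O(\eps)$, and rescaling the constant $C$ (which tightens both the volume bound and the tail bound) brings this below $\eps$, as desired.

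The main obstacle I anticipate is the bookkeeping needed to make $T$ and $S$ simultaneously valid effective supports for both $X$ and $Y$ — i.e., verifying that the variance-comparison hypothesis really does let us reuse $X$'s sets for $Y$ with only constant-factor losses in the exponents of the geometric series in the proof of Lemma~\ref{lem:ft-es}. This requires checking that replacing $s_j(X)$ by $s_j(Y)$ (within a factor of $2$ on $1+s_j$) changes the Gaussian-decay exponent $s_j(Y)[\xi_j-\xi_k]^2/k$ in Equation~(\ref{eqn:ub-ft-pmd}) only by a constant, which is immediate, but one must be careful that $s_j(Y)$ small (below $\eps/k$) does not break the argument — in that regime the $j$-th coordinate contributes negligible variance and can be dropped with $O(\eps)$ error, exactly as in the setup preceding Lemma~\ref{lem:ft-es}. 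Everything else is a routine repetition of the volume-times-sup estimate already carried out there.
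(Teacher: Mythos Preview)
Your proposal is correct and follows essentially the same route as the paper: apply Fourier inversion to bound $\|X-Y\|_\infty$ by $\|\wh X-\wh Y\|_1$, split the latter into $T$ and $\overline T$, use Lemma~\ref{lem:ft-es} (together with the variance-comparison hypothesis to handle $Y$) for the tail, and use the pointwise hypothesis times $\mathrm{Vol}(T)$ on $T$; then sum the resulting $O(\eps/|S|)$ pointwise bound over the common effective support $S$ and add the $O(\eps)$ mass outside $S$. The paper's proof is organized in exactly this way, with the same two steps and the same volume estimate for $T$; your anticipated ``main obstacle'' is dispatched there just as you suggest, by noting that the factor-of-two comparison on $1+s_j$ only perturbs the Gaussian-decay exponent by a constant.
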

\begin{proof}
We start with an intuitive explanation of the proof.
Since  $1+s_j(X)$, $1+s_j(Y)$ are within a factor of $2$ for $1 \le j \le k-1$,
it follows from the above that $X$ and $Y$ are both effectively supported on a set $S \subseteq [n]^k$ of size
$|S| \le O\left(\log(k/\eps)\right)^{k-1} \cdot \prod_{j=1}^{k-1} \left(1+ 12 s_j(X)^{1/2}\right)$.
Therefore, to prove the lemma,
it is sufficient to establish that $\|X-Y\|_\infty \leq O(\eps/|S|$).

We prove this statement in two steps by analyzing the continuous Fourier transforms $\wh X$ and $\wh Y$.
The first step of the proof exploits the fact
that the Fourier transforms of $X$ and $Y$
are each essentially supported on the set $T$ of the lemma statement.
Recalling the assumption that $(1+s_j(X))/(1+s_j(Y)) \in [1/2, 2]$, $1 \le j \le k-1$,
an application of Lemma~\ref{lem:ft-es} yields that
$\int_{\overline{T}} |\wh{X}|$ and $\int_{\overline{T}} |\wh{Y}|$ are both at most $\eps/|S|.$
Thus, we have that
$$\int_{\overline{T}} |\wh{X}-\wh{Y}| \le \int_{\overline{T}} |\wh{X}|+ \int_{\overline{T}} |\wh{Y}| \ll \eps/|S| \;.$$

In the second step of the proof, we use the assumption that
the absolute difference $|\wh X(\xi)-\wh Y(\xi)|$, $\xi \in T$, is small,
and the fact that {$\int_{\overline{T}} |\wh{X}|$ and
 $\int_{\overline{T}} |\wh{Y}|$} are individually small,
to show that $\|\wh{X}-\wh{Y}\|_1 \le O(\eps/|S|)$.
The straightforward inequality  $\|X-Y\|_\infty \le \|\wh{X}-\wh{Y}\|_1$ combined with the concentration of
$X, Y$ completes the proof.

Given the aforementioned, in order to bound $\|\wh{X}-\wh{Y}\|_1$,
it suffices to show that the integral over $T$ is small.
By the assumption of the lemma, we have that $|\wh{X}(\xi)-\wh{Y}(\xi)|$ is point-wise
at most $\eps (Ck \log(k/\eps))^{-2k}$ over $T$.
We obtain an upper bound on $\int_{T} |\wh{X}-\wh{Y}|$
by multiplying this quantity by the volume of $T$. Note
that the volume of $T$ is at most $O(k\log^{1/2}(1/\eps))^{k-1} \prod_{j < k} (1+12s_j(X))^{-1/2})$.
Hence,
$$\int_{T} |\wh{X}-\wh{Y}| \le \eps \cdot \Theta(\log(k/\eps))^{-k} \cdot \prod_{j<k} (1+12s_j(X))^{-1/2} \ll \eps / |S|.$$
Combining the above, we get that $\|X-Y\|_\infty \le \|\wh{X}-\wh{Y}\|_1 =  O(\eps / |S|),$
which implies that the $L_1$ distance between $X$ and $Y$ over $S$ is $O(\eps)$.
The contribution of  $\overline{S}$ to the $L_1$ distance is at most $\eps$,
since both $X$ and $Y$ are in $S$ with probability at least $1-\epsilon$.
This completes the proof of Lemma~\ref{lem:ft-es-dtv}.
\end{proof}

We use this lemma as technical tool for our robust moment-matching lemma. 
As mentioned in the beginning of the section, 
we will need to handle separately the component $k$-CRVs that have a significant
contribution to the variance in some direction. 
This is formalized in the following definition:

\begin{definition}
Let $X$ be a $k$-maximal $(n, k)$-PMD with $X=\sum_{i=1}^n X_i$ and $0 < \delta \le 1.$
For a given $\ell \in [n]$,
we say that a particular component $k$-CRV $X_\ell,$ with $p_{\ell, j} = \Pr[X_{\ell} = e_j],$
is $\delta$-exceptional if there exists a coordinate $j$,  with $1 \le j \le k-1,$
such that $p_{\ell,j} \geq \delta  \cdot \sqrt{1+s_j(X)}.$ We will denote by $E(\delta, X) \subseteq [n]$ the set of
$\delta$-exceptional components of $X$.
\end{definition}

Recall that the variance of the $j^{th}$ coordinate of $X$ is in $[s_j(X)/2, s_j(X)]$.
Therefore, the above definition states that the $j^{th}$ coordinate of $X_i$ has probability mass
which is at least a $\delta$-fraction of the standard deviation across the $j^{th}$ coordinate of $X$.

We remark that for any $(n, k)$-PMD $X$, at most $k/\delta^2$ of its component $k$-CRVs
are $\delta$-exceptional. To see this, we observe that the number of $\delta$-exceptional components
is at most $k-1$ times the number of $(\delta, j)$-exceptional components, i.e., the $k$-CRVs
$X_i$ which are $\delta$-exceptional for the same value of $j$.
We claim that for any $j$, $1\le j \le k-1$, the number of  $(\delta, j)$-exceptional components is at most
$1/\delta^2$. Indeed, let $E_j \subseteq [n]$ denote the corresponding set. Then, we have that
$\sum_{i \in E_j} p_{i, j}^2 \ge \delta^2 |E_j| s_j(X) = \delta^2 |E_j| \sum_{i=1}^n p_{i, j}.$
Noting that $\sum_{i \in E_j} p_{i, j}^2 \le  \sum_{i=1}^n p_{i, j}^2 \le  \sum_{i=1}^n p_{i, j}$, we get
that $ \delta^2 |E_j| \le 1$, thus yielding the claim.

We now have all the necessary ingredients for our robust moment-matching lemma.
Roughly speaking, we partition the coordinate $k$-CRVs of our $k$-maximal PMDs into three groups.
For appropriate values $0< \delta_1 < \delta_2,$ we have: (i) $k$-CRVs that are {\em not} $\delta_1$-exceptional,
(ii) $k$-CRVs that are $\delta_1$-exceptional, but {\em not} $\delta_2$-exceptional, and (iii) 
 $\delta_2$-exceptional $k$-CRVs. For group (i), we will only need to approximate the first two {parameter} moments 
 in order to get a good Taylor approximation, and for group (ii) we need to approximate
 as many as $O_k(\log(1/\eps)/\log\log(1/\eps))$ degree {parameter} moments. 
Group (iii) has $O_k(\log^{3/2}(1/\eps))$ coordinate $k$-CRVs, hence we simply approximate 
the individual (relatively few) parameters each to high precision. Formally, we have:



\begin{lemma} \label{lem:moments-imply-dtv}
Let $X$ and $Y$ be $k$-maximal $(n,k)$-PMDs, satisfying
$1/2 \le (1+s_j(X)) / (1+s_j(Y)) \le 2$ for all $j$, $1\le j \le k-1$.
Let $C$ be a sufficiently large constant.
Suppose that the component $k$-CRVs of $X$ and $Y$
can be partitioned into three groups,
so that $X=X^{(1)}+X^{(2)}+X^{(3)}$
and $Y=Y^{(1)}+Y^{(2)}+Y^{(3)}$,
where $X^{(t)}$ and $Y^{(t)}$, $1 \le t \le 3$,
are PMDs over the same number of $k$-CRVs.
Additionally assume the following: (i) for $t \leq 2$ the random variables $X^{(t)}$ and $Y^{(t)}$
have no $\delta_t$-exceptional components, where
$\delta_1 = \delta_1(\eps) \eqdef  \eps (Ck \log(k/\eps))^{-3k-3}$ and $\delta_2 = \delta_2 (\eps) \eqdef  k^{-1} \log^{-3/4}(1/\eps)$,
and (ii) there is a bijection between the component $k$-CRVs of $X^{(3)}$
with those in $Y^{(3)}$, so that corresponding $k$-CRVs have total variation distance at most {$\eps/3n_3$},
where $n_3$ is the number of such $k$-CRVs.

Finally, suppose that for $t \leq 2$, and all vectors $m \in \Z^{k}_+$ with {$m_k = 0$ and}  $|m|_1\leq K_t$ it holds
$$
|M_m(X^{(t)})-M_m(Y^{(t)})|(2k)^{|m|_1} \leq  \gamma = \gamma(\eps) \eqdef \eps (Ck \log(k/\eps))^{-2k-1} \;,
$$
where $K_1=2$ and $K_2 = K_2(\eps) = C(\log(1/\eps)/\log\log(1/\eps)+k)$.
Then $\dtv(X,Y) \le \eps.$
\end{lemma}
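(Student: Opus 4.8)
The plan is to reduce everything to \lemref{lem:ft-es-dtv} by showing that closeness of the low-degree parameter moments of $X^{(t)}$ and $Y^{(t)}$ (for $t \le 2$), together with the control on the exceptional components in group $(3)$, forces the continuous Fourier transforms $\wh{X}$ and $\wh{Y}$ to be pointwise close on the set $T$ of \lemref{lem:ft-es-dtv}. Since $\wh{X}(\xi) = \prod_t \wh{X^{(t)}}(\xi)$ and each factor has magnitude at most $1$, it suffices to bound $|\wh{X^{(t)}}(\xi) - \wh{Y^{(t)}}(\xi)|$ for each $t$ separately on $T$ and add up. First I would dispose of group $(3)$: since there is a bijection pairing $k$-CRVs with total variation distance at most $\eps/(3n_3)$, a standard hybrid/triangle-inequality argument over the $n_3$ summands gives $\dtv(X^{(3)}, Y^{(3)}) \le \eps/3$, hence $|\wh{X^{(3)}}(\xi) - \wh{Y^{(3)}}(\xi)| \le 2\dtv(X^{(3)},Y^{(3)}) \le 2\eps/3$ uniformly in $\xi$ — actually I want a better bound than this, so instead I would multiply through: $\wh{X} - \wh{Y}$ telescopes as $\sum_t \wh{X^{(1)}}\cdots\wh{X^{(t)}}\wh{Y^{(t+1)}}\cdots - \wh{X^{(1)}}\cdots\wh{X^{(t-1)}}\wh{Y^{(t)}}\cdots$, and each term is bounded by $|\wh{X^{(t)}}(\xi) - \wh{Y^{(t)}}(\xi)|$ because the remaining factors have modulus $\le 1$. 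So the whole problem is: bound $|\wh{X^{(t)}}(\xi) - \wh{Y^{(t)}}(\xi)|$ on $T$ by roughly $\eps (Ck\log(k/\eps))^{-2k}/3$ for each $t \in \{1,2,3\}$.

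For groups $(1)$ and $(2)$, the key is the logarithm of the Fourier transform. Writing $\wh{X^{(t)}}(\xi) = \prod_{i} \wh{X_i}(\xi) = \prod_i \sum_{j=1}^k p_{i,j} e(\xi_j)$, I would factor out $e(\xi_k)$ from each $k$-CRV (legitimate since $p_{i,k} \ge 1/k > 0$ by $k$-maximality and $m_k = 0$ in the moment hypothesis reflects exactly this normalization) and write $\wh{X_i}(\xi) = e(n\xi_k)\prod_i\bigl(p_{i,k} + \sum_{j<k} p_{i,j} e(\xi_j - \xi_k)\bigr)$. Let $z_j = e(\xi_j - \xi_k) - 1$; on $T$ we have $[\xi_j - \xi_k]$ small, so $|z_j| = O(k(1+12s_j(X))^{-1/2}\log^{1/2}(1/\eps))$, which is the crucial smallness coming from the definition of $T$. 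Then $\log \wh{X_i}(\xi) = 2\pi i n\xi_k + \sum_i \log\bigl(1 + \sum_{j<k} p_{i,j} z_j\bigr)$, and expanding the logarithm in a power series gives $\log(\wh{X^{(t)}}/e(n\xi_k)) = \sum_{|m|_1 \ge 1, m_k = 0} c_m M_m(X^{(t)}) \prod_{j<k} z_j^{m_j}$ for universal combinatorial coefficients $c_m$ with $|c_m| \le 1$ (say). The non-$\delta_t$-exceptionality assumption is exactly what bounds the growth of the Taylor remainder: if no component has $p_{i,j} \ge \delta_t\sqrt{1+s_j(X)}$, then the degree-$d$ tail of this series, evaluated on $T$, is controlled by $(\delta_t \cdot \text{something})^{d}$ times convergent factors, and with $K_t$ chosen as in the statement this tail is $\ll \gamma$. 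For the matched low-degree part, $|M_m(X^{(t)}) - M_m(Y^{(t)})| \cdot (2k)^{|m|_1} \le \gamma$ and $|\prod_j z_j^{m_j}| \le (2k)^{-|m|_1}\cdot(\text{polylog factors absorbed into }C)$ — here I'd need the $(2k)^{|m|_1}$ weighting in the hypothesis to cancel the $|z_j|$-growth since $|z_j|$ can be as large as $\Theta(k\log^{1/2}(1/\eps))$ divided by $(1+12s_j)^{1/2}$; the number of monomials of degree $\le K_t$ is at most $k^{K_t}$, so the total matched error is $\le \gamma \cdot k^{K_t} \cdot (\text{polylog})$, and with $K_2 = C(\log(1/\eps)/\log\log(1/\eps) + k)$ this is still $\ll \eps(Ck\log(k/\eps))^{-2k}$ because $k^{K_2} = \eps^{-o(1)}$ by the choice of $K_2$ (this is exactly why the near-logarithmic degree appears). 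Finally, $|\log \wh{X^{(t)}} - \log \wh{Y^{(t)}}|$ small and both FTs having modulus bounded below on $T$ (which follows from \eqref{eqn:ub-ft-pmd}, since on $T$ the exponent is $O(k\cdot s_j \cdot (1+12s_j)^{-1}\log(1/\eps)/k) = O(\log(1/\eps))$, giving $|\wh{X^{(t)}}(\xi)| \ge \poly(\eps)$) lets me exponentiate: $|\wh{X^{(t)}} - \wh{Y^{(t)}}| \le |\wh{X^{(t)}}| \cdot |\log \wh{X^{(t)}} - \log\wh{Y^{(t)}}| \cdot e^{|\cdots|} \ll \eps(Ck\log(k/\eps))^{-2k}$. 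For group $(3)$ I bound $|\wh{X^{(3)}}(\xi) - \wh{Y^{(3)}}(\xi)|$ directly by the hybrid argument over the $n_3$ CRVs, getting $\sum_i 2 \cdot \eps/(3n_3) = 2\eps/3$ — this is too weak, so instead I observe group $(3)$ has only $n_3 = O_k(\log^{3/2}(1/\eps))$ components (as $\delta_2$-exceptional components number at most $k/\delta_2^2 = k^3\log^{3/2}(1/\eps)$), and the per-CRV distance $\eps/(3n_3)$ makes $\dtv(X^{(3)},Y^{(3)}) \le \eps/3$; I'd then fold the factor of $3$ into the constant $C$ by rescaling $\eps$ at the outset, or simply note the final bound in \lemref{lem:ft-es-dtv} has room.

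The main obstacle is the bookkeeping in the Taylor expansion of the log FT for group $(2)$: I need the degree-$K_2$ truncation error to be genuinely negligible despite the Taylor coefficients potentially being large, and this works only because (a) non-$\delta_2$-exceptionality caps each $p_{i,j}$ at $\delta_2\sqrt{1+s_j(X)}$, so the $m$-th term of the series carries a hidden factor roughly $\prod_j (\delta_2 \cdot |z_j| \sqrt{1+s_j})^{?}$ that, combined with $|z_j| \lesssim k(1+12s_j)^{-1/2}\log^{1/2}(1/\eps)$, gives per-degree decay like $(\delta_2 k \log^{1/2}(1/\eps))^{\text{deg}} = (\log^{-1/4}(1/\eps))^{\text{deg}}$, summing geometrically and with the degree-$K_2$ tail bounded by $(\log^{-1/4}(1/\eps))^{K_2}$ which is $\ll \gamma$ precisely because $K_2 \gtrsim \log(1/\eps)/\log\log(1/\eps)$; and (b) the expansion point (factoring out $e(\xi_k)$, equivalently centering at the $k$-maximal coordinate) is the one that makes all this converge — centering elsewhere would lose the $p_{i,k} \ge 1/k$ lower bound that keeps $|1 + \sum_{j<k}p_{i,j}z_j|$ bounded away from $0$ on $T$. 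Everything else is routine: hybrid arguments, geometric series, and invoking \lemref{lem:ft-es-dtv}.
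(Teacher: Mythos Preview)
Your approach is essentially the paper's: handle group $(3)$ by subadditivity of total variation (the paper opens with $\dtv(X,Y)\le\sum_t\dtv(X^{(t)},Y^{(t)})$ and dispatches $t=3$ immediately, which is where your telescoping detour lands anyway), and for $t=1,2$ Taylor-expand $\log\wh{X^{(t)}}$ about the $k$-maximal coordinate, bound low-degree terms via the moment hypothesis and high-degree terms via non-$\delta_t$-exceptionality (this is the paper's Claim~\ref{clm:tech}, and your per-degree decay $\delta_2 k\log^{1/2}(1/\eps)=\log^{-1/4}(1/\eps)$ is exactly right), then invoke Lemma~\ref{lem:ft-es-dtv}. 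Two minor bookkeeping corrections: the Taylor coefficients $c_m=\pm\binom{|m|_1}{m}/|m|_1$ are \emph{not} bounded by $1$---the $(2k)^{|m|_1}$ weight in the hypothesis is there precisely to absorb the multinomial count $\sum_{|m|_1=\ell}\binom{\ell}{m}=(k-1)^\ell$ together with the trivial bound $|z_j|\le 2$, yielding a low-degree contribution of order $\gamma K_t$ rather than your $\gamma k^{K_t}$ (though the cruder bound still suffices); and you do not need $|\wh{X^{(t)}}|$ bounded below to exponentiate, since $|\wh{X^{(t)}}|\le 1$ already gives $|\wh{X^{(t)}}-\wh{Y^{(t)}}|\le |e^\Delta-1|=O(|\Delta|)$ for small log-FT difference $\Delta$.
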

\begin{proof}
First, note that $\dtv(X, Y) \le \sum_{t=1}^3 \dtv(X^{(t)}, Y^{(t)})$, so it suffices to show that
$\dtv(X^{(t)}, Y^{(t)}) = {\eps/3}$, for $t= 1, 2, 3$. This holds trivially for $t=3$, by assumption.
To prove the statement for $t=1, 2$, by Lemma~\ref{lem:ft-es-dtv},
it is sufficient to show that $\wh{X^{(t)}}$ and $\wh{Y^{(t)}}$ are point-wise close on the set $T$,
namely that for all $\xi \in T$ it holds
$|\wh{X^{(t)}}(\xi)-\wh{Y^{(t)}}(\xi)|\leq \eps (Ck \log(k/\eps))^{-2k}$.
To show this, we show separately that $\wh{X^{(t)}}$ is close to $\wh{Y^{(t)}}$ for each $t =1, 2$.

Let $X^{(t)} = \sum_{i \in A_t} X_i$, where $A_t \subseteq [n]$ with $|A_t| = n_t.$
We have the following formula for the Fourier transform of $X^{(t)}$:
\begin{align}
\notag
\wh{X^{(t)}}(\xi) &= \prod_{i \in A_t} \sum_{j=1}^k e(\xi_j)p_{i,j}\\ \notag
&= e(n_t\xi_k)\prod_{i \in A_t} \left(1-\sum_{j=1}^{k-1} (1-e(\xi_j-\xi_k))p_{i,j} \right)\\ \notag
&= e(n_t\xi_k) \exp\left(\sum_{i \in A_t} \log\left(1-\sum_{j=1}^{k-1} (1-e(\xi_j-\xi_k))p_{i,j} \right) \right)\\ \notag
&= e(n_t\xi_k) \exp\left(-\sum_{i \in A_t} \sum_{\ell=1}^{\infty} \frac{1}{\ell}\left(\sum_{j=1}^{k-1} (1-e(\xi_j-\xi_k))p_{i,j} \right)^{\ell} \right)\\
&= e(n_t\xi_k) \exp\left(-\sum_{m \in \Z^{k-1}_+} \binom{|m|_1}{m} \frac{1}{|m|_1} M_m(X^{(t)}) \prod_{j=1}^{k-1}(1-e(\xi_j-\xi_k))^{m_j} \right). \label{FTEqn}
\end{align}
An analogous formula holds for $\wh{Y^{(t)}}$. To prove the lemma, we will show that, {for all $\xi \in T$,}
the two corresponding expressions
inside the exponential of (\ref{FTEqn}) agree for $X^{(t)}$ and $Y^{(t)}$, up to a sufficiently small error.

We first deal with the terms with $|m|_1\leq K_t$, $t=1, 2$.
By the statement of the lemma, for any two such terms we have that
$|M_m(X^{(t)})-M_m(Y^{(t)})| \leq (2k)^{-|m|_1}  \cdot  \eps (Ck \log(k/\eps))^{-2k}$.
Hence, {for any $\xi \in [0, 1]^k$}, the contribution of these terms to the difference is at most
$$
 \eps (Ck \log(k/\eps))^{-2k-1} \sum_{m \in \Z^{k-1}_+, \textrm{ } |m|_1 \le K_t} \binom{|m|_1}{m} (2k)^{-|m|_1} 2^{|m|_1}
\leq K_t  \eps (Ck \log(k/\eps))^{-2k-1} \leq \eps (Ck \log(k/\eps))^{-2k}   \;.
$$
To deal with the remaining terms, we need the following technical claim:
\begin{claim} \label{clm:tech}
Let $X^{(t)}$ be as above.
For $t\leq 2$ and $m \in \Z^{k-1}_+$ with $|m|_1\geq 2$, we have that
$$
|M_m(X^{(t)})|\prod_{j=1}^{k-1} \left({(1+s_j(X))}^{-1/2}\log^{1/2}(1/\eps)\right)^{m_j} \leq \log^{|m|_1/2}(1/\eps) \cdot \delta_t^{|m|_1-2}.
$$
\end{claim}
\begin{proof}
By definition we have that $M_m(X^{(t)}) = \sum_{i \in A_t} \prod_{j=1}^{k-1} p_{i, j}^{m_j}.$
Thus, the claim is equivalent to showing that
$$
\sum_{i \in A_t} \prod_{j=1}^{k-1} \left( p_{i,j} \cdot {(1+s_j(X))}^{-1/2}(X) \right)^{m_j}  \leq  \delta_t^{|m|_1-2}.
$$
Since, by definition, $X^{(t)}$ does not contain any $\delta_t$-exceptional $k$-CRV components,
we have that for all $i \in A_t$ and all $j \in [k-1]$ it holds $p_{i,j} \cdot {(1+s_j(X))}^{-1/2}(X) \le \delta_t.$
Now observe that decreasing any component of $m$ by $1$ decreases
the left hand side of the above by a factor of at least $\delta_t$.
Therefore, it suffices to prove the desired inequality for $|m|_1=2$, i.e., to show that
$$
\sum_{i \in A_t} \left( p_{i,j_1} {(1+s_{j_1}(X))^{-1/2}} \right) \left( p_{i,j_2} {(1+s_{j_2}(X))^{-1/2}}\right) \leq 1.
$$
Indeed, the above inequality holds true, as follows from an application of the Cauchy-Schwartz inequality,
and the fact that
$$
\sum_{i \in A_t} p_{i,j}^2 \leq \sum_{i=1}^n p_{i,j} = s_j(X) {\leq s_j(X) + 1}.
$$
This completes the proof of Claim~\ref{clm:tech}.
\end{proof}
Now, for $\xi \in T$, the contribution to the exponent of (\ref{FTEqn}), coming from terms with $|m|_1 > K_t$, is at most
\begin{equation} \label{eqn:high-moments}
\sum_{\ell>K_t} \sum_{m \in \Z^{k-1}_+ : |m|_1 = \ell} \left( \binom{\ell}{m} \frac{1}{\ell} M_m(X^{(t)}) \prod_{j=1}^{k-1}O([\xi_j-\xi_k])^{m_j} \right) \le
\sum_{\ell>K_t} k^{\ell} \cdot \log^{\ell/2}(1/\eps) \cdot \delta_t^{\ell-2}  \;.
\end{equation}
Equation (\ref{eqn:high-moments}) requires a few facts to be justified.
First, we use the multinomial identity $\sum_{m \in \Z^{k-1}_+ : |m|_1 = \ell} \binom{\ell}{m} = (k-1)^{\ell}.$
We also require the fact that $$|1 - e(\xi_j-\xi_k)| \le O([\xi_j - \xi_k]) \;,$$ $\xi \in [0, 1]^T,$ and recall that
$[\xi_j-\xi_k]<Ck (1+12s_j(X))^{-1/2}\log^{1/2}(1/\eps)$, for $\xi \in T.$ Combining the above with Claim~\ref{clm:tech}
gives (\ref{eqn:high-moments}).

Finally, we claim that
$$
\sum_{\ell>K_t} k^{\ell} \cdot \log^{\ell/2}(1/\eps) \cdot \delta_t^{\ell-2} \leq   \eps (Ck \log(k/\eps))^{-2k} \;,
$$
where the last inequality holds for both $t=1, 2$, as can be readily verified
from the definition of $\delta_1, \delta_2, K_1, K_2.$
Combining with the bounds for smaller $|m|_1$,
we get that the absolute difference between $\wh{X^{(t)}}$ and $\wh{Y^{(t)}}$ on $T$
is at most $\eps (Ck \log(k/\eps))^{-2k}.$
Therefore, Lemma \ref{lem:ft-es} implies that {$\dtv(X^{(t)},Y^{(t)}) \leq \eps.$ 
A suitable definition of $C$ in the statement of the lemma to make this $\eps/3$
completes the proof of Lemma~\ref{lem:moments-imply-dtv}.}
\end{proof}


\begin{remark}
\emph{We note that the quantitive statement of Lemma~\ref{lem:moments-imply-dtv} is crucial for our algorithm:
(i) The set of non $\delta_1$-exceptional components can contain up to $n$ $k$-CRVs. 
Since we only need to approximate only the first $2$ {parameter} moments for this set, this only involves $\poly(n)$ possibilities. 
(ii) The set of $\delta_1$-exceptional but not $\delta_2$-exceptional $k$-CRVs has size $O(k/\delta_1^2)$, which is independent of $n$.
In this case, we approximate the first $O_k (\log (1/\eps)/\log\log(1/\eps))$ {parameter} moments, and the total number of possibilities is independent of $n$ 
and bounded by an appropriate quasipolynomial function of $1/\eps$. (ii) The set of $\delta_2$-exceptional components is sufficiently small, so that
we can afford to do a brute-force grid over the parameters.}
\end{remark}

\subsection{Efficient Construction of a Proper Cover} \label{ssec:cover-dp}

As a warm-up for our proper cover algorithm, we use the structural 
lemma of the previous section to show the following upper bound
on the cover size of PMDs.

\begin{proposition} \label{prop:cover-upper-maximal}
For all $n, k \in \Z_+,$ $k>2,$ and $\eps>0$, there exists an $\eps$-cover
of the set of $(n, k)$-PMDs of size
$n^{O(k^3)} (1/\eps)^{O(k\log(1/\eps)/\log\log(1/\eps))^{k-1}}.$
\end{proposition}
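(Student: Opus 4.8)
The plan is to partition $\mathcal{M}_{n,k}$ into finitely many ``types'' so that any two PMDs of the same type are within $\eps$ in total variation distance via Lemma~\ref{lem:moments-imply-dtv}; taking one representative PMD per non-empty type then yields an $\eps$-cover whose size is at most the number of types. First I would reduce to the $k$-maximal case. Writing an arbitrary $(n,k)$-PMD as $X=\sum_{i=1}^k X^{i}$, where $X^{i}$ is an $i$-maximal $(\ell_i,k)$-PMD and $\sum_i\ell_i=n$, subadditivity of $\dtv$ under sums of independent random variables reduces the problem to $\eps/k$-covering each $i$-maximal piece; the $n^{k}$ choices of $(\ell_1,\dots,\ell_k)$ and the identity of the at most $k$ discarded coordinates (those $j$ with $s_j(X)<\eps/k$, removed at an $O(\eps)$ cost in $\dtv$ as in the setup preceding Lemma~\ref{lem:ft-es}) are recorded in the type and cost only an $n^{O(k)}$ factor.

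Fix a $k$-maximal $X=\sum_{i=1}^n X_i$ with all $s_j\ge\eps/k$. Partition $\{1,\dots,n\}$ into three groups with respect to $X$: group $3$ is $E(\delta_2,X)$, group $2$ is $E(\delta_1,X)\setminus E(\delta_2,X)$, group $1$ is the rest; correspondingly $X=X^{(1)}+X^{(2)}+X^{(3)}$. By the remark following the exceptionality definition, $n_3:=|E(\delta_2,X)|\le k/\delta_2^2=O(k^3\log^{3/2}(1/\eps))$ and $n_2:=|E(\delta_1,X)\setminus E(\delta_2,X)|\le k/\delta_1^2=(k\log(k/\eps))^{O(k)}/\eps^2$, both independent of $n$, whereas $n_1\le n$. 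The type of $X$ records $n_1,n_2,n_3$, plus: (b) for each component of $X^{(3)}$, the cell of an $L_1$-net of $\Delta^k$ of resolution $\Theta(\eps/n_3)$, fine enough that two $k$-CRVs in a common cell have $\dtv\le\eps/(3n_3)$; and (c) for $t=1,2$, the vector $(M_m(X^{(t)}))$ over all $m\in\Z^k_+$ with $m_k=0$ and $|m|_1\le K_t$, each coordinate rounded to an additive grid of width $\gamma(2k)^{-K_t}\le\gamma(2k)^{-|m|_1}$. Two $k$-maximal PMDs $X,Y$ of the same type then satisfy every hypothesis of Lemma~\ref{lem:moments-imply-dtv}: the group sizes agree; group $t\le2$ has no $\delta_t$-exceptional component by construction; (b) supplies a bijection of the components of $X^{(3)},Y^{(3)}$ of pairwise $\dtv\le\eps/(3n_3)$; (c) gives $|M_m(X^{(t)})-M_m(Y^{(t)})|(2k)^{|m|_1}\le\gamma$; and the variance-ratio condition holds because $s_j(X^{(1)})=M_{e_j}(X^{(1)})$ and $s_j(X^{(2)})=M_{e_j}(X^{(2)})$ are matched to within $\ll\eps$ while $|s_j(X^{(3)})-s_j(Y^{(3)})|\le n_3\cdot\Theta(\eps/n_3)=\Theta(\eps)$, so $|s_j(X)-s_j(Y)|\le1$ and the ``$+1$'' in $1+s_j$ provides the required factor-$2$ slack. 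Hence $\dtv(X,Y)\le\eps/k$ (tuning $C$ in the lemma), and the representative of $X$'s type is $\eps$-close to $X$ once the reductions are undone.

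The cover size is the number of types. Fix a maximality class. Recording $n_1,n_2,n_3$ costs $\le n_2n_3=(k\log(k/\eps))^{O(k)}/\eps^2$, since then $n_1$ is forced. Part (b) contributes $\big(O(kn_3/\eps)\big)^{(k-1)n_3}=2^{O(k^4\log^{5/2}(1/\eps))}=(1/\eps)^{O(k^4\log^{3/2}(1/\eps))}$. Part (c) for $t=1$ has $O(k^2)$ moments in $[0,n]$ at width $\eps(k\log(k/\eps))^{-O(k)}$, contributing $n^{O(k^2)}(1/\eps)^{O(k^2)}$ (absorbing a $(\log(1/\eps))^{O(k^3)}$ factor, negligible in the exponent for fixed $k$). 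The dominant term is (c) for $t=2$: the number of degree-$\le K_2$ monomials in $k-1$ variables is $\binom{K_2+k-1}{k-1}\le\big(O(k\log(1/\eps)/\log\log(1/\eps))\big)^{k-1}$ for $\eps$ small relative to $k$, using $K_2=\Theta(\log(1/\eps)/\log\log(1/\eps)+k)$; and since each $M_m(X^{(2)})$ lies in $[0,n_2]$ with $n_2$ independent of $n$, each moment needs only $n_2(2k)^{K_2}/\gamma=(1/\eps)^{O(1)}(k\log(1/\eps))^{O(k)}$ grid points, so this group contributes $(1/\eps)^{O(k\log(1/\eps)/\log\log(1/\eps))^{k-1}}$. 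Multiplying these four quantities, raising to the power $k$ over the maximality classes (turning $n^{O(k^2)}$ into $n^{O(k^3)}$), and multiplying by the $n^{O(k)}$ outer factor gives $n^{O(k^3)}(1/\eps)^{O(k\log(1/\eps)/\log\log(1/\eps))^{k-1}}$; the hypothesis $k>2$ is exactly what makes the $(1/\eps)^{O(k^4\log^{3/2}(1/\eps))}$ and $(1/\eps)^{O(k^2)}$ factors subsumed by the quasi-polynomial term.

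The main obstacle I anticipate is not this arithmetic but making the three-group partition dovetail exactly with Lemma~\ref{lem:moments-imply-dtv}: one must verify that ``$\delta_t$-exceptional'' is read relative to the same full $k$-maximal PMD that the lemma's proof of Claim~\ref{clm:tech} uses, and that the variance-ratio hypothesis survives both discarding the small-$s_j$ coordinates and the coarse rounding of the degree-$1$ moments — the point being that $|s_j(X)-s_j(Y)|\le1$ suffices \emph{because} of the additive $1$ in $1+s_j$. The other delicate observation, the one controlling the final exponent, is that $M_m(X^{(2)})$ ranges over an interval of length $n_2\le k/\delta_1^2$ that does not depend on $n$: this is why the $\big(O_k(\log(1/\eps)/\log\log(1/\eps))\big)^{k-1}$ relevant moments contribute $O(\log(1/\eps))$ bits each rather than $O(\log n)$, yielding a quasi-polynomial dependence on $1/\eps$ (doubly exponential in $k$) instead of an extra power of $n$.
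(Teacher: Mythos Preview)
Your proposal is correct and follows essentially the same route as the paper's proof: reduce to the $i$-maximal case, partition each $k$-maximal PMD into three groups via $\delta_1$- and $\delta_2$-exceptionality, record $(n_1,n_2,n_3)$, the rounded low-degree parameter moments of $X^{(1)}$ and $X^{(2)}$, and fine roundings of the individual parameters of $X^{(3)}$, then invoke Lemma~\ref{lem:moments-imply-dtv} and count types. One cosmetic difference: the paper also stores the nearest integer to $\log_2(1+s_j(X))$ so as to verify the variance-ratio hypothesis of Lemma~\ref{lem:moments-imply-dtv} directly, whereas you recover $s_j(X)$ to additive $O(\eps)$ from the recorded degree-$1$ moments of the three pieces and then use $|s_j(X)-s_j(Y)|\le 1\Rightarrow (1+s_j(X))/(1+s_j(Y))\in[1/2,2]$; both are valid, and your version is slightly more economical. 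Your step of discarding coordinates with $s_j<\eps/k$ is harmless but not actually needed for the argument (neither Lemma~\ref{lem:moments-imply-dtv} nor the counting requires a lower bound on $s_j$). Finally, since you want per-maximal-class error $\eps/k$, the parameters $\delta_1,\delta_2,\gamma,K_2$ should be instantiated at $\eps'=\eps/k$ as the paper does; this does not affect the stated asymptotics.
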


\begin{remark}
\emph{We remark that, for the sake of simplicity, we have not optimized the dependence of 
our cover upper bound on the parameter $n.$
With a slightly more careful argument, one can easily obtain
a cover size upper bound $n^{O(k^2)}(1/\eps)^{O(k\log(1/\eps)/\log\log(1/\eps))^{k-1}}.$ 
On the other hand, the asymptotic dependence of our upper bound 
on the error parameter $\eps$ is optimal. In Section~\ref{sec:cover-lb}, we show a lower bound of 
$(1/\eps)^{\Omega_k(\log(1/\eps)/\log\log(1/\eps))^{k-1}}.$}
\end{remark}

\begin{proof}[Proof of Proposition~\ref{prop:cover-upper-maximal}.]
Let $X$ be an arbitrary $(n, k)$-PMD. We can write $X$ as $\sum_{i=1}^k X^{i},$
where $X^i$ is an $i$-maximal $(n^{(i)}, k)$-PMD, where $\sum_{i=1}^k n^{(i)} = n.$
By the subadditivity of the total variation distance for independent random variables, 
it suffices to  show that the set of $i$-maximal $(n, k)$-PMDs has an $\eps/k$-cover of size
$n^{O(k^2)} (1/\eps)^{O(k\log(k/\eps)/\log\log(k/\eps))^{k-1}}.$

To establish the aforementioned upper bound 
on the cover size of $i$-maximal PMDs, we focus without loss of generality 
on the case $i=k$. The proof proceeds by an appropriate application of Lemma~\ref{lem:moments-imply-dtv}
and a counting argument. The idea is fairly simple: for a $k$-maximal $(n, k)$-PMD
$X$, we start by approximating the means $s_j(X)$, $1 \le j \le k-1$, within a factor of $2$,
and then impose an appropriate grid on its low-degree {parameter} moments.

We associate to such a $k$-maximal $(n, k)$-PMD $X$ the following data, 
and claim that if $X$ and $Y$ are two $k$-maximal PMDs with the same data, 
then their total variational distance is at most $\eps' \eqdef \eps/k.$ 
An  $\eps'$-cover for the set of $k$-maximal $(n, k)$-PMDs 
can then be obtained by taking one representative $X$ for each possible setting of the data in question. 
{Let us denote $\delta'_1 \eqdef \delta_1(\eps')$, $\delta'_2 \eqdef \delta_2(\eps')$,
$\gamma' \eqdef \gamma(\eps')$, $K'_1 \eqdef K_1 = 2$, and $K_2' \eqdef K_2(\eps')$, 
where the functions $\delta_1(\eps)$, $\delta_2(\eps)$, $\gamma(\eps)$, and $K_2(\eps)$
are defined in the statement of Lemma~\ref{lem:moments-imply-dtv}.}

In particular, for any $X,$ we partition the coordinates of $[n]$ into the sets 
$A_1 = \overline{E(\delta'_1, X)}$, $A_2 = \overline{E(\delta'_2, X)} \setminus A_1,$ 
and $A_3 = E(\delta'_2, X).$ We use these subsets to define $X^{(1)}$, $X^{(2)}$ and $X^{(3)}$ 
on $n_1,n_2,n_3$ $k$-CRVs respectively.

Now, to $X$ we associate the following data:
\begin{itemize}
\item $n_1,n_2,n_3$.
\item The nearest integer to $\log_2(s_j(X)+1)$ for each $j$, $1 \le j \le k-1.$
\item The nearest integer multiple of $\gamma'/(2k)^{|m|_1}$ to each of the $M_m(X^{(1)})$ for $|m|_1\leq 2$.
\item The nearest integer multiple of $\gamma'/(2k)^{|m|_1}$ to $M_m(X^{(2)})$ for $|m|_1\leq K'_2$.
\item Rounding of each of the $p_{i,j}$ for $i\in A_3$ to the nearest integer multiple of $\eps'/(kn_3)$.
\end{itemize}
First, note that if $X$ and $Y$ are $k$-maximal $(n, k)$-PMDs with the same associated data, 
then they are partitioned as $X=X^{(1)}+X^{(2)}+X^{(3)},$ $Y=Y^{(1)}+Y^{(2)}+Y^{(3)},$ 
where $X^{(t)},Y^{(t)}$, $t \le 2$, have no $\delta'_t$-exceptional variables and have 
the same number of component $k$-CRVs. 
Furthermore, $1+s_j(X)$ and $1+s_j(Y)$ differ by at most a factor of $2$ for each $j$. 
We also must have that $|M_m(X^{(t)})-M_m(Y^{(t)})|(2k)^{|m|_1} \leq \gamma'$ for $|m|_1 \leq K'_t$, 
and there is a bijection between the variables in $X^{(3)}$ and those in $Y^{(3)}$ 
so that corresponding variables differ by at most $\eps'/(kn_3)$ in each parameter 
(and, thus, differ by at most $\eps'/n_3$ in total variation distance). 
Lemma~\ref{lem:moments-imply-dtv} implies that if $X$ and $Y$ have the same data, then $\dtv(X,Y) \le \eps'.$ 
Hence, this does provide an $\eps'$-cover for the set of $k$-maximal $(n, k)$-PMDs.

We are left to prove that this cover is of the appropriate size. 
To do that, we need to prove a bound on the number of possible values 
that can be taken by the above data. We have at most $n$ choices for each $n_i$, 
and $O(\log(n))$ choices for each of the $k$ rounded values of $\log_2(s_j(X)+1)$ 
(since each is an integer between $0$ and $\log_2(n)+1$). 
$X^{(1)}$ has $O(k^2)$ {parameter} moments with $|m|_1\leq 2$, 
and there are at most $O(kn/\gamma')$ options for each of them (since each {parameter} moment is at most $n$). 
There are $O((k+K'_2)^{k-1})$ {parameter} moments of $X^{(2)}$ that need to be considered.
By Claim~\ref{clm:tech}, each such {parameter} moment has magnitude at most $O(k/{\delta'_1}^2)$, 
and, by our aforementioned rounding, needs to be evaluated to additive accuracy
at worst $\gamma'/(2k)^{K'_2}.$ 
Finally, note that $n_3  = |A_3| \le k/{\delta'_2}^2,$
since the coordinates of $A_3$ are
 $\delta'_2$-exceptional under $X$.
Each of the corresponding $O(k^2/{\delta'_2}^2)$ parameters $p_{i,j}$ for $i\in A_3$
need to be approximated to precision $\eps'/(kn_3)$. 
We remark that the number of such parameters is less than 
$O(k\log(1/\eps')/\log\log(1/\eps'))^{k-1}$, since $k\geq 3$.
Putting this together, we obtain that the number of possible values for this data is at most
$n^{O(k^2)} (1/\eps')^{O(k\log(1/\eps')/\log\log(1/\eps'))^{k-1}}.$
This completes the proof of Proposition~\ref{prop:cover-upper-maximal}.
\end{proof}

The proof of Proposition~\ref{prop:cover-upper-maximal} can be made algorithmic 
using Dynamic Programming, yielding an efficient construction of a proper $\eps$-cover for the set of all $(n, k)$-PMDs.

\begin{theorem} \label{thm:dp}
Let $S_1,S_2,\ldots,S_n$ be sets of $k$-CRVs. 
Let $\mathcal{S}$ be the set of $(n, k)$-PMDs of the form $\sum_{\ell=1}^n X_{\ell}$, 
where $X_{\ell}\in S_{\ell}.$ There exists an algorithm that runs in time
{$$n^{O(k^3)} \cdot (k/\eps)^{O(k^3\log(k/\eps)/\log\log(k/\eps))^{k-1}} \cdot  \max_{\ell \in [n]} |S_\ell| \;,$$}
and returns an $\eps$-cover of $\mathcal{S}.$
\end{theorem}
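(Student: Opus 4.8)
The plan is to convert the counting argument behind Proposition~\ref{prop:cover-upper-maximal} into a dynamic program (DP) over the $n$ coordinate CRVs, arranged so that the cover it outputs is a subset of $\mathcal{S}$ (hence proper). Recall how that argument works: one writes a PMD $X=\sum_{\ell}X_\ell$ as $\sum_{i=1}^k X^i$ with $X^i$ its $i$-maximal part, splits each $X^i$ into the sub-PMDs $X^{(1)},X^{(2)},X^{(3)}$ of its non-$\delta_1'$-exceptional, $\delta_1'$-but-not-$\delta_2'$-exceptional, and $\delta_2'$-exceptional component CRVs (with $\eps'\eqdef\eps/k$), and records the following \emph{data profile}: for every $i\in[k]$, the nearest integer to $\log_2(s_j(X^i)+1)$ for $j<k$; the counts $n_1^{(i)},n_2^{(i)},n_3^{(i)}$; the parameter moments $M_m(X^{(1)})$ for $|m|_1\le2$ and $M_m(X^{(2)})$ for $|m|_1\le K_2'$, each rounded to a multiple of $\gamma'/(2k)^{|m|_1}$; and the parameter vectors of the CRVs forming $X^{(3)}$, each rounded to a multiple of $\eps'/(kn_3^{(i)})$. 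By Lemma~\ref{lem:moments-imply-dtv}, applied within each maximal class and summed over $i$, any two PMDs in $\mathcal{S}$ with the same data profile lie within total variation distance $\eps$. So the cover should hold exactly one representative PMD $\sum_\ell X_\ell\in\mathcal{S}$ per \emph{achievable} data profile, and the job is to enumerate the achievable profiles and exhibit one witness for each.

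The first obstacle is a circularity that blocks a direct DP: whether a given $k$-CRV is $\delta_t'$-exceptional depends on the global sums $s_j(X^i)$, which are not known while the CRVs are being scanned. I would break this by guessing upfront, for every $i\in[k]$ and $j<k$, the value $g_j^{(i)}$ of the nearest integer to $\log_2(s_j(X^i)+1)$; since each $s_j$ lies in $[0,n]$ this is only $O(\log n)^{k^2}$ guesses in total. Conditioned on a guess, one replaces $1+s_j(X^i)$ in the exceptionality test by a fixed proxy $\tilde s_j^{(i)}$ (say $2^{g_j^{(i)}}$), which makes the classification of each $k$-CRV deterministic; for the correct guess, $\tilde s_j^{(i)}$ agrees with $1+s_j(X^i)$ up to a factor of $2$, and the resulting constant-factor perturbations of the thresholds are absorbed by the slack in $\delta_1',\delta_2'$ (the free constant $C$ in Lemma~\ref{lem:moments-imply-dtv}). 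Ties in maximality are resolved by assigning a CRV that is $i$-maximal for several $i$ to the least such $i$. Incorrect guesses only add extra PMDs to the cover, multiplying its size by $O(\log n)^{k^2}=\poly(n)$.

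For a fixed guess, the DP scans $\ell=1,\dots,n$ and maintains as its state the \emph{partial} data profile determined by the choices $X_1,\dots,X_\ell$: the running counts $n_1,n_2,n_3$ for each $i$, the running (rounded) low-degree parameter moments of the $X^{(1)}$'s and $X^{(2)}$'s, and the ordered lists of (rounded) parameter vectors already committed to the $X^{(3)}$'s. To process coordinate $\ell$, for each candidate $X_\ell\in S_\ell$ one reads off its maximal class $i$, decides from the guess which of $A_1,A_2,A_3$ it belongs to, and updates the corresponding part of the state (bump a count; add $\prod_j p_{\ell,j}^{m_j}$ into the relevant moments; or append a rounded parameter vector). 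The one quantitative point is the accumulated rounding error in the moment entries: each partial moment is rounded to a grid finer than its target grid by a factor equal to an a~priori bound on the number of terms summed into it, so that the total error over the scan stays below the target accuracy. Here the bookkeeping of the proof of Proposition~\ref{prop:cover-upper-maximal} does exactly what is needed: since $n_2^{(i)}\le k/(\delta_1')^2$ and $n_3^{(i)}\le k/(\delta_2')^2$ are independent of $n$, the refinement factors for the $X^{(2)}$- and $X^{(3)}$-entries are $n$-independent and get absorbed into the $(k/\eps)^{O(k^3\log(k/\eps)/\log\log(k/\eps))^{k-1}}$ term, while only the $O(k^2)$ degree-$\le2$ moments of the $X^{(1)}$'s are ever summed over as many as $n$ CRVs, contributing at most an extra $n^{O(k^2)}$ factor. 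Hence the number of reachable states is bounded by the same quantity that bounds the number of data profiles in Proposition~\ref{prop:cover-upper-maximal}; storing a back-pointer at each state lets us recover, from each reachable final state, a realizing sequence $(X_1,\dots,X_n)$ and thus an explicit member of $\mathcal{S}$.

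Finally one checks correctness and timing. For any $X=\sum_\ell X_\ell\in\mathcal{S}$, take the guess to be its true $(\log_2(s_j(X^i)+1))$-profile; then $X$ itself realizes its own data profile under that guess, so the output contains some $Y$ with the identical profile, and Lemma~\ref{lem:moments-imply-dtv} applied per maximal class gives $\dtv(X,Y)\le\sum_i\dtv(X^i,Y^i)\le\eps$; and every PMD we output lies in $\mathcal{S}$ by construction. For the running time: iterate over $O(\log n)^{k^2}$ guesses; for each, run $n$ rounds, each of which touches at most (\#states) states and tries each of the $\le\max_\ell|S_\ell|$ candidate CRVs with $\poly(k)$ arithmetic per update, followed by a cheap reconstruction pass; multiplying and absorbing the $\poly(n,k)$ and $(\log n)^{k^2}$ overheads yields the claimed $n^{O(k^3)}(k/\eps)^{O(k^3\log(k/\eps)/\log\log(k/\eps))^{k-1}}\cdot\max_\ell|S_\ell|$. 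I expect the two points flagged above to be the crux: eliminating the $s_j$-circularity by guessing, and calibrating the DP's rounding grids so that they are simultaneously fine enough for the final $\eps$ guarantee and coarse enough — exploiting that $|A_2|$ and $|A_3|$ do not depend on $n$ — for the state space to stay within the target size.
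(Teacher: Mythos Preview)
Your proposal is correct and matches the paper's approach essentially point for point: guess the scales $s_j(X^i)$ up front to break the circularity in the exceptionality test, then run a DP over $\ell=1,\dots,n$ whose state is the additive ``data profile'' (counts, rounded low-degree parameter moments per group, rounded parameters for the $\delta_2'$-exceptional CRVs), with the rounding grids refined by the a~priori term-count bounds so that only the degree-$\le 2$ moments pick up the $n^{O(k^2)}$ factor. The paper differs only in bookkeeping details: it defines $D_G(X)=\sum_\ell D_G(X_\ell)$ so the profile is \emph{exactly} additive (rounding once per CRV rather than at partial sums), stores the $A_3$ part as a count vector rather than an ordered list, and filters the final states by an explicit consistency check on the guess (its Condition~\ref{cond:guess-approx}) instead of keeping all guesses' outputs; none of this affects correctness or the stated time bound.
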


Observe that if we choose each $S_i$ to be a $\delta$-cover for the set of all $k$-CRVs, with $\delta = \eps/n,$ by the 
subadditivity of the total variation distance for independent random variables, we obtain an $\eps$-cover
for $\mathcal{M}_{n, k}$, the set of all $(n, k)$-PMDs. It is easy to see that the set of $k$-CRVs has 
an explicit $\delta$-cover of size $O(1/\delta)^k.$ 
This gives the following corollary:

\begin{corollary} \label{cor:cover-proper-alg}
There exists an algorithm that, on input $n, k \in \Z_+,$ $k>2,$ and $\eps>0,$ computes a proper $\eps$-cover
for the set  $\mathcal{M}_{n, k}$ and runs in time 
{$n^{O(k^3)} \cdot (k/\eps)^{O(k^3\log(k/\eps)/\log\log(k/\eps))^{k-1}} \;.$}
\end{corollary}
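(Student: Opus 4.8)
The plan is to obtain the corollary as an essentially immediate consequence of Theorem~\ref{thm:dp}, by instantiating its per-coordinate $k$-CRV sets with copies of one explicit net on the space of all $k$-CRVs. Concretely, set $\delta := \eps/(2n)$ and build a $\delta$-cover $\mathcal{C}_\delta$ of the set of $k$-CRVs as follows: identify a $k$-CRV with its parameter vector $q = (q_1,\dots,q_k)$ in the simplex, round each of $q_1,\dots,q_{k-1}$ to the nearest multiple of $\delta/k$, and put the remaining mass on coordinate $k$. Since the total variation distance between two $k$-CRVs with parameter vectors $q,q'$ equals $\tfrac12\|q-q'\|_1$, every $k$-CRV lies within $\delta$ of a point of $\mathcal{C}_\delta$, and $|\mathcal{C}_\delta| = O(1/\delta)^k = (n/\eps)^{O(k)}$. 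Taking $S_\ell := \mathcal{C}_\delta$ for all $\ell \in [n]$ and running the algorithm of Theorem~\ref{thm:dp} with error parameter $\eps/2$ returns a set $\mathcal{C}$ that is an $(\eps/2)$-cover of $\mathcal{S} := \{\sum_{\ell=1}^n X_\ell : X_\ell \in S_\ell\}$; one checks that the returned cover may be chosen to consist of honest PMDs built from the $S_\ell$'s, so that $\mathcal{C} \subseteq \mathcal{S} \subseteq \mathcal{M}_{n,k}$ and the cover is proper.

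The second step is correctness of $\mathcal{C}$ as an $\eps$-cover of $\mathcal{M}_{n,k}$. Given an arbitrary $(n,k)$-PMD $X = \sum_{\ell=1}^n X_\ell$, each component $X_\ell$ is within total variation distance $\delta$ of some $X'_\ell \in S_\ell$; coupling $X_\ell$ with $X'_\ell$ for each $\ell$ and using the subadditivity of total variation distance under sums of independent random variables gives $\dtv(X, \sum_\ell X'_\ell) \le n\delta = \eps/2$, with $\sum_\ell X'_\ell \in \mathcal{S}$. Since $\mathcal{C}$ is an $(\eps/2)$-cover of $\mathcal{S}$, the triangle inequality for $\dtv$ produces a point of $\mathcal{C}$ within $\eps$ of $X$, as required.

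For the running time, plug $\max_{\ell\in[n]}|S_\ell| = |\mathcal{C}_\delta| = n^{O(k)}(1/\eps)^{O(k)}$ into the bound of Theorem~\ref{thm:dp}: the factor $n^{O(k)}$ is absorbed into $n^{O(k^3)}$, and since $k>2$ we have $k-1 \ge 2$, so the exponent $O(k^3\log(k/\eps)/\log\log(k/\eps))^{k-1}$ already dominates $O(k)$ and the factor $(1/\eps)^{O(k)}$ is absorbed into $(k/\eps)^{O(k^3\log(k/\eps)/\log\log(k/\eps))^{k-1}}$; this yields exactly the claimed time bound. I do not expect any genuine obstacle at this level: all the real content lies in Theorem~\ref{thm:dp}, whose dynamic program must track the coarse per-coordinate variances $s_j$, the rounded low-degree parameter moments of the non-exceptional parts, and the individually-rounded parameters of the $O_k(\log^{3/2}(1/\eps))$ exceptional $k$-CRVs, with correctness driven by Lemma~\ref{lem:moments-imply-dtv}. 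The only points needing care in deducing the corollary itself are the explicit construction of $\mathcal{C}_\delta$ and its size bound, the $\eps$-versus-$2\eps$ rescaling, and checking that the $n^{O(k)}(1/\eps)^{O(k)}$ overhead does not affect the asymptotics.
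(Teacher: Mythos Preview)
Your proposal is correct and follows essentially the same approach as the paper: the paper's argument is the short paragraph immediately preceding the corollary, which likewise takes each $S_\ell$ to be a $\delta$-cover of the set of $k$-CRVs with $\delta = \eps/n$, invokes subadditivity of total variation, and plugs the resulting $O(1/\delta)^k = (n/\eps)^{O(k)}$ bound into Theorem~\ref{thm:dp}. Your version is slightly more explicit (the $\eps/2$--$\eps/2$ split, the concrete grid construction for $\mathcal{C}_\delta$, and the runtime absorption check), but there is no substantive difference.
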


\begin{proof}[Proof of Theorem~\ref{thm:dp}]
The high-level idea is to split each such PMD into its $i$-maximal PMD components 
and approximate each to total variation distance $\eps' \eqdef \eps/k.$ 
We do this by keeping track of the appropriate data, 
along the lines of Proposition~\ref{prop:cover-upper-maximal},
and using dynamic programming.

{For the sake of readability, we start by introducing the notation that is used throughout this proof.
We use $X$ to denote a generic $(n, k)$-PMD, and $X^{i}$, $1 \le i \le k$,
to denote its $i$-maximal PMD components. For an $(n, k)$-PMD and a vector 
$m = (m_1, \ldots, m_k) \in \Z^k_+$, we denote its $m^{th}$ {parameter} moment by 
$M_m(X) = \sum_{\ell=1}^n \sum_{j=1}^k p_{\ell, j}^{m_j}.$ Throughout this proof,
we will only consider {parameter} moments of $i$-maximal PMDs, in which case the 
vector $m$ of interest will by construction satisfy 
$m_i = 0$, i.e., $m = (m_1, \ldots, m_{i-1}, 0, m_{i+1}, \ldots, m_k).$
}

In the first step of our algorithm,
we guess approximations to the quantities $1+ s_j(X^i)$ to within a factor of $2$, 
where $X^i$ is intended to be the $i$-maximal PMD component of our final PMD $X$. 
We represent these guesses in the form of a  matrix $G = (G_{i, j})_{1 \le i \neq j \le k}$. {Specifically, 
we take $G_{i, j}=(2^{a_{i, j}}+3)/4$ for each integer $a_{i, j} \geq 0$, where each $a_{i, j}$ is bounded from above by
$O(\log n).$} 
For each fixed guess $G$, we proceed as follows:
For $h \in [n],$ we denote by $\mathcal{S}_{h}$ the set of all $(h, k)$-PMDs
of the form $\sum_{\ell=1}^{h} X_{\ell}$, where $X_{\ell} \in S_{\ell}.$
For each $h \in [n],$ we compute the set of all possible (distinct) data 
$D_G(X)$, where $X \in \mathcal{S}_{h}.$ The data $D_G(X)$ consists of the following:
\begin{itemize}
\item The number of $i$-maximal $k$-CRVs of $X$, for each $i$, $1 \le i \le k.$

\item Letting $X^i$ denote the $i$-maximal PMD component of $X$, 
we partition the $k$-CRV components of $X^i$ into three sets based on whether or not they are 
$\delta'_1$-exceptional or $\delta'_2$-exceptional {\em with respect to our guess matrix $G$} for $1+s_j(X^i).$
Formally, we have the following definition:
\begin{definition}
Let $X^{i}$ be an $i$-maximal $(h_i, k)$-PMD with $X^{i} = \sum_{\ell \in A^i} X_{\ell}$ and $0 < \delta \le 1.$
We say that a particular component $k$-CRV $X_{\ell}$, $\ell \in A^i$, is $\delta$-exceptional with respect to $G = (G_{i, j}),$ 
if there exists a coordinate $j \neq i$, $1 \leq j \leq k$, such that {$p_{\ell, j} \geq \delta \cdot \sqrt{G_{i, j}}.$} We will denote by
$E(\delta, G) \subseteq A^i$ the set of $\delta$-exceptional coordinates of $X^i.$
\end{definition}
With this notation, we partition $A^i$ into the following three sets:
$A^i_1 = \overline{E(\delta'_1, G)}$,  $A^i_2 = \overline{E(\delta'_2, G)} \setminus A^i_1$, and $A^i_3 = E(\delta'_2, G).$
For each $i$, $1 \leq i \leq k,$ 
we store the following information:
\begin{itemize}
\item $n^i_1 = |A^i_1|,$ $n^i_2 = |A^i_2|,$ and $n^i_3 = |A^i_3|.$

\item Approximations $\widetilde{s}_{j, i}$ of the quantities 
$s_j(X^i)$, for each $j \neq i$, $1 \le j \le k$ to within an additive error of $(h/4n)$.

\item Approximations of the {parameter} moments  $M_m\left( (X^i)^{(1)} \right)$, for all $m = (m_1, \ldots, m_k) \in \Z^k_+$
with $m_i = 0$ and  $|m|_1\leq 2$, to within an additive $\left(\gamma'/(2k)^{|m|_1}\right) \cdot (n^i_1/n).$

\item Approximations of the {parameter} moments $M_m\left((X^i)^{(2)}\right)$, for all $m = (m_1, \ldots, m_k) \in \Z^k_+$
with $m_i = 0$ and $|m|_1\leq K'_2$ to within an additive $\left(\gamma'/(2k)^{|m|_1}\right) \cdot (n^i_2 \cdot {\delta'_1}^2/2k)$.
\item Rounding of each of the parameters $p_{\ell,j},$ for each $k$-CRV $X_{\ell}$, $\ell \in A^i_3$, 
to the nearest integer multiple of $\eps' {\delta'_2}^2/2k^2.$
\end{itemize}
\end{itemize}
Note that $D_G(X)$ can be stored as a vector of counts and moments. 
In particular, for the data associated with $k$-CRVs in $A^i_3,$ $1 \leq i \leq k,$ 
we can store a vector of counts of the possible roundings of the parameters using a sparse representation. 

We emphasize that our aforementioned approximate description needs to satisfy the following property:
for independent PMDs $X$ and $Y$, we have that  $D_G(X+Y)=D_G(X)+D_G(Y)$. 
This property is crucial, as it allows us to store only one PMD as a representative for each distinct data vector.
This follows from the fact that, if the property is satisfied, 
then $D_G(X+Y)$ only depends on the data associated with $X$ and $Y.$ 

To ensure this property is satisfied, for a PMD $X = \sum_{\ell=1}^n X_{\ell},$ where $X_{\ell}$ is a $k$-CRV, 
we define  $D_G(X) = \sum_{\ell=1}^n D_G(X_{\ell}).$
We now need to define $D_G(W)$ for a $k$-CRV $W$. 
For $D_G(W)$, we store the following information:
\begin{itemize}

\item The value of $i$ for which $W$ is $i$-maximal. 

\item Whether or not $W$ is $\delta'_1$-exceptional and $\delta'_2$-exceptional with respect to $G.$

\item  {$s_j(W)= \Pr[W = j]$ rounded down to a multiple of $1/4n$,} for each $j \neq i$, $1 \le j \le k$.

\item If $W$ is not $\delta'_1$-exceptional with respect to $G$, 
the nearest integer multiple of $\gamma'/(n(2k)^{|m|_1})$ 
to $M_m(W)$ for each $m \in \Z^k_+$, with $m_i = 0$ and $|m|_1\leq 2.$

\item If $W$ is $\delta'_1$-exceptional but not $\delta'_2$-exceptional with respect to $G$, 
the nearest integer multiple of  $\left(\gamma'/(2k)^{|m|_1}\right) \cdot ({\delta'_1}^2/2k)$ 
to $M_m(W),$ for each $m \in \Z^k_+$ with $m_i = 0$ and $|m|_1\leq K'_2.$

\item If $W$ is $\delta'_2$-exceptional with respect to $G$, we store roundings 
of each of the probabilities $\Pr[W=j]$ to the nearest integer multiple of $\eps' {\delta'_2}^2/2k.$
\end{itemize}


Given the above detailed description, we are ready to describe our dynamic programming based algorithm.
Recall that for each $h$, $1\leq h \leq n$, we compute sets of all possible (distinct) data 
$D_G(X)$, where $X \in \mathcal{S}_{h}.$
We do the computation by a dynamic program that works as follows: 
\begin{itemize}

\item At the beginning of step $h,$ we have a set $\mathcal{D}_{h-1}$ of all possibilities of $D_G(X)$ for PMDs 
of the form $X=\sum_{\ell=1}^{h-1} X_{\ell},$ where $X_{\ell} \in S_{\ell},$ { that have $1+\widetilde{s}_{j, i}^{D_G(X)} \leq 2 G_{i,j}$}. Moreover,
for each $D \in \mathcal{D}_{h-1}$, we have a representative PMD $Y_D$, 
given in terms of its $k$-CRVs, that satisfies $D_G(Y_D)=D.$ 

\item To compute $\mathcal{D}_{h}$, we proceed as follows:
We start by computing $D_G(X_h)$, for each $X_h \in S_h.$ 
Then, we compute a list of possible data for  $\mathcal{D}_{h}$ as follows: 
For each $D \in \mathcal{D}_{h-1}$ and $X_h \in S_h$, 
we compute the data $D+D_G(X_h)$, and the $k$-CRVs of the 
PMD $Y_D+X_h$ that has this data, 
since $D_G(Y_D+X_h)=D+D_G(X_h)$. We then remove duplicate data from this list, 
arbitrarily keeping one PMD that can produce the data. 
{We then remove data $D$ where $1+\widetilde{s}_{j, i}^{D} \geq 2 G_{i,j}$.}
This gives our set of possible data $\mathcal{D}_{h}.$ 
Now, we note that  $\mathcal{D}_{h}$ contains all possible
data of PMDs of the form $\sum_{\ell=1}^{h} X_{\ell},$ where each $X_{\ell} \in S_{\ell},$  { that have $1+\widetilde{s}_{j, i}^{D_G(X)} \leq 2 G_{i,j}$},
and for each distinct possibility, we have an explicit PMD that has this data.

\item After step $n$, for each $D \in \mathcal{D}_n,$ 
we output the data and the associated explicit PMD, 
if the following condition is satisfied:

\begin{condition} \label{cond:guess-approx}
For each $i, j \in \Z_+$, with $1 \leq i \neq j \leq k,$ 
it holds (a) $G_{i, j} \leq  1 + \max \{0, \widetilde{s}_{j, i}^{D_G(X)}-1/4 \}$ and 
(b) $1+\widetilde{s}_{j, i}^{D_G(X)} \leq 2 G_{i,j},$
where ${\widetilde{s}_{j, i}}^{D_G(X)}$ is the approximation to $s_j(X^i)$ in $D_G(X),$
and $G_{i, j}$  is the guess for $1+s_j(X^i)$  in $G.$
\end{condition}

\end{itemize}
We claim that the above computation, performed for all values of $G$,
outputs an $\eps$-cover of the set $\mathcal{S}$.
This is formally established using the following claim:
\begin{claim} \label{clm:data-gives-cover}
\begin{itemize}
\item[(i)] For any $X,Y \in \mathcal{S}$, if  $D_G(X)=D_G(Y)$ 
and $D_G(X)$ satisfies Condition~\ref{cond:guess-approx}, 
then $\dtv(X,Y) \leq \eps$.

\item[(ii)] For any $X \in \mathcal{S}$, there exists a $G$ such that $D_G(X)$ satisfies
for $i, j \in \Z_+$, with $1 \leq i \neq j \leq k,$ 
(a) $G_{i, j} \leq  1 + \max \{0, \widetilde{s}_{j, i}^{D_G(X)}-3/4 \}$ and 
(b) $1+\widetilde{s}_{j, i}^{D_G(X)} \leq 2 G_{i,j},$ hence also
Condition~\ref{cond:guess-approx}.
\end{itemize}
\end{claim}

\begin{remark}
\emph{Note that Condition (a) in statement (ii) of the claim above is slightly stronger than that in 
Condition~\ref{cond:guess-approx}. This slightly stronger condition will be needed for the anonymous games 
application in the following section.}
\end{remark}

\begin{proof}
To prove (i), we want to use Lemma \ref{lem:moments-imply-dtv} to show that that for all $i \in [k],$ 
the $i$-maximal components of $X$ and $Y$ are close, i.e., that $\dtv(X^i,Y^i) \leq \eps/k.$ 
To do this, we proceed as follows:

We first show that $\frac{1}{2} \leq (1+s_j(X^i))/(1+s_j(Y^i)) \leq 2$. 
By the definition of $\widetilde{s}_{j, i}^{D_G(X)}$, 
for each $i$-maximal $k$-CRV $X_{\ell} \in A^i,$ 
we have $s_j(X_{\ell}) - \frac{1}{4n} \leq \widetilde{s}_{j, i}^{D_G(X)} \leq s_j(X_{\ell}).$ 
Thus, for $X^i=\sum_{\ell \in A^i} X_{\ell},$ 
we have that $s_j(X^i) - (1/4) \leq \widetilde{s}_{j, i}^{D_G(X)} \leq s_j(X^i).$ 
Since $s_j(X^i) \geq 0,$ we have that $\max\{0, s_j(X^i) - 1/4 \} \leq \widetilde{s}_{j, i}^{D_G(X)} \leq s_j(X^i).$ 
Combining this with Condition~\ref{cond:guess-approx} 
yields that 
\begin{equation} \label{eq:Gs-bound} 
G_{i,j} \leq 1+s_j(X^i) \leq 2G_{i,j}. 
\end{equation} 
Since an identical inequality holds for $Y,$ we have that 
$\frac{1}{2} \leq (1+s_j(X^i))/(1+s_j(Y^i)) \leq 2.$

We next show that the set of coordinates $A^i_1$ for $X$ 
does not contain any $\delta'_1$ exceptional variables for $X^i.$ 
For all $\ell \in A^i_1,$ 
since $\ell$ is not $\delta'_1$-exceptional with respect to $G,$ 
using (\ref{eq:Gs-bound}), we have that 
$p_{\ell, j} \leq \delta'_1 \cdot \sqrt{G_{i, j}} \leq \sqrt{1+s_j(X^i)}.$ 
Similarly, it follows that $A^i_2$ for $X$ does not contain any $\delta'_2$-exceptional variables.
The same statements also directly follow for $Y.$

Now, we obtain bounds on the size of the $A^i_t$'s, $t=1, 2, 3.$ 
We trivially have $|A^i_1| \leq n.$ 
From (\ref{eq:Gs-bound}), we have that all variables in $A^i_2$ are $\delta'_1$-exceptional with respect to $G_{i,j}.$
 If we denote by $E_j \subseteq A^i_2$ the set of $\ell \in A^i_2$ with 
 $p_{\ell,j} \geq \delta'_1 \sqrt{G_{i,j}},$ then using (\ref{eq:Gs-bound}), we have that
\begin{eqnarray*}
s_j(X^i) = \sum_{\ell \in A^i} p_{\ell, j} \geq 
\sum_{\ell \in A^i} p_{\ell, j}^2 
\geq \sum_{\ell \in E_j} p_{\ell,j}^2 
\geq \delta'^2_1 |E_j| G_{i,j} 
\geq \delta'^2_1 |E_j| (1+s_j(X^i))/2 
\geq s_j(X^i) \cdot \delta'^2_1 |E_j|/2.
\end{eqnarray*} 
Thus, $|E_j| \leq 2/\delta'^2_1.$
Since $A^i_2 = \bigcup_{j=1}^k E_j,$ 
we have $|A^i_2| \leq 2k/\delta'^2_1.$
Similarly, we have $|A^i_3| \leq 2k/\delta'^2_2.$

For $\ell \in A^i_1$, $D_G(X_\ell)$ contains an approximation to $M_m(X_\ell)$ 
for each $m \in \Z^{k}_+$ {with $m_i=0$ and} $|m|_1 \leq 2,$ 
to within accuracy $\gamma'/(2n(2k)^{|m|_1})$. 
Since $|A^i_1| \leq n,$  we have that 
$D_G(X^i)$ contains an approximation to $M_m\left((X^i)^{(1)}\right)$ to within $\gamma'/(2(2k)^{|m|_1}).$ 
Since a similar bound holds for $(Y^i)^{(1)},$ 
and $D_G\left((Y^i)^{(1)}\right) = D_G\left( (X^i)^{(1)} \right),$ 
we have that $|M_m\left((X^i)^{(1)}\right)-M_m\left((Y^i)^{(1)}\right)|  \leq \gamma'/(2k)^{|m|_1}.$

Similarly, for $\ell \in A^i_2$, $D_G(X_\ell)$ contains an approximation to $M_m(X_\ell)$ for each 
$m \in \Z^{k}_+$ {with $m_i=0$ and} $|m|_1 \leq K'_2$ to within accuracy 
$(1/2) \cdot \left(\gamma'/(2k)^{|m|_1}\right) \cdot ({\delta'_1}^2/2k).$ 
Since $|A^i_2| \leq 2k/\delta'^2_2$, $D_G(X^i)$ contains an approximation 
to $M_m\left((X^i)^{(2)}\right)$ to within $\gamma'/(2(2k)^{|m|_1}).$
Since a similar bound holds for $(Y^i)^{(2)}$ 
and $D_G\left((Y^i)^{(2)}\right) = D_G\left((X^i)^{(2)}\right),$ 
we have that $|M_m\left((X^i)^{(2)}\right)-M_m\left((Y^i)^{(2)}\right)|   \leq \gamma'/(2k)^{|m|_1}.$

Finally, for $\ell \in A^i_3$, $D_G(X_\ell)$ contains an approximation to $p_{\ell,j}$ for all $j \neq i$ to within $\eps' {\delta'_2}^2/4k^2.$ 
The counts of variables with these approximations are the same in $(X^i)^{(3)}$ and $(Y^i)^{(3)}.$ 
So, there is bijection $f$ from $A^i_3(X)$ to $A^i_3(Y)$ 
such that an $\ell'  =f(\ell)$ has $D_G(X_\ell)=D_G(Y_{\ell'}).$
Then, we have that $\dtv(X_\ell, Y_{\ell'}) \leq \sum_{j \neq i} \eps' {\delta'_2}^2/2k^2 \leq \eps' {\delta'_2}^2/2k \leq \eps'/|A^i_3|.$

We now have all the necessary conditions to apply Lemma \ref{lem:moments-imply-dtv}, 
yielding that $\dtv(X^i,Y^i) \leq \eps/k.$ By the sub-additivity of total variational distance, 
we have $\dtv(X,Y) \leq \eps,$ proving statement (i) of the claim.

To prove (ii), it suffices to show that for any $i, j$ there is a $G_{i,j}$ that satisfies the inequalities claimed. 
Recall that $G_{i,j}$ takes values of the form $(2^a+3)/4$ for an integer $a \geq 0.$ 
For $a=0,$ $G_{i,j}=1$ and the inequality $G_{i, j} \leq  1 + \max \{0,\widetilde{s}_{j, i}^{D_G(X)}-3/4 \}$ 
 is satisfied for any value of $\widetilde{s}_{j, i}^{D_G(X)}.$ 
 When $a \geq 1$, $G_{i,j} > 1,$ so the inequality $G_{i, j} \leq  1 + \max \{0,\widetilde{s}_{j, i}^{D_G(X)}-3/4 \}$
 is only satisfied when $G_{i, j} \leq  1 + \widetilde{s}_{j, i}^{D_G(X)}-3/4,$ 
 i.e., when $\widetilde{s}_{j, i}^{D_G(X)} \geq G_{i, j} - 1/4 = (2^a + 2)/4 = (2^{a-1}+1)/2 $.
The second inequality in (i) is satisfied when 
$\widetilde{s}_{j, i}^{D_G(X)} \leq 2 G_{i,j} - 1 = 2 \cdot(2^{a} + 1)/4 = (2^{a} + 1)/2$. 

Summarizing, for $a=0$, we need that $\widetilde{s}_{j, i}^{D_G(X)} \in [0,1],$ 
and for $a \geq 1$, we need that $\widetilde{s}_{j, i}^{D_G(X)} \in [(2^{a-1}+1)/2, (2^{a} + 1)/2]$.
So, there is a $G_{i,j}=(2^{a_{i,j}}+3)/4$ for which the required inequalities are satisfied.
Thus, there is a $G$ for which we get the necessary inequalities for all $1 \leq i,j \leq k$ with $i \neq j.$ 
This completes the proof of (ii).
\end{proof}

We now bound the running time:

\begin{claim} \label{clm:dp-cover-size}
For a generic $(n, k)$-PMD $X$,
the number of possible values taken by $D_G(X)$ considered is at most ${n}^{O(k^3)}({k}/\eps)^{O({k^3}\log(1/\eps)/\log\log(1/\eps))^{k-1}}.$
\end{claim}
\begin{proof}

For a fixed $G$, we consider the number of possibilities for $D_G(X^i)$ for each $1 \leq i \leq k$.

For each $j \neq i,$ we approximate $s_j(X^i)$ up to an additive $1/(4n).$ 
Since $0 \leq s_j(X_i) \leq n,$ there are at most $4n^2$ possibilities. 
For all such $j$ we have $O(n^{2k})$ possibilities.

We approximate the {parameter} moments of $M_m\left( (X^i)^{(1)} \right)$ as an integer 
multiple of $\gamma'/(n(2k)^{|m|_1})$ for all $m$ with $m_1 \leq 2.$
For each such $m,$ we have $0 \leq M_m\left( (X^i)^{(1)} \right) \leq n,$ 
so there are at most $n^2(2k)^{|m|_1}/\gamma'= n^2 (k\log(1/\eps))^{O(k)} (1/\eps)$ possibilities.
 There are $O(k^2)$ such $m,$ so we have $n^{O(k^2)} \cdot (k \log(1/\eps)^{O(k^3)} (1/\eps)^{O(k^2)}$ possibilities.

We approximate the {parameter} moments of $M_m\left( (X^i)^{(2)} \right)$ as a multiple of $\left(\gamma'/(2k)^{|m|_1}\right) \cdot ({\delta'_1}^2/2k)$ 
for each $m$ with $|m|_1 \leq K'_2.$ The number of $k$-CRVs in $(X^i)^{(2)}$ is $|A^i_2| \leq 2k/\delta'^2_1$ from the proof of Claim \ref{clm:data-gives-cover}. 
So, for each $m$, we have $0 \leq M_m\left( (X^i)^{(2)} \right) \leq |A^i_2|,$ 
and there are at most $(2k)^{K'_2+2}/(\gamma'\delta'^2_1 \delta'^2_2)=k^{O(k+\ln(k/\eps)/\ln\ln(k/\eps))} \ln(1/\eps)^{O(k)}/\eps=(k/\eps)^{O(k)}$ possibilities. 
Since there are at most $$K'^{k-1}_2=O((\ln(k/\eps)/\ln\ln(k/\eps)+k)^{k-1}$$ such moments, there are  $(k/\eps)^{O(k\ln(k/\eps)/\ln\ln(k/\eps)+k^2)^{k-1}}$ possibilities.

We approximate each $X_\ell$ for $\ell \in A^i_3$ as a $k$-CRV 
whose probabilities are multiples of $\eps {\delta'_2}^2/2k^2.$ 
So, there are $(2k^2/(\eps \delta'_2)^k=(k/\eps)^{O(k)}$ possible $k$-CRVs. 
Since there may be $|A^i_3| \leq 2k/\delta'^2_2=2k^2 \log^{3/2}(k/\eps)$ such $k$-CRVs, 
there are $(k/\eps)^{O(k^3\log^{3/2}(k/\eps))}$ possibilities.

Multiplying these together, for every $G,$ 
there are at most $n^{O(k^2)} (k/\eps)^{O(k\ln(k/\eps)/\ln\ln(k/\eps)+k^2)^{k-1}}$ possible values of $D_G(X^i).$ 
Hence, there are at most $n^{O(k^3)} (k/\eps)^{O(k^3 \ln(k/\eps)/\ln\ln(k/\eps))^{k-1}}$ possible values of $D_G(X)$ for a given $G.$
Finally, there are $O(\log n)^{k^2}$ possible values of $G_{i,j},$ 
since  $G_{i,j}=(2^{a_{i,j}}+3)/4,$ for integers $a_{i,j},$ and we do not need to consider $G_{i,j} > n.$ 
Therefore, the number of possible values of $D_G(X)$ is at most  $n^{O(k^3)} \cdot (k/\eps)^{O(k^3\ln(k/\eps)/\ln\ln(k/\eps))^{k-1}}.$
\end{proof}

The runtime of the algorithm is dominated by the runtime of the substep of each step $h,$ 
where we calculate $D+D_G(X_h)$ for all $D \in \mathcal{D}_{h-1}$ and $X_h \in S_h.$
Note that $D$ and $D_G(X_h)$ are vectors with $O(K_2'^k)=O(\log(k/\eps)/\log \log(k/\eps)+k)^k$ non-zero coordinates. 
So, the runtime of step $h$ is at most 
$$|\mathcal{D}_{h-1}| \cdot |S_h| \cdot O((K'_2)^k)= |S_h| \cdot n^{O(k^3)} \cdot (k/\eps)^{O(k^3\ln(k/\eps)/\ln\ln(k/\eps))^{k-1}} \;,$$ 
by Claim \ref{clm:dp-cover-size}. 
The overall runtime of the algorithm is thus $n^{O(k^3)} \cdot (k/\eps)^{O(k^3\ln(k/\eps)/\ln\ln(k/\eps))^{k-1}} \cdot \max_h |S_h|.$
This completes the proof of Theorem~\ref{thm:dp}.
\end{proof}

\subsection{An EPTAS for Nash Equilibria in Anonymous Games} \label{ssec:anonymous}

In this subsection, we describe our EPTAS for computing Nash equilibria in anonymous games:

\begin{theorem} \label{thm:anon}
There exists an {$n^{O(k^3)} \cdot (k/\eps)^{O(k^3\log(k/\eps)/\log\log(k/\eps))^{k-1}}$}-time 
algorithm for computing a (well-supported) $\eps$-Nash Equilibrium in an $n$-player, $k$-strategy anonymous game.
\end{theorem}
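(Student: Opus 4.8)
\textbf{Proof plan for Theorem~\ref{thm:anon}.}

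The plan is to reduce the computation of a well-supported $\eps$-Nash equilibrium to a search over the proper cover of PMDs constructed in Corollary~\ref{cor:cover-proper-alg}, following the cover-based approach of Daskalakis and Papadimitriou but exploiting our near-optimal cover size and the specific ``stylized'' form of our dynamic program. First I would recall the standard reduction: a mixed strategy profile $(\delta_1,\ldots,\delta_n)$ is a well-supported $\eps$-Nash equilibrium if and only if, writing $X_i$ for the $k$-CRV with $\Pr[X_i=e_\ell]=\delta_i(\ell)$, each player $i$ only places mass on strategies $\ell$ that are approximately best responses against the PMD $X_{-i}=\sum_{j\neq i}X_j$, i.e.\ $\E[u^i_{\ell}(X_{-i})]\geq \max_{\ell'}\E[u^i_{\ell'}(X_{-i})]-\eps$. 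Since payoffs are in $[0,1]$ and $\dtv$ between two PMDs translates directly into an $O(\dtv)$ additive change in every such expectation (because $u^i_\ell$ is bounded), it suffices to find a profile whose aggregate PMD is close in total variation to one ``witnessed'' by a cover element, and whose each-player best-response condition is verified approximately.

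The key structural point, already used in~\cite{DaskalakisP08,DaskalakisP2014}, is that in an anonymous game one may restrict attention to equilibria that are ``organized'' by the cover: there exists a well-supported $\eps$-Nash equilibrium in which the $n$ players are partitioned into groups, each group's members play a common mixed strategy drawn from a small discretized set, and the resulting PMD is (close to) an element of the proper cover. Concretely, I would run the dynamic program of Theorem~\ref{thm:dp} with each $S_\ell$ taken to be a suitably fine ($\delta=\eps/\poly(n,k)$) cover of all $k$-CRVs, so that the DP enumerates all achievable ``data'' vectors $D_G(X)$ for PMDs of the form $\sum_{\ell=1}^{h}X_\ell$ with $X_\ell\in S_\ell$. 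The crucial extra ingredient over a plain cover is that the DP table is indexed by \emph{additive} data: $D_G(X+Y)=D_G(X)+D_G(Y)$. This lets me, for each player $i$, reconstruct $D_G(X_{-i})$ from $D_G(X)$ by subtracting off $D_G(X_i)$, and hence approximately evaluate every expected payoff $\E[u^i_\ell(X_{-i})]$ by first recovering from the data an explicit representative PMD $Y$ with $\dtv(Y,X_{-i})$ small, and then computing $\E[u^i_\ell(Y)]$ by the $\poly(n^k)$ dynamic program already noted for anonymous-game payoffs. I would then search over all data vectors $D$ in the DP output, and for each candidate, check whether the associated explicit profile is a well-supported $O(\eps)$-Nash equilibrium against the reconstructed neighbourhood PMDs; by the existence result, at least one candidate passes, and rescaling $\eps$ by a constant gives the claimed guarantee.

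The running time is dominated by the DP of Theorem~\ref{thm:dp} and the per-candidate payoff verification: the number of data vectors is $n^{O(k^3)}(k/\eps)^{O(k^3\log(k/\eps)/\log\log(k/\eps))^{k-1}}$ by Claim~\ref{clm:dp-cover-size}, the DP runs in time of this order times $\max_\ell|S_\ell|=(n/\eps)^{O(k)}$, and each verification costs $\poly(n^k)$ per player; all of this folds into the stated bound $n^{O(k^3)}(k/\eps)^{O(k^3\log(k/\eps)/\log\log(k/\eps))^{k-1}}$. The main obstacle I anticipate is establishing the existence of a well-supported $\eps$-Nash equilibrium whose aggregate PMD is ``captured'' by the data of our DP \emph{with the specific discretization we use} — i.e.\ a rounding/stability lemma showing that from any exact equilibrium one can round each player's mixed strategy into the relevant $\delta$-cover $S_\ell$ while (a) only perturbing each $X_{-i}$ by $O(\eps)$ in total variation, so best responses are preserved up to $O(\eps)$, and (b) keeping the group structure consistent with Condition~\ref{cond:guess-approx} and the maximal-component partition. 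The subtlety, inherited from the non-proper-to-proper gap, is that the rounding of individual CRVs must be done carefully so that $\delta_1$- and $\delta_2$-exceptional components stay correctly classified and the moment approximations remain within the tolerances $\gamma'$, $\eps'\delta_2'^2/2k^2$ demanded by Lemma~\ref{lem:moments-imply-dtv}; this is exactly where the quantitative strength of our moment-matching lemma is needed, and I would devote the bulk of the proof to verifying it, largely mirroring the argument of Proposition~\ref{prop:cover-upper-maximal} but now tracking an explicit near-equilibrium profile rather than an arbitrary PMD.
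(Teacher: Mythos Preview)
Your overall framework is right, and you correctly identify the two key ingredients: the additivity $D_G(X+Y)=D_G(X)+D_G(Y)$, which lets you subtract off a single player to get data for $X_{-i}$, and a rounding lemma guaranteeing an $\eps/5$-equilibrium whose CRVs lie in a fine grid. But there is a genuine gap in the search step.

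You plan to ``search over all data vectors $D$ in the DP output, and for each candidate, check whether the associated explicit profile is a well-supported $O(\eps)$-Nash equilibrium.'' This does not work. The explicit profile $Y_D$ stored by the DP is an \emph{arbitrary} representative with $D_G(Y_D)=D$; two PMDs with the same data are $\eps$-close in total variation, but their constituent $k$-CRVs can be completely unrelated. So even when a true (rounded) equilibrium $X'$ has $D_G(X')=D$, the representative $Y_D$ need not be an equilibrium at all, and your verification may reject every $D$. The existence argument you appeal to only puts $X'$ in the fibre over $D$, not $Y_D$ itself.

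The paper's fix is a \emph{second} DP pass. For the candidate $D$, first compute for each player $i$ the set $S_i\subseteq S$ of discretized strategies $X_i$ such that $D-D_G(X_i)$ is a legal data vector for an $(n-1,k)$-PMD and $X_i$ is a $3\eps/5$-best response to the stored representative $Y_{D-D_G(X_i)}$. Then rerun the DP of Theorem~\ref{thm:dp} with these restricted $S_i$'s and check whether $D$ is still achievable. If it is, the representative from this second run is a profile in which \emph{every} player's strategy lies in her best-response set, hence (using Lemma~\ref{lem:best-response-dtv} and Claim~\ref{clm:reqs}) an $\eps$-equilibrium. Completeness holds because each $X'_i$ from the rounded true equilibrium lands in $S_i$. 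This double-DP idea is the missing algorithmic step; the running time is unchanged up to the stated bound.

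Separately, the obstacle you flag as the ``bulk of the proof'' --- rounding individual CRVs so that $\delta_1$/$\delta_2$-exceptional classifications and moment tolerances are preserved --- is not an obstacle. The rounding is the naive one (Claim~\ref{clm:nash}): round each probability to a multiple of $\eps/(10kn)$, giving $\dtv(X_{-i},X'_{-i})\le\eps/10$ by subadditivity. The DP's data abstraction then handles exceptional components automatically; you never need to argue that rounding preserves the partition. The only subtlety in that direction is that subtracting one CRV may slightly shift $\tilde s_{j,i}$, which is why the paper outputs the step-$(n-1)$ table $V_{G,n-1}$ under the slightly looser Condition~\ref{cond:guess-approx} (Claim~\ref{clm:reqs}); this is a two-line check, not the main difficulty.
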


This subsection is devoted to the proof of Theorem~\ref{thm:anon}.

We compute a well-supported $\eps$-Nash equilibrium, 
using a procedure similar to~\cite{DaskalakisP2014}. 
We start by using a dynamic program very similar to that of our Theorem~\ref{thm:dp} 
in order to construct an $\eps/10$-cover. We iterate over this $\eps/10$-cover. 
For each element of the cover, we compute a set of possible $\eps/5$-best responses. 
Finally, we again use the dynamic program of Theorem~\ref{thm:dp} to check if we can construct 
this element of the cover out of best responses. 
If we can, then we have found an $\eps$-Nash equilibrium. 
Since there exists an $\eps/5$-Nash equilibrium in our cover, this procedure must produce an output. 

In more detail, to compute the aforementioned best responses, 
we use a modification of the algorithm in Theorem~\ref{thm:dp}, 
which produces output at the penultimate step. 
The reason for this modification is the following: 
For the approximate Nash equilibrium computation, 
we need the data produced by the dynamic program, not just the cover of PMDs. 
Using this data, we can subtract the data corresponding to each candidate best response. 
This allows us to approximate the distribution of the sum of the other players strategies, 
which we need in order to calculate the players expected utilities.

Recall that a mixed strategy profile for a $k$-strategy anonymous game can be represented as a set of $k$-CRVs, 
$\{ X_i \}_{i \in [k]},$ where the $k$-CRV $X_i$ describes the mixed strategy for player $i.$ 
Recall that a mixed strategy profile is an $\eps$-approximate Nash equilibrium, 
if for each player $i$ we have $\E[u^i_{X_i}(X_{-i})] \geq \E[ u^i_{\ell}(X_{-i})]-\eps,$ for $\ell \in [k],$ 
where $X_{-i}=\sum_{j \in [n] \setminus \{ i \}} X_j$ is the distribution of the sum of other players strategies. 
A strategy profile is an  $\eps$-well-supported Nash equilibrium if for each player $i$, $\E[u^i_{\ell'}(X_{-i})] \geq \E[ u^i_{\ell}(X_{-i})]-\eps$ for each $\ell \in [k]$ and $e_{\ell'}$ in the support of $X_i$.
If this holds for one player $i$, then we call $X_i$ an $\eps$(-well-supported) best response to $X_{-i}$.
\begin{lemma} \label{lem:best-response-dtv} 
Suppose that $X_i$ is a $\delta$-best response to $X_{-i}$ for player $i$. 
Then, if an $n-1$ PMD $Y_{-i}$ has $\dtv(X_{-i},Y_{-i}) \leq \eps$, 
$X_i$ is a $(\delta+2\eps)$-best response to $Y_{-i}$. 
If, additionally, a $k$-CRV $Y_i$ has $\Pr[Y_i=e_j]=0$ 
for all $j$ with $\Pr[X_i=e_j]=0,$ 
then $Y_i$ is a $(\delta+2\eps)$-best response to $Y_{-i}.$ 
 \end{lemma}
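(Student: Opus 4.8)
The plan is to bound how much a player's expected utility can change when the distribution of the other players' strategies moves from $X_{-i}$ to $Y_{-i}$. The key observation is that the utility function $u^i_\ell$ takes values in $[0,1]$, so for any $\ell$ we have
$$
\bigl| \E[u^i_\ell(X_{-i})] - \E[u^i_\ell(Y_{-i})] \bigr| = \Bigl| \textstyle\sum_{x} u^i_\ell(x)\bigl(\Pr[X_{-i}=x] - \Pr[Y_{-i}=x]\bigr) \Bigr| \le \|u^i_\ell\|_\infty \cdot \|X_{-i} - Y_{-i}\|_1 \le 2\dtv(X_{-i},Y_{-i}) \le 2\eps,
$$
where the last step uses the standard identity $\|p-q\|_1 = 2\dtv(p,q)$. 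So replacing $X_{-i}$ by $Y_{-i}$ changes every expected utility $\E[u^i_\ell(\cdot)]$ by at most $2\eps$, additively.

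Next I would feed this into the definition of best response. By hypothesis $X_i$ is a $\delta$-best response to $X_{-i}$, meaning that for every $\ell' $ in the support of $X_i$ and every $\ell \in [k]$, $\E[u^i_{\ell'}(X_{-i})] \ge \E[u^i_\ell(X_{-i})] - \delta$. Applying the bound above to both sides, $\E[u^i_{\ell'}(Y_{-i})] \ge \E[u^i_{\ell'}(X_{-i})] - 2\eps \ge \E[u^i_\ell(X_{-i})] - \delta - 2\eps \ge \E[u^i_\ell(Y_{-i})] - \delta - 4\eps$. That only gives $\delta + 4\eps$; to get the stated $\delta + 2\eps$ I would instead note that the quantity of interest is the gap $\E[u^i_\ell(\cdot)] - \E[u^i_{\ell'}(\cdot)]$, and perturbing each distribution independently — actually the cleaner route is to observe that the difference of the two expectations $\E[u^i_\ell(X_{-i})] - \E[u^i_{\ell'}(X_{-i})]$ equals $\sum_x (u^i_\ell(x) - u^i_{\ell'}(x))\Pr[X_{-i}=x]$, and since $u^i_\ell - u^i_{\ell'}$ takes values in $[-1,1]$, this changes by at most $\|X_{-i}-Y_{-i}\|_1 \le 2\eps$ when we switch to $Y_{-i}$. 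Hence the best-response gap degrades by only $2\eps$, giving that $X_i$ is a $(\delta+2\eps)$-best response to $Y_{-i}$.

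For the final claim, suppose additionally that $Y_i$ is supported only on coordinates $e_j$ where $X_i$ is supported. Then for every $e_{\ell'}$ in the support of $Y_i$, $e_{\ell'}$ is also in the support of $X_i$, so the $(\delta+2\eps)$-best-response inequality $\E[u^i_{\ell'}(Y_{-i})] \ge \E[u^i_\ell(Y_{-i})] - (\delta+2\eps)$ that we just established for support points of $X_i$ applies in particular to all support points of $Y_i$. By definition this says exactly that $Y_i$ is a $(\delta+2\eps)$-best response to $Y_{-i}$, completing the proof.

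The argument is essentially routine; the only subtle point — and the one I would be careful about — is getting the constant right ($2\eps$ rather than $4\eps$), which requires looking at the \emph{difference} of two utility expectations against a single distribution rather than bounding each expectation's perturbation separately. A secondary bookkeeping point is that ``best response'' here means \emph{well-supported} best response (the inequality is quantified over support points), so one must check that the support-containment hypothesis $\Pr[Y_i=e_j]=0$ whenever $\Pr[X_i=e_j]=0$ is precisely what lets the well-supported condition transfer from $X_i$ to $Y_i$; no approximation is lost in this transfer since it is just a restriction of the quantifier to a smaller set of strategies.
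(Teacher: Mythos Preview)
Your proof is correct and follows essentially the same approach as the paper. The one minor difference is how the constant $2\eps$ (rather than $4\eps$) is obtained: the paper uses the sharper fact that for a $[0,1]$-valued function $f$ one has $|\E_P[f]-\E_Q[f]|\le \dtv(P,Q)$ (not $2\dtv$), then bounds each of the two expectations' perturbations by $\eps$ separately and chains the inequalities; you instead bound the perturbation of the \emph{gap} $\E[u^i_\ell-u^i_{\ell'}]$ directly via $\|u^i_\ell-u^i_{\ell'}\|_\infty\cdot\|X_{-i}-Y_{-i}\|_1\le 2\eps$, which is an equivalent route to the same constant.
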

\begin{proof}
Since $u^i_\ell(x) \in [0,1]$ for $\ell \in [k]$ and any $x$ in the support of $X_{-i}$, 
we have that $\E[u^i_{\ell}(X_{-i})] - \E[u^i_{\ell}(Y_{-i})] \leq \dtv(X_{-i}, Y_{-i})).$ 
Similarly, we have $\E[u^i_{\ell}(X_{-i})] - \E[u^i_{\ell}(Y_{-i})] \leq \dtv(X_{-i}, Y_{-i}).$ 
Thus, for all $e_{\ell'}$ in the support of $X_i$ and all $\ell \in [k],$ we have
$$\E[u^i_{\ell'}(Y_{-i})] \geq \E[u^i_{\ell'}(X_{-i})]-\eps \geq \E[ u^i_{\ell}(X_{-i})] - \eps - \delta \geq \E[ u^i_{\ell}(Y_{-i})] - 2\eps - \delta .$$
That is, $X_i$ is a $(\delta+2\eps)$-best response to $Y_{-i}.$ 
Since the support of $Y_i$ is a subset if the support of $X_i,$ 
$Y_i$ is also a $(\delta+2\eps)$-best response to $Y_{-i}.$
\end{proof}

We note that by rounding the entries of an actual Nash Equilibrium, 
there exists an $\eps/5$-Nash equilibrium 
where all the probabilities of all the strategies are integer multiples of $\epsilon/(10kn)$:
\begin{claim} \label{clm:nash} 
There is an $\eps/5$-well-supported Nash equilibrium $\{ X_i\},$ 
where the probabilities $\Pr[X_i=e_j]$ 
are multiples of $\eps/(10kn),$ for all $1 \leq i \leq n$ and $1 \leq j \leq k.$
\end{claim}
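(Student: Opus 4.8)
\textbf{Proof plan for Claim~\ref{clm:nash}.}
The plan is to start from an \emph{exact} well-supported Nash equilibrium and round it, keeping track of how much the rounding perturbs each player's expected payoffs. First I would invoke the classical fact (which follows, e.g., from Nash's theorem together with the existence of well-supported equilibria in finite games) that the anonymous game has some exact well-supported Nash equilibrium, given by a mixed strategy profile $\{X_i\}_{i \in [n]}$, i.e., $k$-CRVs with $q_{i,j} \eqdef \Pr[X_i = e_j]$. Then I would define rounded probabilities $\widetilde{q}_{i,j}$ by rounding each $q_{i,j}$ down to the nearest multiple of $\eta \eqdef \eps/(10kn)$, and fixing up one coordinate per player (say, one coordinate already in the support of $X_i$, for instance an $j$ maximizing $q_{i,j}$) so that the $\widetilde{q}_{i,j}$ still sum to $1$ and the support only shrinks. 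Call the resulting $k$-CRVs $\widetilde{X}_i$ and the profile $\{\widetilde{X}_i\}$; note $\Pr[\widetilde{X}_i = e_j]$ is a multiple of $\eta$ for all $i,j$, and $\Pr[\widetilde{X}_i = e_j] = 0$ whenever $\Pr[X_i = e_j] = 0$.

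Next I would bound the total variation distance between $X_{-i} \eqdef \sum_{j \ne i} X_j$ and $\widetilde{X}_{-i} \eqdef \sum_{j \ne i} \widetilde{X}_j$. Since the $L_1$ distance between the distributions of $X_j$ and $\widetilde{X}_j$ (viewed as vectors of length $k$) is at most $2k\eta$ by construction (each coordinate moves by at most $\eta$, and the fix-up coordinate absorbs at most $(k-1)\eta$), subadditivity of total variation distance for sums of independent random variables gives $\dtv(X_{-i}, \widetilde{X}_{-i}) \le \sum_{j \ne i} \dtv(X_j, \widetilde{X}_j) \le (n-1)\cdot k\eta \le nk\eta = \eps/10$. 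Because each payoff function $u^i_\ell$ takes values in $[0,1]$, for every $\ell$ we get $|\E[u^i_\ell(X_{-i})] - \E[u^i_\ell(\widetilde{X}_{-i})]| \le \dtv(X_{-i}, \widetilde{X}_{-i}) \le \eps/10$. Now for any $\ell'$ in the support of $\widetilde{X}_i$ — hence in the support of $X_i$ — and any $\ell \in [k]$, the well-supported equilibrium property of $\{X_i\}$ gives $\E[u^i_{\ell'}(X_{-i})] \ge \E[u^i_\ell(X_{-i})]$, so
$$\E[u^i_{\ell'}(\widetilde{X}_{-i})] \ge \E[u^i_{\ell'}(X_{-i})] - \eps/10 \ge \E[u^i_\ell(X_{-i})] - \eps/10 \ge \E[u^i_\ell(\widetilde{X}_{-i})] - \eps/5.$$
Thus $\{\widetilde{X}_i\}$ is an $\eps/5$-well-supported Nash equilibrium with all probabilities multiples of $\eta = \eps/(10kn)$, which is exactly the claim. (Alternatively, one can phrase the last two steps as a direct application of Lemma~\ref{lem:best-response-dtv} with $\delta = 0$ and $\eps/10$ in place of $\eps$.)

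The only mildly delicate point — and the one I would be most careful about — is the fix-up step that restores a valid probability distribution after rounding down: rounding down alone produces a deficit of at most $(k-1)\eta < 1$ per player, which we add back to a single already-supported coordinate, so the support never grows (guaranteeing the ``$\Pr[\widetilde{X}_i = e_j]=0$ whenever $\Pr[X_i=e_j]=0$'' condition) and the per-player $L_1$ perturbation stays $O(k\eta)$. Everything else is routine: subadditivity of $\dtv$ under independent sums, the $[0,1]$-boundedness of payoffs translating $\dtv$ bounds into payoff bounds, and bookkeeping of the constants to land at $\eps/5$.
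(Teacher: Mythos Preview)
Your proposal is correct and follows essentially the same approach as the paper: start from an exact Nash equilibrium (which is automatically $0$-well-supported), round all but one coordinate down to the grid and absorb the deficit in the maximal-probability coordinate so the support can only shrink, then use subadditivity of $\dtv$ plus Lemma~\ref{lem:best-response-dtv} with $\delta=0$ to get the $\eps/5$ bound. The paper's proof is exactly this, with the same choice of fix-up coordinate and the same arithmetic.
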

\begin{proof}
By Nash's Theorem, there is a Nash equilibrium $\{ Y_i \}$. 
We construct $\{X_i\}$ from $\{Y_i\}$ as follows: 
If $Y_i$ is $\ell$-maximal, then for every $j \neq \ell$, we set $\Pr[X_i=e_j]$ to be $\Pr[Y_i=e_j]$ 
rounded down to a multiple of  $\eps/(10kn)$ and $\Pr[X_i=e_\ell] = 1 - \sum_{j \neq \ell} \Pr[X_j = e_j].$
Now, we have $\dtv(X_i,Y_i) \leq \eps/(10n)$ and the support of $X_i$ is a subset of the support of $Y_j.$ 
By the sub-additivity of total variational distance, for every $i$ we have 
$\dtv(X_{-i},Y_{-i}) \leq \eps/10.$ Since $\{ Y_i \}$ is a Nash equilibrium, 
for all players $i$, $Y_i$ is a $0$-best response  to $Y_{-i}.$
By Lemma \ref{lem:best-response-dtv}, we have that $X_i$ is a $2\eps/10$-best response to $X_{-i}$ for all players $i.$
Hence, $\{X_i\}$ is an $\eps/5$-well supported Nash equilibrium.
\end{proof}

Let $S$ be the set of all $k$-CRVs whose probabilities are multiples of $\epsilon/(10kn).$
We will require a modification of the algorithm from Theorem \ref{thm:dp} 
(applied with $S_i \eqdef S$ for all $i,$ and $\eps \eqdef \eps/5$), 
which produces output at both step $n$ and step $n-1.$
Specifically, in addition to outputting a subset 
$V_{G,n} \subseteq \mathcal{D}_{G,n}$ of the data of possible $(n,k)$-PMDs 
that satisfy conditions (a) and (b) of Claim`\ref{clm:data-gives-cover} (ii), 
we output the subset $V_{G,n-1} \subseteq \mathcal{D}_{G,n-1}$ of the data 
of possible $(n-1,k)$-PMDs that satisfy the slightly weaker conditions 
(a) and (b) of Condition~\ref{cond:guess-approx} .

In more detail, we need the following guarantees about the output of our modified algorithm:
\begin{claim} \label{clm:reqs} 
For every PMD $X=\sum_{i=1}^n X_i$ and $X_{-j}  = \sum_{i \in [n] \setminus j} X_i$, for some $1 \leq j \leq n,$ 
and any $X_i \in S,$ for $1 \leq i \leq n$, we have:
\begin{itemize}
 \item There is a guess $G,$ such that $D_G(X) \in V_{G,n}.$
 
 \item For any $G$ such that $D_G(X) \in V_{G,n},$ 
 we also have $D_G(X)-D_G(X_j) = D_G(X_{-j}) \in V_{G,n-1}.$
  
 \item If $D_G(X) \in V_{G,n},$ for any PMD $Y$ with $D_G(Y)=D_G(X)$ 
 or $D_G(Y)=D_G(X_{-j}),$ 
 we have $\dtv(X,Y) \leq \eps/5$ or $\dtv(X_{-j},Y) \leq \eps/5$ respectively. 
 \end{itemize}
 \end{claim}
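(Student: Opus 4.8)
The plan is to obtain all three items as essentially immediate consequences of the correctness analysis of Theorem~\ref{thm:dp} and Claim~\ref{clm:data-gives-cover}, instantiated with $S_i \eqdef S$ for all $i$ and error parameter $\eps/5$, together with the additivity $D_G(A+B) = D_G(A) + D_G(B)$ for independent PMDs that is built into the dynamic program. Recall that, for a fixed guess matrix $G$, the program maintains for each $h$ the set $\mathcal{D}_{G,h}$ of all data vectors $D_G(\sum_{\ell=1}^{h} Y_{\ell})$ with $Y_{\ell} \in S$ satisfying the pruning inequality $1+\widetilde{s}_{j,i}^{D} \le 2 G_{i,j}$ for all $i \ne j$; that $V_{G,n}$ is the subset of $\mathcal{D}_{G,n}$ additionally obeying the \emph{strengthened} inequalities (a),(b) of Claim~\ref{clm:data-gives-cover}(ii); and that $V_{G,n-1}$ is the subset of $\mathcal{D}_{G,n-1}$ additionally obeying the (weaker) inequalities (a),(b) of Condition~\ref{cond:guess-approx}. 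A recurring elementary fact is that each coordinate $\widetilde{s}_{j,i}$ of the data is additive over the component $k$-CRVs, hence monotone nondecreasing as $k$-CRVs are added and monotone nonincreasing as they are removed.

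First I would dispatch item one: since $X = \sum_{i=1}^{n} X_i$ is a PMD over $n$ $k$-CRVs from $S$ and, by monotonicity, every prefix of its data satisfies whatever pruning inequality $D_G(X)$ does, the data $D_G(X)$ is produced by the step-$n$ program for every $G$ under which it satisfies the pruning inequality; and Claim~\ref{clm:data-gives-cover}(ii) furnishes a $G$ for which $D_G(X)$ satisfies the strengthened inequalities (a),(b), which in particular imply the pruning inequality. Hence $D_G(X) \in V_{G,n}$ for that $G$. Item three will follow once item two is established: if $D_G(X) \in V_{G,n}$, then any PMD $Y$ with $D_G(Y) = D_G(X)$ has $D_G(Y)$ satisfying Condition~\ref{cond:guess-approx} (indeed its strengthened version), so Claim~\ref{clm:data-gives-cover}(i), applied with error $\eps/5$, gives $\dtv(X,Y) \le \eps/5$; and once we know from item two that $D_G(X_{-j})$ satisfies Condition~\ref{cond:guess-approx}, the same Claim applied to the $(n-1,k)$-PMD $X_{-j}$ gives $\dtv(X_{-j},Y) \le \eps/5$ for any $Y$ with $D_G(Y) = D_G(X_{-j})$.

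The core step is item two. Fix any $G$ with $D_G(X) \in V_{G,n}$. By additivity, $D_G(X_{-j}) = D_G(X) - D_G(X_j)$; in particular $\widetilde{s}_{j',i}^{D_G(X_{-j})} = \widetilde{s}_{j',i}^{D_G(X)}$ unless $X_j$ is $i$-maximal, in which case $\widetilde{s}_{j',i}$ decreases by exactly the rounded contribution of $s_{j'}(X_j)$, which lies in $[0, s_{j'}(X_j)]$. The decisive numerical observation is that if $X_j$ is $i$-maximal then $s_{j'}(X_j) = p_{j,j'} \le 1/2$ for every $j' \ne i$, because $i$-maximality forces $p_{j,j'} \le p_{j,i}$ and $p_{j,j'}+p_{j,i} \le 1$; hence $0 \le \widetilde{s}_{j',i}^{D_G(X)} - \widetilde{s}_{j',i}^{D_G(X_{-j})} \le 1/2$. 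Plugging this into the strengthened inequality (a) for $D_G(X)$, namely $G_{i,j'} \le 1 + \max\{0,\widetilde{s}_{j',i}^{D_G(X)} - 3/4\}$, yields $G_{i,j'} \le 1 + \max\{0,\widetilde{s}_{j',i}^{D_G(X_{-j})} - 1/4\}$, which is exactly inequality (a) of Condition~\ref{cond:guess-approx} for $D_G(X_{-j})$; monotonicity together with inequality (b) for $D_G(X)$ gives inequality (b) of Condition~\ref{cond:guess-approx}. Finally, $X_{-j} = \sum_{i \neq j} X_i$ is a sum of $n-1$ $k$-CRVs from $S$, so by additivity $D_G(X_{-j}) = \sum_{i\neq j} D_G(X_i)$ is generated by the step-$(n-1)$ program and, obeying the pruning inequality, lies in $\mathcal{D}_{G,n-1}$; combined with the two preceding sentences, $D_G(X_{-j}) \in V_{G,n-1}$.

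I expect the main obstacle to be precisely the constant-chasing in item two: one must verify that deleting a single $k$-CRV cannot move the stored approximation $\widetilde{s}_{j',i}$ by more than the $1/2$ of slack separating the step-$n$ guess condition (with the $-3/4$ shift) from the step-$(n-1)$ guess condition (with the $-1/4$ shift). This is exactly why the program of Theorem~\ref{thm:dp} is modified to emit the strengthened inequality (a) at step $n$, and it relies essentially on $i$-maximality to cap each off-maximal coordinate probability of an $i$-maximal $k$-CRV at $1/2$; the remaining ingredients — additivity of the data vector and monotonicity of $\widetilde{s}$ under deletion — are routine bookkeeping.
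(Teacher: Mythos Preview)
Your proposal is correct and follows essentially the same line as the paper's own proof: item one from Claim~\ref{clm:data-gives-cover}(ii); item two from additivity plus the $i$-maximal bound $p_{j,j'}\le 1/2$ showing $\widetilde{s}$ drops by at most $1/2$, so the strengthened inequality with the $-3/4$ shift relaxes to Condition~\ref{cond:guess-approx} with the $-1/4$ shift; item three from Claim~\ref{clm:data-gives-cover}(i). Your explicit mention of the prefix/monotonicity argument (ensuring the data vector actually survives the per-step pruning and hence lies in $\mathcal{D}_{G,h}$) is a detail the paper leaves implicit, so your write-up is if anything slightly more careful there.
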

 \begin{proof}
 By Claim \ref{clm:data-gives-cover} (ii), there is a $G$ such that $D_G(X)$ satisfies conditions (a) and (b) and so $D_G(X)  \in V_{G,n}$.
 
 We note that by the correctness of the dynamic program, since $X_{-j}$ is a sum of 
 $n-1$ many $k$-CRVs in $S,$ we have $D_G(X_{-j}) \in \mathcal{D}_{G,n-1}.$ 
 To show that it is in $V_{G,n-1},$ we need to show that all 
 ${\widetilde{s}_{h, i}}^{D_G(X_{-j})}$ satisfy Condition~\ref{cond:guess-approx}, 
 for all $1 \leq i,h \leq k$ and $h \neq i$. 
 We know that ${\widetilde{s}_{h, i}}^{D_G(X)}$ satisfies the stronger conditions (a) and (b) of Claim \ref{clm:data-gives-cover} (ii). 
 All we need to show is that $${\widetilde{s}_{h, i}}^{D_G(X)} - 1/2 \leq {\widetilde{s}_{h, i}}^{D_G(X_{-j})} \leq  {\widetilde{s}_{h, i}}^{D_G(X)}.$$ 
 This condition is trivial unless $X_j$ is $i$-maximal. 
 If it is, we note that $\Pr[X_j=e_h] \leq \Pr[X_j=e_i],$ 
 and so ${\widetilde{s}_{j, i}}^{D_G(X_j)} \leq \Pr[X_j=e_h] \leq 1/2.$ 
 Thus, $D_G(X_{-j}) = D_G(X)-D_G(X_j) \in V_{G,n-1}.$

 We now have that both $D_G(X)$ and $D_G(X_{-j})$ satisfy Condition~\ref{cond:guess-approx}. 
 Therefore, Claim \ref{clm:data-gives-cover} (i) yields the third claim.
 \end{proof}
 
 We note that we can calculate the expected utilities efficiently to sufficient precision:
\begin{claim}
Given an anonymous game $(n, k, \{u^i_{\ell}\}_{i \in [n], \ell \in [k]})$ with each utility given to within an additive $\eps/2$ using $O(\log(1/\eps))$ bits, 
and given a PMD $X$ in terms of its constituent $k$-CRVs $X_i$, 
we can approximate the expected utility $\E[u^i_{\ell}(\sum_{j \neq i} X_i)]$ for any player $i$ 
and pure strategy $\ell$ to within $\eps$ in time $O(n^{k+1} \cdot k\log(n) \cdot \polylog(1/\eps)).$
\end{claim}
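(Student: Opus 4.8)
The plan is to compute an approximation to the probability mass function of the $(n-1,k)$-PMD $Y_{-i} \eqdef \sum_{j \neq i} X_j$ by a standard convolution-based dynamic program, and then return $\widehat u \eqdef \sum_{x} q(x)\, u^i_\ell(x)$, where $q$ is the approximate PMF. First I would observe that $Y_{-i}$ is supported on the set of partitions $\Pi_{n-1}^k$, which has $\binom{n+k-2}{k-1} = O(n^{k-1})$ elements; fixing an explicit bijection between $\Pi_m^k$ and $\{1,\dots,O(m^{k-1})\}$ computable in $\poly(k,\log n)$ time lets us store a (sub-)distribution over $\Pi_m^k$ as an array. We then process the CRVs $\{X_j\}_{j\neq i}$ in some fixed order, one at a time: maintaining at step $t$ an array $q_t$ indexed by $\Pi_t^k$ that approximates $r_t$, the true distribution of the sum of the first $t$ of these CRVs, we perform the update $q_{t+1}(x') = \sum_{\ell=1}^k p^{(t+1)}_{\ell}\, q_t(x'-e_\ell)$ (extending $q_t$ by $0$ outside $\Pi_t^k$), where $p^{(t+1)}_\ell$ is the probability that the CRV incorporated at step $t+1$ equals $e_\ell$. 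Since $r_{n-1}$ is exactly the distribution of $Y_{-i}$, after the last step we output $\widehat u = \sum_{x\in\Pi_{n-1}^k} q_{n-1}(x)\, u^i_\ell(x)$.

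The one point requiring care is numerical precision: stored exactly, the entries of $q_t$ would need $\Omega(n)$ bits, so after each update I would round every entry down to a multiple of $\delta$, keeping all stored numbers to $O(k\log n + \log(1/\eps))$ bits. To control the error, note that $r_{t+1}$ is the convolution of $r_t$ with the distribution of the $(t+1)$-st CRV and that convolution with a probability vector is nonexpansive in the $L_1$ norm; hence the pre-rounding array differs from $r_{t+1}$ by at most $\|q_t-r_t\|_1$, and rounding down the $|\Pi_{t+1}^k|$ entries adds at most $\delta\,|\Pi_{t+1}^k|$ more, so $\|q_{t+1}-r_{t+1}\|_1 \le \|q_t - r_t\|_1 + \delta\,|\Pi_{t+1}^k|$. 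By induction, $\|q_{n-1}-r_{n-1}\|_1 \le \delta\sum_{t=1}^{n-1}|\Pi_t^k| = \delta\cdot O(n^k)$, so choosing $\delta = \Theta(\eps/n^k)$ (still representable with $O(k\log n + \log(1/\eps))$ bits) makes this at most $\eps/4$. Since each $u^i_\ell$ takes values in $[0,1]$ and is given to additive error $\eps/2$, we obtain $|\widehat u - \E[u^i_\ell(Y_{-i})]| \le \|q_{n-1}-r_{n-1}\|_1 + \eps/2 \le \eps$, as desired.

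Finally I would bound the running time. There are $n-1$ convolution steps, and the step producing $q_{t+1}$ updates $|\Pi_{t+1}^k| = O(t^{k-1})$ array entries with $O(k)$ arithmetic operations each on $O(k\log n + \log(1/\eps))$-bit numbers, for a cost of $O\big(t^{k-1}\cdot k\log n\cdot\polylog(1/\eps)\big)$; summing over $t\le n$ gives $O\big(n^{k}\cdot k\log n\cdot\polylog(1/\eps)\big)$, and the closing sum defining $\widehat u$ costs only $O\big(n^{k-1}\cdot\log n\cdot\polylog(1/\eps)\big)$, which is dominated. This lies within the claimed $O\big(n^{k+1}\cdot k\log n\cdot\polylog(1/\eps)\big)$ bound. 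The only genuine obstacle is the precision accounting above — ensuring that the rounding keeps all numbers of size polynomial in the input length while the $L_1$ error propagated through $n$ convolutions stays below $\eps/2$; the combinatorial indexing of $\Pi_m^k$ and the convolution update itself are entirely routine.
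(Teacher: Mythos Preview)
Your argument is correct. You compute the PMF of $Y_{-i}$ by iteratively convolving in one $k$-CRV at a time, maintaining an $L_1$ error bound via the nonexpansiveness of convolution with a probability vector and controlling precision by rounding down to multiples of $\delta=\Theta(\eps/n^k)$; the rounding-down choice also guarantees $\|q_t\|_1\le 1$ at every stage, which is what justifies the final inequality $|\widehat u-\E[u^i_\ell(Y_{-i})]|\le \|q_{n-1}-r_{n-1}\|_1+\eps/2$.

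The paper takes a different route: it computes the PMF of $X_{-i}$ by FFT on $[n]^k$, namely by computing the DFT $\wh{X_j}$ of each $X_j$, taking the pointwise product $\wh{X_{-i}}(\xi)=\prod_{j\neq i}\wh{X_j}(\xi)$, and applying the inverse FFT. Since each of the $n$ forward transforms and the one inverse transform costs $O(n^k\cdot k\log n)$, this yields the stated $O(n^{k+1}\cdot k\log n\cdot\polylog(1/\eps))$ bound. Your direct convolution exploits the fact that each $X_j$ has support of size only $k$, so a single convolution step costs $O(k\cdot t^{k-1})$ rather than $O(n^k\log n)$; this is more elementary and in fact shaves a factor of $n$ from the running time. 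The FFT approach would recoup some of this if one wanted the expected utilities for \emph{all} players $i$ simultaneously (the per-CRV DFTs can be reused), but for a single $(i,\ell)$ your method is both simpler and faster.
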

\begin{proof}
We can compute the probability mass function of $X_{-i}=\sum_{j \neq i} X_j$ by using the FFT on $[n]^k$. 
We calculate the DFT of each $X_i$, $\wh{X}_i$, calculate the DFT of $X_{-i}$, $\wh{X_{-i}}(\xi)=\prod_{j \neq i} \wh{X}_j,$ 
and finally compute the inverse DFT. To do this within $\eps/2$ total variational error needs time $O(n^{k+1} \cdot k\log(n) \polylog(1/\eps)),$ 
since we need to use the FFT algorithm $n+1$ times. 
We then use this approximate pmf to compute the expectation $\E[u^i_{\ell}(X_{-i})]= \sum_{x} u^i_{\ell}(x) X_i(x).$ 
This takes time $O(n^k)$ and gives error $\eps$.
\end{proof}
Henceforth, we will assume that we can compute these expectations exactly, 
but it should be clear that computing them to within a suitably small $O(\eps)$ error suffices.
  
 \begin{proof}[Proof of Theorem \ref{thm:anon}]
 We use the modified dynamic programming algorithm 
 given above to produce an $\eps/5$-cover with explicit sets $V_{G, n}$, $V_{G, n-1}$ 
 of data and PMDs which produce each output data.
 
Then, for each $G$ and for each $D \in V_{G,n},$ 
we try to construct an $\eps$-Nash equilibrium whose associated PMD $X$ has $D_G(X)=D.$ 
Firstly, for each player $i$ we compute a set $S_i \subseteq S$ of best responses to $X.$ 
To do this, we check each $X_i \in S$ individually. 
We first check if $D_G(X)-D_G(X_i) \in V_{G,n-1}.$ 
If it is not, then Claim \ref{clm:reqs} implies that there is no set of strategies for the other players 
$X_j \in S,$ for $j \neq i,$ such that $D_G(\sum_{i=1}^n X_i)=D.$ 
In this case, we do not put this $X_i \in S_i.$ 
If we do have $D_{-i} := D - D_G(X_i) \in V_{G,n-1},$
then we recall that the algorithm gives us an explicit $Y_{D_{-i}}$ such that $D(Y_{D_{-i}})=D_{-i}.$ 
Now, we calculate the expected utilities $\E[u^i_{\ell}(Y_{D_{-i}})]$ for each $1 \leq \ell \leq k.$ 
If $X_i$ is a $3\eps/5$-best response to $Y_{D_{-i}},$ then we add it to $S_i.$

When we have calculated the set of best responses $S_i$ for each player, 
we use the algorithm from Theorem \ref{thm:dp} with these $S_i$'s and this guess $G.$ 
If the set of data it outputs contains $D,$ 
then we output the explicit PMD $X := Y_D$ that does so 
in terms of its constituent CRVs $X=\sum_{i=1}^n X_i$ and terminate.

To prove correctness, we first show that $\{ X_i \}$ is an $\eps$-Nash equilibrium, 
and second that that the algorithm always produces an output. 
We need to show that $X_i$ is an $\eps$-best response to $X_{-i} =  \sum_{j \in [n] \setminus i} X_j.$ 
When we put $X_i$ in $S_i,$ we checked that $X_i$ was a $3\eps/5$-best response to $Y_{D_{-i}},$ 
where $D_{-i}=D-D_G(X_i).$ But note that 
$$D_G(Y_{D_{-i}})=D-D_G(X_i)=D_G(X)-D_G(X_i)=D_G(X_{-i}).$$ 
Since $D \in V_{G,n}$ and $D_G(Y_{D_{-i}})=D_G(X_{-i}),$ Claim \ref{clm:reqs} yields that 
$\dtv(X_{-i},Y_{D_{-i}}) \leq \eps/5.$ 
So, by Lemma~\ref{lem:best-response-dtv}, we indeed have that $X_i$ is 
an $\eps$-best response to $X_{-i}$. Since this holds for all $X_i,$   $X$ is an $\eps$-Nash equilibrium.

By Claim~\ref{clm:nash}, there exists an $\eps/5$-Nash equilibrium $\{ X'_i \},$ 
with each $X'_i \in S.$ By Claim \ref{clm:reqs}, we have that for $X'= \sum_{i=1}^n X'_i,$
there is a guess $G$ with $D_G(X') \in V_{G,n}.$
So, if the algorithm does not terminate successfully first, 
it eventually considers $G$ and $D := D_G(X').$ 
We next show that that the algorithm puts $X'_i$ in $S_i.$ 
For each $1 \leq i \leq n,$ $X'_{-i} =  \sum_{j \in [n] \setminus i} X'_j$ 
has $D_G(X'_{-j}) \in  V_{G,n-1}$ by Claim \ref{clm:reqs}, 
since $D_G(X') \in V_{G,n}.$
So, $D_{-i} = D-D_G(X'_i)=D_G(X')-D_G(X'_i)=D_G(X'_{-i}),$ 
and we have $D_{-i} \in V_{G,n-1}.$
Hence, the algorithm will put $X'_i$ in $S_i$ 
if $X'_i$ is an $4\eps/5$-best response to $Y_{D_{-i}}.$ 
By Claim \ref{clm:reqs}, since $D_G(X) \in V_{G,n}$ and $D_G(X_{-i})=D_G(Y_{D_{-i}}),$ 
we have $\dtv(Y_{D_{-i}},X_{-i}) \leq \eps/5.$
Since $\{ X'_i \}$ is an $\eps/5$-Nash equilibrium,
$X'_{-i}$ is an $\eps/5$-best response to $X'_{-i}.$
Since $\dtv(Y_{D_{-i}},X_{-i}) \leq \eps/5,$ 
by Lemma \ref{lem:best-response-dtv}, this implies that $X'_i$ is a 
$3 \eps/5$-best response to $Y_{D_{-i}}.$
Thus, the algorithm puts $X'_i$ in $S_i.$ 
Since each $X'_i$ satisfies $X'_i \in S_i,$ 
by Theorem \ref{thm:dp}, the algorithm from that theorem 
outputs a set of data that includes $D_G(X')=D.$
Therefore, if the algorithm does not terminate successfully first, 
when it considers $G$ and $D,$ it will produce an output.  
This completes the proof of Theorem~\ref{thm:anon}.
\end{proof}


\paragraph{Threat points in anonymous games.}
As an additional application of our proper cover construction, we give an EPTAS
for computing threat points in anonymous games~\cite{BorgsCIKMP08}.

\begin{definition}
The threat point of an anonymous game $(n, k, \{u^i_{\ell}\}_{i \in [n], \ell \in [k]})$ is the vector $\theta$ with
$$ \theta_i = \min_{X_{-i} \in {\cal M}_{n-1,k}} \max_{1 \leq j \leq k} \E[u^i_j(X_{-i})].$$
\end{definition}
Intuitively, If all other players cooperate to try and punish player $i,$ 
then they can force her expected utility to be $\theta_i$ but no lower, 
so long as player $i$ is trying to maximize it. 
This notion has applications in finding Nash equilibria in repeated anonymous games.

\begin{corollary} 
Given an anonymous game $(n, k, \{u^i_{\ell}\}_{i \in [n], \ell \in [k]})$ with $k > 2$, 
we can compute a $\tilde{\theta}$ with $\|\theta-\tilde{\theta}\|_\infty \leq \eps$ 
in time $n^{O(k^3)} \cdot (k/\eps)^{O(k^3\log(k/\eps)/\log\log(k/\eps))^{k-1}}.$ 
Additionally, for each player $i$, we obtain strategies $X_{i,j}$ for all other players $j \neq i$ 
such that $\max_{1 \leq \ell \leq k} \E[u^i_\ell(\sum_{j \neq i} X_{i,j})] \leq \theta_i + \eps$.
\end{corollary}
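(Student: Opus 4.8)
The plan is to reduce the computation of the threat point $\theta$ to a collection of cover-search problems, one for each player $i$, and to handle each of these in essentially the same way the EPTAS for Nash equilibria (Theorem~\ref{thm:anon}) handles the ``best response'' step. Fix a player $i$. The quantity $\theta_i = \min_{X_{-i}\in\mathcal{M}_{n-1,k}} \max_{1\le j\le k}\E[u^i_j(X_{-i})]$ is a minimization of a function of an $(n-1,k)$-PMD over all such PMDs. Observe that the objective $F(X_{-i}) := \max_{1\le j\le k}\E[u^i_j(X_{-i})]$ depends on $X_{-i}$ only through its distribution, and moreover is $2\eps$-Lipschitz with respect to total variation distance: if $\dtv(X_{-i},Y_{-i})\le\eps$ then $|\E[u^i_j(X_{-i})]-\E[u^i_j(Y_{-i})]|\le\eps$ for each $j$ (since $u^i_j$ takes values in $[0,1]$), hence $|F(X_{-i})-F(Y_{-i})|\le\eps$. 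Therefore it suffices to minimize $F$ over an $\eps'$-cover of $\mathcal{M}_{n-1,k}$ with $\eps' = \eps/2$: the minimum over the cover is within $\eps$ of the true infimum, and the minimizing cover element yields an explicit PMD attaining value at most $\theta_i+\eps$.

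First I would invoke Corollary~\ref{cor:cover-proper-alg} (equivalently, Theorem~\ref{thm:dp} applied with all $S_\ell$ equal to an $(\eps/(10kn))$-cover of the $k$-CRVs, but on $n-1$ summands instead of $n$) to construct, in time $n^{O(k^3)}\cdot(k/\eps)^{O(k^3\log(k/\eps)/\log\log(k/\eps))^{k-1}}$, a proper $(\eps/2)$-cover $\mathcal{C}$ of $\mathcal{M}_{n-1,k}$, where each element of $\mathcal{C}$ is represented explicitly via its constituent $k$-CRVs. Then, for each cover element $X_{-i}$ and each pure strategy $j\in[k]$, I would compute $\E[u^i_j(X_{-i})]$ to additive accuracy $O(\eps)$ using the FFT-based evaluation of the PMF of $X_{-i}$ on $[n]^k$, exactly as in the claim preceding the proof of Theorem~\ref{thm:anon} (time $O(n^{k+1}\cdot k\log n\cdot\polylog(1/\eps))$ per element). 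Taking the maximum over $j$ gives (an approximation to) $F(X_{-i})$; taking the minimum over $\mathcal{C}$ gives $\tilde\theta_i$ together with a witnessing profile $\{X_{i,j}\}_{j\ne i}$. Doing this for every player $i\in[n]$ multiplies the running time by a factor of $n$, which is absorbed into the stated bound. Correctness of the approximation $\|\theta-\tilde\theta\|_\infty\le\eps$ then follows from the Lipschitz property above together with the definition of an $(\eps/2)$-cover, and the witnessing strategies satisfy $\max_\ell \E[u^i_\ell(\sum_{j\ne i}X_{i,j})]\le\theta_i+\eps$ by construction.

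The main obstacle — and it is a mild one — is bookkeeping around the fact that the cover is built for $n-1$ summands rather than $n$; one must check that Theorem~\ref{thm:dp}'s guarantees (and its running-time bound) are unaffected by replacing $n$ with $n-1$ everywhere, which is immediate since the construction is oblivious to the exact number of summands. A second minor point is that we want an explicit PMD attaining the near-minimum, not merely the value; but Theorem~\ref{thm:dp} already returns explicit PMDs (given as lists of $k$-CRVs) for each cover element, so the witness is available for free. Finally, the error accounting must ensure that the $O(\eps)$ error in evaluating the expectations via FFT is folded in by taking $\eps' = c\eps$ for a suitable small constant $c$ and rescaling; this is routine and parallels the corresponding remark at the end of the proof of Theorem~\ref{thm:anon} that computing the utilities to within a suitably small $O(\eps)$ suffices. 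No new structural ideas beyond the proper cover construction are needed; the corollary is essentially an immediate consequence of Corollary~\ref{cor:cover-proper-alg} combined with exhaustive search over the cover.
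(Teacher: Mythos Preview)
Your proposal is correct and essentially identical to the paper's proof: build a proper $\eps$-cover of $\mathcal{M}_{n-1,k}$ via Theorem~\ref{thm:dp} (equivalently Corollary~\ref{cor:cover-proper-alg}), then for each player brute-force the min-max over the cover using the FFT-based expected-utility computation, returning the witnessing constituent $k$-CRVs. The only cosmetic differences are that the paper works directly with an $\eps$-cover rather than $\eps/2$ (exploiting that the cover is proper to get $\tilde\theta_i\ge\theta_i$ for free, so only one side of the Lipschitz bound is needed), and that your ``$2\eps$-Lipschitz'' should read ``$\eps$-Lipschitz'' since the max of $\eps$-Lipschitz functions is $\eps$-Lipschitz.
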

\begin{proof} 
Using the dynamic programming algorithm of Theorem~\ref{thm:dp}, 
we can construct an $\eps$-cover $\mathcal{C}$ of ${\cal M}_{n-1,k}$. 
For each player $i$, we then compute $\tilde{\theta}_i=\min_{X_{-i} \in {\cal C}} \max_{1 \leq j \leq k} \E[u^i_j(X_{-i})]$ by brute force. 
Additionally, we return the $k$-CRVs $X_{i,j}$ that the algorithm gives us as the explicit parameters of the PMD $X_{-i}$ 
which achieves this minimum, i.e., with $\tilde{\theta}_i =  \max_{1 \leq j \leq k} \E[u^i_j(X_{-i})].$ 
The running time of this algorithm is dominated by the dynamic programming step.

We now show correctness.
Let $Y_{-i} \in {\cal M}_{n-1,k}$ be such that $ \theta_i = \max_{1 \leq j \leq k} \E[u^i_j(X_{-i})].$ 
Then, there exists a $Y'_{-i} \in {\cal C}$ with $\dtv(Y_{-i},Y'_{-i}) \leq \eps,$ and so we have $|\E[u^i_j(Y_{-i})]-\E[u^i_j(Y'_{-i})]| \leq \eps.$ 
Therefore,
$$\theta_i = \max_{1 \leq j \leq k} \E[u^i_j(Y_{-i})] \geq \max_{1 \leq j \leq k} \E[u^i_j(Y'_{-i})] - \eps \geq \tilde{\theta}_i-\eps \;.$$
Similarly, there is an $X_{-i} \in \mathcal{C}$ with $\tilde{\theta}_i=\max_{1 \leq j \leq k} \E[u^i_j(X_{-i})] \leq \theta_i.$ 
And so we have $|\tilde{\theta}_i -\theta_i| \leq \eps,$ as required. 
Additionally, for the $\sum_{j \neq i} X_{i,j} = X_{-i}$ we have $\max_{1 \leq \ell \leq k} \E[u^i_\ell(\sum_{j \neq i} X_{i,j})] = \tilde{\theta_i} \leq \theta_i + \eps.$
\end{proof}

\subsection{Every PMD is close to a PMD with few distinct parameters} \label{sec:distinct}

In this section, we prove our structural result that states that any PMD
is close to another PMD which is the sum of $k$-CRVs with a small number of 
distinct parameters.

\begin{theorem} \label{thm:distinct}
Let $n, k \in \Z_+,$ $k>2,$ and $\eps>0.$
For any $(n, k)$-PMD $X$, there is an $(n, k)$-PMD $Y$ such that $\dtv(X, Y) \le \eps$ satisfying the following property: 
We can write $Y=\sum_{i=1}^n Y_i$ where each $k$-CRV $Y_i$ is distributed as one of 
$$O\left((\log(k/\eps)/(\log \log (k/\eps))+k)\right)^k$$ distinct $k$-CRVs. 
\end{theorem}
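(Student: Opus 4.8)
The plan is to combine the robust moment-matching lemma (Lemma~\ref{lem:moments-imply-dtv}) with a dimension-counting result from algebraic geometry that guarantees that any prescribed collection of low-degree parameter moments can be realized exactly by a PMD supported on few distinct $k$-CRVs. First I would reduce to the case of a $j$-maximal PMD: writing $X=\sum_{i=1}^k X^i$ with $X^i$ an $i$-maximal $(n_i,k)$-PMD, it suffices by sub-additivity of total variation distance to approximate each $X^i$ to within $\eps/k$ by an $i$-maximal PMD with few distinct component CRVs, since the union of $k$ such small sets of distinct CRVs is still small. So fix, say, a $k$-maximal $(n,k)$-PMD $X$.

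Next I would partition the component CRVs of $X$ into the three groups used in Lemma~\ref{lem:moments-imply-dtv}: the not-$\delta_1$-exceptional CRVs (call the resulting PMD $X^{(1)}$), the $\delta_1$- but not $\delta_2$-exceptional CRVs ($X^{(2)}$), and the $\delta_2$-exceptional ones ($X^{(3)}$), with $\delta_1,\delta_2,K_1=2,K_2=O(\log(1/\eps)/\log\log(1/\eps)+k)$ as in that lemma. The third group already has only $O(k/\delta_2^2)=O(k^3\log^{3/2}(1/\eps))$ CRVs, so it contributes at most that many distinct CRVs and we can leave it untouched (take $Y^{(3)}=X^{(3)}$). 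For $X^{(1)}$ and $X^{(2)}$ the goal is to find PMDs $Y^{(1)},Y^{(2)}$ over the same number of CRVs, with no $\delta_t$-exceptional components, matching all parameter moments $M_m$ of degree $|m|_1\le K_t$ (with $m_k=0$) \emph{exactly} (so in particular within the required additive tolerance $\gamma$), and supported on few distinct $k$-CRVs; then Lemma~\ref{lem:moments-imply-dtv} gives $\dtv(X,Y)\le\eps$ for $Y=Y^{(1)}+Y^{(2)}+Y^{(3)}$. Note $s_j(Y^{(t)})=s_j(X^{(t)})$ is itself a degree-$1$ parameter moment, so matching all moments up to degree $K_t\ge 1$ automatically preserves the variance-profile hypothesis of the lemma.

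The algebraic-geometric input is the key step. The parameter moments $M_m(\sum_i X_i)=\sum_i \prod_j p_{i,j}^{m_j}$ of degree $\le K$ are, for each fixed number $N$ of CRVs, the sum of $N$ evaluations of the fixed finite family of monomial functions $\{(p_1,\dots,p_{k-1})\mapsto \prod_j p_j^{m_j} : |m|_1\le K\}$ on the simplex. There are $D:=\binom{K+k-1}{k-1}$ such monomials (for $k\ge 3$, $D = O(K+k)^{k-1} = O(\log(1/\eps)/\log\log(1/\eps)+k)^{k-1}$). The relevant statement is that the set of achievable moment vectors (the $N$-fold sumset of the image of this monomial map) stabilizes: once $N$ is large, any point of it is already attained by a sum of at most $O(D)$ points, i.e.\ by a PMD with $O(D)$ distinct CRVs. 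This is exactly the kind of ``Carathéodory for moment varieties / finiteness of the Waring-type rank'' statement that follows from results on the number of points needed to represent a point in a sumset of an algebraic variety of dimension $D$; I would invoke the cited algebraic-geometry result (the one alluded to right after Lemma~\ref{lem:moments-imply-dtv} in the introduction, which bounds the number of distinct summands needed to match all moments up to a given degree by $O(k+\log(1/\eps))^k$). Applying it separately to $X^{(1)}$ (degree $\le 2$, giving $O(k)^k$ distinct CRVs) and to $X^{(2)}$ (degree $\le K_2$, giving $O(K_2+k)^{k}=O(\log(1/\eps)/\log\log(1/\eps)+k)^k$ distinct CRVs) produces $Y^{(1)},Y^{(2)}$; one must check the output CRVs can be taken to still be $k$-maximal and non-$\delta_t$-exceptional, which follows because the exceptionality and maximality constraints are closed conditions and the original moments lie in the interior region they cut out — equivalently, one restricts the moment variety to the closed subset where $p_{i,k}\ge 1/k$ and $p_{i,j}\le \delta_t\sqrt{1+s_j}$, and applies the Carathéodory/finiteness statement inside that subvariety.

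The main obstacle is making the algebraic-geometry step precise and self-contained: one needs not merely that matching moments is possible but that it is possible with the number of distinct CRVs bounded by $O(D)$ (polynomial in the number of monomials, hence $O(\log(1/\eps)/\log\log(1/\eps)+k)^k$ after accounting for the $k$ maximality classes), \emph{and} with all the side constraints (fixed number of total CRVs $N$, $j$-maximality, non-exceptionality) preserved. Counting: $Y^{(1)}$ and $Y^{(2)}$ together contribute $O(\log(1/\eps)/\log\log(1/\eps)+k)^k$ distinct CRVs, $Y^{(3)}$ contributes $O(k^3\log^{3/2}(1/\eps))$ which is dominated, and summing over the $k$ maximality classes multiplies by $k$, still $O((\log(k/\eps)/\log\log(k/\eps)+k))^k$ as claimed. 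I would be slightly careful that the ``same number of CRVs'' requirement of Lemma~\ref{lem:moments-imply-dtv} is met — this is handled by allowing repeated (identical) CRVs among the $Y_i$, which does not increase the count of \emph{distinct} CRVs, and by padding with copies of a fixed CRV if needed, again leaving distinctness unaffected.
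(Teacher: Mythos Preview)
Your overall strategy---reduce to $j$-maximal PMDs, partition into the three $\delta_1/\delta_2$ groups, leave $X^{(3)}$ alone, and replace $X^{(1)},X^{(2)}$ by PMDs $Y^{(1)},Y^{(2)}$ that match the relevant parameter moments \emph{exactly} and then invoke Lemma~\ref{lem:moments-imply-dtv}---is precisely the paper's approach, and your bookkeeping (number of CRVs, padding with the trivial $k$-CRV, final count) is fine.

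The gap is in the algebraic step, and it is exactly the obstacle you flag. Your proposed fix---``restrict the moment variety to the closed subset where $p_{i,k}\ge 1/k$ and $p_{i,j}\le\delta_t\sqrt{1+s_j}$ and apply the Carath\'eodory/finiteness statement inside that subvariety''---does not work as stated: that constraint set is semialgebraic, not a variety, and a Carath\'eodory-type statement for sumsets of a real variety gives no control once you intersect with inequality constraints on the \emph{individual} summands (the constraints are on each $p_{i,\cdot}$, not on the moment vector, so ``the original moments lie in the interior'' is not the relevant condition). Likewise, a convex Carath\'eodory argument produces real weights rather than integer multiplicities, so it does not directly yield a PMD on the same number $N$ of CRVs.

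What the paper actually does is different in mechanism though identical in outline. It invokes a result of Riener et al.\ (stated as Lemma~\ref{lem:nD-Reiner}): any symmetric polynomial of weighted degree $\le d$ that attains its minimum does so at a point with at most $\prod_j\lfloor d/w_j\rfloor$ distinct rows. The trick is to encode \emph{all} constraints---moment equalities, nonnegativity $p_{i,j}\ge 0$, $k$-maximality $p_{i,j}\le p_{i,k}$, non-exceptionality $p_{i,j}\le\delta_t\sqrt{1+s_j}$, and $\sum_j p_{i,j}=1$---as polynomial \emph{equations} by introducing slack variables (e.g.\ $p_{i,j}=x_{i,j}^2$, $p_{i,j}+x_{i,j+k}^2=p_{i,k}$, etc.). One then sets $f$ to be the sum of squares of all these equations; $f$ is symmetric in $i$, has weighted degree $2K_t$ (assigning weight $K_t/2$ to the slack variables), and attains its minimum $0$ at the original $X^{(t)}$. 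Lemma~\ref{lem:nD-Reiner} then produces a minimizer with $O(K_t)^k$ distinct rows, and since the minimum is $0$ every equation is satisfied exactly---so the resulting $Y^{(t)}$ automatically has the correct moments \emph{and} satisfies all the side constraints. This slack-variable/sum-of-squares encoding is the missing idea in your proposal.
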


The main geometric tool used to prove this is the following result from \cite{GRW15}:

\begin{lemma}[Theorem 14 from \cite{GRW15}] \label{lem:nD-Reiner}
Let $f(x)$ be a multivariate polynomial with variables $x_{i,j},$ 
for $1 \leq i \leq n$ and $1 \leq j \leq k,$ which is symmetric up to permutations of the $i$'s, 
i.e., such that for any permutation $\sigma \in \mathbb{S}_n$, 
we have that, for $(x_\sigma)_{i,j}:=x_{\sigma(i),j},$ for all $1 \leq i \leq n$ and $1 \leq j \leq k$, $f(x_\sigma)=f(x)$. 
Let $w \in \Z_{>0}^k$. Suppose that $f$ has weighted $w$ degree at most $d,$ 
i.e., each monomial $\prod_{i,j} x_{i,j}^{a_{i,j}}$ has $\sum_{i,j} w_j a_{i,j} \leq d.$  
Suppose that the minimum of $f(x)$ is attained by some $x' \in \R^n,$ 
i.e., that $f(x')= \min_{x \in \R^{n \times k}} f(x).$ 
Then, there is a point $x^{\ast}$ with $f(x^{\ast})= \min_{x \in \R^{n \times k}} f(x),$ 
such that the number of distinct $y \in \R^k$ of the form $y_j=x^{\ast}_{i,j},$ 
for some $i,$ is at most $\prod_{j=1}^k \left\lfloor \frac{d}{w_j} \right\rfloor.$
\end{lemma}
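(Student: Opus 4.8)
The plan is to combine the robust moment‑matching lemma (Lemma~\ref{lem:moments-imply-dtv}) with the symmetric‑minimization result of~\cite{GRW15} (Lemma~\ref{lem:nD-Reiner}), reusing the three‑group decomposition from the proof of Proposition~\ref{prop:cover-upper-maximal}. First I would write $X=\sum_{i=1}^{k}X^{i}$ with $X^{i}$ an $i$‑maximal PMD; by subadditivity of $\dtv$ it suffices to approximate each $X^{i}$ to within $\eps/k$ by an $i$‑maximal PMD on the same number of $k$‑CRVs and with few distinct component types. Fix $i$, relabel so $i=k$, and apply Lemma~\ref{lem:moments-imply-dtv} with error $\eps/k$: partition the components of $X^{i}$ into those that are not $(\delta_{1}/2)$‑exceptional, those that are $(\delta_{1}/2)$‑ but not $(\delta_{2}/2)$‑exceptional, and those that are $(\delta_{2}/2)$‑exceptional (with $\delta_{1}=\delta_{1}(\eps/k)$, $\delta_{2}=\delta_{2}(\eps/k)$), giving sub‑PMDs $X^{i,(1)},X^{i,(2)},X^{i,(3)}$ on $n_{1},n_{2},n_{3}$ components. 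Group $3$ I leave untouched, $Y^{i,(3)}:=X^{i,(3)}$: exactly as in the cover proof $n_{3}\le O(k/\delta_{2}^{2})=O(k^{3}\log^{3/2}(k/\eps))$, so these contribute only $O(k^{4}\log^{3/2}(k/\eps))$ distinct types across all $i$, and the bijection hypothesis of Lemma~\ref{lem:moments-imply-dtv} holds at zero cost (using a slightly shrunk threshold $\delta_t/2$ precisely so that the group‑$1,2$ replacements below stay non‑$\delta_t$‑exceptional).

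The core of the argument is, for $t\in\{1,2\}$, to replace $X^{i,(t)}$ by an $i$‑maximal PMD $Y^{i,(t)}$ on the same $n_{t}$ components, with no $\delta_{t}$‑exceptional component, whose parameter moments $M_{m}$ agree \emph{exactly} with those of $X^{i,(t)}$ for every $m$ with $m_{i}=0$ and $|m|_{1}\le K_{t}$ ($K_{1}=2$, $K_{2}=K_{2}(\eps/k)=O(\log(k/\eps)/\log\log(k/\eps)+k)$), and — this is the point — with at most $O(K_{t})^{k-1}$ distinct component types. I would obtain $Y^{i,(t)}$ by applying Lemma~\ref{lem:nD-Reiner} to a suitable symmetric polynomial. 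Introduce, for each component $\ell$, variables $q_{\ell,j}$ for $j\ne i$ (encoding $p_{\ell,j}=q_{\ell,j}^{2}$, which forces non‑negativity), slack variables $r_{\ell,j}$ enforcing $i$‑maximality, and slack variables $s_{\ell,j}$ enforcing non‑$\delta_{t}$‑exceptionality, and set
\[
f \;=\; \sum_{m_{i}=0,\;|m|_{1}\le K_{t}}\bigl(M_{m}(q)-M_{m}(X^{i,(t)})\bigr)^{2}\;+\;\sum_{\ell,\,j\ne i}\Bigl[\bigl(1-\textstyle\sum_{j'\ne i}q_{\ell,j'}^{2}-q_{\ell,j}^{2}-r_{\ell,j}^{2}\bigr)^{2}+\bigl(\tfrac{\delta_{t}}{2}\sqrt{1+s_{j}(X^{i,(t)})}-q_{\ell,j}^{2}-s_{\ell,j}^{2}\bigr)^{2}\Bigr],
\]
where $M_{m}(q)$ is $M_{m}$ evaluated at $p_{\ell,j}=q_{\ell,j}^{2}$. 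This $f$ is non‑negative, symmetric under permutations of the component index $\ell$, and attains its minimum value $0$ at the real point given by the true parameters of $X^{i,(t)}$ together with the induced square‑root slack values — these square roots exist precisely because $X^{i,(t)}$ is $i$‑maximal and has no $(\delta_{t}/2)$‑exceptional component. Assigning weight $1$ to the $q$‑columns and weight $\Theta(K_{t})$ to the (at most $2(k-1)$) $r$‑ and $s$‑columns, one checks that $f$ has weighted degree $d=\Theta(K_{t})$, so Lemma~\ref{lem:nD-Reiner} yields a minimizer with at most $\prod_{c}\lfloor d/w_{c}\rfloor=O(K_{t})^{k-1}$ distinct rows (the auxiliary columns each contribute only a bounded factor). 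At any minimizer $f=0$, so every slack identity holds: $p^{\ast}_{\ell,i}-p^{\ast}_{\ell,j}=r_{\ell,j}^{\ast\,2}\ge0$ and $\sum_{j}p^{\ast}_{\ell,j}=1$ make $p^{\ast}$ a genuine $i$‑maximal family of $k$‑CRVs, the $s$‑identities make it non‑$\delta_{t}$‑exceptional, and all moments of degree $\le K_{t}$ match those of $X^{i,(t)}$; distinct extended rows map to distinct $(q_{\ell,\cdot})$, hence distinct $k$‑CRVs, so $Y^{i,(t)}$ has the claimed number of distinct types.

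Finally, since the degree‑$1$ parameter moments match, $s_{j}(X^{i})=s_{j}(Y^{i})$ for all $j$, so the variance‑ratio hypothesis of Lemma~\ref{lem:moments-imply-dtv} holds with ratio $1$; that lemma then gives $\dtv(X^{i},Y^{i})\le\eps/k$, and summing over $i$ yields $\dtv(X,Y)\le\eps$ for $Y=\sum_{i}(Y^{i,(1)}+Y^{i,(2)}+Y^{i,(3)})$. Summing the three group contributions over the $k$ values of $i$, the number of distinct $k$‑CRV types in $Y$ is at most
\[
k\cdot\Bigl(O(K_{2})^{k-1}+O(1)^{k-1}+O(k^{3}\log^{3/2}(k/\eps))\Bigr)\;=\;O\bigl(\log(k/\eps)/\log\log(k/\eps)+k\bigr)^{k},
\]
using $k\le K_{2}(\eps/k)$ to absorb the leading factor of $k$ and the bounded auxiliary‑column factors, and $k\ge3$ to absorb the group‑$3$ count. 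The step I expect to be the main obstacle is exactly the encoding of the second paragraph: expressing validity, $i$‑maximality, and non‑$\delta_{t}$‑exceptionality simultaneously as the vanishing locus of one \emph{symmetric} polynomial whose weighted degree stays $\Theta(K_{t})$ after the auxiliary variables are introduced, and choosing the column weights so the auxiliary columns inflate the distinct‑row bound by at most a $2^{O(k)}$ factor and not in its dependence on $\eps$ — checking that the moment part, rewritten through $p=q^{2}$, still has weighted degree $O(K_{t})$, and that the square‑root slack values at the target point are real, is where the bookkeeping must be done carefully.
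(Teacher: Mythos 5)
The statement you were asked to address is Lemma~\ref{lem:nD-Reiner} itself, which is Theorem~14 of~\cite{GRW15}: a result in real algebraic geometry asserting that a permutation-symmetric polynomial of bounded weighted degree, if it attains its minimum over $\R^{n\times k}$, attains it at a point with few distinct rows. The paper does not prove this lemma; it imports it as a black box. What you have written is not a proof (or even a sketch of a proof) of that statement. It is instead a proof of Theorem~\ref{thm:distinct} (``every PMD is $\eps$-close to a PMD with few distinct $k$-CRV components''), which is a downstream \emph{application} of Lemma~\ref{lem:nD-Reiner}: you invoke the lemma in your second paragraph as a tool, together with the moment-matching Lemma~\ref{lem:moments-imply-dtv} and the three-group decomposition. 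Nothing in your argument bears on why the symmetric-minimization result holds.

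That said, as a proof of Theorem~\ref{thm:distinct} your proposal tracks the paper's own argument (Section~\ref{sec:distinct}) quite closely: the same reduction to $i$-maximal pieces, the same partition into non-$\delta_1$-, $\delta_1$-but-not-$\delta_2$-, and $\delta_2$-exceptional components, the same sum-of-squares encoding with auxiliary slack variables for validity of the CRV constraints and for non-exceptionality, the same weight assignment (weight~$1$ on the probability columns, weight~$\Theta(K_t)$ on the slack columns) so that the product $\prod_j\lfloor d/w_j\rfloor$ stays $O(K_t)^{k-1}$ up to a $2^{O(k)}$ factor, and the same final accounting $k\cdot O(K_2)^{k-1}$. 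One minor refinement you add over the paper's write-up is the explicit $\delta_t/2$ margin so that the replacement PMD is guaranteed non-$\delta_t$-exceptional after perturbation; the paper is a little terser about this but is doing essentially the same thing. If the intent was to prove Lemma~\ref{lem:nD-Reiner} itself, you would need an entirely different argument, essentially the one in~\cite{GRW15}: study the critical-point equations of the restriction of $f$ to the orbit-type strata of the $\mathbb{S}_n$-action and bound the number of distinct rows via the weighted degree through a Vandermonde/Jacobian rank argument. None of that appears in your proposal.
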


\begin{proof}[Proof of Theorem \ref{thm:distinct}]
Firstly we're going to divide our PMD into $i$-maximal PMDs. We assume wlog that $X$ is $k$-maximal below.

We divide this PMD $X$ into component PMDs $X^{(1)}$, $X^{(2)}$, $X^{(3)}$ 
according to whether these are $\delta_1$ and $\delta_2$, as in the proof of Proposition~\ref{prop:cover-upper-maximal}. 
We want to show that there exists a $Y^{(1)}$,$Y^{(2)}$ such that $X^{(1)}$ and $Y^{(1)}$ agree on the first $2$ 
moments, $X^{(2)}$ and $Y^{(2)}$ agree on the first $K_2$ moments, but each has few distinct CRVs. 
Then $Y=Y^{(1)}+Y^{(2)}+X^{(3)}$ is close to $X$ by Lemma \ref{lem:moments-imply-dtv} 
(because the first moments agree, i.e., we have $s_j(X) = s_j(Y)$).

We are going to use Lemma \ref{lem:nD-Reiner} to show that 
we can satisfy some polynomial equations $p_l(x)=0$ by setting $f$ to be a sum of squares $f(x)= \sum_l p_l(x)^2$. 
Then if the polynomial equations have a simultaneous solution at $x$, 
$f$ attains its minimum of $0$ at $x.$ 
Some of these $p_l$'s are going to be symmetric in terms of $i$. 
For the rest, we are going to have identical equations that hold for each individual $i,$ 
so $f$ overall will be symmetric.

We have $X^{(t)}$ for $t=1, 2$, and we want to construct a $Y^{(t)}$ with few distinct $k$-CRVs. 
That is, we want to find $p_{i,j}$, 
the probability that $Y_i=p_{i,j}$, for $1 \leq i \leq n$, $1 \leq j \leq k.$ 
These $p_{i,j}$'s have to satisfy certain inequalities to ensure  each $Y^{(t)}_i$ 
is a non-$\delta_t$ exceptional $k$-CRV and $p_{i,1} \leq p_{i,2} \leq \ldots \leq p_{i,k}.$ 
To do this, we will need to introduce variables whose square is the slack in each of these inequalities. 

The free variables of these equations will be $p_{i,1}, \dots, p_{i,k}, x_{i,1}, \ldots, x_{i,3k}.$ 
The equations we consider are as follows:

The following two equations mean that $Y^{(t)}_i$ is a $k$-CRV with the necessary properties:
For each $1 \leq i \leq n$ and $1 \leq j \leq k-1$, 
\begin{equation} \label{eq:0}
p_{i,j}=x_{i,j}^2
\end{equation}
\begin{equation} \label{eq:2a}
p_{i,j} + x_{i,j+k}^2 = p_{i,k}
\end{equation}
and
\begin{equation} \label{eq:3}  
p_{i,j} +x_{i,j+2k}^2 = \delta_t \sqrt{1+s_j(X)} \;.	
\end{equation}
For each $1 \leq i \leq n$,
\begin{equation} \label{eq:4} 
\sum_{j=1}^k p_{i,j}=1  \;.
\end{equation}
We need an equation that the $m^{th}$ moment of $Y^{(t)}$ is identical to the $m^{th}$ moment of $X^{(t)},$ 
i.e., 
\begin{equation} \label{eq:1a}  
\left(\sum_i \prod_j p_{i,j}^{m_j}\right)-M_m(X^{(t)})=0 \;,
\end{equation}
for each moment $m$ with $|m|_1 \leq K_t.$		

If these equations have a solution for real $p_{i,j}$'s  and $x_{i,j}$'s, 
then the $p_{i,j}$'s satisfy all the inequalities we need. 
We square all these expressions and sum them giving $f.$
Note that the slack variables $x_{i,j}$ only appear in monomials of degree 4 in $f$. 
We set the weights $w_j$ of the $p_{i,j}$ to be $1$ and the weights of the $x_{i,j}$ to be $K_t/2.$
Then, $f$ has $w$ degree $2K_t$: (\ref{eq:1a}) has degree $K_t$ in terms of $p_{i,j},$  so when we square it to put it in $f$, it has degree $2K_t$. 
So we have that, for $d=2K_t$, $\prod_{j=1}^k \left\lfloor \frac{d}{w_j} \right\rfloor =(2K_t)^{k}4^{3k}=O(K_t)^k $
Now $f$ is symmetric in terms of the $n$ different values of $i,$ 
so we can apply Lemma \ref{lem:nD-Reiner}, 
which yields that there is a minimum with $O(K_t)^k$ distinct $(k+1)$-vectors 
provided that there is any minimum. 

However, note that if we set $p'_{i,j} = \Pr[X_i=e_j]$ 
and define the $x'_{i,j}$ appropriately, we obtain an $x'$ such that $f(x')=0.$ 
Since $f$ is a sum of squares $f(x) \geq 0.$ 
So, there is an $x^{\ast}$ with $f(x^{\ast})=0,$ 
but such that $x^{\ast}$ has $O(K_t)^k$ 
distinct $4k$-vectors $(p_{i,1}^{\ast},\ldots,p_{i,k}^{\ast},x^{\ast}_{i,1}, \ldots, x^{\ast}_{i,k}).$
 
Using  the $p_{i,j}^{\ast}$'s in this solution, 
we have a $Y^{(t)}$ with $O(K_t)^k$ distinct CRVs. 
So, the $Y$ which is $O(\eps)$ close to $X$ has $(O(K_1)^k +O(K_2)^k + k (\log 1/\eps)^2)$ distinct $k$-CRVs. 
Overall, we have that any PMD is $O(k \eps)$-close to one with 
$$k \cdot O(K_2)^k=O\left(( \log(1/\eps)/(\log \log (1/\eps))+k)\right)^k$$
distinct constituent $k$-CRVs. Thus, every PMD is $\eps$-close to one 
with $k \cdot O\left((\log(k/\eps)/(\log \log (k/\eps))+k)\right)^k$ 
distinct constituent $k$-CRVs. This completes the proof.
 \end{proof}


\subsection{Cover Size Lower Bound for PMDs} \label{sec:cover-lb}

In this subsection, we prove our lower bound on the cover size of PMDs, which is restated below:

\begin{theorem} \label{thm:cover-lb-precise}(Cover Size Lower Bound for $(n, k)$-PMDs)
Let $k>2$, $k \in \Z_+$, and $\eps$ be sufficiently small as a function of $k$. 
For $n =\Omega((1/k) \cdot \log(1/\eps)/ \log\log(1/\eps))^{k-1}$
{\em any} $\eps$-cover of ${\cal M}_{n,k}$
under the total variation distance must be of size 
$n^{\Omega(k)}  \cdot (1/\eps)^{\Omega((1/k) \cdot  \log(1/\eps)/ \log\log(1/\eps))^{k-1}}.$ 
\end{theorem}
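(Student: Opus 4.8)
The plan is to construct an explicit family $\mathcal{F} \subseteq \mathcal{M}_{n,k}$ of PMDs that is ``well-separated'' in the following combinatorial sense: any two distinct members of $\mathcal{F}$ differ non-trivially in some low-degree parameter moment, and this moment difference forces a non-trivial total variation distance. A lower bound on $|\mathcal{F}|$ then immediately lower bounds the size of any $\eps$-cover, since each cover element can be $\eps$-close to at most one member of $\mathcal{F}$ (after rescaling $\eps$ by a constant). Concretely, I would take PMDs that are sums of $k$-CRVs whose parameter vectors are drawn from a carefully chosen finite ``alphabet'' $\mathcal{A}$ of size roughly $(\log(1/\eps)/\log\log(1/\eps))$ per coordinate, using $k-1$ free coordinates (the $k$-th being determined by normalization); the number of ``types'' (multisets of size $n$ over $\mathcal{A}$, or rather the counts of each alphabet symbol) then gives the claimed $n^{\Omega(k)} \cdot (1/\eps)^{\Omega_k(\log(1/\eps)/\log\log(1/\eps))^{k-1}}$ count. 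The alphabet should be spaced so that distinct count-vectors produce genuinely distinct parameter-moment profiles — e.g.\ geometrically-spaced probabilities $p \approx \eps^{c \cdot t}$ for $t$ in a suitable range, scaled by the coordinate to control variance, so that a Vandermonde-type argument shows the map from counts to the tuple of low-degree parameter moments is injective.

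The key steps, in order: (1) Define the alphabet $\mathcal{A}$ and the family $\mathcal{F}$ as above, and verify $|\mathcal{F}| = n^{\Omega(k)}(1/\eps)^{\Omega_k(\log(1/\eps)/\log\log(1/\eps))^{k-1}}$ by a direct counting of admissible count-vectors. (2) Show that the map sending a PMD in $\mathcal{F}$ to its vector of parameter moments $(M_m(X))_{|m|_1 \le D}$, for an appropriate degree $D = O_k(\log(1/\eps)/\log\log(1/\eps))$, is injective on $\mathcal{F}$; this is the generalized-Vandermonde / linear-independence step, using the geometric spacing of $\mathcal{A}$. (3) Quantify the separation: if $X,Y \in \mathcal{F}$ are distinct, then $|M_m(X) - M_m(Y)| \ge \eps^{O_k(1)}$ for some $m$ with $|m|_1 \le D$. (4) Convert a parameter-moment gap into a gap between the probability generating functions $G_X(z) = \E[z_1^{X^{(1)}} \cdots z_k^{X^{(k)}}]$ evaluated at some complex point: following the sketch in the introduction, use Cauchy's integral formula to extract the offending moment-like coefficient from $\log G_X$ along a suitable contour, so that $|G_X(z_0) - G_Y(z_0)| \ge \eps^{O_k(1)}$ for some $z_0$ on a polycircle. (5) Finally, pass from a pointwise PGF gap to a total variation gap: since $|G_X(z_0) - G_Y(z_0)| \le \sum_x |{\Pr[X=x]} - {\Pr[Y=x]}| \cdot |z_0^x| \le (\max_x |z_0^x|) \cdot 2\dtv(X,Y)$, and $\max_x |z_0^x|$ is controlled (at most $\poly$ in the relevant quantities because the effective support of $X,Y$ is bounded), we get $\dtv(X,Y) \ge \eps / C$ for a constant $C = C(k)$. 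Taking $\eps' = \eps/C$ and noting the family is parameterized to make this work for the stated $n$ completes the argument.

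I expect step (4) — extracting a clean PGF lower bound from a parameter-moment gap — to be the main obstacle, and this is exactly where the excerpt flags that the proof is ``non-constructive making essential use of Cauchy's integral formula.'' The subtlety is that $\log G_X$ is a (formal or analytic, on a suitable domain) power series whose coefficients are, up to multinomial factors, the parameter moments $M_m(X)$; but to isolate the single coefficient where $X$ and $Y$ disagree one must integrate $\log G_X(z) \cdot z^{-m-\mathbf 1}$ over an appropriately small polycircle, and one has to control the contributions of \emph{all other} coefficients (both higher and lower degree) so they do not swamp the gap. This requires choosing the contour radius as a function of $\eps$ and the alphabet spacing, and bounding $|\log G_X|$ on that contour — which in turn needs $G_X$ to stay away from $0$ there, hence the need for the $k$-maximality-style normalization ensuring one coordinate dominates. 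Managing this bookkeeping — the interplay between contour radius, alphabet geometry, the degree $D$, and the $\poly(k)$ losses — is the delicate part; steps (1)–(3) and (5) are comparatively routine once the parameters are chosen consistently. A secondary concern is making sure the constants in the exponents genuinely come out as $\Omega(k)$ and $\Omega_k(\cdot)$ rather than degrading, which forces the alphabet to have the full $\log(1/\eps)/\log\log(1/\eps)$ size in each of the $k-1$ coordinates simultaneously, and the injectivity argument (step 2) must tolerate that full-dimensional product structure.
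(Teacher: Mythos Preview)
Your high-level architecture (steps (2)--(5)) matches the paper's exactly: establish a parameter-moment gap via a Vandermonde-type argument, pass to a pointwise gap in the log of the PGF via Cauchy's integral formula, and then to a total variation gap by truncating the PGF expansion on the effective support. You also correctly flag step (4) as the delicate one, and your sketch of controlling $\log G_X$ on a polycircle of bounded radius is essentially what the paper does.

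The gap is in step (1). If your family is indexed by multisets of size $n$ over an alphabet $\mathcal{A}$ of size $A \approx \bigl(\log(1/\eps)/\log\log(1/\eps)\bigr)^{k-1}$, then at the threshold $n = n_0 = \Theta(A)$ you get only $\binom{2A-1}{A-1} = 2^{O(A)}$ PMDs, not $(1/\eps)^{\Omega(A)}$ --- you are missing a factor of roughly $\log(1/\eps)$ in the exponent. A multiset-over-a-fixed-alphabet construction simply cannot pack $(1/\eps)^{\Omega(n_0)}$ well-separated PMDs into $n_0$ CRVs: there are $n_0$ slots and $A \approx n_0$ symbols, so the entropy is $O(n_0)$ bits, not $\Omega(n_0 \log(1/\eps))$ bits. (Geometric spacing does not help here; it only worsens the conditioning of the Vandermonde system in step (3).)

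The paper's construction uses two scales. With $a = \Theta\bigl((1/k)\log(1/\eps)/\log\log(1/\eps)\bigr)$ and $n_0 = a^{k-1}$, the CRV indexed by $s \in [a]^{k-1}$ has \emph{fixed} base parameters $p_{s,j} = s_j/\ln^k(1/\eps)$ for $j \ge 2$ --- this arithmetic grid is what produces the tensor-of-Vandermondes in step (2) --- while the first coordinate carries a tiny additive perturbation $p_{s,1} = (s_1 + \eps^{3c} f(s))/\ln^k(1/\eps)$ with $f(s)$ ranging over $[t]$, $t = \lfloor \eps^{-c}\rfloor$. The family is indexed by \emph{functions} $f:[a]^{k-1} \to [t]$, of which there are $t^{a^{k-1}} = (1/\eps)^{\Omega(n_0)}$. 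Because the perturbation is $o(1)$ relative to the integer grid, one can Taylor-expand $M_m(X^f)-M_m(X^g)$ and isolate a single tensor-Vandermonde system acting on the \emph{integer} vector $f-g$; invertibility forces some coordinate of the image to be a nonzero integer, hence $\ge 1$, giving the quantitative $\eps^{O(1)}$ moment gap you need in step (3).

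Finally, the paper obtains the $n^{\Omega(k)}$ factor separately, not from the same count: for $n > n_0$ one pads with $n-n_0$ deterministic CRVs that shift the mean into one of $(n/n_0)^{\Omega(k)}$ cells; PMDs with different shifts have disjoint supports, so the packings multiply.
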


Theorem~\ref{thm:cover-lb-precise} will follow from the following theorem:

\begin{theorem} \label{thm:cover-lb-sparse}
Let $k>2$, $k \in \Z_+$, and $\eps$ be sufficiently small as a function of $k$. 
Let $n = \Omega((1/k) \cdot  \log(1/\eps)/ \log\log(1/\eps))^{k-1}.$ 
There exists a set $\mathcal{S}$ of $(n, k)$-PMDs so that 
for $x, y \in \mathcal{S}$, $x\neq y$ implies that $\dtv(x,y)\geq \eps$, and
$|\mathcal{S}| \geq (1/\eps)^{\Omega((1/k) \cdot  \log(1/\eps)/ \log\log(1/\eps))^{k-1}}.$
\end{theorem}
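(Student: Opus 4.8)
The plan is to exhibit by hand an $\eps$-separated family $\mathcal S$ of $(n,k)$-PMDs whose members are pulled apart by a single low-degree \emph{parameter} moment, and then to argue, via the logarithm of the characteristic function and Cauchy's integral formula, that such a moment gap already forces $\dtv\ge\eps$. Theorem~\ref{thm:cover-lb-precise} will then follow at once, since any $\eps$-cover of $\mathcal M_{n,k}$, run with $3\eps$ in place of $\eps$, must be at least as large as any $3\eps$-separated subset. Set $D\eqdef\Theta(\log(1/\eps)/\log\log(1/\eps))$ and $L\eqdef\Theta(D/k)$; let $\mathcal I_\ell\eqdef\{m\in\Z_+^{k-1}:|m|_1=\ell\}$, $\mathcal I\eqdef\bigcup_{\ell\le L}\mathcal I_\ell$, so $|\mathcal I|=\Theta(L^{k-1})$; and take $n\eqdef\Theta(L^{k-1})=\Theta((1/k)\log(1/\eps)/\log\log(1/\eps))^{k-1}$, as the hypothesis permits. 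Every $X$ in $\mathcal S$ will be $k$-maximal and nearly concentrated at $e_k$: it consists of $N\eqdef\Theta(n)$ "active" $k$-CRVs with $p_{i,k}=\Theta(1)$, padded out to $n$ components by copies of $e_k$, so that $V\eqdef\sum_{j<k}s_j(X)=\Theta(n)$ and, by Lemma~\ref{lem:conc-pmd}, $X$ is $\eps$-concentrated on $\{a:\sum_{j<k}a_j=O(n)\}$; this concentration is what keeps the inverse-Fourier step from losing a factor $(1+\rho)^n$.

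For the moment-to-distance step I would fix two such $X,Y$, let $\ell_0\le L$ be minimal with $M_m(X)\ne M_m(Y)$ for some $m\in\mathcal I_{\ell_0}$, and put $\delta\eqdef\max_{m\in\mathcal I_{\ell_0}}|M_m(X)-M_m(Y)|$. Writing $z_k=1$, $u_j=1-e(\xi_j)$ and $\rho\eqdef c/n$, for real $\xi$ with $\|\xi\|_\infty\le\rho$ the series of Equation~(\ref{FTEqn}) converges and
\[ H(\xi)\eqdef\log\wh X(\xi)-\log\wh Y(\xi)=-\sum_{m}\tfrac1{|m|_1}\binom{|m|_1}{m}\bigl(M_m(X)-M_m(Y)\bigr)\prod_{j<k}u_j^{m_j}, \]
with $|H|\le\tfrac12$ on this polydisc (as $\rho V=O(1)$; if $|H|$ ever exceeds $\tfrac12$ we are already done, since $\dtv(X,Y)\ge\tfrac12\sup_\xi|\wh X(\xi)-\wh Y(\xi)|$). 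Because $X,Y$ share all parameter moments of degree $<\ell_0$, the degree-$\ell_0$ part of the power series $H$ is the nonzero homogeneous polynomial $-(2\pi i)^{\ell_0}\sum_{m\in\mathcal I_{\ell_0}}\tfrac1{\ell_0}\binom{\ell_0}{m}(M_m(X)-M_m(Y))\,\xi^m$, and Cauchy's integral formula on the complex polydisc $|\xi_j|=\rho$ (the non-constructive ingredient) produces a point $\xi^\ast$, $|\xi^\ast_j|=\rho$, where $|H(\xi^\ast)|\ge c_{k,\ell_0}\,\delta\,(2\pi\rho)^{\ell_0}$, the constant $c_{k,\ell_0}=\exp(-O(L\log L))=\eps^{-o(1)}$ being the norm-equivalence constant for degree-$\le L$ polynomials in $k-1$ variables. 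At $\xi^\ast$ we have $\wh X(\xi^\ast)-\wh Y(\xi^\ast)=\wh Y(\xi^\ast)(e^{H(\xi^\ast)}-1)$ with $|\wh Y(\xi^\ast)|=e^{\mathrm{Re}\log P_Y}\ge e^{-O(\rho V)}\ge\tfrac12$ and $|e^{H(\xi^\ast)}-1|\ge|H(\xi^\ast)|/2$; while $\wh X(\xi^\ast)-\wh Y(\xi^\ast)=\sum_a(\Pr[X=a]-\Pr[Y=a])\prod_{j<k}(1-u_j^\ast)^{a_j}$ with $\prod_{j<k}|1-u_j^\ast|^{a_j}\le(1+\rho)^{O(n)}=e^{O(1)}$ on the effective support. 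Combining, $\dtv(X,Y)\ge c_k'\,\delta\,(2\pi\rho)^{\ell_0}$; and since $\rho=c/n$ with $n=\Theta(L^{k-1})$, $L=\Theta(D/k)$, one checks $(2\pi\rho)^{\ell_0}=\eps^{\Theta((k-1)\ell_0/D)(1+o(1))}$, so $\dtv(X,Y)\ge\eps$ as soon as $\delta\ge\delta_{\ell_0}\eqdef C\,\eps^{\,1-\Theta((k-1)\ell_0/D)}$, and for $\ell_0\le L\le D/(3k)$ this exponent is bounded away from $0$, so $\delta_{\ell_0}=(1/\eps)^{-\Omega(1)}\ll N$.

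For the count, I would build, for each $\ell\le L$, a tensorized product grid of atom $k$-CRVs (all $k$-maximal, concentrated at $e_k$, with node values in $(0,1/2)$) whose generalized-Vandermonde moment matrix on $\mathcal I_\ell$ is invertible with inverse of norm $\exp(O(L))=\eps^{-o(1)}$; assembling $N=\Theta(|\mathcal I|)$ active CRVs from these atoms with integer multiplicities then lets the vector $(M_m(X))_{m\in\mathcal I}$ be driven onto a product grid in which the $m$-th coordinate ($m\in\mathcal I_\ell$) takes $\Theta(N\,c_k^\ell/\delta_\ell)$ values spaced $\ge\delta_\ell$ apart, so that distinct grid points disagree by $\ge\delta_{\ell_0}$ at their minimal differing degree $\ell_0$ and hence (by the previous step) give $\eps$-separated PMDs. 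Since $N\,c_k^\ell=\eps^{-o(1)}\cdot\polylog(1/\eps)$ and $\delta_\ell=\eps^{\,1-\Theta((k-1)\ell/D)}$, each $\mathcal I_\ell$-coordinate carries $\Theta(\log(1/\eps))$ bits for every $\ell\le L$, so multiplying over the $\sum_{\ell\le L}|\mathcal I_\ell|=\Theta(L^{k-1})$ coordinates yields $|\mathcal S|\ge(1/\eps)^{\Theta(L^{k-1})}=(1/\eps)^{\Omega((1/k)\log(1/\eps)/\log\log(1/\eps))^{k-1}}$, which is Theorem~\ref{thm:cover-lb-sparse}.

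The step I expect to be the main obstacle is the quantitative "moment gap $\Rightarrow$ distance" estimate: a bounded analytic function can be exponentially small on the real polydisc while large on the complex one, so one must (i) keep $|H|$ genuinely bounded, which is precisely why the construction has to be concentrated at $e_k$ (so that $V$ is polylogarithmic and $(1+\rho)^n=e^{O(1)}$ on the effective support even though $\rho=\Theta(1/n)$), (ii) verify that the quasipolynomial-in-$1/\eps$ losses — the Vandermonde conditioning $\exp(O(L))$, the norm-equivalence constant $\exp(O(L\log L))$, and the factor $(2\pi\rho)^{\ell_0}$ — all amount to only $\eps^{\pm o(1)}$, and (iii) use a \emph{degree-dependent} accuracy $\delta_\ell$, finer for small $\ell$, so that $\Theta(L^{k-1})$ of the moment coordinates each contribute a full $\Theta(\log(1/\eps))$ to the exponent of $|\mathcal S|$ rather than a mere $\Theta(\log\log(1/\eps))$. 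Pinning $L=\Theta(D/k)$ and tracking the $k$-dependences through all of these is what makes the final exponent match the stated bound.
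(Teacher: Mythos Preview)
Your high-level plan is the paper's: show that a well-chosen family of $k$-maximal PMDs is separated in some low-degree parameter moment, then use Cauchy's integral formula on the log of the probability generating function to turn a moment gap into a pointwise PGF gap, and finally a PGF gap into a total-variation gap via concentration. That skeleton is correct. However, your choice of parameters and your construction are substantially more delicate than necessary, and the construction step (``drive moments onto a product grid'') is where the argument actually has a gap.

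\medskip

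\textbf{The construction is the problem.} Your Cauchy step explicitly assumes that when $X,Y$ agree at all moments of degree $<\ell_0$, they agree \emph{exactly} (so that $H$ has no terms below degree $\ell_0$). But moments are linear in the multiplicity vector $(n_r)_r$, so the set of achievable moment vectors is a \emph{lattice}, and any change in multiplicities perturbs every moment simultaneously. Placing the moment vector on a product grid with degree-dependent spacing $\delta_\ell$ --- and, more to the point, arranging that two grid points that coincide on degrees $<\ell_0$ come from PMDs with \emph{identical} degree-$<\ell_0$ moments --- requires solving a constrained integer linear system whose feasibility (with nonnegative multiplicities summing to $N$, and with the Vandermonde conditioning you quote) you have not established. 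This is precisely the step you flag as ``the main obstacle,'' and in your write-up it is asserted rather than carried out.

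\medskip

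\textbf{What the paper does differently, and why it is simpler.} The paper avoids both the small polydisc and the product-grid construction by making the probabilities \emph{tiny}. Concretely, it takes $a=\Theta\bigl(\log(1/\eps)/(k\log\log(1/\eps))\bigr)$, indexes the CRVs by $s\in[a]^{k-1}$, and sets
\[
p^f_{s,j}=\frac{s_j+\delta_{j,1}\,\eps^{3c}\,f(s)}{\ln^k(1/\eps)}\qquad (1\le j\le k-1),
\]
for a free function $f:[a]^{k-1}\to[t]$ with $t=\lfloor\eps^{-c}\rfloor$. Because every $p^f_{s,j}\le 1/\ln^{k-1}(1/\eps)$, the log-PGF is bounded on the polydisc of radius $1$ (your $\rho$ can be taken equal to $1$), so the Cauchy step loses no $\rho^{\ell_0}$ factor and no degree-dependent accuracy is needed. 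For the construction side, the paper never tries to hit a grid of moments. Instead it linearizes: to first order in $\eps^{3c}$,
\[
M_m(X^f)-M_m(X^g)\;\approx\;\eps^{3c}\cdot \ln^{-k|m|_1}(1/\eps)\cdot\bigl(L(f-g)\bigr)_m,
\]
where $L$ is the tensor product of Vandermonde matrices on the nodes $1,\dots,a$, hence invertible. Since $f,g$ are integer-valued, $L(f-g)$ has integer entries, so nonzero implies $\ge 1$; this gives a \emph{uniform} moment gap $\ge\eps^{4c}$ for some $m\in[a]^{k-1}$, with no case analysis on $\ell_0$. The count is then the trivial $t^{a^{k-1}}=(1/\eps)^{\Omega((1/k)\log(1/\eps)/\log\log(1/\eps))^{k-1}}$. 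In short: make the probabilities small enough that the radius-$1$ polydisc is available, perturb only one coordinate so the moment-difference map is a Vandermonde tensor in the perturbation, and use integrality of $f-g$ in place of your grid. This replaces your entire product-grid construction and degree-dependent $\delta_\ell$ mechanism with a one-line invertibility check.
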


The proof of Theorem~\ref{thm:cover-lb-sparse} is quite elaborate and is postponed
to the following subsection. We now show how Theorem~\ref{thm:cover-lb} follows from it.
Let $n_0 \eqdef  \Theta((1/k) \cdot  \log(1/\eps)/ \log\log(1/\eps))^{k-1}.$ By Theorem~\ref{thm:cover-lb-sparse},
there exists a set  $\mathcal{S}_{n_0}$ of size
$(1/\eps)^{\Omega(n_0)}$
consisting of $(n_0, k)$-PMDs that are $\eps$-far from each other.

We construct $(n/n_0)^{\Omega(k)}$ appropriate ``shifts'' of the set $\mathcal{S}_{n_0}$, by selecting 
appropriate sets of $n-n_0$ deterministic component $k$-CRVs. These sets   
shift the mean vector of the corresponding PMD, while the remaining $n_0$ components form 
an embedding of the set $\mathcal{S}_{n_0}$. We remark that the PMDs corresponding to different shifts
have disjoint supports. Therefore, any $\eps$-cover must contain disjoint $\eps$-covers 
for each shift, which is isomorphic to $\mathcal{S}_{n_0}$. Therefore, any $\eps$-cover
must be of size 
$$(n / n_0)^{\Omega(k)} \cdot (1/\eps)^{\Omega((1/k) \cdot  \log(1/\eps)/ \log\log(1/\eps))^{k-1}} = 
n^{\Omega(k)} \cdot  (1/\eps)^{\Omega((1/k) \cdot  \log(1/\eps)/ \log\log(1/\eps))^{k-1}} \;,   $$
where the last inequality used the fact that $n_0^k = o((1/\eps)^{n_0})$,
if the parameter $\eps$ is sufficiently small as a function of $k.$ This completes the proof.
The following subsection is devoted to the proof of Theorem~\ref{thm:cover-lb-sparse}.

\subsubsection{Proof of Theorem~\ref{thm:cover-lb-sparse}.}
Let $k>2$, $k \in \Z_+$, and $\eps$ be sufficiently small as a function of $k$. 
Let $n = \Theta((1/k) \cdot  \log(1/\eps)/ \log\log(1/\eps))^{k-1}.$ 

We express an $(n, k)$-PMD $X$ as a sum of independent $k$-CRVs 
$X_s$, where $s$ ranges over some index set.
For $1\leq j \leq k-1$, we will denote $p_{s,j} = \Pr[X_s=e_j]$. 
Note that $\Pr[X_s = e_k] = 1- \sum_{j=1}^{k-1} p_{s,j}.$

We construct our lower bound set $\mathcal{S}$ explicitly as follows. 
Let $0< c < 1$ be an appropriately small universal constant. 
We define the integer parameters $a \eqdef \lfloor c \ln(1/\eps)/ 2k \ln\ln(1/\eps)) \rfloor$ and $t \eqdef \lfloor \eps^{-c} \rfloor$.
We define the set $\mathcal{S}$ to have elements indexed by a function
$ f:[a]^{k-1} \rightarrow [t],$
where the function $f$ corresponds to the PMD
$$
X^f \eqdef \sum_{s\in [a]^{k-1}} X^f_s \;,
$$
and the $k$-CRV $X^f_s,$  $s = (s_1, \ldots, s_{k-1}) \in [a]^{k-1}$, has the following parameters:
\begin{equation} \label{eqn:def-p}
p^f_{s,j} = \frac{s_j+\delta_{j,1}\eps^{3c}f(s)}{\ln^k(1/\eps)},
\end{equation}
for $1\le j \le k-1$. {(Note that we use $\delta_{i, j}$ to denote the standard Kronecker  delta function, i.e., $\delta_{i, j} = 1$ if and only if $i=j$).}

{Let $\mathcal{F} = \{f \mid  f:[a]^{k-1} \rightarrow [t]\}$ be the set of all functions from $[a]^{k-1}$ to $[t].$
Then, we have that 
$$ \mathcal{S} \eqdef \{X^f: f \in \mathcal{F} \}.$$
That is, each PMD in $\cal {S}$ is the sum of $a^{k-1}$ many $k$-CRVs, and there are $t$ possibilities
for each $k$-CRV. Therefore,
$$
|\mathcal{S}| = t^{a^{k-1}} = (1/\eps)^{\Omega((1/k) \cdot  \log(1/\eps)/ \log\log(1/\eps))^{k-1}}.
$$
Observe that all PMDs in $\mathcal{S}$ are $k$-maximal.
In particular, for any $f \in \mathcal{F},$ $s \in [a]^{k-1},$ and $1 \le j \le k-1,$ the above definition implies that
\begin{equation} \label{eqn:small-prob}
p^f_{s,j} \leq \frac{1}{k} \cdot \frac{1}{\ln^k(1/\eps)}.
\end{equation}
An important observation, that will be used throughout our proof, is that for each $k$-CRV $X^f_s,$
only the first out of the $k-1$ parameters $p^f_{s,j},$ $1 \leq j \leq k-1,$ depends on the function $f$.
More specifically, the effect of the function $f$ on $p^f_{s,1}$ is a very small perturbation of the numerator.
Note that the first summand in the numerator of (\ref{eqn:def-p})
is a positive integer, while the summand corresponding to $f$ is at most $\eps^{2c} = o(1).$
We emphasize that this perturbation term is an absolutely crucial ingredient of our construction. As will 
become clear from the proof below, this term allows us to show that distinct PMDs in $\cal{S}$
have a {parameter} moment that is substantially different.

}


The proof proceeds in two main conceptual steps that we explain in detail below.

\paragraph{First Step.}
In the first step, we show that for any two distinct PMDs in $\mathcal{S},$
there exists a {parameter} moment in which they differ by a non-trivial amount.
For $m \in \Z^{k-1}_+$, we recall that the $m^{th}$ {parameter} moment of a $k$-maximal PMD $X=\sum_{s \in S}X_s$ is defined to be
$M_m(X) \eqdef \sum_{s\in S} \prod_{j=1}^{k-1} p_{s,j}^{m_j}.$
In Lemma~\ref{lem:moment-separation} below, we show that for any distinct PMDs $X^f, X^g \in \mathcal{S},$
there exists  $m \in [a]^{k-1}$ such that their $m^{th}$ {parameter} moments differ by at least $\poly(\eps).$

\begin{lemma}\label{lem:moment-separation}
If $f,g:[a]^{k-1}\rightarrow [t]$, with $f\neq g$, 
then there exists $m \in [a]^{k-1}$ so that $$|M_{m}(X^f) - M_m(X^g)| \geq \eps^{4c}\;.$$
\end{lemma}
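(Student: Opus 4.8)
\textbf{Proof plan for Lemma~\ref{lem:moment-separation}.} Put $h := f-g$, a nonzero function $[a]^{k-1}\to\Z$ with $|h(s)|\le t$. The structural feature of the construction that drives everything is that, among the $k-1$ parameters of each $k$-CRV $X^f_s$, only the first depends on $f$: by (\ref{eqn:def-p}) we have $p^f_{s,j}=p^g_{s,j}=s_j/\ln^k(1/\eps)$ for $j\ge 2$, while $p^f_{s,1}=(s_1+\eps^{3c}f(s))/\ln^k(1/\eps)$. Hence, factoring out the common denominator, for every $m\in[a]^{k-1}$
$$M_m(X^f)-M_m(X^g)=\frac{\Delta_m}{\ln^{k|m|_1}(1/\eps)},\qquad \Delta_m:=\sum_{s\in[a]^{k-1}}\Big[(s_1+\eps^{3c}f(s))^{m_1}-(s_1+\eps^{3c}g(s))^{m_1}\Big]\prod_{j=2}^{k-1}s_j^{m_j}.$$
The plan is to produce a single $m$ for which $|\Delta_m|$ is bounded below by an absolute power of $\eps$, and then verify that dividing by $\ln^{k|m|_1}(1/\eps)$ still leaves at least $\eps^{4c}$; the parameter choices $a=\Theta(\log(1/\eps)/\log\log(1/\eps))$ and $t=\lfloor\eps^{-c}\rfloor$, with $c$ a small universal constant, are exactly what makes this final bookkeeping go through.

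\emph{Step 1 (isolate the term linear in $\eps^{3c}$).} Expanding each bracket by the binomial theorem and subtracting, the $\eps^{3c}$-linear term is $m_1s_1^{m_1-1}\eps^{3c}h(s)$, and every remaining term carries a factor $\eps^{3c\ell}$ with $\ell\ge2$. Thus $\Delta_m=\eps^{3c}\,m_1R_m+E_m$, where
$$R_m:=\sum_{s\in[a]^{k-1}}h(s)\,s_1^{m_1-1}\prod_{j=2}^{k-1}s_j^{m_j}\ \in\ \Z ,$$
and $E_m$ collects the $\ell\ge2$ contributions. Since $\eps^{3c}t\le\eps^{2c}<1$, each such contribution is tiny; the delicate point is that one must exhibit $E_m$ as $\eps^{3c}$ times a quantity that is genuinely $o(1)$ relative to the main term after normalization, and — because $R_m$ may well be only $\pm1$ after heavy cancellation — this requires a careful, partially cancellation-aware estimate rather than a bare triangle-inequality bound. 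Granting this, $|\Delta_m|\ge\tfrac12\eps^{3c}m_1|R_m|$ whenever $R_m\ne0$.

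\emph{Step 2 ($R_m\ne0$ for some $m$, via a tensor Vandermonde).} Viewing $h$ as a vector indexed by $s\in[a]^{k-1}$ and $(R_m)_m$ as a vector indexed by $m\in[a]^{k-1}$, we have $(R_m)_m=Bh$ with $B=V_1\otimes V_2\otimes\cdots\otimes V_{k-1}$, where $V_1=\big[s_1^{m_1-1}\big]_{m_1,s_1\in[a]}$ and $V_j=\big[s_j^{m_j}\big]_{m_j,s_j\in[a]}$ for $2\le j\le k-1$. Now $V_1$ is the Vandermonde matrix on the distinct nodes $1,2,\dots,a$, and $V_j=\widetilde V_j\cdot\mathrm{diag}(1,2,\dots,a)$ with $\widetilde V_j$ the Vandermonde on the same nodes, so each $V_j$ is invertible; hence $B$ is invertible and $Bh\ne0$ because $h\ne0$. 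Fix any $m\in[a]^{k-1}$ with $R_m\ne0$; being a nonzero integer, $|R_m|\ge1$, and $m_1\ge1$, so Step 1 gives $|\Delta_m|\ge\tfrac12\eps^{3c}$.

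\emph{Step 3 (survive the normalization).} For this $m$ we have $|m|_1\le(k-1)a$, hence $\ln^{k|m|_1}(1/\eps)\le\exp\!\big(k(k-1)a\log\log(1/\eps)\big)$, and substituting the value of $a$ bounds the exponent lost to the normalization in terms of $c$ and $k$ only; with $c$ chosen small enough (and $\eps$ small as a function of $k$) the sharper constant tracking of the full proof then yields $|M_m(X^f)-M_m(X^g)|=\ln^{-k|m|_1}(1/\eps)\,|\Delta_m|\ge\eps^{4c}$, which is the claim. I expect Step 1 to be the main obstacle: turning the binomial remainder $E_m$ into a genuinely lower-order term is not a routine triangle-inequality argument, precisely because the leading integer coefficient $R_m$ can be as small as $1$; by contrast, the tensor-Vandermonde nonvanishing of Step 2 and the parameter arithmetic of Step 3 are standard.
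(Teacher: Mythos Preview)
Your plan is correct and matches the paper's proof: linearize the moment difference in the perturbation $\eps^{3c}$, show the linear coefficient is a nonzero integer via tensor-Vandermonde invertibility, then absorb the $\ln^{-k|m|_1}(1/\eps)$ normalization using the choice of $a$. Your concern about Step~1 is misplaced---a bare triangle inequality on the $\ell\ge2$ binomial tail already gives $|E_m|=O(a^{ka}\eps^{4c})=O(\eps^{7c/2})$ (since the ratio of consecutive binomial terms is at most $a\eps^{2c}\le\tfrac12$, and $a^{ka}\le\eps^{-c/2}$ by the definition of $a$), which is $\ll\eps^{3c}$ and hence defeats even the minimal main term $\eps^{3c}\cdot 1\cdot 1$; this is exactly what the paper does, with no cancellation-awareness needed. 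Your Step~2 factorization $V_j=\widetilde V_j\cdot\mathrm{diag}(1,\dots,a)$ for $j\ge2$ is in fact a slightly cleaner version of the paper's tensor-Vandermonde step.
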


{We now give a brief intuitive overview of the proof.
It is clear that, for $f \neq g$, the PMDs $X^f$ and $X^g$ have distinct parameters.
Indeed, since $f \neq g$, there exists an $s \in [a]^{k-1}$ such that $f(s) \neq g(s)$, which implies that 
the $k$-CRVs  $X^f_s$ and $X^g_s$ have $p^f_{s,1} \neq p^g_{s,1}.$

We start by pointing out that if two arbitrary PMDs have distinct parameters, 
there exists a {parameter} moment where they differ. 
This implication uses the fact that PMDs are determined by their moments,
which can be established by showing that the Jacobian matrix of the moment function is non-singular.
Lemma~\ref{lem:moment-separation} is a a robust version of this fact, that applies to PMDs in $\mathcal{S}$, 
and is proved by crucially exploiting the structure of the set $\cal {S}.$

Our proof of Lemma~\ref{lem:moment-separation} proceeds as follows: 
We start by approximating the {parameter} moments $M_m(X^f)$, $X^f \in \cal{S}$,
from above and below, using the definition of the parameters of $X^f.$ This approximation step allows us to express
the desired difference $M_{m}(X^f) - M_m(X^g)$ (roughly) as the product of two terms: the first term is always positive 
and has magnitude $\poly(\eps),$ while the second term is $L\cdot (f-g),$ for a certain linear transformation (matrix)
$L.$ We show that $L$ is the tensor product of matrices $L_i$, 
where each $L_i$ is a Vandermonde matrix on distinct integers.
Hence, each $L_i$ is invertible, which in turn implies that $L$ is invertible.
Therefore, since $f \neq g$, we deduce that $L\cdot (f-g) \neq \mathbf{0}.$ 
Noting that the elements of this vector are integers, yields the desired lower bound.
}

\begin{proof}[Proof of Lemma~\ref{lem:moment-separation}]
We begin by approximating the $m^{th}$ {parameter} moment of $X^f$. We have that
\begin{align*}
M_m(X^f) & = \sum_{s\in [a]^{k-1}} \prod_{j=1}^{k-1} \left( \frac{s_j+\delta_{j,1}\eps^{3c}f(s)}{\ln^k(1/\eps)}\right)^{m_j} \\
& = \ln^{-k\|m\|_1}(1/\eps) \sum_{s\in [a]^{k-1}} (s_1+\eps^{3c}f(s))^{m_1} \prod_{j=2}^{k-1} s_j^{m_j} \;.
\end{align*}
Note that in the expression $(s_1+\eps^{3c}f(s))^{m_1}=\sum_{i=0}^{m_1} {m_1 \choose i} s_1^{m_1-i} (\eps^{3c}f(s))^i$, 
the ratio of the $(\eps^{3c}f(s))^{i+1}$ term to the $(\eps^{3c}f(s))^i$ term is $(m_1-i) \eps^{3c}f(s)/s_1 i \leq a \eps^{2c} \leq 1/2$. 
So, we have 
\begin{eqnarray*} 
(s_1+\eps^{3c}f(s))^{m_1} & = & \sum_{i=0}^{m_1} {m_1 \choose i} s_1^{m_1-i} (\eps^{3c}f(s))^i \\
&\leq& s_1^{m_1} + m_1 s_1^{m_1-1} \eps^{3c} f(s) + (m_1(m_1-1)/2) s_1^{m_1-2} (\eps^{3c} f(s))^2 \sum_{i=0}^{m_1-2} 2^{-i} \\
&\leq& s_1^{m_1} + m_1 s_1^{m_1-1} \eps^{3c} f(s) + a^a \eps^{4c} \;.
\end{eqnarray*}
We can therefore write
\begin{align*} 
M_m(X^f) &= \ln^{-k\|m\|_1}(1/\eps) \sum_{s\in [a]^{k-1}} \left(s_1+\eps^{3c}f(s)\right)^{m_1} \prod_{j=2}^{k-1} s_j^{m_j}\\
& \leq  \ln^{-k\|m\|_1}(1/\eps) \sum_{s\in [a]^{k-1}} \left(s_1^{m_1} + m_1 s_1^{m_1-1} \eps^{3c} f(s) + a^a \eps^{4c}\right) \left(\prod_{j=2}^{k-1} s_j^{m_j}\right)\\
& \leq \ln^{-k\|m\|_1}(1/\eps) \left( \left(\sum_{s\in [a]^{k-1}} \prod_{j=1}^{k-1}s_j^{m_j}\right) + \left(\eps^{3c}\sum_{s\in [a]^{k-1}}m_1f(s) s_1^{m_1-1}\prod_{j=2}^{k-1}s_j^{m_j}\right) + a^{ka} \eps^{4c}\right) \;.
\end{align*}
Note that
$a^{ka} = \exp{(ak \ln a)} \leq \exp{(ak \ln \ln 1/\eps)} \leq \exp{(c \ln \eps/2)} = (1/\eps)^{c/2} \;,$
and so finally we have
$$M_m(X^f) \leq \ln^{-k\|m\|_1}(1/\eps) \left( \left(\sum_{s\in [a]^{k-1}} \prod_{j=1}^{k-1}s_j^{m_j}\right) 
+ \left(\eps^{3c}\sum_{s\in [a]^{k-1}}m_1f(s) s_1^{m_1-1}\prod_{j=2}^{k-1}s_j^{m_j}\right) + \eps^{7c/2}\right) \;,$$
and that
$$M_m(X^f) \geq \ln^{-k\|m\|_1}(1/\eps) \left( \left(\sum_{s\in [a]^{k-1}} \prod_{j=1}^{k-1}s_j^{m_j}\right) + 
\left(\eps^{3c}\sum_{s\in [a]^{k-1}}m_1f(s) s_1^{m_1-1}\prod_{j=2}^{k-1}s_j^{m_j}\right)\right) \;.$$
An analogous formula holds for the {parameter} moments of $X^g$ and therefore 
$$
M_m(X^f)-M_m(X^g) = \ln^{-k\|m\|_1}(1/\eps)\left(\eps^{3c}\sum_{s\in [a]^{k-1}}m_1 s_1^{m_1-1}\prod_{j=2}^{k-1}s_j^{m_j}(f(s)-g(s)) + O(\eps^{7c/2}) \right).
$$
We need to show that, for at least one value of $m \in \Z^{k-1}_+$, the integer
$$
\sum_{s\in [a]^{k-1}} \prod_{j=1}^{k-1}s_j^{m_j-1}(f(s)-g(s))
$$
is non-zero, since $\log^{-k\|m\|_1}(1/\eps) \cdot \eps^{3c} m_1 \prod_{j=1}^{k-1} s_j > 0$ for all $s$ and $m$.

We observe that these integers are the coordinates of $L(f-g)$, 
where $L:\R^{[a]^{k-1}}\rightarrow \R^{[a]^{k-1}}$ is the linear transformation with
$$
L(h)_m \eqdef \sum_{s\in [a]^{k-1}} \prod_{j=1}^{k-1}s_j^{m_j-1}(h(s)) \;,
$$
for $m \in [a]^{k-1}.$
It should be noted that $L$ is the tensor product of the linear transformations $L_i:\R^m\rightarrow \R^m$, with
$$
L_i(h)_{m_i} = \sum_{s_i \in [a]} s_i^{m_i-1} h(s) \;,
$$
for $1 \leq i \leq k-1$. Moreover, each $L_i(h)$ is given by the Vandermonde matrix 
on the distinct integers $1,2,\ldots,a$, which is non-singular. 
Since each $L_i$ is invertible,  the tensor product $L$ is also invertible. 
Therefore, $L(f-g)$ is non-zero. That is, there exists an $m \in [a]^{k-1}$ with $(L(f-g))_m \neq 0$, 
and so 
$$M_m(X^f)-M_m(X^g) = \log^{-k\|m\|_1}(1/\eps) \cdot \eps^{3c} \cdot m_1 \prod_{j=1}^{k-1} s_j (L(f-g))_m \neq 0 \;.$$
Since $m_1 \prod_{j=1}^{k-1} s_j (L(f-g))_m$ is an integer, $|m_1 \prod_{j=1}^{k-1} s_j (L(f-g))_m| \geq 1$. So, we get
$$|M_m(X^f)-M_m(X^g)| \geq \ln^{-k\|m\|_1}(1/\eps) \eps^{3c} \;.$$
Finally, we note that 
$\ln^{-k\|m\|_1}(1/\eps) = \exp (-k\|m\|_1 \ln \ln 1/\eps) \geq \exp (-ka \ln \ln 1/\eps) \geq \eps^{c/2}$. 
We therefore conclude that $|M_m(X^f)-M_m(X^g)| \geq \eps^{4c}$, as required.
\end{proof}

\paragraph{Second Step.}
In the second  step of the proof,
we show that two PMDs in $\mathcal{S}$ 
that have a {parameter} moment that differs by a non-trivial amount, 
must differ significantly in total variation distance. In particular, we prove:

\begin{lemma}\label{lem:differentMoment-implies-dtv}
Let $f,g:[a]^{k-1}\rightarrow [t]$, with $f\neq g$.
If $|M_m(X^f) - M_m(X^g)| \geq \eps^{4c}$  for some $m \in [a]^{k-1},$
then $\dtv (X^f, X^g) \geq \eps.$
\end{lemma}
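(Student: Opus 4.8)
The plan is to translate the non-trivial difference in one parameter moment into a non-trivial pointwise difference between the probability generating functions of $X^f$ and $X^g$, and then to translate a pointwise difference in the PGFs into a lower bound on the total variation distance. This is the ``opposite direction'' to the moment-matching lemma, as announced in the introduction, and is carried out using Cauchy's integral formula rather than constructively.

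\textbf{Step 1: From moments to probability generating functions.} Recall that $X^f$ and $X^g$ are $k$-maximal, so that by \eqref{eqn:small-prob} all the relevant parameters $p^f_{s,j}$, $p^g_{s,j}$, $1 \le j \le k-1$, are at most $\tfrac1k \ln^{-k}(1/\eps)$, hence extremely small. Consider the (multivariate) probability generating function $P^f(z) = \E\!\left[\prod_{j=1}^{k-1} z_j^{(X^f)_j}\right] = \prod_{s} \left(1 - \sum_{j=1}^{k-1}(1-z_j)p^f_{s,j}\right)$, for $z$ in a small complex polydisk around $\mathbf 1$, say $|1-z_j| \le \ln^{k-1}(1/\eps)$ (so that the product is a small perturbation of $1$). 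Taking logarithms and Taylor-expanding $\log(1-w) = -\sum_{\ell\ge1} w^\ell/\ell$, exactly as in the derivation of \eqref{FTEqn}, we obtain that $\log P^f(z)$ is a power series in the variables $(1-z_j)$ whose coefficients are (up to explicit multinomial factors) the parameter moments $M_m(X^f)$, with $|m|_1 \le 2a$ the dominant regime since higher moments are suppressed by huge powers of $\ln(1/\eps)$ (using Claim~\ref{clm:tech}-type bounds together with the smallness of all parameters). Thus $\log P^f - \log P^g$ is, to leading order, $-\binom{|m|_1}{m}\frac1{|m|_1}(M_m(X^f)-M_m(X^g))\prod_j(1-z_j)^{m_j}$ plus higher-order terms that are smaller by at least a factor $\poly(\eps)$; here $m$ is the index produced by Lemma~\ref{lem:moment-separation}. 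The key point — and this is where Cauchy's integral formula enters non-constructively — is that a power series whose coefficient of $\prod_j(1-z_j)^{m_j}$ is bounded below in absolute value by $\eps^{4c}$ cannot be uniformly small on the whole polydisk: if $|\log P^f(z) - \log P^g(z)| \le \rho$ for all $z$ in the polydisk of radius $r$, then by the Cauchy coefficient estimate the coefficient in question is at most $\rho/r^{|m|_1}$, forcing $\rho \ge \eps^{4c} r^{|m|_1} \ge \eps^{4c}\ln^{-(k-1)|m|_1}(1/\eps) \ge \poly(\eps)$. Hence there is a point $z^\star$ in the polydisk with $|\log P^f(z^\star) - \log P^g(z^\star)| \ge \poly(\eps)$, and since both $\log P$'s have magnitude $o(1)$ there, $|P^f(z^\star) - P^g(z^\star)| \ge \poly(\eps)$ as well.

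\textbf{Step 2: From a pointwise PGF gap to total variation distance.} Write $z^\star = \mathbf 1 - w$ with $\|w\|_\infty \le \ln^{k-1}(1/\eps)$; we may further take $|z^\star_j| \le 1$ after a harmless rescaling, or work directly with the bound below. For any two distributions supported on $[n]^k$ (here effectively on a box of side $O(a\cdot\ln^{-k}(1/\eps)\cdot a^{k-1}) = \poly\log(1/\eps)$ per coordinate, by concentration of $X^f,X^g$, since the means $s_j(X^f) = \sum_s p^f_{s,j}$ are only $\poly\log(1/\eps)$), we have the elementary estimate $|P^f(z^\star) - P^g(z^\star)| = \left|\sum_{x} \big(\Pr[X^f = x] - \Pr[X^g = x]\big)\prod_j (z^\star_j)^{x_j}\right| \le \left(\max_{x\in \mathrm{supp}} \prod_j |z^\star_j|^{x_j}\right) \cdot \|X^f - X^g\|_1$. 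Since each $|z^\star_j|$ is within $\ln^{k-1}(1/\eps)$ of $1$ and each coordinate $x_j$ is at most $O(\poly\log(1/\eps))$ on the effective support, the prefactor $\max_x \prod_j |z^\star_j|^{x_j}$ is at most $\exp(O(\poly\log(1/\eps)\cdot \ln^{k-1}(1/\eps))) = (1/\eps)^{o(1)}$ — or, choosing the polydisk radius $r$ a bit more carefully (e.g.\ $r = \eps^{c'}$ for a small constant $c'$ rather than $\ln^{k-1}(1/\eps)$, at the cost of a slightly worse but still $\poly(\eps)$ lower bound in Step 1), the prefactor becomes $(1/\eps)^{o(1)}$ trivially. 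In either case we conclude $\|X^f - X^g\|_1 \ge \poly(\eps) / (1/\eps)^{o(1)} \ge \eps^{O(c)}$, and rescaling the constant $c$ in the definitions of $a$ and $t$ (all of which only shrink the set $\mathcal S$ polynomially in the exponent) makes this at least $2\eps$, i.e.\ $\dtv(X^f, X^g) \ge \eps$, as claimed. Contributions from outside the effective support are controlled by the concentration bound and are absorbed into the $\poly(\eps)$ slack.

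\textbf{Main obstacle.} The delicate point is the bookkeeping in Step 1: one must choose the radius $r$ of the complex polydisk small enough that (i) the Taylor tail of $\log P^f - \log P^g$ beyond degree $|m|_1$, and the contribution of moments $M_{m'}$ with $m' \ne m$, $|m'|_1 \le |m|_1$, are genuinely negligible compared to $\eps^{4c} r^{|m|_1}$, yet large enough that (ii) in Step 2 the amplification factor $\max_x\prod_j|z^\star_j|^{x_j}$ stays sub-$\poly(\eps)$. The smallness of all parameters (from $k$-maximality, \eqref{eqn:small-prob}) and the fact that $|m|_1 \le 2a = O(\log(1/\eps)/\log\log(1/\eps))$ is small are exactly what make both requirements simultaneously satisfiable with $r$ a suitable power of $\eps$; verifying the inequalities is the technical heart of the argument, but it is routine once the right $r$ is fixed.
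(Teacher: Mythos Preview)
Your overall two-step strategy—Cauchy's estimate to pass from a moment gap to a pointwise PGF gap, then from the PGF gap to total variation—is exactly the paper's approach, and your Step~1 is essentially correct once you invoke the Cauchy coefficient bound (the intermediate ``leading order'' sentence is unjustified but harmless, since Cauchy's estimate isolates the single coefficient directly).

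The genuine gap is in Step~2. Your amplification computation is wrong: $\exp\!\big(O(\poly\log(1/\eps)\cdot\ln^{k-1}(1/\eps))\big)$ is quasi-polynomial in $1/\eps$, not $(1/\eps)^{o(1)}$. Your fallback $r=\eps^{c'}$ does not save you either, because then $r^{|m|_1}=\eps^{c'|m|_1}$ and $|m|_1$ can be as large as $(k-1)a=\Theta(\log(1/\eps)/\log\log(1/\eps))$, so $r^{|m|_1}=\eps^{\Theta(\log(1/\eps)/\log\log(1/\eps))}$ is far from $\poly(\eps)$ and Step~1 collapses. Contrary to your ``routine once the right $r$ is fixed'' claim, no choice of $r$ that is a fixed power of $\eps$ threads this needle. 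The deeper issue is that pointwise concentration of the coordinates does not control the \emph{weighted} tail $\sum_{x}|z^\star|^{|x|_1}\Pr[X^f=x]$; dismissing it as ``absorbed into the $\poly(\eps)$ slack'' is unjustified.

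The paper's fix is to take $r=1$ (so $\|z^\star\|_\infty\le 2$), and to replace the naive amplification bound by a truncation argument driven by an \emph{exponential moment} extracted from the PGF itself: since $|\ln P(X^f,z)|\le 1$ for $\|z\|_\infty\le 4$, evaluating at $z=(4,\ldots,4)$ gives $\E\big[4^{\sum_{j<k}X^f_j}\big]\le e$. Splitting the PGF difference at $|x|_1=T$, the head is $\le 2^{T}\cdot 2\,\dtv(X^f,X^g)$ while the tail is $\le 2e\cdot 2^{-T}$ by Markov on this exponential moment. Choosing $T=\Theta(c\log(1/\eps))$ balances the two and gives $\dtv(X^f,X^g)\ge\Omega(\eps^{10c})\ge\eps$ for small $c$. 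This exponential-moment trick is the missing ingredient in your Step~2.
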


{We establish this lemma in two sub-steps: 
We first show that if the $m^{th}$ parameter moments of two PMDs in $\mathcal{S}$ differ by a non-trivial amount, 
then the corresponding probability generating functions (PGF) must differ by a non-trivial amount at a point. 
An intriguing property of our proof of this claim is that it is 
non-constructive: we prove that there exists a point where  the PGF's differ, but we do not explicitly
find such a point. Our non-constructive argument makes essential use of Cauchy's integral formula. 
We are then able to directly translate a distance lower bound between the PGFs to a lower 
bound in total variation distance.
}

For a random variable $W = (W_1, \ldots, W_k) $ taking values in $\Z^k$ and $z=(z_1,\ldots,z_{k-1})\in \C^{k-1}$, 
we recall the definition of the probability generating function:
$P(W,z) \eqdef \E\left[\prod_{i=1}^{k-1} z_i^{W_i}\right].$
For a PMD $X^f$, we have that
$$P(X^f,z)  = \E\left[\prod_{s\in [a]^{k-1}} \prod_{i=1}^{k-1} z_i^{X^f_{s,i}} \right]
= \prod_{s\in [a]^{k-1}} \left(1+\sum_{i=1}^{k-1}p^f_{s,i}({z_i}-1) \right) \;.
$$

We start by establishing the following crucial claim:

\begin{claim} \label{claim:moment-to-pgf}
Let $f,g:[a]^{k-1}\rightarrow [t]$, with $f\neq g$.
If $|M_m(X^f) - M_m(X^g)| \geq \eps^{4c}$  for some $m \in [a]^{k-1},$
then there exists $z^{\ast} \in \C^{k-1}$ with $\|z^{\ast}\|_{\infty} \le 2$ such that 
$|P(X^f,z^{\ast})-P(X^g,z^{\ast})| \geq \eps^{5c}.$
\end{claim}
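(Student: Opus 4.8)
The plan is to identify the parameter moments of $X^f$ with the Taylor coefficients, about $z=\bone$, of $\log P(X^f,z)$, and then argue non-constructively via Cauchy's estimates. Writing $z_i=1+w_i$ and using $\log\bigl(1+\sum_i p_iw_i\bigr)=\sum_{\ell\ge1}\frac{(-1)^{\ell-1}}{\ell}\bigl(\sum_i p_iw_i\bigr)^\ell$ together with the multinomial theorem, one checks that the coefficient of $\prod_{i=1}^{k-1}w_i^{m_i}$ in $\sum_{s\in[a]^{k-1}}\log\bigl(1+\sum_{i=1}^{k-1}p^f_{s,i}w_i\bigr)$ equals $\frac{(-1)^{|m|_1-1}}{|m|_1}\binom{|m|_1}{m}M_m(X^f)$. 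Hence, setting
\[
Q(w)\eqdef\sum_{s\in[a]^{k-1}}\Bigl[\mathrm{Log}\bigl(1+\textstyle\sum_i p^f_{s,i}w_i\bigr)-\mathrm{Log}\bigl(1+\textstyle\sum_i p^g_{s,i}w_i\bigr)\Bigr]
\]
(with $\mathrm{Log}$ the principal branch), the coefficient of $\prod_i w_i^{m_i}$ in $Q$ has absolute value $\frac{1}{|m|_1}\binom{|m|_1}{m}|M_m(X^f)-M_m(X^g)|\ge\eps^{4c}/|m|_1\ge\eps^{4c}/(ka)$.

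Next I would verify that $Q$ is holomorphic on a polydisc strictly containing the closed unit polydisc $\{|w_i|\le1\}$, and that there $\exp(Q(w))=P(X^f,\bone+w)/P(X^g,\bone+w)$. Indeed, by the definition \eqref{eqn:def-p} of the parameters each $p^f_{s,i}\le 2a/\ln^k(1/\eps)$, so $\sum_{s,i}p^f_{s,i}\le 2(k-1)a^k/\ln^k(1/\eps)=o(1)$ (since $a=\Theta(\log(1/\eps)/(k\log\log(1/\eps)))$ gives $a^k/\ln^k(1/\eps)=\Theta(k^{-k}(\log\log(1/\eps))^{-k})=o(1)$); in particular, for $|w_i|\le1$ every factor $1+\sum_i p^f_{s,i}w_i$ lies within $o(1)$ of $1$, hence is nonzero with small principal logarithm, and $Q$ extends analytically to $|w_i|<\rho$ for some $\rho>1$. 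Applying the multivariate Cauchy integral formula over the torus $\{|w_i|=1\}$ then yields a point $w^\ast$ with $|w^\ast_i|=1$ and $|Q(w^\ast)|\ge\eps^{4c}/(ka)\ge\eps^{5c}$, where the last step uses $ka=O(\log(1/\eps)/\log\log(1/\eps))\le\eps^{-c/2}$ for $\eps$ small (so in fact $|Q(w^\ast)|\ge\eps^{9c/2}$, with room to spare). Put $z^\ast\eqdef\bone+w^\ast$, so $\|z^\ast\|_\infty\le2$.

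Finally I would convert this into a bound on $|P(X^f,z^\ast)-P(X^g,z^\ast)|$. The same smallness estimate gives $|P(X^g,z^\ast)|=\prod_s|1+\sum_i p^g_{s,i}w^\ast_i|\ge\prod_s(1-\sum_i p^g_{s,i})\ge 1-\sum_{s,i}p^g_{s,i}\ge1/2$ for $\eps$ small, and $|Q(w^\ast)|\le\sum_s\bigl(2\sum_i p^f_{s,i}+2\sum_i p^g_{s,i}\bigr)=o(1)\le1$ (using $|\mathrm{Log}(1+x)|\le2|x|$ for $|x|\le1/2$), so $|e^{Q(w^\ast)}-1|\ge\tfrac12|Q(w^\ast)|$. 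Therefore
\[
|P(X^f,z^\ast)-P(X^g,z^\ast)|=|P(X^g,z^\ast)|\cdot|e^{Q(w^\ast)}-1|\ge\tfrac14|Q(w^\ast)|\ge\eps^{5c},
\]
absorbing the constant into the $\eps^{-c/2}$ slack, which is the claim. The main obstacle is the middle step — checking that $Q$ extends holomorphically to a neighborhood of the unit torus so that the Cauchy estimate is legitimate — which is exactly where the $\ln^k(1/\eps)$ normalization in \eqref{eqn:def-p} (making the parameters $p^f_{s,i}$ tiny compared with the number $a^{k-1}$ of summands) is essential; the remainder is bookkeeping with the logarithm/multinomial expansion and elementary bounds on $P$ and on $e^z-1$.
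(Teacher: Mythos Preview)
Your proposal is correct and follows essentially the same approach as the paper: expand $\log P(X^f,z)$ about $z=\bone$ so that the Taylor coefficients are (up to a multinomial factor) the parameter moments, use the Cauchy integral formula on the unit torus to turn a large coefficient into a point where the log-difference is large, and then convert this into a lower bound on $|P(X^f,z^\ast)-P(X^g,z^\ast)|$. The only cosmetic differences are that the paper works with $\ln P(X^f,z)$ and $\ln P(X^g,z)$ separately (bounding each by $1$ on the polydisc $\|z\|_\infty\le 4$) rather than with their difference $Q$, and that the paper establishes holomorphicity by bounding $|M_m(X^f)|<(10k)^{-|m|_1}$ to get absolute convergence of the moment series, whereas you bound the probabilities $\sum_{s,i}p^f_{s,i}=o(1)$ directly; both routes rely on the same smallness of the parameters coming from the $\ln^k(1/\eps)$ normalization in \eqref{eqn:def-p}.
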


{

Before we proceed with the formal proof, 
we provide an intuitive explanation of the argument.
The proof of Claim~\ref{claim:moment-to-pgf} proceeds as follows:
We start by expressing $\ln\left(P(X^f, z)\right)$,
the logarithm of the PGF  of a PMD 
$X^f \in \mathcal{S}$, as a Taylor series whose coefficients depend on its {parameter} moments $M_m(X^f)$.
We remark that appropriate bounds on the {parameter} moments of $X^f \in \mathcal{S}$ imply that this series is in fact absolutely convergent 
in an appropriate region $R.$ Note that, using the aforementioned
Taylor expansion, we can express each {parameter} moment $M_m(X^f)$, $m \in \Z^{k-1}_+,$  as a partial derivative
of $\ln\left(P(X^f, z)\right)$. Hence, if $X^f$ and $X^g$ are distinct PMDs in $\mathcal{S},$
the difference $|M_m(X^f) - M_m(X^g)|$ can also be expressed 
as the absolute value of the partial derivative of the difference between the PGFs
$\left|\ln\left(P(X^f, z)\right) - \ln\left(P(X^g, z)\right)\right|.$
We then use Cauchy's integral formula to express this partial derivative
as an integral, which we can further be absolutely bounded from above by the difference 
$\left|\ln\left(P(X^f, z^{\ast})\right) - \ln\left(P(X^g, z^{\ast})\right)\right|, $ for some point $z^{\ast} \in R.$
Finally, we use the fact that $\ln\left(P(X^f, z)\right)$ is absolutely bounded for all $z\in R$ to complete the proof of the claim.
}

\begin{proof}[Proof of Claim~\ref{claim:moment-to-pgf}]
For $z \in \C^{k-1}$, let $w \in \C^{k-1}$ be defined by $w_i = z_i -1$, $1 \leq  i \leq k-1.$
When $|w_i| \leq 5$, for all $1 \leq i \leq k-1$, we take logarithms and obtain:
\begin{align*}
\ln(P(X^f,z)) & = \sum_{s\in [a]^{k-1}} \ln\left(1+\sum_{i=1}^{k-1}p^f_{s,i}w_i \right)\\ 
&= \sum_{s\in [a]^{k-1}} \sum_{\ell=1}^\infty \frac{(-1)^{\ell+1}}{\ell} \left( \sum_{i=1}^{k-1}p^f_{s,i}w_i \right)^\ell \\ 
& = \sum_{s\in [a]^{k-1}} \sum_{\ell=1}^\infty \frac{(-1)^{\ell+1}}{\ell} \sum_{\|m\|_1=\ell} \binom{\ell}{m} \prod_{i=1}^{k-1} (p^f_{s,i})^{m_i}\prod_{i=1}^{k-1} w_i^{m_i}\\ 
& = \sum_{\substack{a\in \Z_{\geq 0}^{k-1} \\a\neq 0}} \frac{(-1)^{\|m\|_1+1}}{\|m\|_1} \binom{\|m\|_1}{m}\left(\prod_{i=1}^{k-1} w_i^{m_i}\right)\sum_{s\in [a]^{k-1}}\left(\prod_{i=1}^{k-1} (p^f_{s,i})^{m_i}\right)\\ 
& = \sum_{\substack{a\in \Z_{\geq 0}^{k-1} \\a\neq 0}} \frac{(-1)^{\|m\|_1+1}}{\|m\|_1} \binom{\|m\|_1}{m}\left(\prod_{i=1}^{k-1} w_i^{m_i}\right)M_m(X^f).
\end{align*}
We note that
$$
|M_m(X^f)| = \left|\sum_{s\in [a]^{k-1}} \prod_{i=1}^{k-1} (p^f_{s,i})^{m_i}\right| \leq  {a}^{k-1} \ln^{-(k-1)\|m\|_1}(1/\eps) < (10k)^{-\|m\|_1} \;,
$$
where the first inequality follows from (\ref{eqn:small-prob}).
Therefore, for $\|z\|_\infty \leq 4$ and so $\|w\|_\infty \leq 5$, we obtain that
\begin{equation}\label{logBoundEqn}
|\ln(P(X^f,z))| \leq \sum_{\ell=1}^\infty \frac{1}{\ell}\sum_{\substack{a\in \Z_{+}^{k-1} \\\|m\|_1=\ell}} \binom{\|m\|_1}{m}\|w\|_\infty^{\ell}(10k)^{-\ell} \leq \sum_{\ell=1}^\infty(k-1)^{\ell}5^\ell(10k)^{-\ell} \leq \sum_{\ell=1}^\infty 2^{-\ell} = 1 \;.
\end{equation}
This suffices to show that all the series above are absolutely convergent when $\|z\|_\infty \leq 4$, 
and thus that their manipulations are valid, and that we get the principal branch of the logarithm.
To summarize, for all $z \in \C^{k-1}$ with $\|z\|_{\infty} \le 4$, and $w \in \C^{k-1}$ with $w_i = z_i - 1$, $i \in [k-1]$, we have:
\begin{equation}\label{logLaplaceEquation}
\ln(P(X^f,z)) = \sum_{\substack{a\in \Z_{\geq 0}^{k-1} \\a\neq 0}} \frac{(-1)^{\|m\|_1+1}}{\|m\|_1} \binom{\|m\|_1}{m}\left(\prod_{i=1}^{k-1} w_i^{m_i}\right)M_m(X^f) \;.
\end{equation}
In particular, we have that the $\prod_{i=1}^{k-1} w_i^{m_i}$ coefficient of $\ln(P(X^f,z))$ 
is an integer multiple of $M_m(X^f)/\|m\|_1$. This expansion is a Taylor series in the $w_i$'s, 
so this coefficient is equal to a partial derivative, 
which we can extract by Cauchy's integral formula. 
Now, suppose that $X^f,X^g$ are distinct elements of $\mathcal{S}$. 
We have that:
\begin{eqnarray*}
|M_m(X^f)-M_m(X^g)|/\|m\|_1 &\leq&  |M_m(X^f)-M_m(X^g)| \binom{\|m\|_1}{m}/\|m\|_1\\
& = & \left(\prod_{i=1}^{k-1} 1/m_i!\right) \left| \frac{\partial^{\|m\|_1} \ln(P(X^f,z)) -\ln(P(X^g,z))}{\partial w_1^{m_1} \cdots \partial w_{k-1}^{m_{k-1}}} \right| \\
 & = &  \left|(1/2 \pi i)^{k-1} \oint_{\gamma} \ldots \oint_{\gamma} (\ln(P(X^f,z)) - \ln(P(X^g,z)))/ \prod_{i=1}^{k-1} w_i^{m_i} dw_1 \ldots dw_{k-1} \right|\\
& \leq &  \max_{|w_1|=1, \ldots, |w_{k-1}|=1}  |\ln(P(X^f,z)) - \ln(P(X^g,z)))/ \prod_{i=1}^{k-1} w_i^{m_i}| \\
&=&  \max_{|w_1|=1, \ldots, |w_{k-1}|=1}  |\ln(P(X^f,z)) - \ln(P(X^g,z)))| \;,
 \end{eqnarray*}
where the third line above follows from Cauchy's integral formula, 
and $\gamma$ is the path round the unit circle. 

Now suppose that there exists an $m \in [a]^{k-1}$, i.e.,  with $\|m\|_1 \leq (k-1)a,$  
such that it holds $|M_m(X^f)-M_m(X^g)| \geq \eps^{4c}$. 
By the above, this implies that there is some $w^{\ast}=(w^{\ast}_1,\ldots, w^{\ast}_{k-1})$ 
with $|w^{\ast}_i|=1$ for all $i$ so that for the corresponding $z^{\ast}$,
\begin{equation}\label{largeGapEqn}
|\ln(P(X^f,z^{\ast}))-\ln(P(X^g,z^{\ast}))| \geq \eps^{4c}/\|m\|_1 \geq \eps^{4c}/(ka) \;.
\end{equation}
{Note that $\|z^{\ast}\|_{\infty} \leq \|w^{\ast}\|_{\infty}+1 = 2.$ Hence, $z^{\ast} \in R.$}
Applying (\ref{logBoundEqn}), at this $z^{\ast}$, 
we have $|\ln(P(X^f,z^{\ast}))| \leq 1$ and $|\ln(P(X^g,z^{\ast}))| \leq 1$.
Therefore, by Equation (\ref{largeGapEqn}), for this $z^{\ast}$ with $\|z^{\ast}\|_{\infty} \leq 2,$ 
we have that
$$
|P(X^f,z^{\ast})-P(X^g,z^{\ast})| = \Omega\left(\eps^{4c}/(ka)\right) \geq \eps^{5c} \;,
$$
where the last inequality follows from our definition of $a.$
This completes the proof of Claim~\ref{claim:moment-to-pgf}.
\end{proof}

We are now ready to translate a lower bound on the distance between the PGFs to 
a lower bound on total variation distance. Namely, we prove the following:

\begin{claim} \label{claim:pgf-to-dtv}
 If there exists $z^{\ast} \in \C^{k-1}$ with $\|z^{\ast}\|_{\infty} \le 2$ such that 
$|P(X^f,z^{\ast})-P(X^g,z^{\ast})| \geq \eps^{5c},$ then $\dtv(X^f, X^g) \geq \eps.$
\end{claim}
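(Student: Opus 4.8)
The idea is to view $P(X^f,z^{\ast})-P(X^g,z^{\ast})$ as a \emph{weighted} $\ell_1$-difference of the two laws and compare it with the unweighted one, which is $2\dtv(X^f,X^g)$. Since $\sum_{j=1}^{k}X^f_j=a^{k-1}$ is deterministic, the law of $X^f$ is determined by that of $(X^f_1,\dots,X^f_{k-1})$, and writing $\Delta(w)\eqdef \Pr[(X^f_1,\dots,X^f_{k-1})=w]-\Pr[(X^g_1,\dots,X^g_{k-1})=w]$ for $w\in\Z_{\ge0}^{k-1}$ we have
\[
P(X^f,z^{\ast})-P(X^g,z^{\ast})=\sum_{w\in\Z_{\ge0}^{k-1}}\Delta(w)\prod_{i=1}^{k-1}(z^{\ast}_i)^{w_i}.
\]
Since $\|z^{\ast}\|_\infty\le 2$, the weight is at most $2^{\,w_1+\dots+w_{k-1}}$, and $w_1+\dots+w_{k-1}$ is exactly the number of component $k$-CRVs whose outcome lies in $\{e_1,\dots,e_{k-1}\}$. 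The key structural fact is that this count is typically tiny: by~(\ref{eqn:small-prob}) each parameter satisfies $p^f_{s,j}\le 1/(k\ln^k(1/\eps))$, so with $Z\eqdef\sum_{i=1}^{k-1}X^f_i$ (a sum of independent Bernoullis with parameters $q_s\eqdef\sum_{j<k}p^f_{s,j}$) we get $\mu\eqdef\E[Z]=\sum_{j=1}^{k-1}s_j(X^f)\le (k-1)a^{k-1}/(k\ln^k(1/\eps))\le 1/\ln(1/\eps)$, using $a^{k-1}\le(\ln(1/\eps))^{k-1}$.

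The next step is to split the sum at a threshold $D\eqdef \lceil \log(1/\eps)/\log\log(1/\eps)\rceil$. On the bulk $\{w:\sum_i w_i\le D\}$ the weight is at most $2^{D}$, so this part contributes at most $2^{D}\sum_w|\Delta(w)|=2^{D+1}\dtv(X^f,X^g)$. On the tail $\{w:\sum_i w_i>D\}$ I would bound the weighted mass directly via $\Pr[Z\ge t]\le\E\!\bigl[\binom{Z}{t}\bigr]=e_t(q)\le\mu^t/t!$, which gives
\[
\E\!\bigl[2^{Z}\mathbf 1_{Z>D}\bigr]=\sum_{t>D}2^{t}\Pr[Z=t]\le\sum_{t>D}\frac{(2\mu)^{t}}{t!}\le\sum_{t>D}\Bigl(\frac{2e\mu}{t}\Bigr)^{t}\le 2\Bigl(\frac{2e}{D\ln(1/\eps)}\Bigr)^{D}.
\]
The calibration of $D$ is essential here: $\ln^{D}(1/\eps)=\exp(D\ln\ln(1/\eps))=1/\eps$ for this $D$, so the right-hand side is $2(2e)^{D}\eps/D^{D}\le\eps/20$ once $\eps$ is small enough. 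The identical bound holds for $X^g$, so the tail of $|P(X^f,z^{\ast})-P(X^g,z^{\ast})|$ is at most $\eps/10$.

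Combining the bulk and tail bounds with the hypothesis yields $\eps^{5c}\le 2^{D+1}\dtv(X^f,X^g)+\eps/10$, hence $\dtv(X^f,X^g)\ge(\eps^{5c}-\eps/10)/2^{D+1}\ge\eps^{5c}/2^{D+2}$ for $\eps$ small (recall $c<1$ is a fixed small constant). Finally $2^{D+2}=2^{O(\log(1/\eps)/\log\log(1/\eps))}=(1/\eps)^{O(1/\log\log(1/\eps))}=\eps^{-o(1)}$, so $\dtv(X^f,X^g)\ge\eps^{\,5c+o(1)}\ge\eps$ since $c$ is sufficiently small as an absolute constant and $\eps$ is sufficiently small as a function of $k$. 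This proves Claim~\ref{claim:pgf-to-dtv}, and together with Claim~\ref{claim:moment-to-pgf} it proves Lemma~\ref{lem:differentMoment-implies-dtv}.

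\textbf{Main obstacle.} The one delicate point is the weighted tail estimate. A naive bound replaces $\prod_i|z^{\ast}_i|^{w_i}$ by $2^{N}$ with $N=a^{k-1}$, which is hopelessly lossy; and merely knowing that the effective support carries all but $\eps/10$ of the mass does not suffice, because the weight $2^{\sum w_i}$ can be astronomically large far out in the tail. One genuinely needs the factorial (super-exponential) decay of $\Pr[Z\ge t]$ coming from the minuscule per-coordinate means, together with the precise choice $D\asymp\log(1/\eps)/\log\log(1/\eps)$ for which $\log^{D}(1/\eps)=1/\eps$ exactly cancels the stray factor of $1/\eps$. Any substantially larger $D$ would make the bulk weight $2^{D}$ polynomially large in $1/\eps$ and destroy the bound.
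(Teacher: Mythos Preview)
Your proof is correct and follows the same overall scheme as the paper---split the weighted sum $P(X^f,z^{\ast})-P(X^g,z^{\ast})=\sum_w\Delta(w)\prod_i (z^{\ast}_i)^{w_i}$ at a threshold on $\sum_i w_i$, bound the bulk by $2^{\text{threshold}}\cdot 2\dtv$, and bound the tail directly---but the tail argument and the choice of threshold are different.

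The paper does not re-derive a tail bound from scratch. Instead it recycles the bound~(\ref{logBoundEqn}) at the real point $z=(4,\dots,4)$, which reads $\E\bigl[4^{\sum_{i<k}X^f_i}\bigr]\le e$. This exponential-moment bound kills the tail in one line: for $\|z^{\ast}\|_\infty\le 2$ and $|x|_1\ge T$ one has $\prod_i |z^{\ast}_i|^{x_i}\le 2^{|x|_1}\le 4^{|x|_1}/2^{T}$, so the tail contributes at most $e\cdot 2^{-T}$. Balancing with the bulk gives $\eps^{5c}\le 2^{T}\dtv+2e/2^{T}$, and taking $T=\lceil 5c\log_2(1/\eps)\rceil$ yields $\dtv\ge\Omega(\eps^{10c})\ge\eps$.

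Your route is more elementary and self-contained: you exploit the very small mean $\mu\le 1/\ln(1/\eps)$ directly via the Poisson-type bound $\Pr[Z\ge t]\le\mu^{t}/t!$, which gives super-exponential tail decay and lets you take the much smaller threshold $D\asymp\log(1/\eps)/\log\log(1/\eps)$. You lose only a factor $2^{D}=\eps^{-o(1)}$ rather than $\eps^{-5c}$, ending with $\dtv\ge\eps^{5c+o(1)}$. The paper's argument is shorter because it reuses~(\ref{logBoundEqn}); yours is sharper in the exponent (irrelevant here, since $c$ is chosen small anyway) and avoids invoking the complex-analytic estimate a second time.
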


{The main idea of the proof of Claim~\ref{claim:pgf-to-dtv} is this: 
By Equation (\ref{logBoundEqn}), we know that $|\ln(P(X^f,z))| \leq 1,$ for all $z \in R.$
We use this fact to show that the contribution to the value of the PGF $P(X^f,z^{\ast})$
coming from the subset of the probability space $\{ X^f > T  \}$
is at most $O(2^{-T})$. On the other hand, the contribution to the difference 
$|\ln(P(X^f,z^{\ast})) - \ln(P(X^g,z^{\ast}))|$
coming from the set $\{X^f \leq T, X^g \leq T \}$ can be easily bounded from above by
 $2^T \cdot \dtv(X^f,X^g)$. The claim follows by selecting an appropriate value of $T = \Theta(\log(1/\eps)),$
 balancing these two terms.
}

\begin{proof}
First note that exponentiating Equation (\ref{logBoundEqn}) 
at $z=(4,4,\ldots,4),$ and using the definition of the PGF we get: 
$$
\E\left[4^{\sum_{i=1}^{k-1} X^f_i} \right],\E\left[4^{\sum_{i=1}^{k-1} X^g_i} \right] \leq e \;.
$$
Therefore, for any $z$ with $\|z\|_\infty \leq 2$ and any $T \in \Z_+$ we have that
$$
\left|\sum_{x, |x|_1\geq T} \prod_{i=1}^{k-1} z_i^{x_i} \Pr[X^f = x]\right| \leq (1/2)^{T}\sum_{x, |x|_1\geq T} \prod_{i=1}^{k-1} 4^{|x|_1} \Pr[X^f = x] \leq e(1/2)^T.
$$
A similar bound holds for $X^g$. By assumption, there exists such a $z^{\ast}$ so that
\begin{align*}
\eps^{5c} & \leq \left|P(X^f,z^{\ast})-P(X^g,z^{\ast})\right|\\
& =  \left|\sum_x  \prod_{i=1}^{k-1} (z^{\ast})_i^{x_i} \left(\Pr[X^f=x] - \Pr[X^g=x]\right) \right|\\
& =  \left|\sum_{|x|_1 < T}\prod_{i=1}^{k-1} (z^{\ast})_i^{x_i} \left(\Pr[X^f=x] - \Pr[X^g=x]\right) \right| + 2e(1/2)^T\\
& \leq  2^T \sum_{|x|_1 < T} \left|\Pr[X^f=x]-\Pr[X^g=x]\right|+2e (1/2)^T\\
& \leq  2^T \dtv(X^f,X^g) + 2e/2^{T} \;.
\end{align*}
Taking $T = \lceil 5 c \log_2 (1/\eps) \rceil$, we get
$$
\dtv(X^f,X^g) \geq \Omega(\eps^{10c}) \geq \eps.
$$
This completes the proof Claim~\ref{claim:pgf-to-dtv}.
\end{proof}

Lemma~\ref{lem:differentMoment-implies-dtv} follows by combining
Claims~\ref{claim:moment-to-pgf} and~\ref{claim:pgf-to-dtv}.
By putting together Lemmas~\ref{lem:moment-separation} and~\ref{lem:differentMoment-implies-dtv}, 
it follows that any two distinct elements of $\mathcal{S}$ are $\eps$-separated in total variation distance.
This completes the proof of Theorem~\ref{thm:cover-lb-sparse}, 
establishing the correctness of our lower bound construction. \qed

\section{A Size--Free Central Limit Theorem for PMDs} \label{sec:clt}

In this section, we prove our new CLT thereby establishing Theorem~\ref{thm:clt}.
For the purposes of this section, we define a discrete Gaussian in $k$ dimensions to be a probability distribution 
supported on $\Z^k$ so that the probability of a point $x$ is proportional to $e^{Q(x)},$ for some quadratic polynomial $Q.$
The formal statement of our CLT is the following:

\begin{theorem}\label{thm:clt-formal}
Let $X$ be an $(n,k)$-PMD with covariance matrix $\Sigma.$
Suppose that $\Sigma$ has no eigenvectors other than $\bone=(1,1,\ldots,1)$
with eigenvalue less than $\sigma.$
Then, there exists a discrete Gaussian $G$ so that
$$
\dtv(X,G) \leq O( k^{7/2} \sqrt{\log^3(\sigma)/\sigma}).
$$
\end{theorem}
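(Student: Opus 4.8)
\textbf{Proof plan for Theorem~\ref{thm:clt-formal}.}

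The plan is to compare $X$ to the discrete Gaussian $G$ whose continuous density is the Gaussian with the same mean $\mu$ and covariance $\Sigma$, via the Fourier transform, and conclude pointwise closeness by inverting. The natural parametrization is to work modulo the lattice $L = M\Z^k$ where $M$ is chosen (as in Section~\ref{sec:algo}) so that $MM^T \asymp k\log(\cdot)\Sigma + (\ldots) I$, i.e. so that both $X \pmod L$ and $G \pmod L$ carry all but an $\eps$-fraction of their mass (here using the eigenvalue lower bound $\sigma$ along all directions orthogonal to $\bone$ — note $X$ is supported on the hyperplane $\sum_i x_i = n$, so the degenerate direction $\bone$ is irrelevant after reducing to $\Z^k$ appropriately, or one works in the $(k-1)$-dimensional sublattice). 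First I would set up the DFT $\wh{X}$ on the dual lattice $L^*/\Z^k$, and similarly $\wh{G}$, and write
\[
\Pr[X \equiv x \pmod L] - \Pr[G \equiv x \pmod L] = \frac{1}{|\det M|}\sum_{\xi \in L^*/\Z^k} \bigl(\wh{X}(\xi) - \wh{G}(\xi)\bigr) e(-\xi\cdot x).
\]
By Proposition~\ref{prop:ft-effective-support} (and its analogue for $G$, which is easier since $\wh G$ is literally a Gaussian), both $\wh X$ and $\wh G$ have $L_1$-mass at most $\poly(k)/\poly(\sigma)$ outside a set $T$ of size $(k\log\sigma)^{O(k)}$, so it suffices to bound $\sum_{\xi\in T}|\wh X(\xi) - \wh G(\xi)|$.

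The core estimate is on $T$, and this is where the \emph{approximation of the log-FT} comes in. I would write, for $\xi$ in the effective support, $\wh{X}(\xi) = \exp\bigl(\log\wh X(\xi)\bigr)$ and expand $\log\wh X(\xi) = \sum_{i=1}^n \log\wh{X_i}(\xi)$ to \emph{second order} around $\xi$ near $0$ (more precisely, expanding each $\log(1 - \sum_j(1-e(\xi_j-\xi_k))p_{i,j})$ as in Equation~(\ref{FTEqn}) and keeping the degree $\le 2$ terms). The degree-$0$ and degree-$1$ terms reproduce the mean phase, the degree-$2$ term reproduces $-2\pi^2\,\xi^T\Sigma\xi$ plus lower-order corrections, which is exactly $\log\wh G(\xi)$, and the Taylor remainder is a sum of cubic-and-higher terms in the $p_{i,j}$ times powers of $[\xi_j-\xi_k]$. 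On $T$ we have $[\xi_j-\xi_k]^2 \lesssim \poly(k)\log(\sigma)/(\text{variance in direction }j)$, and the variance in every non-degenerate direction is $\ge \sigma$, so the cubic remainder is $O(\poly(k)\log^{3/2}(\sigma)/\sqrt\sigma)$ — this is where the final error exponent and the $\log^3$ factor come from (squaring a $\log^{3/2}$). Combining $|\log\wh X(\xi) - \log\wh G(\xi)| \le \poly(k)\log^{3/2}(\sigma)/\sqrt\sigma$ with $|\wh X(\xi)|, |\wh G(\xi)| \le 1$ gives $|\wh X(\xi)-\wh G(\xi)| \le \poly(k)\log^{3/2}(\sigma)/\sqrt\sigma$ on $T$; summing over $|T| = (k\log\sigma)^{O(k)}$ would however only give an error \emph{exponential} in $k$.

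To get the error \emph{polynomial} in $k$ — this is the main obstacle — I would not use the crude triangle inequality over all of $T$, but instead exploit cancellation in the inverse Fourier integral at points $x$ away from $\mu$. Concretely, after the $\ell^1$-bound handles $\xi$ near $0$, for $\xi$ of larger (but still $O(T)$-admissible) magnitude one needs $|\wh X(\xi)|$ and $|\wh G(\xi)|$ to decay, and the pointwise inverse-transform integral $\int (\wh X - \wh G) e(-\xi x)\,d\xi$ should be estimated by the \textbf{saddlepoint method}: deform the contour of integration (shift $\xi$ into the complex domain) so that the integrand's modulus is dominated by its value at a saddle point, gaining an extra Gaussian factor in $|x-\mu|$. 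This lets one trade the $(k\log\sigma)^{O(k)}$ volume factor against Gaussian decay and recover $\dtv = O(k^{7/2}\sqrt{\log^3\sigma/\sigma})$ after summing/integrating over $x$. The remaining steps are routine: (i) the boundary corrections from passing between the lattice-quotient probabilities and the honest distributions $X$ and $G$ (bounded by $\eps \ll \poly(k)/\sqrt\sigma$ via concentration, Lemma~\ref{lem:conc-pmd} and the Gaussian analogue); (ii) defining $G$ precisely as the discrete distribution $\propto e^{Q}$ with $Q$ the quadratic from $\wh G$ restricted to the fundamental domain, and checking it is a genuine discrete Gaussian in the sense of the theorem; (iii) assembling $\dtv(X,G) \le \dtv(X, X\bmod L) + \tfrac12\|{\cdot}\|_1\text{-error} + \dtv(G\bmod L, G) \le O(k^{7/2}\sqrt{\log^3\sigma/\sigma})$. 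The saddlepoint estimate controlling the tail contribution uniformly in $k$ is the technically delicate heart of the argument; everything else follows the template already used for the learning algorithm.
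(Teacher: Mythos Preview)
Your high-level plan matches the paper's: compare $\wh X$ to $\wh G$ on an effective Fourier support via a degree-$2$ Taylor expansion of the log-FT (this is Lemma~\ref{FourierApproxLem}), observe that the naive triangle inequality loses an exponential-in-$k$ factor (this is Proposition~\ref{prop:clt-naive-integral}), and recover polynomial dependence via the saddlepoint method. The genuine gap is that you set everything up using the DFT modulo a lattice $L = M\Z^k$ (borrowing the Section~\ref{sec:algo} machinery and Proposition~\ref{prop:ft-effective-support}), so your inverse transform is a \emph{discrete sum} $\frac{1}{|\det M|}\sum_{\xi\in L^*/\Z^k}(\cdots)$; but the saddlepoint method is contour deformation of an \emph{integral}, and you cannot deform a sum over lattice points into the complex domain. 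The paper works throughout Section~\ref{sec:clt} with the \emph{continuous} Fourier transform, writing $X(p)$ and $G(p)$ as integrals over $\{\xi:\xi_1\in[0,1]\}$, and then shifts the contour by $\xi\mapsto\xi+i\xi_0$ with $\xi_0$ chosen so that $\Sigma\xi_0=(\mu-p)/(2\pi)$; this gains a factor $\exp\bigl(-(p-\mu)^T\Sigma^{-1}(p-\mu)/2\bigr)\approx G(p)$ in the pointwise bound $|X(p)-G(p)|$, and summing $G(p)$ over the effective support $S$ is $O(1)$, giving the result directly. So the fix is to discard the lattice/DFT scaffolding and use the continuous FT from the start; the effective Fourier support $T$ is then a continuous ellipsoidal region (Lemma~\ref{lem:outside-effective-support-ft}) rather than a finite set.

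Two smaller points. First, you propose expanding $\log\bigl(1-\sum_j(1-e(\xi_j-\xi_k))p_{i,j}\bigr)$, i.e.\ around the point $(\xi_k,\ldots,\xi_k)$; this is only valid for $k$-maximal CRVs, which is not assumed in the CLT. The paper (Lemma~\ref{FourierApproxLem}) instead expands each $\log\wh{X_i}$ around $(\xi_{m_i},\ldots,\xi_{m_i})$ with $m_i=\arg\max_j p_{ij}$ varying per CRV --- this is what guarantees $p_{i,m_i}\ge 1/k$ and makes the remainder bound go through. Second, the parenthetical ``squaring a $\log^{3/2}$'' is not right: the final bound $\sqrt{\log^3\sigma/\sigma}=\log^{3/2}(\sigma)/\sqrt\sigma$ is exactly the cubic Taylor remainder from Lemma~\ref{FourierApproxLem}, with no squaring; the saddlepoint step does not degrade this error term, it only removes the $(\log\sigma)^{O(k)}$ volume factor that the naive bound incurs.
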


We note that our phrasing of the theorem above is slightly different than 
the CLT statement of~\cite{VV10b}. 
More specifically, we work with $(n,k)$-PMDs directly, while~\cite{VV10b} work 
with projections of PMDs onto $k-1$ coordinates. 
Also, our notion of a discrete Gaussian is not the same as the one discussed in~\cite{VV10b}. 
At the end of the section, we show how our statement can be rephrased 
to be directly comparable to the~\cite{VV10b} statement.

\paragraph{Proof of Theorem~\ref{thm:clt-formal}.}

We note that unless $\sigma > k^{7}$ that there is nothing to prove, and thus we will assume this throughout the rest of the proof.

The basic idea of the proof will be to compare the Fourier transform of $X$ to that of the discrete Gaussian
with density proportional to the pdf of $\mathcal{N}(\mu, \Sigma)$ (where $\mu$ is the expectation of $X$).
By taking the inverse Fourier transform, we will be able to conclude that these distributions are pointwise close.
A careful analysis of this combined with the claim that both $X$ and $G$ have small effective support will yield our result.

We start by providing a summary of the main steps of the proof.
We start by bounding the effective support of $X$ under our assumptions 
(Lemma~\ref{lem:clt-support} and Corollary \ref{cor:clt-support-ellipse-and-count}). 
Then, we describe the effective support of its Fourier transform (Lemma  \ref{lem:outside-effective-support-ft}). 
We further show that the effective support of the distribution $X$ 
and the Fourier transform of the discrete Gaussian $G$ are 
similar (see Lemmas \ref{lem:clt-gaussian-support} and \ref{lem:clt-gaussian-ft-support}).
We then obtain an estimate of the error between the Fourier transforms of $X$ and a Gaussian with the same mean and covariance (Lemma \ref{FourierApproxLem}). 
The difference between the distributions of $X$ and $G$ at a point, as given by the inverse Fourier transform, 
is approximately equal to the integral of this error over the effective support of the Fourier transform of $X$ and $G.$ 
If we take bounds on the size of this integral naively, we get a weaker result than Theorem~\ref{thm:clt-formal}, 
concretely that  $\dtv(X,G) \leq O(\log(\sigma))^k \sigma^{-1/2}$ (Proposition \ref{prop:clt-naive-integral}). 
Finally, we are able to show the necessary bound on this integral by using the saddlepoint method.

We already have a bound on the effective support of a general PMD (Lemma \ref{lem:conc-pmd}). 
Using this lemma, we obtain simpler bounds that hold under our assumptions.
\begin{lemma} \label{lem:clt-support}
Let $X$ be an $(n,k)$-PMD with mean $\mu$ and covariance matrix $\Sigma,$
where all non-trivial eigenvalues of $\Sigma$ are at least $\sigma,$
then for any {$\eps > \exp(-\sigma/k)$}, with probability $1-\eps$ over $X$ we have that
$$
(X-\mu)^T (\Sigma+I)^{-1}(X-\mu) = {O(k\log(k/\eps))}.
$$
\end{lemma}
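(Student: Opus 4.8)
The plan is to deduce this from the general concentration estimate Lemma~\ref{lem:conc-pmd}, combined with a purely linear-algebraic comparison between the matrix $\widetilde\Sigma = k\ln(k/\eps)\Sigma + k^2\ln^2(k/\eps)I$ that appears there and the matrix $\Sigma + I$ in the target bound. Applying Lemma~\ref{lem:conc-pmd} with parameter $\eps$, we get that with probability at least $1-\eps/10 \geq 1-\eps$ over $X$ we have $(X-\mu)^T\widetilde\Sigma^{-1}(X-\mu) = O(1)$. So it suffices to exhibit a scalar $\lambda = O(k\log(k/\eps))$ with $(X-\mu)^T(\Sigma+I)^{-1}(X-\mu) \leq \lambda\,(X-\mu)^T\widetilde\Sigma^{-1}(X-\mu)$. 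This in turn follows from the operator inequality $\widetilde\Sigma \preceq \lambda(\Sigma+I)$ \emph{restricted to the subspace $\bone^\perp$}: indeed every PMD and its mean lie in the hyperplane $\{x:\sum_i x_i = n\}$, so $X-\mu \in \bone^\perp$, and since $\Sigma\bone = \mathbf 0$ all three of $\Sigma$, $\Sigma+I$, $\widetilde\Sigma$ are block-diagonal with respect to $\bone^\perp \oplus \mathrm{span}(\bone)$; hence the quadratic forms of $(\Sigma+I)^{-1}$ and $\widetilde\Sigma^{-1}$ evaluated on $X-\mu$ only see the $\bone^\perp$ blocks, and $\widetilde\Sigma \preceq \lambda(\Sigma+I)$ on $\bone^\perp$ is equivalent to $(\Sigma+I)^{-1} \preceq \lambda\widetilde\Sigma^{-1}$ there.

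On $\bone^\perp$, the three matrices share an eigenbasis, with $\Sigma$ having eigenvalues $\nu_1,\dots,\nu_{k-1} \geq \sigma$. Thus $\widetilde\Sigma \preceq \lambda(\Sigma+I)$ on $\bone^\perp$ amounts to $\lambda \geq \max_{1\le j\le k-1} \frac{k\ln(k/\eps)\nu_j + k^2\ln^2(k/\eps)}{\nu_j+1}$. I would then observe that the map $\nu \mapsto \frac{k\ln(k/\eps)\nu + k^2\ln^2(k/\eps)}{\nu+1}$ is decreasing on $[0,\infty)$ (the derivative of $\frac{a\nu+b}{\nu+1}$ is $\frac{a-b}{(\nu+1)^2}$, which is negative here since $a = k\ln(k/\eps) < k^2\ln^2(k/\eps) = b$), so its maximum over $\nu\geq\sigma$ is attained at $\nu=\sigma$. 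Therefore it is enough to take $\lambda := \frac{k\ln(k/\eps)\sigma + k^2\ln^2(k/\eps)}{\sigma+1} = k\ln(k/\eps)\cdot\frac{\sigma + k\ln(k/\eps)}{\sigma+1}$.

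It remains to check that $\lambda = O(k\log(k/\eps))$, i.e. that $\frac{\sigma + k\ln(k/\eps)}{\sigma+1} = O(1)$, i.e. that $k\ln(k/\eps) = O(\sigma)$. This is exactly the point at which the hypothesis $\eps > \exp(-\sigma/k)$ is used: it gives $k\ln(1/\eps) < \sigma$, hence $k\ln(k/\eps) = k\ln k + k\ln(1/\eps) < k\ln k + \sigma$, which is $O(\sigma)$ in the regime $\sigma = \Omega(k\log k)$ that holds in all applications of the lemma (in particular $\sigma > k^7$ in the proof of Theorem~\ref{thm:clt-formal}). Substituting $\lambda = O(k\log(k/\eps))$ into the inequality above and combining with the $O(1)$ bound from Lemma~\ref{lem:conc-pmd} finishes the argument.

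The main obstacle here is not conceptual but bookkeeping: correctly isolating the $\bone^\perp$ block (so that the singular matrix $\Sigma$ causes no trouble), checking the monotonicity of the eigenvalue ratio so that the worst case is $\nu=\sigma$, and tracking precisely where $\eps > \exp(-\sigma/k)$ is needed. An equally viable, self-contained alternative would be to bypass Lemma~\ref{lem:conc-pmd} and repeat its Bernstein-plus-union-bound argument directly, but with the per-direction thresholds chosen as $t_j^2 = \Theta\big((\nu_j+1)\log(k/\eps)\big)$ tailored to the quadratic form $(\Sigma+I)^{-1}$ and then summing over the $k-1$ nontrivial eigendirections; this gives the same conclusion with comparable effort.
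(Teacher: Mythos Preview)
Your proposal is correct and follows essentially the same route as the paper: invoke Lemma~\ref{lem:conc-pmd}, restrict to $\bone^\perp$ using $X-\mu\in\bone^\perp$ and the block-diagonal structure of $\Sigma$, and then compare the eigenvalues of $\widetilde\Sigma$ and $\Sigma+I$ using $\sigma \ge k\ln(1/\eps)$. The paper's version is slightly terser (it writes the eigenvalue inequality directly as $\lambda_i+1 \ge \tfrac12(\lambda_i + k\ln(1/\eps))$ rather than arguing monotonicity of the ratio), and like you it tacitly absorbs the residual $k\ln k$ term into the ambient assumption $\sigma \gg k$; your explicit acknowledgment of that point is if anything cleaner.
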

\begin{proof}
From Lemma \ref{lem:conc-pmd}, 
we have that $(X-\mu)^T (k\ln(k/\eps)\Sigma + k^2 \ln^2(k/\eps)I)^{-1}(X-\mu) = O(1)$ with probability at least $1-\eps/10.$

By our assumptions on $\Sigma,$ we can write $\Sigma=U^T \mathrm{diag}(\lambda_i) U,$
for an orthogonal matrix $U$ with $k^{th}$ column $\bone/\sqrt{k}$ and
$\lambda_i \geq \sigma$ for $1 \leq i \leq k-1,$ and $\lambda_k=0.$

By our assumptions on $\eps,$ we have that $\sigma \geq k \ln (1/\eps),$
and so for $1 \leq i \leq k-1$, we have $\lambda_i + 1 \geq \frac{1}{2}(\lambda_i + k \ln (1/\eps)).$

Since $ \bone^T (X-\mu)= 0,$ we have that $(U^T (X-\mu))_k=0,$ and so we can write
\begin{eqnarray*}
(X-\mu)\cdot (\Sigma+I)^{-1}(X-\mu) & = &  (U^T (X-\mu))^T \mathrm{diag}(1/(\lambda_i+1)) U^T (X-\mu) \\
& \leq &  (U^T (X-\mu))^T \mathrm{diag}(2/(\lambda_i+k \ln (1/\eps))) U^T (X-\mu) \\
& = & 2k \ln (k/\eps) \cdot (X-\mu)^T \left(k\ln(k/\eps)\Sigma + k^2 \ln^2(k/\eps)I\right)^{-1} (X-\mu) \\
& = & O(k \ln (k/\eps)).
\end{eqnarray*}
\end{proof}
Specifically, if we take $\eps=1/\sigma$, we have the following:
\begin{corollary} \label{cor:clt-support-ellipse-and-count}
Let $X$ be as above, and let $S$ be the set of points $x \in \Z^k$ where $(x-\mu)^T \bone=0$ and
$$
(x-\mu)^T (\Sigma+I)^{-1}(x-\mu) \leq (Ck\log(\sigma)) \;,
$$
for some sufficiently large constant $C.$ Then, $X\in S$ with probability at least $1-1/\sigma,$ and
$$
|S| = \sqrt{\det(\Sigma+I)} \cdot O(\log(\sigma))^{k/2}.
$$
\end{corollary}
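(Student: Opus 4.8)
The plan is to derive Corollary~\ref{cor:clt-support-ellipse-and-count} from Lemma~\ref{lem:clt-support} by a direct substitution followed by a lattice-point counting argument. First I would set $\eps = 1/\sigma$ in Lemma~\ref{lem:clt-support}; this is legitimate since our standing assumption $\sigma > k^7$ easily implies $1/\sigma > \exp(-\sigma/k)$. Plugging in yields that with probability at least $1-1/\sigma$ we have $(X-\mu)^T(\Sigma+I)^{-1}(X-\mu) = O(k\log(k/\sigma)) = O(k\log\sigma)$ (absorbing the $\log k$ into the $\log\sigma$ using $\sigma > k^7$), so that $X$ lies in the set $S$ for a suitably large constant $C$ in the definition. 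Note also that $(X-\mu)^T\bone = 0$ always holds for an $(n,k)$-PMD, since the coordinates of $X$ always sum to $n$; hence this defining condition of $S$ is automatically satisfied and does not cost any probability.

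The remaining task is to bound $|S|$. The set $S$ consists of the integer points lying in the intersection of the ellipsoid $\mathcal{E} = \{x : (x-\mu)^T(\Sigma+I)^{-1}(x-\mu) \le Ck\log\sigma\}$ with the hyperplane $H = \{x : (x-\mu)^T\bone = 0\}$. The plan is to count these via a volume argument: the number of lattice points of $\Z^k \cap H$ inside a convex body in $H$ is, up to lower-order terms, comparable to the $(k-1)$-dimensional volume of that body (after accounting for the covolume of the lattice $\Z^k \cap H$, which is $\sqrt{k}$ up to normalization). The body $\mathcal{E} \cap H$ is a $(k-1)$-dimensional ellipsoid whose squared semiaxes are $Ck\log\sigma$ times the nonzero eigenvalues of the restriction of $\Sigma+I$ to the orthogonal complement of $\bone$; these are exactly $\lambda_i + 1$ for $i = 1,\dots,k-1$ in the eigenbasis of $\Sigma$. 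Hence the $(k-1)$-volume of $\mathcal{E}\cap H$ is $(Ck\log\sigma)^{(k-1)/2}$ times $\prod_{i=1}^{k-1}\sqrt{\lambda_i+1}$ times the volume of the unit $(k-1)$-ball, and $\prod_{i=1}^{k-1}(\lambda_i+1) = \det(\Sigma+I)/(\lambda_k+1) = \det(\Sigma+I)$ since $\lambda_k = 0$. This gives $|S| \le \sqrt{\det(\Sigma+I)}\cdot O(\log\sigma)^{(k-1)/2}\cdot O(k)^{(k-1)/2}$, and absorbing the $O(k)^{(k-1)/2}$ and adjusting exponents yields the claimed $\sqrt{\det(\Sigma+I)}\cdot O(\log\sigma)^{k/2}$.

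The main obstacle is making the lattice-point count rigorous: one cannot simply equate lattice points with volume for a thin ellipsoid. The clean way is to note that, since all nonzero eigenvalues of $\Sigma$ are at least $\sigma$ and hence each semiaxis of $\mathcal{E}\cap H$ has length $\gtrsim \sqrt{\sigma\log\sigma} \gg 1$, one can inflate the ellipsoid by an additive $O(1)$ in each direction — covering every lattice point by a unit cube centered at it — and bound $|S|$ by the volume of the inflated ellipsoid, which changes the estimate only by a constant factor since $\sigma$ is large. Alternatively one can invoke a standard bound of the form $\#(\Lambda \cap K) \le \mathrm{vol}_{k-1}(K + B)/\mathrm{covol}(\Lambda)$ for the $(k-1)$-dimensional lattice $\Lambda = \Z^k \cap H$ and the unit ball $B$; either route gives the bound up to the absolute constants hidden in the $O(\cdot)$, which is all that is needed. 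I would present the substitution step carefully and then give the volume bound with the inflation trick, leaving the elementary volume-of-an-ellipsoid computation to the reader.
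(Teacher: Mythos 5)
Your derivation of the probability statement is identical to the paper's: you apply Lemma~\ref{lem:clt-support} with $\eps = 1/\sigma$ (one small typo: after substituting you should have $O(k\log(k\sigma))$, not $O(k\log(k/\sigma))$, before absorbing the $\log k$ using $\sigma > k^7$). Where you genuinely diverge is in the lattice-point count. You restrict attention to the hyperplane $H = \{x : (x-\mu)^T\bone = 0\}$, treat $S$ as the intersection of the sublattice $\Z^k\cap H$ with a $(k-1)$-dimensional ellipsoid, and bound $|S|$ by the $(k-1)$-volume of an inflated copy of that slice divided by the covolume of the sublattice. The paper instead never leaves the ambient $\R^k$: it erects a unit cube of $\Z^k$ about each point of $S$, observes that the union $S'$ of those cubes sits inside a $k$-dimensional ellipsoid of the same shape with a slightly enlarged constant, and bounds $|S| = \mathrm{Vol}_k(S')$ by the $k$-volume of that ellipsoid via Claim~\ref{clm:ellipse-volume-bound}. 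Your computation is correct — the restriction of $\Sigma + I$ to $\bone^\perp$ has eigenvalues $\lambda_1+1,\dots,\lambda_{k-1}+1$ whose product equals $\det(\Sigma+I)$ since $\lambda_k = 0$, so both routes land on $\sqrt{\det(\Sigma+I)}\cdot O(\log\sigma)^{k/2}$ after absorbing the $k$-dependent factors. Your route is somewhat more informative because it exploits the actual geometry (the support genuinely lives in $H$) and exposes where the extra half-power of $\log\sigma$ in the exponent is slack; the paper's route is a bit more robust because it avoids keeping track of the covolume of a sublattice and the direction in which unit cells should be inflated — one just uses $\Z^k$-cubes and a single $k$-dimensional volume bound. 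Either write-up is fine for the purposes of this corollary.
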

\begin{proof}
Noting that $\ln (k \sigma)=O(\log \sigma)$ since $\sigma > k$, 
by Lemma \ref{lem:clt-support}, applied with $\eps=1/\sigma$, 
it follows that $x \in S$ with probability $1-1/\sigma.$

The remainder of the claim is a standard counting argument 
where we need to bound the number of integer lattice points 
within a continuous region (in this case, an ellipsoid). 
We deal with this by way of the standard technique 
of erecting a unit cube about each of the lattice points 
and bounding the volume of the union. 
Note that for $x \in \Z^k,$ the cubes $x + (-1/2,1/2)^k$ each have volume one and are disjoint. 
Thus, if we define $S'$ to be the set of $y$ such that there exists an $x \in S$ 
with $\|y-x\|_\infty < \frac{1}{2},$ then $|S| = \mathrm{Vol}(S').$ 
For any $y \in S',$ there is an $x \in S$ with:
\begin{eqnarray*}
(y-\mu) \cdot  (\Sigma+I)^{-1}(y-\mu) 
& = & (x-\mu) \cdot  (\Sigma+I)^{-1}(x-\mu) + (y-x) \cdot  (\Sigma+I)^{-1}(y-x) + \\
&&2  (y-x) \cdot  (\Sigma+I)^{-1}(x-\mu) \\
& \leq & O((x-\mu) \cdot  (\Sigma+I)^{-1}(x-\mu) + (y-x) \cdot  (\Sigma+I)^{-1}(y-x) ) \\
& \leq & O(Ck\log(\sigma) + (y-x) \cdot  I (y-x)) \\
& \leq & O(Ck\log(\sigma) + k \|y-x\|_\infty^2) \\
& \leq & O(Ck\log(\sigma) + k) \\
&=&  O(Ck\log(\sigma)) \;.
\end{eqnarray*}
That is, $S'$ is contained in the ellipsoid $(y-\mu) \cdot  (\Sigma+I)^{-1}(y-\mu) \leq O(Ck\log(\sigma)).$
The corollary follows by bounding the volume of this ellipsoid. We have the following simple claim:
\begin{claim} \label{clm:ellipse-volume-bound}
The volume of the ellipsoid $x^T A^{-1} x \leq ck$ for a symmetric $k \times k$ matrix $A$
and $c >0$ is $\sqrt{\det (A)} \cdot O(c)^{k/2}.$
\end{claim}
\begin{proof}
We can factorize $A= U^T \mathrm{diag}(\lambda_i) U^T$ for some orthogonal matrix $U.$
Then, the ellipsoid is the set of $x$ with $\| \mathrm{diag}(1/\sqrt{ck \lambda_i}) U^T x \|_2 \leq 1.$
The volume of the ellipsoid is  
$$\left|\det( \mathrm{diag}(1/\sqrt{ck \lambda_i}) U^T)^{-1} \right| V_k = \sqrt{\det (A)} \cdot (ck)^{k/2} V_k,$$
where $V_k$ is the volume of the unit sphere.
By standard results, $V_k = \pi^{k/2}/\Gamma(1+k/2)={\Omega(k)^{-k/2}},$
using Stirling's approximation $$\Gamma(1+k/2)= \sqrt{2 \pi / (1+k/2)} ((1+k/2)/e)^{1+k/2} (1+O(2/(k+2)).$$
Therefore, the volume is $O(\sqrt{\det (A)} \cdot (c)^{k/2}).$
\end{proof}
As a consequence, the volume of the ellipsoid $(y-\mu) \cdot  (\Sigma+I)^{-1}(y-\mu) \leq O(Ck\log(\sigma))$
is $\sqrt{\det (\Sigma + I)} \cdot O(C\log \sigma)^{k/2}.$
Thus, we conclude that $|S| \leq \mathrm{Vol}(S') \leq \sqrt{\det (\Sigma + I)} \cdot O(\log \sigma)^{k/2}.$
This completes the proof of the corollary.
\end{proof}

Next, we proceed to describe the Fourier support of $X.$
In particular, we show that $\wh{X}$ has a relatively small effective support, $T$.
Our Fourier sparsity lemma in this section is somewhat different than in previous section, 
but the ideas are similar.
The proof will similarly need Lemma \ref{lem:gaussian-bound-from-interval}.

\begin{lemma} \label{lem:outside-effective-support-ft}
Let $T \eqdef \{ \xi\in \R^k \mid {\xi}\cdot \Sigma {\xi} \leq C k \log(\sigma) \},$ for $C$ some sufficiently large constant. 
Then, we have that:
\begin{itemize}
\item[(i)] For all $\xi\in T,$ the entries of $\xi$ are contained in an interval of length $2\sqrt{Ck\log(\sigma)/\sigma}.$
\item[(ii)] Letting $T' = T \cap \{\xi \in \R^k \mid \xi_1\in[0,1]\},$ 
it holds $\mathrm{Vol}(T') = \det(\Sigma+I)^{-1/2} \cdot O(C \log(\sigma))^{k/2}.$
\item[(iii)] $\int_{[0,1]^k \setminus (T+\Z^k)} |\wh{X}(\xi)| d\xi \leq 1/(\sigma |S|).$
\end{itemize}
\end{lemma}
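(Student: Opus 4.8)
The plan is to prove the three parts of Lemma~\ref{lem:outside-effective-support-ft} in order, reusing the linear-algebra and concentration machinery already developed.

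\textbf{Part (i).} First I would observe that $\bone$ is an eigenvector of $\Sigma$ with eigenvalue $0$, and that by hypothesis every eigenvector orthogonal to $\bone$ has eigenvalue at least $\sigma$. Write an arbitrary $\xi \in \R^k$ as $\xi = \bar\xi \bone + \xi^{\perp}$, where $\bar\xi$ is the average of the coordinates of $\xi$ and $\xi^{\perp} \perp \bone$. Then $\xi^T \Sigma \xi = (\xi^{\perp})^T \Sigma \xi^{\perp} \geq \sigma \|\xi^{\perp}\|_2^2$. So if $\xi \in T$, i.e.\ $\xi^T \Sigma \xi \leq Ck\log(\sigma)$, then $\|\xi^{\perp}\|_2 \leq \sqrt{Ck\log(\sigma)/\sigma}$, which forces each coordinate $\xi_j - \bar\xi$ to have absolute value at most $\sqrt{Ck\log(\sigma)/\sigma}$; hence all coordinates of $\xi$ lie in the interval $[\bar\xi - \sqrt{Ck\log(\sigma)/\sigma}, \bar\xi + \sqrt{Ck\log(\sigma)/\sigma}]$, of length $2\sqrt{Ck\log(\sigma)/\sigma}$.

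\textbf{Part (ii).} The set $T$ is a (degenerate) infinite cylinder: $\xi^T\Sigma\xi \leq Ck\log\sigma$ cuts out a $(k-1)$-dimensional ellipsoidal cross-section (in the hyperplane orthogonal to $\bone$) times the full line in the $\bone$-direction. Intersecting with the slab $\xi_1 \in [0,1]$ makes the volume finite. I would compute $\mathrm{Vol}(T')$ by noting that $T' $ is affinely equivalent to the product of the $(k-1)$-dimensional ellipsoid $\{y \perp \bone : y^T\Sigma y \leq Ck\log\sigma\}$ with a segment of bounded length (the $\bone$-direction segment determined by the slab $\xi_1\in[0,1]$ has length $\Theta(1)$ since $\bone$ has a nonzero first coordinate and the ellipsoidal part contributes a bounded range there). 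The volume of the $(k-1)$-ellipsoid is, by a computation essentially identical to Claim~\ref{clm:ellipse-volume-bound}, equal to $\big(\prod_{i=1}^{k-1}\lambda_i\big)^{-1/2} \cdot O(C\log\sigma)^{(k-1)/2}$, where $\lambda_1,\dots,\lambda_{k-1}$ are the nonzero eigenvalues of $\Sigma$. The key identity to invoke here is $\det(\Sigma + I) = \prod_{i=1}^{k-1}(\lambda_i + 1) \cdot 1$ (the $+1$ from the $\bone$-direction), and since all $\lambda_i \geq \sigma \geq k^7 \gg 1$ we have $\prod_{i=1}^{k-1}(\lambda_i+1) = \prod_{i=1}^{k-1}\lambda_i \cdot (1+o(1))$. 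This gives $\mathrm{Vol}(T') = \det(\Sigma+I)^{-1/2}\cdot O(C\log\sigma)^{k/2}$ after absorbing the bounded segment-length factor into the $O(\cdot)^{k/2}$. I'd need to be slightly careful that the change of variables from "$\xi_1\in[0,1]$" to the natural $(\bone^{\perp})\times\bone$ coordinates has Jacobian bounded above and below by absolute constants, which it does.

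\textbf{Part (iii).} This is the main obstacle and follows the template of Lemma~\ref{lem:dft-ub}. For $\xi \in [0,1]^k \setminus (T + \Z^k)$, pick the representative $\xi'$ of $\xi + \Z^k$ closest to having small coordinate spread; by Claim~\ref{clm:pigeon-hole-interval} (or a direct pigeonhole) we may shift to an integer translate whose coordinates lie in an interval of length at most $1 - 1/(k+1)$, so Lemma~\ref{lem:gaussian-bound-from-interval} applies with $\delta = \Omega(1/k)$, giving $|\wh{X}(\xi)| = \exp(-\Omega(k^{-2}\,\xi'^T\Sigma\xi'))$. Since $\xi \notin T + \Z^k$ we have $\xi'^T\Sigma\xi' > Ck\log\sigma$ for every such representative. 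Now stratify: $\overline{T}_\ell = \{\xi : 2^\ell Ck\log\sigma \leq \xi'^T\Sigma\xi' < 2^{\ell+1}Ck\log\sigma\}$ for $\ell \geq 0$. On $\overline{T}_\ell$, $|\wh{X}(\xi)| \leq \exp(-\Omega(C k^{-1}\log\sigma \cdot 2^\ell)) \leq \sigma^{-\Omega(C 2^\ell/k)}$. The volume of $\overline{T}_\ell \cap [0,1]^k$ is at most the volume of the scaled cylinder-cross-section of radius-squared $2^{\ell+1}Ck\log\sigma$ intersected with a unit cube, which by the Part (ii) computation is $\det(\Sigma+I)^{-1/2}\cdot O(2^\ell C\log\sigma)^{k/2}$; recalling $|S| \leq \sqrt{\det(\Sigma+I)}\cdot O(\log\sigma)^{k/2}$ from Corollary~\ref{cor:clt-support-ellipse-and-count}, this volume is at most $(1/|S|)\cdot O(2^\ell C\log\sigma)^{k/2}\cdot O(\log\sigma)^{-k/2}\cdot$ — wait, more carefully, $\det(\Sigma+I)^{-1/2} \leq |S|^{-1}\cdot O(\log\sigma)^{k/2}$, so the volume bound is $|S|^{-1}\cdot O(2^\ell C)^{k/2}\cdot O(\log\sigma)^{k}$. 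Then
$$
\int_{[0,1]^k\setminus(T+\Z^k)}|\wh{X}| \leq \sum_{\ell\geq 0} \sigma^{-\Omega(C2^\ell/k)}\cdot |S|^{-1}\cdot O(2^\ell C)^{k/2}\,O(\log\sigma)^{k}.
$$
For $C$ a sufficiently large absolute constant and using $\sigma \geq k^7$, the $\ell=0$ term $\sigma^{-\Omega(C/k)}\cdot O(C)^{k/2}\,O(\log\sigma)^k$ is already at most $1/(2\sigma|S|)$ (the $\sigma^{-\Omega(C/k)}$ with $\sigma\geq k^7$ beats any fixed polynomial in $k$ and polylog in $\sigma$), and the sum over $\ell\geq 1$ is dominated by a geometric series because the doubly-exponential decay $\sigma^{-\Omega(C2^\ell/k)}$ crushes the $O(2^\ell)^{k/2}$ growth. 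Hence the whole integral is $\leq 1/(\sigma|S|)$, as claimed. The delicate point I'd want to double-check is the volume bound on $\overline{T}_\ell$ after passing to the nearest integer translate — specifically that summing over all integer translates $v$ with $\|v\|_\infty$ bounded does not blow things up, which is handled exactly as in Lemma~\ref{lem:dft-ub} by noting that the relevant translates all lie in a cube of side $O(\sqrt{2^\ell Ck\log\sigma/\sigma}) + 2 = O(1)$ when $\sigma$ is large, contributing only a constant (in fact, when $\sigma \gg k\log\sigma$, essentially one) translate per point.
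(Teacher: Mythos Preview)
Parts~(i) and~(ii) are correct and essentially match the paper (the paper's route for~(ii) is slightly slicker: it observes directly that $\xi\in T'$ forces $\xi\cdot\xi = O(C)$, hence $\xi^T(\Sigma+I)\xi \leq O(Ck\log\sigma)$, and applies the ellipsoid volume bound with the nonsingular matrix $\Sigma+I$, avoiding any Jacobian or $(1+o(1))$ discussion).

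Part~(iii) has a real gap. Your decay estimate $|\wh{X}(\xi)| \leq \sigma^{-\Omega(C2^\ell/k)}$, obtained from Lemma~\ref{lem:gaussian-bound-from-interval} with $\delta = \Omega(1/k)$, is too weak to close the argument. The claim that ``$\sigma^{-\Omega(C/k)}$ with $\sigma\geq k^7$ beats any fixed polynomial in $k$ and polylog in $\sigma$'' is false: when $\sigma = k^7$ one has $\sigma^{-\Omega(C/k)} = k^{-O(C)/k} = 1 - O((\log k)/k)$, which does not even decay, let alone dominate the factor $O(C)^{k/2}(\log\sigma)^k$. Concretely, for the $\ell=0$ shell you need $\sigma^{-\Omega(C/k)}\cdot O(C)^{k/2}(\log\sigma)^{k} \leq 1/(2\sigma)$, and taking logarithms at $\sigma = k^7$ gives $\Theta(C(\log k)/k)$ on the left versus $\Theta(k\log\log k)$ on the right --- the inequality fails badly for any absolute constant $C$.

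The missing idea is that for the \emph{inner} shells (small $\ell$, specifically $\ell \leq 4\log_2 k$), Part~(i) of this very lemma applied to the dilated region shows the representative $\xi'$ has coordinates in an interval of length $2\sqrt{2^{\ell+1}Ck\log\sigma/\sigma} \leq 1/2$. Hence Lemma~\ref{lem:gaussian-bound-from-interval} may be invoked with $\delta = \Omega(1)$ rather than $\delta = \Omega(1/k)$, yielding the much stronger bound $|\wh{X}(\xi)| \leq \exp(-\Omega(C\,2^\ell k\log\sigma)) = \sigma^{-\Omega(Ck2^\ell)}$ on those shells. This decays fast enough to kill the $O(2^\ell C\log\sigma)^{k/2}$ volume factor. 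Only for the outer shells ($\ell > 4\log_2 k$, where $2^\ell \geq k^4$) does one fall back to the pigeonhole representative and the weaker $\sigma^{-\Omega(C2^\ell/k)}$ bound --- but there $2^\ell/k \geq k^3$, so the decay is again $\sigma^{-\Omega(Ck)}$ and suffices. The paper splits the sum at exactly this threshold; without that split the argument does not go through.
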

\begin{proof}
We define $\tilde{\xi}$ to be the projection of $\xi$ onto the plane where the coordinates sum to $0,$
i.e., $\tilde{\xi} = \xi + \alpha \bone$ for some $\alpha \in \R$ and $\tilde{\xi} \cdot \bone=0.$
Then, we have that $\xi\cdot \Sigma \xi \geq \sigma |\tilde{\xi}|^2_2.$
Hence, for $\xi \in T,$ we have that $|\tilde{\xi}|_\infty \leq |\tilde{\xi}|_2 \leq \sqrt{Ck\log(\sigma)/\sigma}.$
This implies that for any $i, j$ it holds 
$$|\xi_i - \xi_j| = |\tilde{\xi}_i - \tilde{\xi}_j| \leq 2|\tilde{\xi}|_\infty \leq 2\sqrt{Ck\log(\sigma)/\sigma}.$$
This proves (i).

In particular, for any $\xi \in T,$ and all $i,$  we have $|\xi_1 - \xi_i| \leq 2\sqrt{Ck\log(\sigma)/\sigma} \leq 2C.$
And so if $\xi \in T'$, then $\xi_i \in [-2\sqrt{C},1+ 2\sqrt{C}].$
Thus, for $\xi \in T'$, it holds $\xi \cdot \xi \leq O(C).$
Thus, for $\xi  \in T',$ we have $\xi \cdot (\Sigma+I) \cdot \xi \leq O(C k \log(\sigma)).$
By Claim~\ref{clm:ellipse-volume-bound}, we get that $\mathrm{Vol}(T') \leq \det(\Sigma+I)^{-1/2} O(C \log(\sigma))^{k/2}.$
This proves (ii).

By Claim \ref{clm:pigeon-hole-interval},
for every $\xi \in [0,1)^k,$
there is an interval $I_\xi$ of length $1-1/(k+1)$
such that $\xi'=\xi+b$, for some $b \in \Z^k,$ has coordinates in $I_\xi.$
Let $T_m$ be the set of $\xi$ such that there is such a $\xi'$ with
$$2^{m+1} C k \log(\sigma) \geq \xi'\cdot \Sigma \xi' \geq 2^m C k \log(\sigma) \;,$$
and $\xi_i = \xi'_i - \lfloor \xi'_i \rfloor$ for all $1 \leq i \leq k.$
Then, for every $\xi \in [0,1]^k,$ we either have $\xi \in T + \Z^k$
or else $\xi \in T_m$ for some $m \geq 0.$
Hence,
\begin{equation} \label{eq:power-2-clt}
\int_{[0,1]^k \setminus (T + \Z^k)}|\wh{X}(\xi)|d\xi \leq  \sum_{m=0}^\infty \mathrm{Vol}(T_m)\sup_{\xi\in T_m}|\wh{X}(\xi)| \;.
\end{equation}

To bound the RHS above, we need bounds on the volume of each $T_m.$ 
These can be obtained using a similar argument to (ii) along with some translation.
\begin{claim} We have that
$\mathrm{Vol}(T_m) \leq \det(\Sigma+I)^{-1/2} \cdot O(2^{m+1} C \log(\sigma))^{k/2}.$
\end{claim}
\begin{proof}
Let $U_m$ be the set of $\xi$ such that there is a $\xi'$
with $2^{m+1} C k \log(\sigma) \geq \xi'\cdot \Sigma \xi'$
and $\xi_i = \xi'_i - \lfloor \xi'_i \rfloor$ for all $1 \leq i \leq k.$
Note that $T_m \subseteq U_m.$
Let $U'_m$ be the set of $\xi'$ with $\xi_1 \in [0,1]$ and $2^{m+1} C k \log(\sigma) \geq \xi'\cdot \Sigma \xi'.$
Note that for any $\xi''$ with $2^{m+1} C k \log(\sigma) \geq \xi''\cdot \Sigma \xi''$,
we have that $\xi''+\lambda \bone$ also satisfies $2^{m+1} C k \log(\sigma) \geq (\xi'' + \lambda \bone) \cdot \Sigma (\xi''+\lambda \bone)$ for any $\lambda \in \R.$
In particular $\xi'=\xi'' - (\lfloor \xi''_1 \rfloor) \bone) \in U'_m$. Note that $\xi''_i - \lfloor \xi''_i \rfloor = \xi'_i$ for all $i$.
So $U_m$ is the set of $\xi$ such that there is a $\xi' \in U'_m$ with $\xi_i = \xi'_i - \lfloor \xi'_i \rfloor$.
Then, $\mathrm{Vol}(U_m) \le \mathrm{Vol}(U'_m)$ 
since $U_m = \cup_{b \in \Z^n} (U'_m \cap \prod_{i=1}^k [b_i,b_i+1))-b,$ 
and so $\mathrm{Vol}(U_m) \leq \sum_{b \in \Z^n} \mathrm{Vol}(U'_m \cap \prod_{i=1}^k [b_i,b_i+1))) = \mathrm{Vol}(U'_m).$

Note that by Lemma \ref{lem:outside-effective-support-ft} (ii) applied with $C := 2^{m+1}C$ gives the bound
$$\mathrm{Vol}(U'_m) \leq \det(\Sigma+I)^{-1/2} \cdot O(2^{m+1} C \log(\sigma))^{k/2}.$$
Therefore, we have $\mathrm{Vol}(T_m) \leq  \mathrm{Vol}(U_m) \leq  \mathrm{Vol}(U'_m) \leq \det(\Sigma+I)^{-1/2} \cdot O(2^{m+1} C \log(\sigma))^{k/2}.$
This completes the proof.
\end{proof}


Next, we obtain bounds on $\sup_{\xi\in T_m}|\wh{X}(\xi)|$ by using Lemma \ref{lem:gaussian-bound-from-interval}.
\begin{claim} 
For $\xi \in T_m,$ it holds $|\wh{X}(\xi)| \leq \exp(-\Omega(C 2^m \log(\sigma)/k)).$
If additionally we have $m \leq 4 \log_2 k,$ then $|\wh{X}(\xi)| = \exp(-\Omega(C 2^m k \log(\sigma)))$. 
\end{claim}
\begin{proof}
Note that $\xi'$ has coordinates in an interval of length $1 - 1/k,$ 
so we may apply Lemma \ref{lem:gaussian-bound-from-interval}, 
yielding 
$$|\wh{X}(\xi)|=|\wh{X}(\xi')| \leq  \exp(-\Omega(\xi'^T \cdot \Sigma \cdot \xi'/k^2)) = \exp(-\Omega(C 2^m \log(\sigma)/k)).$$
To get the stronger bound, we need to show that for small $m,$ all the coordinates of $\xi'$ are in a shorter interval.
As before, we consider $\tilde{\xi'},$ the projection of $\xi'$ onto the set of $x$ with $x \cdot \bone=0.$
Similarly, we have $\xi' \cdot \Sigma \xi' \geq \sigma |\tilde{\xi'}|^2_2.$
So, for any $i,j$, it holds 
$$|\xi'_i-\xi'_j| \leq |\tilde{\xi'}_i - \tilde{\xi'}_j| \leq 2 |\tilde{\xi'}|_\infty \leq 2 |\tilde{\xi'}|_2 \leq \sqrt{\xi' \cdot \Sigma \xi'/\sigma} \leq \sqrt{C 2^{m+1} k \log(\sigma)/\sigma}.$$ 
For $m \leq \log_2( \sigma/Ck \log(\sigma))-3$, we have that the coordinates of $\xi$ lie in an interval of length $1/2.$
Now, Lemma~\ref{lem:gaussian-bound-from-interval} gives that
$$|\wh{X}(\xi)|= |\wh{X}(\xi')| \leq  \exp(-\Omega(\xi'^T \cdot \Sigma \cdot \xi')) = \exp(-\Omega(C k 2^m \log(\sigma)/k)).$$
Finally, note that $4 \log_2 k \leq \log_2( \sigma/Ck \log(\sigma))-3,$ when $\sigma \geq Ck^3.$
This completes the proof of the claim.
\end{proof}
Using the above, we can write
\begin{eqnarray*}
 \int_{[0,1]^k \setminus (T + \Z^k)} |\wh{X}(\xi)|d\xi & \leq &  \sum_{m=0}^\infty \mathrm{Vol}(T_m)\sup_{\xi\in T_m}|\wh{X}(\xi)| \\
									& \leq & \det(\Sigma+I)^{-1/2} \cdot O(C \log(\sigma))^{k/2} \sum_{m=0}^\infty 2^{mk/2} \sup_{\xi\in T_m}|\wh{X}(\xi)| \;.
\end{eqnarray*}
We divide this sum into two pieces:
\begin{eqnarray*} 
\sum_{m=0}^{4 \log_2 k} 2^{mk/2} \sup_{\xi\in T_m}|\wh{X}(\xi)| & \leq & \sum_{m=0}^{\log_2( \sigma/Ck \log(\sigma))-3} 2^{mk/2} \exp(-\Omega(C 2^m k \log(\sigma))) \\
& \leq & \sum_{m=0}^{4 \log_2 k} \exp(-\Omega(C (2^m-m) k \log(\sigma))) \\
& \leq & \sum_{m=0}^{4 \log_2 k} 2^{-m} \exp(-\Omega(C k \log(\sigma))) \\
& \leq & \exp(-\Omega(C k \log(\sigma)))  =  \sigma^{-\Omega(Ck)} \;,
\end{eqnarray*}
and
\begin{align*} 
\sum_{m=4 \log_2 k}^{\infty} 2^{mk/2} \sup_{\xi\in T_m}|\wh{X}(\xi)|  \leq & \sum_{m=4 \log_2 k}^{\infty}  2^{mk/2} \exp(-\Omega(C 2^m  \log(\sigma)/k)) \\
 \leq & \sum_{m=4 \log_2 k}^{\infty}  \exp(-\Omega(C (2^m-m) \log(\sigma)/k)) \\
 \leq & \sum_{m=4 \log_2 k}^{\infty}  \exp(-\Omega(C (k^2+m) \log(\sigma)/k)) \\
 \leq & \sum_{m=4 \log_2 k}^{\infty}  2^{-m} \exp(-\Omega(C k \log(\sigma)))
 \leq  \sigma^{-\Omega(Ck)} \;.
\end{align*}
We thus have $\int_{[0,1]^k \setminus (T + \Z^k)} |\wh{X}(\xi)|d\xi \leq  \det(\Sigma+I)^{-1/2}O(C \log(\sigma))^{k/2} \log(\sigma)^{-O(Ck)}.$
\end{proof}

The previous lemma establishes that the contribution to the Fourier transform of $X$ coming from points outside of $T$ is negligibly small.
We next claim that, for $\xi\in T,$ it is approximated by a Gaussian.
\begin{lemma}\label{FourierApproxLem}
For $\xi\in T,$ we have that
$$
\wh{X}(\xi) = \exp\left( 2\pi i \mu\cdot \xi - 2\pi^2 \xi\cdot \Sigma \xi + O(C^{3/2} k^{7/2} \sqrt{\log^3(\sigma)/\sigma})\right).
$$
This also holds for complex $\xi,$ under the assumption that 
the coordinate-wise complex and real parts of $\xi$ are in $T,$ 
i.e., such that $\re(\xi)\cdot \Sigma\re(\xi),\im(\xi)\cdot \Sigma\im(\xi) \leq O(Ck\log(\sigma)).$
\end{lemma}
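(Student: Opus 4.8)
\textbf{Proof proposal for Lemma~\ref{FourierApproxLem}.}

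The plan is to expand the logarithm of $\wh{X}(\xi)$ as a Taylor series in the ``frequency deviations'' and control the error by exploiting the fact that $\xi \in T$ forces all coordinates of $\xi$ to lie in a tiny interval (of length $O(\sqrt{Ck\log(\sigma)/\sigma})$ by Lemma~\ref{lem:outside-effective-support-ft}(i)). First I would write $\wh{X}(\xi) = \prod_{i=1}^n \wh{X_i}(\xi)$ with $\wh{X_i}(\xi) = \sum_{j=1}^k p_{i,j} e(\xi_j)$, and, as in the computation leading to \eqref{FTEqn}, factor out $e(n\xi_k)$ (or better, factor out $e(\xi \cdot \mu_i)$ for each component to center the expansion). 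Taking logarithms, $\ln \wh{X}(\xi)$ becomes a sum over $i$ of $\ln\bigl(1 - \sum_{j}(1-e(\xi_j - \bar\xi))p_{i,j}\bigr)$-type terms, where $\bar\xi$ is an appropriate reference value (e.g. $\xi_k$ or a weighted mean of the coordinates); the key point is that each quantity $\xi_j - \bar\xi$ has absolute value $O(\sqrt{Ck\log(\sigma)/\sigma})$, so $1 - e(\xi_j - \bar\xi) = 2\pi i(\xi_j - \bar\xi) - 2\pi^2(\xi_j-\bar\xi)^2 + O(|\xi_j-\bar\xi|^3)$ is small.

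The main calculation is then to expand each logarithm to second order in these small quantities. The degree-$1$ term, summed over $i$, reconstructs $2\pi i\, \mu \cdot \xi$ (using $\sum_i \mu_i = \mu$ and $\sum_j \mu_{i,j} = 1$). The degree-$2$ term reconstructs $-2\pi^2 \xi^T \Sigma \xi$: indeed $\xi^T \Sigma \xi = \sum_i \Var[\xi \cdot X_i] = \sum_i \bigl(\sum_j p_{i,j}(\xi_j - \bar\xi)^2 - (\sum_j p_{i,j}(\xi_j-\bar\xi))^2\bigr)$, which matches the second-order contribution of both the $\ell=1$ and $\ell=2$ terms of the $\ln(1+x) = x - x^2/2 + \dots$ expansion after combining them. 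The remaining error comes from (a) the cubic-and-higher terms $\sum_\ell \frac{1}{\ell}(\cdots)^\ell$ with $\ell \ge 3$ or with a factor $|\xi_j - \bar\xi|^3$, and (b) the $O(|\xi_j-\bar\xi|^3)$ correction in $1-e(\xi_j-\bar\xi)$. Each such error term, for a single component $i$, is bounded by $O\bigl(k \cdot (Ck\log(\sigma)/\sigma)^{3/2} \cdot \sum_j p_{i,j}\bigr) = O\bigl(k (C k \log\sigma/\sigma)^{3/2}\bigr)$ after using $\sum_j p_{i,j} = 1$; more carefully one tracks that higher-order terms are geometrically dominated because $\sum_j p_{i,j}|\xi_j - \bar\xi|^2 \le (Ck\log\sigma/\sigma)$ is itself small, so the $\ell$-sum converges. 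Summing over the $n$ components and using $\sum_i \sum_j p_{i,j}|\xi_j-\bar\xi|^2 = O(\xi^T\Sigma\xi) = O(Ck\log\sigma)$ to bound $n$-dependent pieces, one obtains a total error of $O(C^{3/2} k^{7/2}\sqrt{\log^3(\sigma)/\sigma})$; the $k^{7/2}$ arises from a factor $k$ from the inner sum over coordinates, a factor $k^{3/2}$ from $(k\log\sigma/\sigma)^{3/2}$, and a factor $k$ from repackaging $\xi^T\Sigma\xi \le Ck\log\sigma$. The extension to complex $\xi$ with $\re(\xi), \im(\xi) \in T$ (up to the constant) is routine: the same Taylor expansion is valid since the relevant quantities $\xi_j - \bar\xi$ remain small in modulus, and analyticity lets us treat $e(\xi_j)$ as an entire function; one simply re-runs the bounds with $|\xi_j - \bar\xi|$ replaced by its complex modulus, which is still $O(\sqrt{Ck\log\sigma/\sigma})$ by the hypothesis on $\re(\xi)$ and $\im(\xi)$.

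The hard part will be bookkeeping the error terms so that no spurious factor of $n$ survives: naively each of the $n$ components contributes an error, and $n$ can be enormous, so one must show the per-component cubic errors telescope against the variance budget $\xi^T\Sigma\xi \le Ck\log\sigma$ rather than against $n$ itself. Concretely, I would bound $\sum_i (\text{cubic error}_i) \le \max_i\bigl(\sum_j p_{i,j}|\xi_j-\bar\xi|^2\bigr)^{1/2} \cdot \sum_i \sum_j p_{i,j}|\xi_j-\bar\xi|^2 \cdot \mathrm{poly}(k)$, where the first factor is $O(\sqrt{Ck\log\sigma/\sigma})$ and the second is $O(Ck\log\sigma)$, giving the claimed bound without any $n$. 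A secondary subtlety is the correct choice of reference point $\bar\xi$ and of the branch of the logarithm — since $|1 - e(\xi_j-\bar\xi)| < 1$ on $T$ (again because the coordinates lie in a short interval, cf. Claim~\ref{claim:cos-claim} and Lemma~\ref{lem:gaussian-bound-from-interval}), $\wh{X}(\xi)$ stays in a slit plane away from the origin and the principal branch is unambiguous, so all the series manipulations are justified.
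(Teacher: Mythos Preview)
Your plan is essentially the paper's approach: Taylor expand $\log \wh{X_i}(\xi)$ to second order around a diagonal point, identify the first two terms with $2\pi i\,\mu\cdot\xi$ and $-2\pi^2\,\xi^T\Sigma\xi$, and bound the cubic remainder using the variance budget $\xi^T\Sigma\xi\le Ck\log\sigma$ together with $|\xi_j-\bar\xi|=O(\sqrt{Ck\log\sigma/\sigma})$ from Lemma~\ref{lem:outside-effective-support-ft}(i).

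One point deserves care: the choice of reference must be \emph{per-component}, not global. The paper expands the $i$th factor around $(\xi_{m_i},\ldots,\xi_{m_i})$ where $m_i=\arg\max_j p_{i,j}$, so that $p_{i,m_i}\ge 1/k$ and hence $\sum_j p_{i,j}|\xi_j-\xi_{m_i}|^2\le k\,\Var(\xi\cdot X_i)$; summing over $i$ gives $\sum_{i,j}p_{i,j}|\xi_j-\xi_{m_i}|^2\le k\,\xi^T\Sigma\xi$, which is the $n$-free bound you need. Your alternative suggestion of centering at $\bar\xi_i=\xi\cdot\mu_i$ works equally well (indeed it gives the identity $\sum_j p_{i,j}(\xi_j-\bar\xi_i)^2=\Var(\xi\cdot X_i)$). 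However, your first suggestion---a single fixed $\bar\xi=\xi_k$ as in~\eqref{FTEqn}---does \emph{not} work here: that computation was for a $k$-maximal PMD, whereas in this section $X$ is arbitrary. With a global reference one only has $\sum_{i,j}p_{i,j}|\xi_j-\bar\xi|^2 = \xi^T\Sigma\xi + \sum_i\bigl(\sum_j p_{i,j}(\xi_j-\bar\xi)\bigr)^2$, and the second sum can be as large as $n\cdot O(Ck\log\sigma/\sigma)$ (take $m$ balanced CRVs and $n-m$ deterministic ones at $e_1$ with $n\gg m$), precisely the spurious $n$-dependence you flagged. So commit to the per-component centering; once you do, your sketch goes through and matches the paper's bound.
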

\begin{proof}
Recall that
$
\wh{X}(\xi) = \prod_{i=1}^n \sum_{j=1}^k e(\xi_j) p_{ij}.
$
Let $m_i$ be the element of $[k]$ so that $p_{im_i}$ is as large as possible for each $i$.
In particular, $p_{im_i}\geq 1/k.$
We will attempt to approximate the above product by approximating
the log of $\sum_{j=1}^k e(\xi_j)p_{ij}$ by its Taylor series expansion
around the point $(\xi_{m_i},\xi_{m_i},\ldots,\xi_{m_i}).$
In particular, by Taylor's Theorem, we find that
$$
\sum_{j=1}^k e(\xi_j)p_{ij} = \exp\left(2\pi i \left( \xi_{m_i} +\sum_{j=1}^k p_{ij}(\xi_j-\xi_{m_i})\right) -2\pi^2 \left(\sum_{j=1}^k p_{ij}(\xi_j-\xi_{m_i})^2 \right)+2\pi^2\left( \sum_{j=1}^k p_{ij}(\xi_j-\xi_{m_i})\right)^2 +E_i \right),
$$
where $E_i$ is the third directional derivative in the $\xi-(\xi_{m_i},\ldots,\xi_{m_i})$ direction of $\log(\wh{X_i}(\xi))$
at some point $\tilde{\xi}$ along the line between $\xi$ and $(\xi_{m_i},\ldots,\xi_{m_i}).$ Note that the above is exactly
$$
\exp\left(2\pi i (\xi\cdot \E[X_i]) - 2\pi^2 (\xi\cdot \cov(X_i)\xi) + E_i \right).
$$
Thus, taking a product over $i$, we find that
$$
\wh{X}(\xi) = \exp\left( 2\pi i \mu\cdot \xi - 2\pi^2 \xi\cdot \Sigma \xi +\sum_{i=1}^n E_i\right).
$$

We remark that the coefficients of this Taylor series are (up to powers of $-2\pi i$) the cumulants of $X$.

Since the coordinates of $\tilde{\xi}$ lie in an interval of length at most $1/2,$
we have that $\sum_{j=1}^k e(\tilde{\xi}_j) p_{ij}$ is bounded away from $0.$ Therefore, we get that
\begin{align*}
|E_i| = O\bigg(\sum_{j=1}^k p_{ij} |\tilde\xi_j-\xi_{m_i}|^3 & + \sum_{j_1,j_2=1}^k p_{ij_1}p_{ij_2}|\tilde\xi_{j_1}-\xi_{m_i}|^2|\tilde\xi_{j_2}-\xi_{m_i}|\\ & + \sum_{j_1,j_2,j_3=1}^k p_{ij_1}p_{ij_2}p_{ij_3}|\tilde\xi_{j_1}-\xi_{m_i}||\tilde\xi_{j_2}-\xi_{m_i}||\tilde\xi_{j_3}-\xi_{m_i}|\bigg).
\end{align*}
Next note that
$$
\var(X_i\cdot \tilde{\xi}) \geq p_{ij}p_{im_i} |\tilde\xi_j-\xi_{m_i}|^2 \geq p_{ij} |\tilde\xi_j-\xi_{m_i}|^2/k.
$$
Additionally, note that
$$
C k \log(\sigma) \geq \xi\cdot\Sigma \xi  = \var(X\cdot \xi) = \sum_{i=1}^n \var(X_i \cdot \xi).
$$
Therefore,
$$
\sum_{i=1}^n p_{ij} |\tilde\xi_j-\xi_{m_i}|^2 \leq \sum_{i=1}^n p_{ij} |\xi_j-\xi_{m_i}|^2 \leq C k^2 \log(\sigma).
$$

Thus, since $|\tilde\xi_j-\xi_{m_i}|=O(\sqrt{Ck\log(\sigma)/\sigma})$ for all $i,j$ we have that
$$
\sum_{i=1}^n \sum_{j=1}^k p_{ij} |\tilde\xi_j-\xi_{m_i}|^3 \leq O(C^{3/2} k^{7/2} \sqrt{\log^3(\sigma)/\sigma}).
$$
We have that
\begin{align*}
\sum_{i=1}^n \sum_{j_1,j_2=1}^k p_{ij_1}p_{ij_2}|\tilde\xi_{j_1}-\xi_{m_i}||\tilde\xi_{j_2}-\xi_{m_i}| & \leq \sum_{i=1}^n \left(\sum_{j=1}^k p_{i,j}|\tilde \xi_j -\xi_{m_i}|\right)^2\\
& \leq \sum_{i=1}^n \left(\sum_{j=1}^k p_{i,j} \right) \left(\sum_{j=1}^k p_{i,j}|\tilde \xi_j -\xi_{m_i}|^2 \right)\\
& = O(Ck^3\log(\sigma)).
\end{align*}
Therefore,
$$
\sum_{i=1}^n \sum_{j_1,j_2=1}^k p_{ij_1}p_{ij_2}|\tilde\xi_{j_1}-\xi_{m_i}|^2|\tilde\xi_{j_2}-\xi_{m_i}|
$$
and
$$
\sum_{j_1,j_2,j_3=1}^k p_{ij_1}p_{ij_2}p_{ij_3}|\tilde\xi_{j_1}-\xi_{m_i}||\tilde\xi_{j_2}-\xi_{m_i}||\tilde\xi_{j_3}-\xi_{m_i}|
$$
are both
$$
O(C^{3/2}k^{7/2}\sqrt{\log^3(\sigma)/\sigma}).
$$
\end{proof}

We now define $G$ to be the discrete Gaussian supported on the set of points in $\Z^k$ whose coordinates sum to $n,$ 
so that for such a point $p$ we have:
\begin{align*}
G(p) = (2\pi)^{-(k-1)/2}\det(\Sigma')^{-1/2}\exp((p-\mu)\cdot \Sigma^{-1} (p-\mu)/2) & = \int_{\xi,\sum \xi_j=0} e(-p\cdot \xi) \exp(2\pi i (\xi\cdot \mu) - 2\pi^2 \xi\cdot \Sigma \xi)\\
& = \int_{\xi,\xi_1\in[0,1]} e(-p\cdot \xi) \exp(2\pi i (\xi\cdot \mu) - 2\pi^2 \xi\cdot \Sigma \xi) \;,
\end{align*}
where $\Sigma'=\Sigma+\bone \bone^T$ restricted to the space of vectors whose coordinates sum to $0.$

We let $\wh{G}$ equal
$$
\wh{G}(\xi) := \exp(2\pi i (\xi\cdot \mu) - 2\pi^2 \xi\cdot \Sigma \xi).
$$
Next, we claim that $G$ and $X$ have similar effective supports and subsequently that $\wh{G}$ and $\wh{X}$ do as well.
Firstly, the effective support of the distribution of $G$ is similar to that of $X$, namely $S$:
\begin{lemma} \label{lem:clt-gaussian-support}
The sum of the absolute values of $G$ at points not is $S$ is at most $1/\sigma.$
\end{lemma}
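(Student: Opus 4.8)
\textbf{Proof plan for Lemma~\ref{lem:clt-gaussian-support}.}
The plan is to mirror the argument used for the PMD $X$ itself, namely Lemma~\ref{lem:clt-support} and Corollary~\ref{cor:clt-support-ellipse-and-count}, but carried out for the discrete Gaussian $G$. Recall that $G$ is supported on the slice $\{p \in \Z^k : p\cdot\bone = n\}$ with $G(p)$ proportional to $\exp(-(p-\mu)^T(\Sigma')^{-1}(p-\mu)/2)$, where $\Sigma'$ is $\Sigma+\bone\bone^T$ restricted to the hyperplane orthogonal to $\bone$. First I would observe that, since we are only summing the (nonnegative) values $G(p)$, the quantity $\sum_{p \notin S} G(p)$ is just the probability that a sample from $G$ lands outside the ellipsoidal set $S = \{x : (x-\mu)^T(\Sigma+I)^{-1}(x-\mu) \le Ck\log\sigma,\ (x-\mu)\cdot\bone = 0\}$ defined in Corollary~\ref{cor:clt-support-ellipse-and-count}. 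So it suffices to show this Gaussian tail probability is at most $1/\sigma$.

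The key steps are: (i) relate $(\Sigma+I)^{-1}$ to $(\Sigma')^{-1}$ on the relevant hyperplane. On the span of the non-trivial eigenvectors of $\Sigma$, which have eigenvalue $\ge \sigma$, we have $\Sigma+I \preceq 2\Sigma'$ (indeed $\lambda_i+1 \le \lambda_i + \lambda_i/\sigma \cdot \sigma$ is not quite it, but $\lambda_i + 1 \le 2\lambda_i$ since $\lambda_i \ge \sigma \ge 1$, and $\Sigma'$ restricted to this space is just $\mathrm{diag}(\lambda_i)$), so $(\Sigma+I)^{-1} \succeq \tfrac12(\Sigma')^{-1}$ there. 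Hence membership in $S$ is implied by $(p-\mu)^T(\Sigma')^{-1}(p-\mu) \le Ck\log\sigma/2$ roughly; (ii) bound the Gaussian tail: if $Z \sim \mathcal N(0,\Sigma')$ on the $(k-1)$-dimensional hyperplane, then $Z^T(\Sigma')^{-1}Z$ is a $\chi^2_{k-1}$ random variable, and a standard chi-squared tail bound gives $\Pr[\chi^2_{k-1} > t] \le \exp(-\Omega(t))$ for $t \gtrsim k$. Taking $t = \Theta(Ck\log\sigma)$ with $C$ a sufficiently large constant yields a tail probability at most $\exp(-\Omega(Ck\log\sigma)) = \sigma^{-\Omega(Ck)} \le 1/\sigma$; (iii) finally, pass from the continuous Gaussian to the discrete one $G$: the sum $\sum_{p \notin S} G(p)$ over lattice points can be compared to the corresponding continuous integral by the same ``unit cube around each lattice point'' device used in Corollary~\ref{cor:clt-support-ellipse-and-count} — enlarging $S$ slightly to $S'$ (the set of points within $\ell_\infty$-distance $1/2$ of $S$) absorbs the discretization error, and $S'$ is still contained in an ellipsoid of the same shape with a slightly larger radius constant, so the same tail bound applies with $C$ adjusted.

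I expect step (iii), the discretization, to be the only place requiring mild care: one must check that the normalizing constant of the discrete Gaussian $G$ is within a constant factor of $(2\pi)^{(k-1)/2}\det(\Sigma')^{1/2}$ (which follows from Poisson summation / the fact that $\Sigma'$ has all eigenvalues $\ge \sigma \gg 1$, so the Gaussian is very spread out and the Riemann sum approximates the integral to within a $(1\pm o(1))$ factor), and that replacing $G(p)$ by the continuous density $g(p)$ introduces only a negligible multiplicative error on the tail. Alternatively, and more cleanly, one can invoke the inverse-Fourier-transform representation of $G$ given just before the lemma statement together with the already-established volume and decay bounds — but the direct chi-squared tail argument is the most transparent route. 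None of these steps is genuinely hard; the lemma is the Gaussian analogue of facts already proven for $X$, and the constant $C$ in the definition of $S$ was chosen large precisely to make both tail bounds work simultaneously.
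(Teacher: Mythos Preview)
Your proposal is correct and follows essentially the same approach as the paper: both compare the discrete Gaussian $G$ pointwise to the continuous Gaussian $\tilde G$ via the unit-cube device (your step~(iii) is exactly the paper's $G(p)=O\bigl(\int_{p+[-1/2,1/2]^k}\tilde G\bigr)$), and then invoke standard Gaussian tail bounds. The only minor difference is that the paper bounds the tail in each eigenvector direction $v_i$ separately (one-dimensional Gaussian tails giving $\Pr[|(G-\mu)\cdot v_i|>\sqrt{\lambda_i}\,t]\le\exp(-\Omega(t^2))$) and then takes a union bound over $i\in[k]$, whereas you bound the full quadratic form in one shot via a $\chi^2_{k-1}$ tail; both yield the same $\sigma^{-\Omega(Ck)}\le 1/\sigma$ conclusion.
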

\begin{proof}
For this it suffices to prove a tail bound for $G$ analogous to that satisfied by $X.$
In particular, assuming that $\Sigma$ has unit eigenvectors $v_i$ with eigenvalues $\lambda_i,$
it suffices to prove that $|(G-\mu)\cdot v_i| < \sqrt{\lambda_i} t$ except with probability at most $\exp(-\Omega(t^2)).$
Recall that
$$
G(p) = (2\pi)^{-(k-1)/2}\det(\Sigma')^{-1/2}\exp((p-\mu)\cdot \Sigma^{-1} (p-\mu)/2).
$$
Let $\tilde{G}$ be the continuous probability density defined by
$$
\tilde G(x) = (2\pi)^{-k/2}\det(\Sigma')^{-1/2}\exp((x-\mu)\cdot \Sigma'^{-1} (x-\mu)/2).
$$
Note that for any $p$ with $(p-\mu)\cdot \bone = 0$, and $x\in [-1/2,1/2]^k,$ we have that
$$
G(p) = O\left(\tilde G(p+x) + \tilde G(p-x) \right).
$$
Therefore, we have that
$$
G(p) = O\left(\int_{x\in p+ [-1/2,1/2]^k} \tilde G(x) \right).
$$
Applying this formula for each $p$ with $(p-\mu)\cdot v_i \geq \sqrt{\lambda_i}t$ 
and noting that $(x-\mu)\cdot v_i \geq (p-\mu)\cdot v_i - \sqrt{k} \geq \sqrt{\lambda_i}t-\sqrt{k}$ yields
$$
\Pr(|(G-\mu)\cdot v_i| > \sqrt{\lambda_i} t) = O(\Pr(|(\tilde G-\mu)\cdot v_i| > \sqrt{\lambda_i} t - \sqrt{k} = \exp(-\Omega(t^2)).
$$
Taking a union bound over $1\leq i \leq k$ yields our result.
\end{proof}
Secondly, the effective support of the Fourier Transform of $G$ is similar to that of $X$, namely $T$:
\begin{lemma} \label{lem:clt-gaussian-ft-support}
The integral of $|\wh{G}(\xi)|$ over $\xi$ with $\xi_1\in [0,1]$ and $\xi$ not in $T$ is at most $1/(|S|\sigma).$
\end{lemma}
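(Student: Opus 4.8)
The plan is to compute the integral directly. Since $\wh{G}(\xi)=\exp(2\pi i(\xi\cdot\mu)-2\pi^2\xi\cdot\Sigma\xi)$, we have $|\wh{G}(\xi)|=\exp(-2\pi^2\,\xi\cdot\Sigma\xi)$, so the task reduces to estimating a Gaussian tail integral. This is considerably simpler than the corresponding bound for $\wh{X}$ in Lemma~\ref{lem:outside-effective-support-ft}(iii): no dyadic decomposition or pigeonhole argument (Claim~\ref{clm:pigeon-hole-interval}) is needed, because $|\wh{G}|$ already has the clean exact form above on all of $\R^k$.

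First I would exploit the degeneracy of $\Sigma$ in the $\bone$ direction. Since $X$ is supported on the hyperplane $\{x:\bone\cdot x=n\}$, its covariance satisfies $\Sigma\bone=\mathbf{0}$, so both $\xi\cdot\Sigma\xi$ and the condition $\xi\in T$ depend only on the orthogonal projection of $\xi$ onto $\bone^{\perp}$. Writing $\xi=\eta+t\bone$ with $\eta\cdot\bone=0$ and $t\in\R$ — a linear change of variables with constant Jacobian $\sqrt{k}$ — the constraint $\xi_1\in[0,1]$ becomes, for each fixed $\eta$, a constraint confining $t$ to an interval of length exactly $1$, while the integrand does not depend on $t$. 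Hence the integral in question equals
$$\sqrt{k}\int_{\substack{\eta\cdot\bone=0\\ \eta\cdot\Sigma\eta>Ck\log(\sigma)}}\exp(-2\pi^2\,\eta\cdot\Sigma\eta)\,d\eta.$$
By the hypothesis of Theorem~\ref{thm:clt-formal}, the restriction $\Sigma_0$ of $\Sigma$ to the $(k-1)$-dimensional subspace $\bone^{\perp}$ is positive definite with every eigenvalue at least $\sigma$. Substituting $u=\Sigma_0^{1/2}\eta$ turns the last display into $\sqrt{k}\,\det(\Sigma_0)^{-1/2}\int_{\|u\|_2^2>Ck\log(\sigma)}\exp(-2\pi^2\|u\|_2^2)\,du$ over $\R^{k-1}$. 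Since $Ck\log(\sigma)$ dominates $k-1$ by the standing assumption $\sigma>k^{7}$ together with the choice of large $C$, a standard chi-squared tail bound shows this integral is at most $(2\pi)^{-(k-1)/2}\,\sigma^{-\Omega(Ck)}$.

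It remains to match the leading $\det(\Sigma_0)^{-1/2}$ factor against $1/|S|$. Because $\sigma\ge 1$, each non-trivial eigenvalue $\lambda_i$ of $\Sigma$ satisfies $\lambda_i\ge(\lambda_i+1)/2$, so $\det(\Sigma_0)\ge 2^{-(k-1)}\det(\Sigma+I)$ and therefore $\det(\Sigma_0)^{-1/2}\le 2^{(k-1)/2}\det(\Sigma+I)^{-1/2}$; the $2^{(k-1)/2}$ is absorbed by the $(2\pi)^{-(k-1)/2}$ from the tail bound. Recalling from Corollary~\ref{cor:clt-support-ellipse-and-count} that $|S|=\sqrt{\det(\Sigma+I)}\cdot O(\log\sigma)^{k/2}$, we are left needing $\sqrt{k}\cdot O(\log\sigma)^{k/2}\cdot\sigma\cdot\sigma^{-\Omega(Ck)}\le 1$, which holds once $C$ is a sufficiently large universal constant (using $\sigma>k^{7}$). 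The main — and only mildly delicate — point is precisely this last bookkeeping of the $\mathrm{poly}(k)$ and $O(\log\sigma)^{k/2}$ factors against the super-polynomial saving $\sigma^{-\Omega(Ck)}$; the Gaussian integral itself is entirely routine. This yields $\int_{\{\xi_1\in[0,1]\}\setminus T}|\wh{G}(\xi)|\,d\xi\le 1/(|S|\sigma)$, as claimed.
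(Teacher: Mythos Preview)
Your proof is correct and is in fact a cleaner route than the paper's. The paper argues by a dyadic decomposition into shells $T_m=\{\xi:\xi_1\in[0,1],\ \xi\cdot\Sigma\xi\in[2^mCk\log\sigma,\,2^{m+1}Ck\log\sigma]\}$, bounds the volume of each $T_m$ by $2^{mk}k^{O(k)}\log^{O(k)}(\sigma)/|S|$ and the supremum of $|\wh G|$ on $T_m$ by $\exp(-\Omega(Ck\log(\sigma)2^m))$, and then sums the resulting geometric series. That mirrors the argument used for $\wh X$ in Lemma~\ref{lem:outside-effective-support-ft}(iii), which is presumably why they chose it.

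Your direct computation exploits the fact that $|\wh G|$ is exactly a Gaussian, so after the orthogonal splitting $\xi=\eta+t\bone$ and the change of variable $u=\Sigma_0^{1/2}\eta$, the whole problem collapses to a standard $(k-1)$-dimensional Gaussian tail bound. This avoids the dyadic machinery entirely and makes the bookkeeping of the various $\det(\Sigma+I)^{-1/2}$, $O(\log\sigma)^{k/2}$, and $\sigma^{-\Omega(Ck)}$ factors against $1/(|S|\sigma)$ more transparent. The only place one should be a touch careful is the constant in the chi-squared tail (one wants $4\pi^2Ck\log\sigma\gg k-1$, which holds under $\sigma>k^7$), but you flagged that. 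Either argument gives the lemma; yours is shorter.
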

\begin{proof}
We consider the integral over $\xi\in T_m,$ where
$$
T_m := \{\xi: \xi_1\in[0,1] \mid \xi\cdot \Sigma\xi \in [2^m C k \log(\sigma),2^{m+1} C k \log(\sigma)] \}.
$$
We note that it has volume $2^{mk}k^{O(k)}\log^{O(k)}(\sigma)/|S|,$
and that within $T_m$ it holds $|\wh{G}(\xi)| = \exp(-\Omega(Ck\log(\sigma)2^m)).$
From this it is easy to see that the integral over $T_m$ is at most $2^{-m-1}/(|S|\sigma).$
Summing over $m$ yields the result.
\end{proof}

We now have all that is necessary to prove a weaker version of our main result.
\begin{proposition} \label{prop:clt-naive-integral}
We have the following:
$$
\dtv(X,G) \leq O(\log(\sigma))^k \sigma^{-1/2}.
$$
\end{proposition}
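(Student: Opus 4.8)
\textbf{Proof proposal for Proposition~\ref{prop:clt-naive-integral}.}
The plan is to control $\dtv(X,G)=\tfrac12\sum_p|X(p)-G(p)|$ by combining the real-space and Fourier-space effective-support bounds for \emph{both} $X$ and the discrete Gaussian $G$, and to bound $|X(p)-G(p)|$ pointwise via the inverse Fourier transform. First I would split the sum over $p$ according to whether $p\in S$, where $S$ is the ellipsoidal set of Corollary~\ref{cor:clt-support-ellipse-and-count}. The contribution of $p\notin S$ is at most $\Pr[X\notin S]+\sum_{p\notin S}|G(p)|\le 2/\sigma$, using Corollary~\ref{cor:clt-support-ellipse-and-count} and Lemma~\ref{lem:clt-gaussian-support}. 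For the remaining terms we bound $\tfrac12\sum_{p\in S}|X(p)-G(p)|\le \tfrac12\,|S|\cdot\max_{p\in S}|X(p)-G(p)|$, so the task reduces to showing $|X(p)-G(p)|$ is small for a fixed $p$.

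Both $X$ and $G$ are supported on a common affine translate of the rank-$(k-1)$ lattice $\{x\in\Z^k:\bone\cdot x=0\}$, so on the common fundamental domain $D$ (the one with $\xi_1\in[0,1]$ used in the definition of $G$) the inverse Fourier transform gives $X(p)-G(p)=\int_{D}e(-p\cdot\xi)\,(\wh X(\xi)-\wh G(\xi))\,d\xi$, and hence $|X(p)-G(p)|\le\int_D|\wh X(\xi)-\wh G(\xi)|\,d\xi$. I would split $D=T'\cup(D\setminus T')$ with $T'$ as in Lemma~\ref{lem:outside-effective-support-ft}(ii). On $D\setminus T'$ we bound $|\wh X-\wh G|\le|\wh X|+|\wh G|$ and apply Lemma~\ref{lem:outside-effective-support-ft}(iii) together with Lemma~\ref{lem:clt-gaussian-ft-support} (which bound exactly the tail integrals of $|\wh X|$ and of $|\wh G|$ over $D\setminus T'$), yielding $\int_{D\setminus T'}|\wh X-\wh G|\le 2/(\sigma|S|)$. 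On $T'$, Lemma~\ref{FourierApproxLem} gives $\wh X(\xi)=\wh G(\xi)\exp(E(\xi))$ with $|E(\xi)|\le\eta:=O(C^{3/2}k^{7/2}\sqrt{\log^3(\sigma)/\sigma})$; since $|\wh G(\xi)|\le 1$ and $|e^{z}-1|\le 2|z|$ for $|z|\le 1$, we get $|\wh X(\xi)-\wh G(\xi)|\le 2\eta$, so $\int_{T'}|\wh X-\wh G|\le 2\eta\,\mathrm{Vol}(T')=2\eta\,\det(\Sigma+I)^{-1/2}\,O(\log\sigma)^{k/2}$ by Lemma~\ref{lem:outside-effective-support-ft}(ii).

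Assembling, $|X(p)-G(p)|\le 2\eta\,\det(\Sigma+I)^{-1/2}O(\log\sigma)^{k/2}+2/(\sigma|S|)$, hence $\tfrac12|S|\max_{p\in S}|X(p)-G(p)|\le \eta\,|S|\det(\Sigma+I)^{-1/2}O(\log\sigma)^{k/2}+1/\sigma$. The crucial bookkeeping is that $|S|=\sqrt{\det(\Sigma+I)}\cdot O(\log\sigma)^{k/2}$ and $\mathrm{Vol}(T')=\det(\Sigma+I)^{-1/2}\cdot O(\log\sigma)^{k/2}$, so the $\det(\Sigma+I)$ factors cancel and $|S|\,\mathrm{Vol}(T')=O(\log\sigma)^k$ independently of $\Sigma$. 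Thus $\dtv(X,G)\le \eta\cdot O(\log\sigma)^k+O(1/\sigma)$, and substituting $\eta=O(C^{3/2}k^{7/2}\sqrt{\log^3(\sigma)/\sigma})$ and absorbing the $\poly(k)$ and $\polylog(\sigma)$ prefactor into the base of $(O(\log\sigma))^k$ (legitimate since $\sigma>k^7$, so the prefactor is subexponential in $k$ and dominated by a fixed power of $\log\sigma$) gives $\dtv(X,G)\le (O(\log\sigma))^k\sigma^{-1/2}$.

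The only genuine subtlety is making sure the effective supports of $X$ and of $G$ really line up — both in physical space (the set $S$, via Lemma~\ref{lem:clt-gaussian-support}) and in Fourier space (the set $T$, via Lemma~\ref{lem:clt-gaussian-ft-support}) — together with the normalization of the inverse Fourier transform on the quotient by $\bone$; once these are in place the estimate is just the product of the Fourier-approximation error $\eta$, the Fourier support volume $\mathrm{Vol}(T')$, and the real support size $|S|$. I expect no serious difficulty here, and indeed the point of stating this as a ``naive'' proposition is that it is lossy: on $T'$ we replaced $\int_{T'}|\wh X-\wh G|$ by the crude bound $2\eta\,\mathrm{Vol}(T')$, throwing away all cancellation in the oscillatory integral $\int_{T'}e(-p\cdot\xi)(\wh X-\wh G)\,d\xi$. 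This is exactly why the exponent here is $k$ rather than $O(1)$, and recovering the stronger Theorem~\ref{thm:clt-formal} will require replacing this volume bound by a saddlepoint analysis of that integral.
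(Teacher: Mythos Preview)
Your proposal is correct and follows essentially the same approach as the paper: a pointwise $L^\infty$ bound on $|X(p)-G(p)|$ via the inverse Fourier transform, splitting the frequency domain into $T'$ (where Lemma~\ref{FourierApproxLem} controls $|\wh X-\wh G|$) and its complement (handled by the Fourier tail Lemmas~\ref{lem:outside-effective-support-ft} and~\ref{lem:clt-gaussian-ft-support}), then multiplying by $|S|$ and using the cancellation $|S|\cdot\mathrm{Vol}(T')=O(\log\sigma)^k$. The one wrinkle you correctly flag---that the inverse-transform domains for $X$ (a fundamental domain of $\Z^k/\bone\Z$) and for $G$ (the infinite strip $\xi_1\in[0,1]$) are not literally identical---is exactly what the paper addresses by observing that any $\xi$ in the bounded domain equivalent mod $\Z^k$ to a point of $T$ already lies in $T$, so both integrals reduce to the common region $T'$ up to $O(1/(\sigma|S|))$.
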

\begin{proof}
First, we bound the $L^\infty$ of the difference.
In particular, we note that for any $p$ with integer coordinates summing to $n$ we have that
$$
X(p) = \int_{\xi\in\R^k,\xi_1\in[0,1],\xi_i\in [\xi_1-1/2,\xi_1+1/2]} e(-p\cdot \xi)\wh{X}(\xi)d\xi \;,
$$
and
$$
G(p) = \int_{\xi,\xi_1\in[0,1]} e(-p\cdot \xi) \exp(2\pi i (\xi\cdot \mu) - 2\pi^2 \xi\cdot \Sigma \xi).
$$
We note that in both cases the integral for $\xi$ not in $T$ is at most $1/(|S|\sigma).$
To show this, we need to note that any $\xi$ with $\xi_1\in[0,1],\xi_i\in [\xi_1-1/2,\xi_1+1/2]$
equivalent to a point of $T$ modulo $\Z^k$ must lie in $T$ itself.
This is because the element of $T$ must have all its coordinates differing by at most $1/4,$
and thus must differ from $\xi$ by an integer multiple of $(1,1,\ldots,1).$ Therefore, we have that
\begin{align*}
|X(p)-G(p)| & = \left| \int_{\xi\in T,\xi_1\in[0,1]}e(-p\cdot \xi) (\wh{X}(\xi)-\wh{G}(\xi))d\xi \right|+O(1/(|S|\sigma))\\
& \leq \int_{\xi\in T,\xi_1\in[0,1]}|\wh{X}(\xi)-\wh{G}(\xi)|d\xi+O(1/(|S|\sigma))\\
& \leq \int_{\xi\in T,\xi_1\in[0,1]}O(C^{3/2} k^{7/2} \sqrt{\log^3(\sigma)/\sigma})d\xi+O(1/(|S|\sigma))\\
& \leq \det(\Sigma+I)^{-1/2}(C^{3/2}\sqrt{\log(\sigma)/\sigma})O(C\log(\sigma))^{k/2}.
\end{align*}
Therefore, the sum of $|X(p)-G(p)|$ over $p\in S$ is at most
$$
(C^{3/2}\sqrt{\log(\sigma)/\sigma})O(C^2\log^2(\sigma))^{k/2}.
$$
The sum over $p \not\in S$ is at most $O(1/\sigma)$. This completes the proof.
\end{proof}

The proof of the main theorem is substantially the same as the above.
The one obstacle that we face is that above we are only able to prove $L^\infty$
bounds on the difference between $X$ and $G,$ and these bounds are too weak for our purposes.
What we would like to do is to prove stronger bounds on the difference between $X$ and $G$
at points $p$ far from $\mu.$ In order to do this, we will need to take advantage
of cancellation in the inverse Fourier transform integrals.
To achieve this, we will use the saddle point method from complex analysis.

\begin{proof}[Proof of Theorem~\ref{thm:clt-formal}.]
For $p\in S$ we have as above that
$$
|X(p)-G(p)| = \left| \int_{\xi\in T,\xi_1\in[0,1]}e(-p\cdot \xi) (\wh{X}(\xi)-\wh{G}(\xi))d\xi \right|+O(1/(|S|\sigma)).
$$
Let $\xi_0\in \R^k$ be such that $\xi_0.\bone=0$ and so that $\Sigma \xi_0 = (\mu-p)/(2\pi)$
(i.e., take $\xi_0 = (\Sigma+\bone \bone^T)^{-1} \mu-p)/(2\pi)$).
We think of the integral above as an iterated contour integral.
By deforming the contour associated with the innermost integral, we claim that it is the same as the sum of the integrals
over $\xi$ with $\re(\xi)\in T,\re(\xi_1)\in[0,1]$ and $\im(\xi)=\xi_0$
and the integral over $\re(\xi)\in \delta T,\re(\xi_1)\in[0,1]$
and $\im(\xi)=t\xi_0$ for some $t\in [0,1]$
(the extra pieces that we would need to add at $\re(\xi_1)=0$ and $\re(\xi_1)=1$ cancel out).

\begin{claim}
 $\int_{\xi\in \delta T,\xi_1\in[0,1]}e(-p\cdot \xi) (\wh{X}(\xi)-\wh{G}(\xi))d\xi$ equals $$\int_{\xi\in T,\xi_1\in[0,1]}e(-p\cdot (\xi+i\xi_0)) (\wh{X}(\xi+i\xi_0)-\wh{G}(\xi+i\xi_0))d\xi$$ plus
$$\int_{\xi\in \delta T,\xi_1\in[0,1]} \int_{t=0}^1 e(-p\cdot (\xi+it\xi_0)) (\wh{X}(\xi+it\xi_0)-\wh{G}(\xi+it\xi_0)) d(t\xi_0) \cdot d\xi.$$
\end{claim}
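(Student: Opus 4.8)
The plan is to prove the identity by a multidimensional contour-shift argument — an application of Cauchy's theorem (equivalently Stokes' theorem for holomorphic forms) in the variable $\xi\in\C^k$; the integral being deformed is the one over $\{\xi\in T,\ \xi_1\in[0,1]\}$ in the preceding display. The first step is to note that the integrand $F(\xi):=e(-p\cdot\xi)\bigl(\wh{X}(\xi)-\wh{G}(\xi)\bigr)$ extends to an entire function of $\xi\in\C^k$: indeed $e(-p\cdot\xi)=\exp(-2\pi i\,p\cdot\xi)$ and $\wh{G}(\xi)=\exp(2\pi i\,\xi\cdot\mu-2\pi^2\,\xi\cdot\Sigma\xi)$ are entire, and $\wh{X}(\xi)=\prod_{i=1}^n\sum_{j=1}^k e(\xi_j)p_{i,j}$ is a finite product of entire trigonometric polynomials. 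Hence the holomorphic $k$-form $\omega:=F(\xi)\,d\xi_1\wedge\cdots\wedge d\xi_k$ is $d$-closed on $\C^k$, and every integral below converges because $T':=T\cap\{\xi_1\in[0,1]\}$ is bounded (on $\bone^\perp$ the matrix $\Sigma$ has eigenvalues $\ge\sigma>0$, so $T$ is a cylinder with bounded cross-section once $\xi_1$ is pinned to $[0,1]$) and $F$ is continuous.

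Next I would realize the deformation as a homotopy: consider $(\xi,t)\mapsto\xi+it\xi_0$ for $\xi\in T'$, $t\in[0,1]$, whose image is a $(k+1)$-real-dimensional chain $R\subseteq\C^k$ with $\partial R$ consisting of (i) the bottom face $T'$ (at $t=0$), (ii) the top face $\{\xi+i\xi_0:\xi\in T'\}$ (at $t=1$), (iii) the ``curved'' lateral face $\bigcup_{t\in[0,1]}\{\xi+it\xi_0:\xi\in\delta T\}$, where $\delta T=\partial T\cap\{\xi_1\in[0,1]\}$, and (iv) the two ``flat'' faces over $\{\xi\in T:\xi_1=0\}$ and $\{\xi\in T:\xi_1=1\}$. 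Since $\omega$ is closed, $\int_{\partial R}\omega=\int_R d\omega=0$, so (after fixing orientations) $\int_{T'}\omega$ equals $\int_{T'+i\xi_0}\omega$ plus the contribution of (iii) plus that of (iv); reparametrizing faces (ii) and (iii) by their real parts turns these into exactly the first and second displayed integrals on the right of the claim. It therefore suffices to show that the two flat faces cancel.

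That cancellation is where the arithmetic of the problem enters, via the real translation $\tau:\xi\mapsto\xi+\bone$. Because $X\cdot\bone\equiv n$ we have $\bone\cdot\Sigma\bone=0$, hence $\Sigma\bone=\mathbf{0}$, so $T=\{\xi:\xi\cdot\Sigma\xi\le Ck\log\sigma\}$ is $\tau$-invariant and $\tau$ carries the face $\{\xi_1=0\}$ bijectively onto $\{\xi_1=1\}$ inside $T$ (and likewise for their products with $[0,1]$ in the $t$-direction). Moreover $F$ is $\tau$-invariant: $e(-p\cdot\xi)$ and $\wh{X}(\xi)$ are $\Z^k$-periodic because $p\in\Z^k$, while $\wh{G}(\xi+\bone)=\exp(2\pi i\,\bone\cdot\mu)\,\wh{G}(\xi)=\wh{G}(\xi)$ using $\bone\cdot\mu=n\in\Z$ and $\Sigma\bone=\mathbf{0}$. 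Thus $\tau^*\omega=\omega$; since $\tau$ maps one flat face onto the other while reversing the induced boundary orientation, the two flat-face integrals are negatives of one another and drop out of $\int_{\partial R}\omega=0$. What remains is precisely $\int_{T'}\omega=\int_{T'+i\xi_0}\omega+\int_{\delta T}\!\int_{0}^{1}e(-p\cdot(\xi+it\xi_0))\bigl(\wh{X}(\xi+it\xi_0)-\wh{G}(\xi+it\xi_0)\bigr)\,d(t\xi_0)\,d\xi$, i.e. the claimed identity.

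The only real work here is bookkeeping: enumerating the faces of $\partial R$ correctly, matching the orientation signs so that the lateral term appears with the sign stated, and confirming that the ``curved'' boundary is exactly $\delta T$ rather than all of $\partial T'$. There is no analytic subtlety — holomorphy of $F$ together with boundedness of $T'$ makes Stokes' theorem apply with no decay hypotheses — and the substantive content is entirely the $\tau$-invariance of $F$, which is what forces the $\xi_1=0$ and $\xi_1=1$ contributions to cancel.
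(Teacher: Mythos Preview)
Your proposal is correct and takes essentially the same approach as the paper: both shift the real contour $T'$ to $T'+i\xi_0$, pick up the lateral boundary contribution over $\delta T$, and kill the $\xi_1=0$ and $\xi_1=1$ faces via the $\bone$-periodicity of the integrand (your $\tau$-invariance). The only cosmetic difference is that the paper first rotates coordinates so that $\xi_0$ becomes the last coordinate axis and then applies one-variable Cauchy to the innermost integral of the resulting iterated integral, whereas you invoke Stokes' theorem directly on the closed holomorphic $k$-form $F\,d\xi_1\wedge\cdots\wedge d\xi_k$ over the $(k{+}1)$-chain $\{\xi+it\xi_0:\xi\in T',\ t\in[0,1]\}$.
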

\begin{proof}
We write $f(\xi)= e(-p\cdot \xi) (\wh{X}(\xi)-\wh{G}(\xi)).$
Let $O$ be an orthogonal matrix with $k$th column $\xi_0/\|\xi_0\|_2$.
Then, we change variables from $\xi$ to $\nu = O^T \xi$, yielding
$$ \int_{\xi\in T'}f(\xi) d\xi = \int_{\nu \in O^T T'} f(O^T \nu) d\nu $$
We can consider this as an iterated integral where $\nu_i$ is integrated from $a_i(\nu_1,\ldots, \nu_{i-1})$ to $b_i(\nu_1,\ldots,\nu_{i-1}).$
$$\int_{\nu \in O^T T'} f(O^T \nu) d\nu = \int_{a_1}^{b_1} \int_{a_2(\nu_1)}^{b_2(\nu_1)} \ldots \int_{a_k(\nu_1,\ldots,\nu_{k-1})}^{b_k(\nu_1,\ldots,\nu_{k-1})} f(O^T \nu) d\nu_k \ldots d\nu_2 d\nu_1 \;.$$
We consider the innermost integral. The function  $f(O^T \nu)$ is a linear combination of exponentials and so is holomorphic on all of $\C^n.$
Let $\mathcal{C}$ be the contour which consists of three straight lines,
from $a_k(\nu_1,\ldots,\nu_{k-1})$ via $a_k(\nu_1,\ldots,\nu_{k-1})+i\|\xi_0\|_2$ and $b_k(\nu_1,\ldots,\nu_{k-1})+i\|\xi_0\|_2$ to $b_k(\nu_1,\ldots,\nu_{k-1}).$
Then, by standard facts of complex analysis, we have:
\begin{align*}
& \int_{a_k(\nu_1,\ldots,\nu_{k-1})}^{b_k(\nu_1,\ldots,\nu_{k-1})} f(O^T \nu) d\nu_k \\
= & \int_\mathcal{C} f(O^T \nu) d\nu_k \\
= & \int_0^1 f(O^T (\nu_1,\ldots, \nu_{k-1}, a_k(\nu_1,\ldots,\nu_{k-1})+i\|\xi_0\|_2 t)) i  \|\xi_0\|_2 dt \\
+ & \int_{a_k(\nu_1,\ldots,\nu_{k-1})}^{b_k(\nu_1,\ldots,\nu_{k-1})} f(O^T (\nu + i\|\xi_0\|_2 e_k)) d\nu_k \\
+ & \int_0^1 f(O^T (\nu_1,\ldots, \nu_{k-1}, b_k(\nu_1,\ldots,\nu_{k-1})+i\|\xi_0\|_2 (1-t'))) i  \|\xi_0\|_2 dt'
\end{align*}
The middle part of this path gives the first term in the statement of the claim:
\begin{align*}
& \int_{a_1}^{b_1} \int_{a_2(\nu_1)}^{b_2(\nu_1)} \ldots \int_{a_k(\nu_1,\ldots,\nu_{k-1})}^{b_k(\nu_1,\ldots,\nu_{k-1})} f(O^T (\nu + i\|\xi_0\|_2 e_k)) d\nu_k \ldots d\nu_2 d\nu_1 \\
= & \int_{\nu \in O^T T'} f(O^T (\nu+i\|\xi_0\|_2 e_k) ) \\
= & \int_{\xi \in T'} f(\xi + i \xi_0) \;.
\end{align*}
A change of variables allows us to express the sum of the contributions from the first and third part of the path:
\begin{align*}
& \int_0^1 f(O^T (\nu_1,\ldots, \nu_{k-1}, a_k(\nu_1,\ldots,\nu_{k-1})+i\|\xi_0\|_2 t)) i  \|\xi_0\|_2 dt \\
- & \int_0^1 f(O^T (\nu_1,\ldots, \nu_{k-1}, b_k(\nu_1,\ldots,\nu_{k-1})+i\|\xi_0\|_2 t')) i  \|\xi_0\|_2 dt' \;.
\end{align*}
Changing variables to replace $(\nu_1,\ldots, \nu_{k-1}, a_k(\nu_1,\ldots,\nu_{k-1}))$ 
or $(\nu_1,\ldots, \nu_{k-1}, b_k(\nu_1,\ldots,\nu_{k-1}))$ with $\xi\in \delta T$ or $\xi \in T \cap \{0,1\}$ 
we get an appropriate integral of $\pm i f(\xi+it\xi_0).$ 
We note that the volume form for $\xi_0$ assigns to a surface element 
the volume of the projection of that element in the $\xi_0$ direction. 
Multiplying by $\|\xi_0\|_2$ and the appropriate sign yields exactly 
the measure $\xi_0\cdot d\xi.$ Thus, we are left with an integral of 
$f(\xi+it\xi_0)d(t\xi_0)\cdot d\xi.$ However, it should be noted that the measures 
$\xi_0\cdot d\xi$ are opposite on $\xi_1=0$ and $\xi_1=1$ boundaries (as $d\xi$ is the outward pointing normal). 
Since $f(\xi+it\xi_0) = f(\xi+\bone+it\xi_0),$ the integrals over these regions cancel, leaving exactly with the claimed integral.
\end{proof}

In order to estimate this difference, we use Lemma \ref{FourierApproxLem}, which still applies. 
Furthermore, we note that
$$
\xi_0\cdot\Sigma\xi_0 = (p-\mu)\cdot \Sigma^{-1}(p-\mu)/(4\pi^2) = O(Ck\log(\sigma)) \;,
$$
because $p\in S.$

Therefore, we have that $|X(p)-G(p)|$ is $O(1/(|S|\sigma))$ plus
$$
O( k^{7/2} \sqrt{\log^3(\sigma)/\sigma})\int_{\mathcal{C}} \left|\exp\left(2\pi i(\mu-p)\cdot \xi -2\pi^2 \xi \cdot \Sigma \xi \right)\right|d\xi.
$$
Now, when $\im(\xi)=\xi_0,$ we have that
$$
\exp\left(2\pi i(\mu-p)\cdot \xi -2\pi^2 \xi \cdot \Sigma \xi \right) = \exp\left(-(p-\mu)\Sigma^{-1}(p-\mu)/2-2\pi^2 \re(\xi)\cdot\Sigma \re(\xi) \right).
$$
Integrating, we find that the difference over this region is at most times
$$
O( k^{7/2} \sqrt{\log^3(\sigma)/\sigma})\int \exp\left(-(p-\mu)\Sigma^{-1}(p-\mu)/2-2\pi^2 \re(\xi)\cdot\Sigma \re(\xi) \right)d\xi = O( k^{7/2} \sqrt{\log^3(\sigma)/\sigma})G(p).
$$
The contribution from the part of the contour where $\re(\xi)$ is on the boundary of $T$ 
is also easy to bound after noting that both $|\wh{X}(\xi)|$ and $|\wh{G}(\xi)|$ are $O(\sigma^{-k}).$ 
We furthermore claim that the total volume of the region of integration is $O(\sqrt{k}\textrm{Vol}(T)).$ 
Together, these would imply that the total integral over this region is $O(1/(\sigma|S|)).$ 
To do this, we note that the total volume of the region being integrated over
is at most the volume of the projection of $T$ in the direction perpendicular to $\xi_0$ 
times the length of $\xi_0.$ In order to analyze this we consider each slice of $T$ 
given by $\xi\cdot \bone = \alpha$ separately. Noting that $|\alpha|\leq 2$ for all $\xi \in \delta T,$ 
it suffices to consider only a single slice. In particular, since for all such $\alpha,$ 
we have that $\tilde \xi\in T,$ it suffices to consider the slice $\alpha=0.$ 
Along this slice, we have that $T$ is an ellipsoid. 
Note that $\xi_0 \cdot \Sigma \xi_0 = (p-\mu)\cdot (\Sigma +\bone \bone^T)^{-1} (p-\mu) \leq Ck\log(\sigma).$ Therefore $\xi_0\in T.$

Next, we claim that if $E$ is any ellipsoid in at most $k$ dimensions,
and if $v$ is a vector with $v\in E,$
then the product of the length of $v$ times the volume of the projection of $E$
perpendicular to $v$ is at most $O(\sqrt{k}\textrm{Vol}(E)).$
This follows after noting that the claim is invariant under affine transformations,
and thus it suffices to consider $E$ the unit ball for which it is easy to verify.

Therefore, for $p\in S$, we have that
$$
|X(p)-G(p)|\leq O(1/|S|\sigma)+O( k^{7/2} \sqrt{\log^3(\sigma)/\sigma})G(p).
$$
From this it is easy to see that it is also
$$
|X(p)-G(p)|\leq O(1/|S|\sigma)+O( k^{3/2} \sqrt{\log^3(\sigma)/\sigma})X(p).
$$
Summing over $p\in S$ gives a total difference of at most
$$
O( k^{7/2} \sqrt{\log^3(\sigma)/\sigma}).
$$
Combining this with the fact that the sum of $X(p)$ and $G(p)$ for $p$ not in $S$ is at most $1/\sigma$ gives us that
$$
\dtv(X,G) = O( k^{7/2} \sqrt{\log^3(\sigma)/\sigma}).
$$
This completes the proof of Theorem~\ref{thm:clt-formal}.
\end{proof}

\paragraph{Comparison to the ~\cite{VV10b} CLT.}
We note that the above statement of Theorem~\ref{thm:clt-formal}
is not immediately comparable to the CLT of~\cite{VV10b}.
More specifically, we work with PMDs directly, while~\cite{VV10b} works with projections of PMDs onto $k-1$ coordinates.
Also, our notion of a discrete Gaussian is not the same as the one discussed in~\cite{VV10b}. 
However, it is not difficult to relate the two results.
First, we need to relate our PMD (supported on integer vectors whose coordinates sum to $n$) 
to theirs (which are projections of PMDs onto $k-1$ coordinates). 
In particular, we need to show that this projection does not skew minimum eigenvalue in the wrong direction. 
This is done in the following simple proposition:

\begin{proposition} \label{prop:proj-covariance}
Let $X$ be an $(n, k)$-PMD, and $X'$ be obtained by projecting $X$ onto its first $k-1$ coordinates. 
Let $\Sigma$ and $\Sigma'$ be the covariance matrices of $X$ and $X',$ respectively, 
and let $\sigma$ and $\sigma'$ be the second smallest and smallest eigenvalues respectively of $\Sigma$ and $\Sigma'.$ 
Then,  we have $\sigma \geq \sigma'.$
\end{proposition}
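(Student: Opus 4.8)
The plan is to derive the inequality directly from Cauchy's interlacing theorem for eigenvalues of principal submatrices. First I would record the elementary observation that $\Sigma'$ is precisely the principal $(k-1)\times(k-1)$ submatrix of $\Sigma$ obtained by deleting the last row and column: since $X'=(X_1,\dots,X_{k-1})$, we have $\Sigma'_{ij}=\cov(X_i,X_j)=\Sigma_{ij}$ for all $1\le i,j\le k-1$. It is also worth noting, for context, why the ``second smallest'' eigenvalue is the natural quantity on the $X$ side: the constraint $\bone\cdot X=n$ is deterministic, so $\bone^T\Sigma\bone=\var(\bone\cdot X)=0$, and since $\Sigma\succeq\mathbf{0}$ its smallest eigenvalue equals $0$ (with eigenvector $\bone$), whence $\sigma$ is the smallest eigenvalue of $\Sigma$ acting on $\bone^\perp$.

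Next I would invoke the interlacing theorem (see, e.g.,~\cite{HornJohnson:85}): if $\mu_1\ge\mu_2\ge\cdots\ge\mu_k$ are the eigenvalues of $\Sigma$ and $\nu_1\ge\cdots\ge\nu_{k-1}$ are the eigenvalues of the principal submatrix $\Sigma'$, then $\mu_i\ge\nu_i\ge\mu_{i+1}$ for every $1\le i\le k-1$. Taking $i=k-1$ yields $\nu_{k-1}\le\mu_{k-1}$. By definition $\nu_{k-1}$ is the smallest eigenvalue of $\Sigma'$, i.e.\ $\nu_{k-1}=\sigma'$, while $\mu_{k-1}$ is the second smallest eigenvalue of $\Sigma$, i.e.\ $\mu_{k-1}=\sigma$. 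Therefore $\sigma'\le\sigma$, which is exactly the assertion of the proposition.

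I do not anticipate a genuine obstacle here; the only points demanding care are (i) verifying that $\Sigma'$ is genuinely a principal submatrix of $\Sigma$, so that the interlacing theorem applies verbatim, and (ii) keeping the index bookkeeping straight, namely applying the interlacing inequality at position $k-1$ so as to pair ``smallest eigenvalue of $\Sigma'$'' with ``second smallest eigenvalue of $\Sigma$''. As a self-contained alternative to citing interlacing, one could give a min-max argument: let $W\subseteq\R^k$ be the span of the two eigenvectors of $\Sigma$ associated with $\mu_{k-1}$ and $\mu_k$, so that $v^T\Sigma v\le\sigma\,\|v\|_2^2$ for all $v\in W$; since $\dim W+(k-1)>k$, the subspace $W$ contains a nonzero vector $v$ with vanishing last coordinate, and writing $v=(v',0)$ with $v'\in\R^{k-1}$ gives $(v')^T\Sigma' v'=v^T\Sigma v\le\sigma\,\|v'\|_2^2$, hence $\sigma'\le\sigma$ by the variational characterization of $\sigma'$. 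In the write-up I would keep whichever of these two arguments is cleaner to state.
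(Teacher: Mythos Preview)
Your proposal is correct. Both the interlacing argument and the dimension-count alternative establish $\sigma'\le\sigma$ cleanly.

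The paper takes a different, more bespoke route. Rather than invoking interlacing or intersecting with the bottom two-dimensional eigenspace, it exploits directly that $\bone$ is a null vector of $\Sigma$: pick the eigenvector $v\perp\bone$ achieving $\sigma=\frac{v^T\Sigma v}{v^Tv}$, and set $w=v+a\bone$ with $a$ chosen so that $w_k=0$. Because $\Sigma\bone=0$ one has $w^T\Sigma w=v^T\Sigma v$, and because $v\perp\bone$ one has $\|w\|_2^2=\|v\|_2^2+a^2k\ge\|v\|_2^2$; hence $\sigma'\le\frac{w^T\Sigma w}{w^Tw}\le\frac{v^T\Sigma v}{v^Tv}=\sigma$. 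This argument is fully self-contained (no interlacing citation) and uses the PMD-specific structure that the kernel direction of $\Sigma$ is known to be $\bone$, which has nonzero last coordinate. Your interlacing approach, by contrast, is a black-box application that works for any symmetric matrix and any principal submatrix, making no use of the kernel structure. Either argument is fine; yours is arguably the more standard linear-algebra move, while the paper's is tailored to the setting and avoids an external reference.
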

\begin{proof}
Note that, since $\mathbf{1}$ is in the kernel of $\Sigma,$ 
$\sigma$ is the minimum of $v$ orthogonal to $\mathbf{1}$ of $\frac{v^T\Sigma v}{v^Tv}.$ 
Whereas, $\sigma'$ is the minimum over $w\in \R^{k-1}\backslash \{\mathbf{0}\}$ 
of $\frac{w^T\Sigma' w}{w^Tw}.$ This is the same as the minimum over $w$ in $\R^k$ 
with $k^{th}$ coordinate equal to $0$ of $\frac{w^T\Sigma w}{w^Tw}.$

Let the minimization problem defining $\sigma$ be obtained by some particular $v$ 
orthogonal to $\mathbf{1}.$ In particular, a $v$ so that $\sigma=\frac{v^T\Sigma v}{v^Tv}.$ 
Let $w$ be the unique vector of the form $v+a\mathbf{1}$ so that $w$ has last coordinate $0.$ 
Then, we have that
\begin{align*}
\sigma' \geq  \frac{w^T\Sigma w}{w^Tw}
=  \frac{v^T\Sigma v}{w^Tw}
\geq  \frac{v^T\Sigma v}{v^Tv}
=  \sigma.
\end{align*}
This completes the proof.
\end{proof}

Next, we need to relate the two slightly different notions of discrete Gaussian.
\begin{proposition} \label{prop:our-discrete-Gaussians-are-different}
Let $G$ be a Gaussian in $\R^k$ with covariance matrix $\Sigma,$ 
which has no eigenvalue smaller than $\sigma.$ 
Let $G'$ be the discrete Gaussian obtained by rounding the values of $G$ to the nearest lattice point. 
Let $G''$ be the discrete distribution obtained by assigning each integer lattice point 
mass proportional to the probability density function of $G.$ Then, we have that
$$ \dtv(G', G'') \leq O(k\sqrt{\log(\sigma)/\sigma}).$$
\end{proposition}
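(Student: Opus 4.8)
The plan is to show that both $G'$ and $G''$ are pointwise multiplicatively close to the ``reference weights'' $\phi(p)$, $p\in\Z^k$, where $\phi$ denotes the probability density function of $G$; writing $\mu$ for the mean of $G$, $\log\phi(x)=\mathrm{const}-\tfrac12(x-\mu)^T\Sigma^{-1}(x-\mu)$ is a concave quadratic with $\mathbf 0\preceq\Sigma^{-1}\preceq\tfrac1\sigma I$. Set $C\eqdef(-1/2,1/2]^k$, so that rounding $x$ to the nearest lattice point returns the unique $p\in\Z^k$ with $x\in p+C$; then $G'(p)=\int_{p+C}\phi(x)\,dx$ and $G''(p)=\phi(p)/Z$ with $Z\eqdef\sum_{q\in\Z^k}\phi(q)$. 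We may assume $\sigma$ exceeds a sufficiently large constant multiple of $k\log\sigma$, since otherwise the claimed bound is $\Omega(1)$ and trivial. Fix a large constant $C_0$ and let $S\eqdef\{p\in\Z^k:(p-\mu)^T\Sigma^{-1}(p-\mu)\le C_0k\log\sigma\}$ be the effective support.

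The first step is two elementary estimates coming from the quadratic shape of $\log\phi$. For $p\in\Z^k$ and $u\in C$ we have $\phi(p+u)=\phi(p)\exp(-v_p\cdot u-\tfrac12u^T\Sigma^{-1}u)$ with $v_p\eqdef\Sigma^{-1}(p-\mu)$; using $0\le u^T\Sigma^{-1}u\le\|u\|_2^2/\sigma\le k/(4\sigma)$ together with $\int_Ce^{-v_p\cdot u}\,du\ge1$ (by symmetry of $C$ and convexity of the exponential), integration over $u\in C$ gives the crude bound $\phi(p)\le 2\int_{p+C}\phi(x)\,dx$, valid for all $p$. When in addition $p\in S$, the bound $\Sigma^{-1}\preceq\tfrac1\sigma I$ gives $\|v_p\|_2^2=(p-\mu)^T\Sigma^{-2}(p-\mu)\le\tfrac1\sigma(p-\mu)^T\Sigma^{-1}(p-\mu)\le C_0k\log\sigma/\sigma\le1$, so every coordinate of $v_p$ is small and $\int_Ce^{-v_p\cdot u}\,du=\prod_{i=1}^k\tfrac{2\sinh((v_p)_i/2)}{(v_p)_i}=1+O(\|v_p\|_2^2)=1+O(k\log\sigma/\sigma)$; combined with $e^{-\frac12u^T\Sigma^{-1}u}=1+O(k/\sigma)$ this yields the sharp estimate $G'(p)=\phi(p)\bigl(1+O(k\log\sigma/\sigma)\bigr)$ for $p\in S$.

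Next I would control the mass outside $S$. If the rounding of $x$ is $p\notin S$ then $\|x-p\|_\infty\le1/2$, hence $(x-p)^T\Sigma^{-1}(x-p)\le k/(4\sigma)\le1/4$, and the triangle inequality for the norm $y\mapsto(y^T\Sigma^{-1}y)^{1/2}$ gives $(x-\mu)^T\Sigma^{-1}(x-\mu)\ge\tfrac12C_0k\log\sigma$; since this quantity is $\chi^2_k$-distributed when $x\sim G$, the standard chi-squared tail bound gives $G'(\overline S)=\Pr_{x\sim G}[\,\text{rounding of }x\notin S\,]\le\sigma^{-\Omega(C_0k)}$. Plugging this into the crude bound, $\sum_{p\notin S}\phi(p)\le2\sum_{p\notin S}\int_{p+C}\phi=2G'(\overline S)\le\sigma^{-\Omega(C_0k)}$. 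Hence $Z=\sum_{p\in S}\phi(p)+\sum_{p\notin S}\phi(p)=\bigl(1+O(k\log\sigma/\sigma)\bigr)\sum_{p\in S}G'(p)+\sigma^{-\Omega(C_0k)}=1+O(k\log\sigma/\sigma)$, using $\sum_{p\in S}G'(p)=1-G'(\overline S)$. Consequently $G''(p)=\phi(p)/Z=\phi(p)\bigl(1+O(k\log\sigma/\sigma)\bigr)$ for every $p\in\Z^k$, and in particular $G''(\overline S)\le\sigma^{-\Omega(C_0k)}$.

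Finally I would assemble the pieces by splitting the total variation distance at $S$. For $p\in S$ both $G'(p)$ and $G''(p)$ equal $\phi(p)\bigl(1+O(k\log\sigma/\sigma)\bigr)$, so $\sum_{p\in S}|G'(p)-G''(p)|\le O(k\log\sigma/\sigma)\sum_{p\in S}\phi(p)=O(k\log\sigma/\sigma)$; and $\sum_{p\notin S}|G'(p)-G''(p)|\le G'(\overline S)+G''(\overline S)\le\sigma^{-\Omega(C_0k)}$. Therefore $\dtv(G',G'')=O(k\log\sigma/\sigma)$, which is in particular $O(k\sqrt{\log\sigma/\sigma})$, as claimed (one actually obtains the stronger rate $O(k\log\sigma/\sigma)$). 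The only mildly delicate point in the argument is bounding the tail sum $\sum_{p\notin S}\phi(p)$ of the \emph{unnormalized} weights defining $G''$; the inequality $\phi(p)\le2\int_{p+C}\phi$ — which is just the statement that $\log\phi$ is a concave quadratic with curvature at most $1/\sigma$ — reduces it to the Gaussian tail bound already used for $G'$, and everything else is routine bookkeeping.
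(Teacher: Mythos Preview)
Your argument is correct and follows the same high-level skeleton as the paper: define an effective support $S$ via $(p-\mu)^T\Sigma^{-1}(p-\mu)\le O(k\log\sigma)$, show $G'(p)$ and $G''(p)$ are multiplicatively close on $S$, bound the tails of both off $S$, and (crucially) reduce the tail of the \emph{unnormalized} $\phi$-sum to the tail of $G'$ via a pointwise inequality $\phi(p)\le O(G'(p))$.

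The one genuine difference is in the pointwise comparison on $S$. The paper bounds $\phi(p+u)/\phi(p)$ for each $u$ in the cube by applying Cauchy--Schwarz to the cross term $u^T\Sigma^{-1}(p-\mu)$, obtaining $|u^T\Sigma^{-1}(p-\mu)|\le\|\Sigma^{-1/2}u\|_2\|\Sigma^{-1/2}(p-\mu)\|_2=O(\sqrt{k\log\sigma/\sigma})$ on $S$; this yields the multiplicative factor $1+O(k\sqrt{\log\sigma/\sigma})$ and hence the stated rate. You instead integrate the linear term exactly over the cube, using the factorization $\int_C e^{-v_p\cdot u}\,du=\prod_i\frac{2\sinh((v_p)_i/2)}{(v_p)_i}=1+O(\|v_p\|_2^2)$, so that the first-order contribution cancels by the symmetry of $C$ and only the second-order term $O(k\log\sigma/\sigma)$ survives. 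This is why you obtain the sharper bound $\dtv(G',G'')=O(k\log\sigma/\sigma)$, which of course implies the paper's $O(k\sqrt{\log\sigma/\sigma})$. The paper's argument is slightly quicker (no integral computation), but yours squeezes out the extra square root at essentially no additional cost. The crude bound $\phi(p)\le 2G'(p)$ you use for the tail is the same idea as the paper's $G''(x)=O(G'(x))$, just derived via Jensen/symmetry rather than the geometric-mean identity $G(x)\le e^{k/(8\sigma)}\max\{G(x+v),G(x-v)\}$.
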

\begin{proof}
We note that the probability density function of $G$ 
is proportional to $\exp(-(x\cdot \Sigma^{-1} x)/2) dx.$ 
Suppose that $y$ is another vector with $\|x-y\|_\infty < 1.$ 
We would like to claim that the probability density function at $y$ 
is approximately the same as at $x.$ In particular, we write $y=x+z$ and note that
\begin{align*}
y\cdot \Sigma^{-1} y & = x\cdot \Sigma^{-1} x + 2 z\cdot \Sigma^{-1} x + z\cdot \Sigma^{-1} z\\
& = x\cdot \Sigma^{-1} x + 2 (\Sigma^{-1/2}x)\cdot(\Sigma^{-1/2} z) + O(|z|_2^2 \sigma^{-1})\\
& = x\cdot \Sigma^{-1} x + O( |\Sigma^{-1/2} x|_2 \sqrt{k/\sigma} + k \sigma^{-1})\\
& = x\cdot \Sigma^{-1} x + O(  \sqrt{(k/\sigma)x\cdot \Sigma^{-1} x} + k \sigma^{-1}).
\end{align*}
Note that, for lattice points $x,$ $G'(x)$ is the average over $y$ in a unit cube 
about $x$ of the pdf of $G$ at $y,$ while $G''(x)$ is just the pdf of $G$ at $x.$ 
These quantities are within a $1+O(\sqrt{(k/\sigma)x\cdot \Sigma^{-1} x} + k \sigma^{-1})$ 
multiple of each other by the above so long as the term in the ``$O$'' is $o(1).$ 
Therefore, for all $x$ with $x\cdot \Sigma^{-1} x \ll k \log(\sigma),$ 
we have that $G'(x) = G''(x)(1+O(k\sqrt{\log(\sigma)/\sigma})).$ 
We note however that $G'$ has only a $1/\sigma$ probability of $x$ being outside of this range. 
{Furthermore, we claim that $G''(x)=O(G'(x))$ for all $x.$ To see this, note that 
for any $v$ with $\|v\|_\infty \leq 1/2,$ we have 
$$G(x) = e^{-v^T \Sigma^{-1} v/2}\sqrt{G(x+v)G(x-v)} \leq e^{-(k/2\sigma)}\max\{G(x+v),G(x-v)\}.$$ 
We assume that $\sigma \geq k^2$ or else we have nothing to prove. 
Then, we have $G(x) = O(G(x+v)+G(x-v)),$ and by considering the integral that defines $G',$ 
we have $G''(x)=O(G'(x)).$ Thus, $G''$ similarly has $O(1/\sigma)$ mass outside 
of the range $x\cdot \Sigma^{-1} x \ll k \log(\sigma)$.} Therefore, the $L_1$ difference 
inside the range is $O(k\sqrt{\log(\sigma)/\sigma})$ and the $L_1$ error from outside is {$O(1/\sigma).$} This completes the proof.
\end{proof}

Armed with these propositions, we have the following corollary of Theorem~\ref{thm:clt-formal}:
\begin{corollary}
Let $X$ be an $(n, k)$-PMD, and $X'$ be obtained by projecting $X$ onto its first $k-1$ coordinates. 
Let $\Sigma'$  be the covariance matrix of $X'.$ Suppose that $\Sigma'$ has no eigenvectors with eigenvalue less than $\sigma'.$
Let $G'$ be the distribution obtained by sampling from $\mathcal{N}(\E[X'], \Sigma')$ and rounding to the nearest point in $\Z^k.$ 
Then, we have that
$$
\dtv(X',G') \leq O( k^{7/2} \sqrt{\log^3(\sigma')/\sigma'}).
$$
\end{corollary}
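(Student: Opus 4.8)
The plan is to derive this corollary by combining Theorem~\ref{thm:clt-formal} with Propositions~\ref{prop:proj-covariance} and~\ref{prop:our-discrete-Gaussians-are-different}, using that the projection onto the first $k-1$ coordinates is a bijection on the support of an $(n,k)$-PMD. First I would dispose of the trivial case: exactly as in the proof of Theorem~\ref{thm:clt-formal}, we may assume $\sigma' > k^{7}$, since otherwise the asserted upper bound exceeds $1$ and there is nothing to prove.

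Let $\Sigma = \mathrm{Cov}(X)$, viewed as a $k\times k$ matrix with $\bone$ in its kernel, and let $\sigma$ be its second-smallest eigenvalue. By Proposition~\ref{prop:proj-covariance}, $\sigma \geq \sigma'$. Applying Theorem~\ref{thm:clt-formal} to $X$ produces a discrete Gaussian $G$, supported on $\{x\in\Z^k : x\cdot\bone = n\}$ with pmf proportional to the pdf of $\mathcal{N}(\mu,\Sigma)$ (where $\mu=\E[X]$) on that hyperplane, and satisfying $\dtv(X,G) \leq O(k^{7/2}\sqrt{\log^{3}(\sigma)/\sigma})$. Since $t\mapsto \log^{3}(t)/t$ is decreasing once $t$ exceeds an absolute constant and $\sigma\geq\sigma'>k^{7}$, this bound is $O(k^{7/2}\sqrt{\log^{3}(\sigma')/\sigma'})$.

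Let $\pi$ denote the projection of $\{x\in\Z^k:x\cdot\bone=n\}$ onto the first $k-1$ coordinates; it is a bijection onto $\Z^{k-1}$ with inverse $y\mapsto(y,\,n-\sum_i y_i)$, and hence preserves total variation distance under pushforward. Thus $\dtv(X',\pi_*G)=\dtv(X,G)$. I would then check that $\pi_*G$ is exactly the ``proportional-to-pdf'' discrete Gaussian on $\Z^{k-1}$ attached to $\mathcal{N}(\E[X'],\Sigma')$: writing $x=(y,\,n-\sum_i y_i)$ and using $\mu\cdot\bone=n$, the quadratic form $(x-\mu)^{T}\Sigma^{+}(x-\mu)$ becomes $(y-\E[X'])^{T}(\Sigma')^{-1}(y-\E[X'])$, because the top-left $(k-1)\times(k-1)$ block of $\Sigma$ is $\mathrm{Cov}(X')=\Sigma'$ and $\pi$ intertwines the two parametrizations of the Gaussian. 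Finally, apply Proposition~\ref{prop:our-discrete-Gaussians-are-different} in dimension $k-1$ with covariance $\Sigma'$ and minimum eigenvalue $\sigma'$ (using $\sigma'>k^{2}$) to get $\dtv(\pi_*G,\,G')\leq O(k\sqrt{\log(\sigma')/\sigma'})$, where $G'$ is the rounded Gaussian of the statement. The triangle inequality $\dtv(X',G')\leq\dtv(X',\pi_*G)+\dtv(\pi_*G,G')$ then gives the claim, the first term being the dominant one since $k^{5/2}\log(\sigma')\geq 1$.

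The main obstacle is the identification $\pi_*G=G''$: one must verify carefully that the degenerate Gaussian density defining $G$ on the hyperplane pushes forward under the coordinate projection to the nondegenerate Gaussian with mean exactly $\E[X']$ and covariance exactly $\Sigma'$, and that this operation is compatible with restriction to the lattice (which is immediate once one observes that $\pi$ restricts to a bijection between $\{x\in\Z^k:x\cdot\bone=n\}$ and $\Z^{k-1}$). Everything else amounts to routine bookkeeping with already-established estimates together with the elementary monotonicity in $\sigma$ that lets us replace $\sigma$ by $\sigma'$ in the error term.
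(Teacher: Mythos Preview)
Your proposal is correct and follows essentially the same route as the paper: apply Proposition~\ref{prop:proj-covariance} to lower-bound the non-trivial eigenvalues of $\Sigma$ by $\sigma'$, invoke Theorem~\ref{thm:clt-formal}, push forward under the coordinate projection (which is a bijection on the relevant lattice hyperplane), identify the projected discrete Gaussian with the one proportional to the pdf of $\mathcal{N}(\E[X'],\Sigma')$, and finish with Proposition~\ref{prop:our-discrete-Gaussians-are-different} and the triangle inequality. Your write-up is in fact more careful than the paper's in two places: you make the monotonicity step $\sigma\mapsto\sigma'$ explicit (the paper simply applies Theorem~\ref{thm:clt-formal} with $\sigma:=\sigma'$, which is also valid since the hypothesis only asks for a lower bound), and you correctly flag the identification $\pi_*G=G''$ as the one step requiring a genuine computation, whereas the paper asserts it without comment.
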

\begin{proof} 
Let $\Sigma$ be the covariance matrix of $X.$ 
Since $X$ is a PMD, $\bone$ is an eigenvector of $\Sigma$ with eigenvalue $0.$ 
By Proposition \ref{prop:proj-covariance}, the other eigenvalues of $\Sigma$ are at least $\sigma'.$ 
Theorem~\ref{thm:clt-formal} now yields that $\dtv(X,G) \leq O( k^{7/2} \sqrt{\log^3(\sigma')/\sigma'}),$ 
where $G$ is a discrete Gaussian in $k$ dimensions (as defined in the context of the theorem statement). 
Let $G''$ be the discrete Gaussian obtained by projecting $G$ onto the first $k-1$ coordinates. 
Then, we have that $\dtv(X',G'') \leq O( k^{7/2} \sqrt{\log^3(\sigma')/\sigma'}).$
From the proof of Theorem~\ref{thm:clt-formal}, $G$ is proportional to the pdf of $\mathcal{N}(\E[X],\Sigma).$
Note that $G''$ is proportional to the pdf of $\mathcal{N}(\E[X'], \Sigma').$  
Then, by Proposition~\ref{prop:our-discrete-Gaussians-are-different}, 
it follows that $\dtv(G',G'') \leq O(k\sqrt{\log(\sigma')/\sigma'}).$ 
So, by the triangle inequality, we have $\dtv(X',G') \leq O( k^{7/2} \sqrt{\log^3(\sigma')/\sigma'}),$ as required. 
\end{proof}

\section{Conclusions and Open Problems} \label{sec:conclusions}

In this work, we used Fourier analytic techniques to obtain a number of structural results
on PMDs.
As a consequence, 
we gave a number of applications in distribution learning, statistics, and game theory.
We believe that our techniques are of independent interest and may find other applications.

Several interesting open questions remain: 
\begin{itemize}

\item What is the precise complexity of learning PMDs? Our bound is nearly-optimal when the dimension $k$
is fixed. The case of high dimension is not well-understood, and seems to require different ideas.

\item Is there an efficient {\em proper} learning algorithm, 
i.e., an algorithm that outputs a PMD as its hypothesis? 
This question is still open even for $k=2$; 
see~\cite{DKS15b} for some recent progress.

\item What is the optimal error dependence in Theorem~\ref{thm:clt} as a function of the dimension 
$k$? 

\item Is there a fully-polynomial time approximation scheme (FPTAS) for computing $\eps$-Nash equilibria
in anonymous games? We remark that cover-based algorithms cannot lead to such a result, because
of the quasi-polynomial cover size lower bounds in this paper, as well as in our previous work~\cite{DKS15} for the case $k=2.$
Progress in this direction requires a deeper understanding of the relevant fixed points.

\end{itemize}

\bibliographystyle{alpha}

\bibliography{allrefs}


\appendix

\section*{Appendix}

\section{Proof of Lemma \ref{lem:sample-mean-and-covariance}} \label{app:mean-cov-est}

Lemma~\ref{lem:sample-mean-and-covariance} follows directly from the following statement:

\begin{lemma}\label{lem:copy-paste} 
If we take $O(k^4/\eps^2)$ samples from an  $(n, k)$-PMD and let $\wh{\mu}$ and $\wh{\Sigma}$ 
be the sample mean and sample covariance matrix, 
then with probability $19/20,$ for any $y \in \R^k,$ we have:
$$|y^T(\wh{\mu}-\mu)| \leq \eps \sqrt{y^T(\Sigma+I)y} \;,$$
and
$$|y^T(\wh{\Sigma}-\Sigma)y| \leq \eps y^T (\Sigma+I) y \;.$$
\end{lemma}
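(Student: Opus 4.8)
The plan is to handle the mean and covariance estimates separately, in each case reducing to a scalar concentration statement by a net/union-bound argument over directions. First I would set up the key scalar quantities. Write $X = \sum_{i=1}^n X_i$ with $X_i$ independent $k$-CRVs, $\mu_i = \E[X_i]$, and $\Sigma_i = \cov(X_i)$, so that $\mu = \sum_i \mu_i$ and $\Sigma = \sum_i \Sigma_i$. For the mean, fix a unit vector $y$; then $y^T(\wh\mu - \mu) = \frac{1}{N}\sum_{\ell=1}^N y^T(X^{(\ell)} - \mu)$, where $X^{(1)},\dots,X^{(N)}$ are the $N = O(k^4/\eps^2)$ samples and $X^{(\ell)} - \mu$ has mean zero, covariance $\Sigma$, and (as in the proof of Lemma~\ref{lem:conc-pmd}) satisfies $\|X^{(\ell)} - \mu\|_2 \le 2\sqrt k$ deterministically. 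Hence $\Var[y^T(\wh\mu-\mu)] = y^T\Sigma y / N$, and since $y^T(\Sigma+I)y \ge y^Ty = 1 \ge \|y\|_2^2$, a Chebyshev bound already gives $|y^T(\wh\mu-\mu)| \le (10/\sqrt N)\sqrt{y^T(\Sigma+I)y}$ with probability $19/20$ for that fixed $y$; the deterministic bound on $\|X^{(\ell)}-\mu\|_2$ also lets one invoke Bernstein for sharper tails if needed. To make this uniform over all $y$ simultaneously, note $\wh\mu - \mu$ is a single random vector, so it suffices to bound $\|(\Sigma+I)^{-1/2}(\wh\mu-\mu)\|_2$; this equals $\sup_{\|z\|_2=1} z^T(\Sigma+I)^{-1/2}(\wh\mu-\mu)$, and I would bound its expected square by $\E\|(\Sigma+I)^{-1/2}(\wh\mu-\mu)\|_2^2 = \frac{1}{N}\tr\!\big((\Sigma+I)^{-1}\Sigma\big) \le k/N$, then apply Markov. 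Choosing the constant in $N = O(k^4/\eps^2)$ large enough makes this $\le \eps^2$ with the desired probability, which is exactly the first claimed inequality.

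For the covariance estimate, the analogous route is to control $\big\|(\Sigma+I)^{-1/2}(\wh\Sigma - \Sigma)(\Sigma+I)^{-1/2}\big\|_2$, since $|y^T(\wh\Sigma-\Sigma)y| \le \eps\, y^T(\Sigma+I)y$ for all $y$ is equivalent to this normalized spectral norm being $\le \eps$. Write $\wh\Sigma$ in terms of the samples; after centering, the fluctuation $\wh\Sigma - \Sigma$ is an average of $N$ i.i.d.\ mean-zero matrices, each of the form $(X^{(\ell)}-\mu)(X^{(\ell)}-\mu)^T - \Sigma$ plus lower-order terms coming from using $\wh\mu$ rather than $\mu$ (these are already controlled by the mean estimate). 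Each summand has spectral norm $O(k)$ deterministically (using $\|X^{(\ell)}-\mu\|_2 \le 2\sqrt k$ and $\|\Sigma\|_2 \le \tr\Sigma \le n$... but in fact $\|\Sigma\|_2 = O(k)$ is not available, so I would instead conjugate by $(\Sigma+I)^{-1/2}$ first and bound the resulting summands, whose norm is $O(k)$ since $(\Sigma+I)^{-1/2}$ has norm $\le 1$ and the rank-one pieces conjugated by it have norm $\le \|(\Sigma+I)^{-1/2}(X^{(\ell)}-\mu)\|_2^2 \le 4k$). A matrix Bernstein / matrix Chernoff inequality then gives $\big\|(\Sigma+I)^{-1/2}(\wh\Sigma-\Sigma)(\Sigma+I)^{-1/2}\big\|_2 \le \eps$ with probability $\ge 39/40$ once $N = \Omega(k^2\log k/\eps^2)$, which is comfortably within $O(k^4/\eps^2)$; a union bound with the mean event finishes the proof.

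The main obstacle I expect is the covariance bound: one has to be careful that the natural quantity to concentrate is the \emph{relatively} normalized error $(\Sigma+I)^{-1/2}(\wh\Sigma-\Sigma)(\Sigma+I)^{-1/2}$ rather than $\wh\Sigma - \Sigma$ itself, because $\|\Sigma\|_2$ can be as large as $\Theta(n)$ and an unnormalized bound would cost a factor of $n$. Doing the conjugation before applying matrix Bernstein keeps every summand of norm $O(k)$ and variance proxy $O(k^2)$, which is what produces the $k^4/\eps^2$ sample bound (the extra $k^2$ over the information-theoretic $k^2/\eps^2$ being slack that the statement allows). Handling the cross terms that arise from replacing $\mu$ by $\wh\mu$ in the definition of $\wh\Sigma$ is routine once the mean estimate is in hand: they are bounded by $\|(\Sigma+I)^{-1/2}(\wh\mu-\mu)\|_2^2 \le \eps^2 \le \eps$ in the normalized norm. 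I would present the mean case in full detail and then indicate the covariance case as the matrix-valued analogue, citing a standard matrix concentration inequality.
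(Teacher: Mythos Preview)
Your argument for the mean is correct and in fact cleaner than the paper's: bounding $\E\big\|(\Sigma+I)^{-1/2}(\wh\mu-\mu)\big\|_2^2 = \tfrac{1}{N}\tr\big((\Sigma+I)^{-1}\Sigma\big)\le k/N$ and applying Markov gives the first inequality uniformly in $y$ with only $O(k/\eps^2)$ samples. The paper instead invokes a single-direction lemma from prior work (for one fixed $y$, $O(1/\eps^2)$ samples suffice), applies it with accuracy $\eps/k$ to the $k$ eigenvectors of $\Sigma$ via a union bound, and decomposes a general $y$ in the eigenbasis using Cauchy--Schwarz.

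Your covariance argument, however, has a genuine gap. The claim ``$\|X^{(\ell)}-\mu\|_2\le 2\sqrt k$ deterministically (as in the proof of Lemma~\ref{lem:conc-pmd})'' is false: that lemma bounds $|u\cdot(X_i-\mu_i)|\le 2\sqrt k$ for a \emph{single $k$-CRV} $X_i$, not for the full PMD $X=\sum_i X_i$. For the PMD, $\|X-\mu\|_2$ can be as large as $\Theta(n)$ (think of a binomial coordinate). Consequently the conjugated summands $(\Sigma+I)^{-1/2}(X^{(\ell)}-\mu)(X^{(\ell)}-\mu)^T(\Sigma+I)^{-1/2}$ are not bounded in spectral norm by $O(k)$ deterministically, and the direct matrix Bernstein step does not go through as written. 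One can repair this, e.g.\ by first establishing a high-probability bound $\|(\Sigma+I)^{-1/2}(X-\mu)\|_2^2=O(k\log k)$ (along the lines of Lemma~\ref{lem:conc-pmd}) and truncating before applying matrix Bernstein, or by a second-moment (Frobenius norm) argument that requires controlling $\E\big[\big((X-\mu)^T(\Sigma+I)^{-1}(X-\mu)\big)^2\big]$; neither is as routine as your write-up suggests.

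The paper sidesteps this entirely: it applies the same single-direction lemma to a set of $k^2$ specific vectors---the eigenvectors $v_i$ of $\Sigma$ together with the scaled pairwise sums $\tfrac{1}{\sqrt{\lambda_i+1}}v_i+\tfrac{1}{\sqrt{\lambda_j+1}}v_j$---at accuracy $\Theta(\eps/k)$, and then uses these $k^2$ quadratic-form estimates to recover the bound $|y^T(\wh\Sigma-\Sigma)y|\le \eps\, y^T(\Sigma+I)y$ for arbitrary $y$ by expanding in the eigenbasis. No deterministic norm bound on $X-\mu$ is needed; the $k^4/\eps^2$ sample count comes from the $\eps/k$ accuracy combined with the union bound over $k^2$ directions.
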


The above lemma and its proof follow from a minor modification of an analogous lemma in 
~\cite{DKT15}. We include the proof here for the sake of completeness.
{We will use the following simple lemma:
\begin{lemma}[Lemma 21 from~\cite{DKT15}] \label{lem:more-copy-paste} For any vector $y \in \R^k,$ 
given sample access to an $(n,k)$-PMD $\p$ with mean $\mu$ and covariance
matrix $\Sigma$, there exists an algorithm which can produce estimates $\wh{\mu}$ and $\wh{\Sigma},$ 
such that with probability at least $19/20$:
$|y^T (\wh{\mu}-\mu)| \leq \eps \sqrt{y^T \Sigma y}$ and $|y^T(\wh{\Sigma}-\Sigma)y| \leq \eps y^T \Sigma y \sqrt{1+\frac{y^T y}{y^T \Sigma y}}.$
The sample and time complexity are $O(1/\eps^2 ).$
\end{lemma}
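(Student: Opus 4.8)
The plan is to reduce the estimation problem along the fixed direction $y$ to a one-dimensional moment-estimation problem and then apply Chebyshev's inequality twice. Let $X=\sum_{i=1}^n X_i$ be the PMD, with $\mu_i=\E[X_i]$, $\mu=\sum_i\mu_i$, and consider the scalar random variable $Z=y^T X=\sum_{i=1}^n y^T X_i$. Since the $X_i$ are independent, $Z$ is a sum of independent random variables with $\E[Z]=y^T\mu$ and $\Var[Z]=y^T\Sigma y=:\sigma_y^2$. The algorithm draws $m=O(1/\eps^2)$ i.i.d.\ samples $X^{(1)},\dots,X^{(m)}$, forms (implicitly) the empirical mean $\wh\mu=\frac1m\sum_j X^{(j)}$ and the unbiased empirical covariance $\wh\Sigma=\frac1{m-1}\sum_j (X^{(j)}-\wh\mu)(X^{(j)}-\wh\mu)^T$, and reports $y^T\wh\mu=\frac1m\sum_j Z^{(j)}$ and $y^T\wh\Sigma y=\frac1{m-1}\sum_j(Z^{(j)}-\overline Z)^2$. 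Since $Z^{(j)}=y^T X^{(j)}$ is just a coordinate of $y$ (as $X^{(j)}$ is a standard basis vector), these scalar statistics are computed in $O(m)=O(1/\eps^2)$ arithmetic operations.

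For the mean estimate, $\Var[y^T\wh\mu]=\sigma_y^2/m$, so Chebyshev gives $\Pr\!\big[\,|y^T\wh\mu-y^T\mu|>\eps\sigma_y\,\big]\le 1/(m\eps^2)\le 1/40$ once $m\ge 40/\eps^2$ (if $\sigma_y=0$ then $Z$ is a.s.\ constant and the estimate is exact, so both bounds hold trivially). For the covariance estimate, the variance of the unbiased sample variance of an i.i.d.\ sequence is $\frac1m\big(\mu_4(Z)-\tfrac{m-3}{m-1}\sigma_y^4\big)\le \mu_4(Z)/m$, where $\mu_4(Z)=\E[(Z-\E Z)^4]$. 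Writing $W_i=y^T(X_i-\mu_i)$, so that $Z-\E Z=\sum_i W_i$ with the $W_i$ independent and mean zero, expanding the fourth power and using independence (all terms with an isolated index vanish) yields $\mu_4(Z)=\sum_i\E[W_i^4]+3\sum_{i\ne j}\E[W_i^2]\E[W_j^2]\le\sum_i\E[W_i^4]+3\sigma_y^4$. The key step is bounding $\sum_i\E[W_i^4]$: since $y^TX_i\in\{y_1,\dots,y_k\}$, we have $|W_i|\le 2\|y\|_\infty\le 2\|y\|_2$, hence $\sum_i\E[W_i^4]\le(\max_i|W_i|^2)\sum_i\E[W_i^2]\le 4(y^Ty)\sigma_y^2$. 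Thus $\mu_4(Z)\le 4(y^Ty)\sigma_y^2+3\sigma_y^4$, and Chebyshev with target error $t=\eps\sigma_y^2\sqrt{1+y^Ty/\sigma_y^2}$, i.e.\ $t^2=\eps^2\big(\sigma_y^4+(y^Ty)\sigma_y^2\big)$, gives
\[
\Pr\!\big[\,|y^T\wh\Sigma y-\sigma_y^2|>t\,\big]\le\frac{4(y^Ty)\sigma_y^2+3\sigma_y^4}{m\,\eps^2\big(\sigma_y^4+(y^Ty)\sigma_y^2\big)}\le\frac{4}{m\eps^2}\le\frac1{40}
\]
for $m\ge 160/\eps^2$. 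A union bound over the two failure events yields success probability $\ge 19/20$, as claimed.

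The only substantive point is the fourth-moment estimate $\mu_4(Z)\lesssim\sigma_y^4+(y^Ty)\sigma_y^2$; everything else is a bare application of Chebyshev's inequality. In particular, the shape of the stated covariance error bound — the factor $\sqrt{1+y^Ty/\sigma_y^2}$ — is dictated precisely by this inequality, which relies only on pulling a single factor $|W_i|$ out of $\E[W_i^4]$ and bounding it by $2\|y\|_\infty\le 2\|y\|_2$ using the support structure of a $k$-CRV. (Constants could be tightened, e.g.\ replacing $\|y\|_\infty$ by the range of $\{y_j\}_{j\in[k]}$, but this is unnecessary.) Note that here $y$ is fixed, so no net/union-bound step over directions is needed; the uniform-in-$y$ statement (Lemma~\ref{lem:copy-paste}) is obtained separately by invoking this lemma as a black box.
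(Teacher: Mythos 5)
Your proof is correct, but note that the paper you are working from does not prove this statement at all: it is Lemma~21 of~\cite{DKT15}, cited as a black box, so there is no "paper proof" to compare against. What you supply is a self-contained derivation, and every step checks out. In particular: the variance-of-sample-variance identity $\Var[S^2]=\frac1m\bigl(\mu_4-\frac{m-3}{m-1}\sigma^4\bigr)\le \mu_4/m$ is correct for $m\ge 3$; the expansion $\mu_4(Z)=\sum_i\E[W_i^4]+3\sum_{i\ne j}\E[W_i^2]\E[W_j^2]$ is the right fourth-moment formula for a sum of independent mean-zero variables; and the crucial bound $\sum_i\E[W_i^4]\le (2\|y\|_\infty)^2\sum_i\E[W_i^2]\le 4(y^Ty)\sigma_y^2$ is exactly what produces the $\sqrt{1+y^Ty/\sigma_y^2}$ shape in the error term (it uses that $y^TX_i$ takes values in $\{y_1,\dots,y_k\}$ and $y^T\mu_i$ lies in the same interval, so $|W_i|\le\max_j y_j-\min_j y_j\le 2\|y\|_\infty\le 2\|y\|_2$). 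The Chebyshev step, using $t^2=\eps^2\bigl(\sigma_y^4+(y^Ty)\sigma_y^2\bigr)$ and the bound $\frac{3\sigma_y^4+4(y^Ty)\sigma_y^2}{\sigma_y^4+(y^Ty)\sigma_y^2}\le 4$, is clean, and the $\sigma_y=0$ edge case is handled. This is a sound proof, and it is almost certainly the same elementary mechanism (second/fourth moment control plus Chebyshev) used in~\cite{DKT15}, so it fills in a cited dependency correctly rather than diverging from the paper's approach.
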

}
\begin{proof}[Proof of Lemma \ref{lem:copy-paste}]  
The proof will follow by applying Lemma \ref{lem:more-copy-paste}  
to $k^2$ carefully chosen vectors simultaneously using the union bound. 
Using the resulting guarantees, we show that the same estimates hold for
any direction, at a cost of rescaling $\eps$ by a factor of $k.$
Let $S$ be the set of $k^2$ vectors $\{v_i \},$ for $1 \leq i \leq k,$ 
and $\{ \frac{1}{\sqrt{\lambda_i+1}}v_i+\frac{1}{\sqrt{\lambda_j}}v_j \},$ 
for each $i \not=j,$ where the $v _i$'s are an orthonormal eigenbasis 
for $\Sigma$ with eigenvalues $\lambda_i.$ From Lemma~\ref{lem:more-copy-paste} 
and a union bound, with probability $9/10,$ for all $y \in S$, we have
$$|y^T(\wh{\mu}-\mu)| \leq (\eps/k) \sqrt{y^T \Sigma y} \;,$$
and
$$|y^T(\wh{\Sigma}-\Sigma)y| \leq (\eps/3k) (y^T \Sigma y) \sqrt{1+\frac{y^T y}{y^T \Sigma y}} \; .$$
We claim that the latter implies that:
$$|y^T(\wh{\Sigma}-\Sigma)y| \leq (\eps/3k) (y^T (\Sigma+I) y) \; .$$
Note that if $y^T \Sigma y=0,$ we must have $y^T \wh{\Sigma} y = 0,$ 
since then $y^T X$ is a constant for a PMD random variable $X.$ Otherwise,
$$(y^T \Sigma y) \sqrt{1+\frac{y^T y}{y^T \Sigma y}} = \sqrt{(y^T \Sigma y)(y^T \Sigma y + y^y y)} = \sqrt{(y^T \Sigma y)(y^T (\Sigma+I) y)} \leq (y^T (\Sigma+I) y) \;.$$
The claim about the accuracy of $\wh{\Sigma}$ now follows from Lemma~\ref{lem:more-copy-paste}.

We now prove that the mean estimator $\wh{\mu}$ is accurate. Consider an arbitrary vector $y$, which
can be decomposed into a linear composition of the eigenvectors $y = \sum_i \alpha_i  v_i.$

Then,
$$y^T (\wh{\mu}-\mu) = \sum_i \alpha_i v_i^T (\wh{\mu}-\mu) \leq (\eps/k) \sum_i |\alpha_i| \sqrt{\lambda_i + 1} \leq (\eps/k) \sqrt{k} \sqrt{k \sum_i \alpha_i^2 (\lambda_i+1)} \;,$$
but $\sum_i \alpha_i^2 (\lambda_i+1) =y^T (\Sigma+I) y,$ 
so we have $y^T (\wh{\mu}-\mu) \leq \eps \sqrt{y^T(\Sigma+I)y}$ as required.
\end{proof}

We are now ready to complete the proof of the desired lemma.

\medskip

\noindent {\bf Lemma~\ref{lem:sample-mean-and-covariance}.}
{\em
With probability $19/20,$ we have that $(\wh{\mu}-\mu)^T(\Sigma+I)^{-1}(\wh{\mu}-\mu) = O(1)$, $2(\Sigma+I) \geq \wh{\Sigma}+I \geq (\Sigma+I)/2.$
}

\smallskip

\begin{proof} We apply Lemma \ref{lem:copy-paste} with $\eps:=1/2.$ 
For all $y$, we have $|y^T(\wh{\Sigma}-\Sigma)y| \leq \eps y^T (\Sigma+I) y,$ that is
$$\frac{1}{2} y^T (\Sigma+I) y \leq y^T (\wh{\Sigma}+I) y \leq \frac{3}{2} y^T (\Sigma+I) y \; .$$
Thus, we have $\frac{1}{2}(\Sigma+I) \leq \wh{\Sigma}+I \leq \frac{3}{2} (\Sigma+I)$ as required.

Note that since $\Sigma+I$ is positive definite, it is non-singular. 
Setting $y=\frac{1}{(\wh{\mu}-\mu)^T(\Sigma+I)^{-1}(\wh{\mu}-\mu)} (\Sigma+I)^{-1}  (\wh{\mu}-\mu),$ 
we have $y^T \wh{\mu}-\mu)=1$ and 
$y^T (\Sigma+I) y = 1/(\wh{\mu}-\mu)^T(\Sigma+I)^{-1}(\wh{\mu}-\mu).$ 
So, Lemma \ref{lem:copy-paste} gives us:
$$|y^T(\wh{\mu}-\mu)| \leq \frac{1}{2} \sqrt{y^T(\Sigma+I)y}.$$
Substituting the above:
$$1 \leq \frac{1}{2} \sqrt{1/(\wh{\mu}-\mu)^T(\Sigma+I)^{-1}(\wh{\mu}-\mu)} \; .$$
Therefore, we have $(\wh{\mu}-\mu)^T(\Sigma+I)^{-1}(\wh{\mu}-\mu) \leq 1/4,$ as required.
\end{proof}

\end{document}